\newcommand{\longonly}[1]{}
\newcommand{\shortonly}[1]{#1}
\newcommand{\longonly}[1]{#1} %stuff to put only in long version
\newcommand{\shortonly}[1]{} %stuff to put only in short version
\newcommand{\longonly}[1]{}
\newcommand{\shortonly}[1]{#1}
\theoremstyle{definition}
\newtheorem{example}{Example}[part]
\newtheorem{remark}{Remark}[part]
\newtheorem{thm}{Theorem}[part]
\newtheorem{defn}{Definition}[part]
\newtheorem*{defn*}{Definition}
\newtheorem*{lemma*}{Lemma}
\newtheorem{lemma}[thm]{Lemma}
\newtheorem{prop}[thm]{Proposition}
\newtheoremstyle{thmnum}{\topsep}{\topsep}{\itshape}{0pt}{\bfseries}{.}{ }{\thmname{#1}\thmnote{ \bfseries #3}}
\theoremstyle{thmnum}
\newtheorem{duplicatelemma}{Lemma}
\xdef\csname vec\x \endcsname{\noexpand\ensuremath{\noexpand\bm{\x}}}
\xdef\csname vec\x \endcsname{\noexpand\ensuremath{\noexpand\bm{\x}}}
\xdef\csname c\x \endcsname{\noexpand\ensuremath{\noexpand\mathcal{\x}}}
\xdef\csname bb\x \endcsname{\noexpand\ensuremath{\noexpand\mathbb{\x}}}
\newcommand{\Prob}{\ensuremath{{\mathbb P}}}
\newcommand{\defineqq}{\ensuremath{\stackrel{\textup{\tiny def}}{=}}}
\def\ttl{\tilde{\tl}}
\def\tlx{\ensuremath{\tilde{x}}}
\def\tla{\ensuremath{\tilde{a}}}
\def\tlb{\ensuremath{\tilde{b}}}
\def\tlc{\ensuremath{\tilde{c}}}
\newcommand{\norm}[1]{\left\lVert#1\right\rVert_2}
\def\perm{\Pi}
\def\permchoice{\pi}
\def\Noisevec{W^\tl}
\def\tl{t} %taglength for authentication code
\def\msg{\ensuremath{m}} %message
\def\msgh{\ensuremath{\hat{\msg}}} %reconstruction
\def\Msg{\ensuremath{M}} %message random variable
\def\Msgh{\ensuremath{\hat{\Msg}}} %reconstruction random variable
\def\nummsg{\mbox{$N$}} %number of messages
\def\msgset{\ensuremath{\mathcal{M}}} %set of messages
\def\state{\ensuremath{s}}
\def\State{\ensuremath{S}}
\def\tagchannel{\ensuremath{W_{\rV|\tU,\State}}}
\def\pfa{P_{\mathsf{R}|\state_0}}
\def\pwa{P_{\mathsf{A}}}
\def\peauth{P_{\textrm{e,auth}}}
\def\tagenc{\ensuremath{A}} %tag encoder e.g. \tagenc_1
\def\tagdec{\ensuremath{\alpha}} %tag decoder
\def\idenc{\ensuremath{B}} %identification encoder e.g. 
\def\iddec{\ensuremath{\beta}} %identification decoder
\def\tu{u} %transmitted tag letter
\def\vectu{\ensuremath{\tu^{\tl}}} %transmitted vector for the authentication code
\def\rv{\ensuremath{v}}
\def\vecrv{\ensuremath{\rv^{\tl}}} %received vector for the authentication code
\def\tU{U} %random transmitted symbol
\def\vectU{\ensuremath{\tU^{\tl}}} %random transmitted vector
\def\rV{\ensuremath{V}}
\def\vecrV{\ensuremath{\rV^{\tl}}} %received vector
\def\accept{\ensuremath{\mathsf{A}}}
\def\reject{\ensuremath{\mathsf{R}}}
\def\deterministic{\ensuremath{\mathrm{deterministic}}}
\def\random{\ensuremath{\mathrm{random}}}
\def\strong{\ensuremath{\mathrm{strong}}}
\def\weak{\ensuremath{\mathrm{weak}}}
\def\auth{\ensuremath{\mathrm{auth}}}
\def\authcomm{\ensuremath{\mathrm{auth-comm}}}
\def\nonadv{\ensuremath{\mathrm{non-adv}}}
\def\rand{\ensuremath{\mathrm{rand}}}
\def\det{\ensuremath{\mathrm{det}}}
\newcommand{\mac}{\ensuremath{\text{DM-MAC}}\xspace}
\newcommand{\mch}{\ensuremath{W_{Z|X,Y}}\xspace}
\newcommand{\mach}{\ensuremath{W_{Y|X_1,X_2,X_3}}\xspace}
\newcommand{\machijk}{\ensuremath{W_{Y|X_i,X_j,X_k}}\xspace}
\newcommand{\na}{\ensuremath{\text{no attack}}\xspace}
\newcommand{\malone}{\ensuremath{\text{mal 1}}}
\newcommand{\maltwo}{\ensuremath{\text{mal 2}}}
\newcommand{\set}[1]{\left\{#1\right\}}
\def\BibTeX{{\rm B\kern-.05em{\sc i\kern-.025em b}\kern-.08em
    T\kern-.1667em\lower.7ex\hbox{E}\kern-.125emX}}
\begin{document}

\title{Byzantine Multiple Access
}

\author[1]{Neha Sangwan}
\author[2]{Mayank Bakshi}
\author[3]{Bikash Kumar Dey}
\author[1]{Vinod M. Prabhakaran}
\affil[1]{Tata Institute of Fundamental Research. Mumbai, India}
\affil[2]{Chinese University of Hong Kong, Hong Kong}
\affil[3]{Indian Institute of Technology Bombay, Mumbai, India}

\makeatletter
\patchcmd{\@maketitle}
  {\addvspace{0.5\baselineskip}\egroup}
  {\addvspace{-1.5\baselineskip}\egroup}
  {}
  {}
\makeatother

\maketitle

\begin{abstract}
	We study communication over multiple access channels (MAC) where one of the users is possibly adversarial. When all users behave non-adversarially, we want their messages to be decoded reliably. When an adversary is present, we consider two different decoding guarantees. 

In part I, we require that the honest users' messages be decoded reliably. We study the three-user MAC\footnote{It turns out that the capacity region for the two-user MAC follows from the capacity of the point-to-point arbitrarily varying channel.}. We characterize the capacity region for randomized codes (where each user shares an independent secret key with the receiver). We also study the capacity region for deterministic codes. We obtain necessary conditions including a new non-symmetrizability condition for this capacity region to be non-trivial. We show that when none of the users are symmetrizable, the randomized coding capacity region is also achievable with deterministic codes. This is analogous to the result of Ahlswede and Cai (1991) for arbitrarily varying MAC.

	In part II, we consider the weaker goal of authenticated communication where we only require that an adversarial user must not be able to cause an undetected error on the honest users' messages. Therefore, when an adversarial user is present, it is sufficient to detect its presence without decoding the message of the honest user. For the two-user MAC, we show that the following three-phase scheme is rate-optimal: a standard MAC code is first used to achieve unauthenticated communication; this is followed by two authentication phases where each user authenticates their message treating the other user as a possible adversary. We show that the authentication phases can be very short since this form of authentication itself, when possible, can be achieved for message sets whose size grow doubly exponentially in blocklength. This leads to our result that the authenticated communication capacity region of a discrete memoryless MAC is either zero or the (unauthenticated) MAC capacity region itself. This also, arguably, explains the similar nature of authenticated communication capacity of a discrete memoryless point-to-point adversarial channel recently found by Kosut and Kliewer (ITW, 2018). We also obtain analogous results for additive Gaussian noise channels. 

\end{abstract}

\newpage
\vspace*{\fill}
\begin{center}
\huge\bfseries{ Part I\\ Reliable Communication in the Presence of an Adversarial User\par}
\end{center}
\vspace*{\fill}
\newpage

\part{Reliable Communication in the Presence of an Adversarial User}

\section{Introduction} 
Consider a multiple access channel (MAC) in which at most one of the users may behave adversarially (Fig.~\ref{fig:advMAC}). If all users are non-adversarial, we require that their messages be reliably decoded. However, if one of the users is adversarial, the decoder must correctly recover the message of all the other (honest) users. 

For the two-user MAC, clearly, each user can at least achieve the capacity of the arbitrarily varying channels (AVC)~\cite{BlackwellBTAMS60} that amounts to treating the other user's input as channel state. Further, it is also easy to see that no higher rate is possible as the other user can behave exactly like an adversary in the AVC setup. Thus, the capacity region is the rectangular region defined by the AVC capacities of the two users' channels, \emph{i.e.}, there is no trade-off between the rates\footnote{This observation holds true under deterministic coding, stochastic encoding (where the encoders have private randomness), and randomized coding settings under both maximum and average probabilities of error. A similar observation can be made for any $k$-user MAC where up to $k-1$ users may adversarially collude}. Here we study the three-user MAC and characterize the randomized coding capacity. We also characterize the deterministic coding capacity under average probability of error for most channels.\footnote{Our characterization is incomplete for channels in which some, but not all users are symmetrizable. See remark~\ref{rem:gap}.}

Our problem is closely related to the two-user arbitrarily varying MACs
(AV-MAC), which has an {\em external} adversary (jammer) unlike in our problem.
For this setup, Jahn~\cite{Jahn81} obtained the randomized coding capacity
where each user has an independent source of randomness (secret key) which is
shared with the receiver. He also showed that this region is also the
deterministic coding capacity region under average probability of error
whenever the latter has a non-empty interior, a result along the lines of
Ahlswede's dichotomy for arbitrarily varying channels~\cite{Ahlswede78}.
Gubner~\cite{Gubner90} proved necessary conditions (non-symmetrizability
conditions) for capacity to be non-zero. Gubner's notion
of non-symmetrizability of AV-MACs generalizes the corresponding notion of
AVCs. In particular, he defined an AV-MAC to be non-symmetrizable if it is not
symmetrizable in any of the following three senses -- \emph{(i)} the AV-MAC can
be symmetrized for the first user's input while keeping second user's input
unchanged, \emph{(ii)} the AV-MAC can be symmetrized for the second user's
input while keeping first user's input unchanged, and \emph{(iii)} the AV-MAC
can be simultaneously symmetrized for both users' inputs. Ahlswede and
Cai~\cite{AhlswedeC99} showed that Gubner's necessary conditions are also
sufficient for the deterministic coding capacity region to have a non-empty
interior. Recently, Pereg and Steinberg~\cite{PeregS19} obtained the capacity region for
arbitrarily varying MAC with state constraints. They also closed the gap for
the problem without state constraints by addressing the case where exactly one
of the users has zero capacity.

\begin{figure}\centering
\resizebox{0.99\columnwidth}{!}{\begin{tikzpicture}[scale=0.5]
	\draw (2,0) rectangle ++(3,2) node[pos=.5]{User-3};
	\draw (2,3) rectangle ++(3,2) node[pos=.5]{User-2};
	\draw (2,6) rectangle ++(3,2) node[pos=.5]{User-1};
	\draw (8.5,2.5) rectangle ++(4.5,3) node[pos=.5]{$\mach$};
	\draw (16,3) rectangle ++(3,2) node[pos=.5]{$\phi$};
	\draw[->] (1,1) node[anchor=east]{$\msg_3$} -- ++ (1,0) ;
	\draw[->] (5,1) -- node[midway,below,sloped] {$X_3^n$} ++ (3.5,2);
	
	\draw[->] (1,4) node[anchor=east]{$\msg_2$} -- ++ (1,0) ;
	\draw[->] (5,4) -- node[above] {$X_2^n$} ++ (3.5,0);

	\draw[->] (1,7) node[anchor=east]{$\msg_1$} -- ++ (1,0) ;
	\draw[->] (5,7) -- node[midway,above,sloped] {$X_1^n$} ++ (3.5,-2);

	\draw[->] (13,4) -- node[above] {$Y^n$} ++ (3,0);
	\draw[->] (19,4) -- ++ (1,0) node[anchor=west]{\begin{array}{c}\msgh_1\\ \msgh_2\\ \msgh_3\end{array}};
\end{tikzpicture}}
\caption{Adversarial MAC: At most one user may be adversarial. Reliable decoding of the messages of all honest users is required. Clearly, no decoding guarantees are given for an adversarial user.}\label{fig:advMAC}
\end{figure}
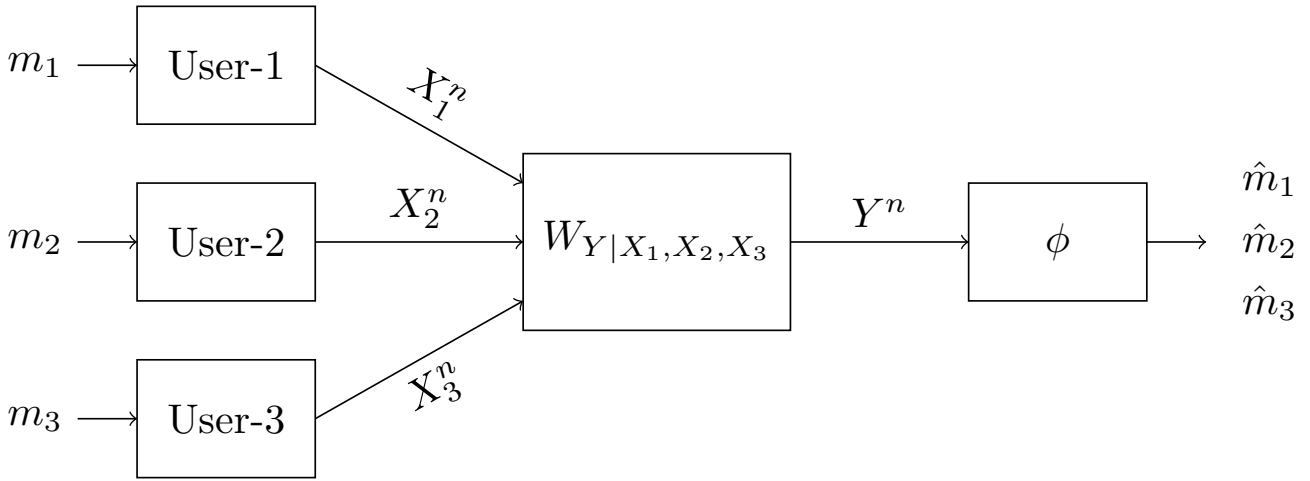

There are several other related works. Wiese and Boche~\cite{WieseB13} considered two-user arbitrarily varying multiple access channels with conferencing encoders. La and Anantharam~\cite{LaADIMACS04} studied MACs with strategic users modeled as a cooperative game. In part II of this paper (also see~\cite{NehaBDP19}), we consider authenticated communication over a two-user
MAC where at most one user may be adversarial. Compared to the one we consider here, authentication is a weaker goal -- ensure reliable decoding when both users are non-adversarial, however, if one of the users behaves adversarially, the decoder (with high probability) must either output the correct message for the honest user or abort, i.e., an adversary should not be able to cause an undetected erroneous output for the honest user. Byzantine attacks on the nodes and edges of networks have been studied under omniscient and weaker adversarial models in~\cite{KosutTTIT14} and~\cite{JaggiLKHKM07,WangSK10}, respectively.

Returning to our problem, the capacity characterizations for AV-MACs~\cite{Jahn81,Gubner90,AhlswedeC99,PeregS19} shed light on the nature of capacity region for our problem. It is clear that, at the very least, if a rate triple $(R_1,R_2,R_3)$ is achievable for our setting, then the rate pair $(R_2,R_3)$ must also be achievable over the two-user AV-MAC $W^{(1)}$ that is formed by treating user-$1$'s input as the adversarially chosen state. Similarly, the rate pairs $(R_1,R_3)$ and $(R_1,R_2)$ must also be achievable over corresponding AV-MACs $W^{(2)}$ and $W^{(3)}$ respectively.\footnote{In fact, a stronger necessary condition follows by noting that the encoder of each user must not depend on the knowledge of which user, if any, is the adversary. Thus, as in compound channels, the same code should work regardless of the $W^{(k)}$ that represents the actual channel. We use this observation in our converse arguments.} 

In particular, it follows that the capacity region for our problem has an empty interior whenever any of the AV-MACs $W^{(1)}$, $W^{(2)}$, and $W^{(3)}$ are symmetrizable in the sense of Gubner's characterization. One of the main contributions of our work is to show that, in addition to the nine symmetrizability conditions inherited from above (three for each $W^{(k)}$), fully characterizing non-symmetrizability of our adversarial MAC relies on also excluding three additional symmetrizability conditions (Eq.~\eqref{eq:symm3}). Roughly speaking, each of these conditions reflect whether or not an adversarial user at a node $k$ can attack in a manner that is  also consistent with an adversarial user at a node $j\neq k$ while resulting in a decoding ambiguity about the remaining user's message (see Figure~\ref{fig:symm3}). Example~\ref{ex:symmetrizable} shows that the new symmetrizability conditions proposed in this paper are not redundant in view of the symmetrizability conditions derived from two-user AV-MACs. 

On the achievability side, we show that as long as our adversarial MAC is non-symmetrizable, \emph{i.e.}, none of the twelve symmetrizability conditions hold,  the capacity region for our problem equals the \emph{randomized} capacity region.\footnote{As in~\cite{Jahn81}, in the randomized setting, each user may share an unlimited number of random bits with the decoder that are unknown to other users.}

\section{Problem Setup and Main Results}
\subsection{System model}\label{section:MAC2_model}
Consider the three-user MAC setup shown in Fig.~\ref{fig:advMAC}. The memoryless channel \mach has input alphabets $\mathcal{X}_1,\mathcal{X}_2,\mathcal{X}_3$, and output alphabet ${\mathcal Y}$. 
\shortonly{An $(\nummsg_1,\nummsg_2,\nummsg_3,n)$ deterministic {\em code} for the multiple access channel \mach consists of: \vspace{-0.25em}
\begin{enumerate}[label=(\roman*)]
\item three message sets, $\mathcal{M}_i = \{1,\ldots,\nummsg_i\}$, $i=1,2,3$,
\item three encoders, $f_{i}:\mathcal{M}_i\rightarrow \mathcal{X}_i^n$, $i=1,2,3$, and
\item a decoder, $\phi:\mathcal{Y}^n\rightarrow\mathcal{M}_1\times\mathcal{M}_2\times\mathcal{M}_3.$ 
\end{enumerate}}

\longonly{\begin{defn}[Deterministic code]
An $(\nummsg_1,\nummsg_2,\nummsg_3,n)$ deterministic {\em code} for the \mac \mach consists of the following: 
\begin{enumerate}[label=(\roman*)]
\item three message sets, $\mathcal{M}_i = \{1,\ldots,\nummsg_i\}$, $i=1,2,3$,
\item three encoders, $f_{i}:\mathcal{M}_i\rightarrow \mathcal{X}_i^n$, $i=1,2,3$, and
\item a decoder, $\phi:\mathcal{Y}^n\rightarrow\mathcal{M}_1\times\mathcal{M}_2\times\mathcal{M}_3.$ 
\end{enumerate}
\end{defn}}
 
We define the average probability error $P_e$ as the maximum of average probabilities under the four scenarios where at most one user is adversarial. Let $(\Msgh_1,\Msgh_2,\Msgh_3)=\phi(Y^n)$. 
\[ P_e(f_1,f_2,f_3,\phi)\defineqq\max \{ P_{e,0}, P_{e,1}, P_{e,2}, P_{e,3} \},\]
where the terms on the right-hand side are defined below. Note that our notation suppresses their dependence on the code. $P_{e,0}$ is the average probability of error when none of the users are adversarial,
\begin{align*}
&P_{e,0}\defineqq \frac{1}{\nummsg_1\nummsg_2\nummsg_3}
\sum_{m_1,m_2,m_3} e_0(m_1,m_2,m_3),\text{ where}\\
&e_0(m_1,m_2,m_3)=
\Prob\Big( (\Msgh_1,\Msgh_2,\Msgh_3) \neq (m_1,m_2,m_3) \Big|
\\ &\qquad\qquad\qquad 
X_1^n=f_1(m_1),  X_2^n=f_2(m_2), X_3^n=f_3(m_3)\Big).
\end{align*}
$P_{e,i}$, $i=1,2,3$ is the average error when user-$i$ is adversarial. $P_{e,1}$ is as below, $P_{e,2},P_{e,3}$ can be defined similarly.
\begin{align*}
&P_{e,1}\defineqq \max_{x_1^n} \frac{1}{\nummsg_2\nummsg_3} \sum_{m_2,m_3} e_1(x_1^n,m_2,m_3),\text{ where}\\
&e_1(x_1^n,m_2,m_3)= \Prob\Big( (\Msgh_2,\Msgh_3) \neq (m_2,m_3) \Big|\\
&\qquad\qquad\qquad\qquad X_1^n=x_1^n, X_2^n=f_2(m_2), X_3^n=f_3(m_3) \Big).
\end{align*}
We emphasize that the decoder is unaware of whether any of the users is adversarial and the identity of the adversarial user (if any). We also note that $P_{e,0}\leq P_{e,1}+P_{e,2}+P_{e,3}$.

We say a rate triple $(R_1,R_2,R_3)$ is {\em achievable} if there is a sequence of $(\lfloor2^{nR_1}\rfloor,\lfloor2^{nR_2}\rfloor,\lfloor2^{nR_3}\rfloor,n)$ codes  $\{f_1^{(n)},f_2^{(n)},f_3^{(n)},\allowbreak\phi^{(n)}\}_{n=1}^\infty$ such that $\lim_{n\rightarrow\infty}P_{e}(f_1^{(n)},f_2^{(n)},f_3^{(n)},\phi^{(n)})\rightarrow0$. The {\em deterministic coding capacity region} $\mathcal{R}_{\deterministic}$ is the closure of the set of all achievable rate triples.

\longonly{
\begin{defn}[Randomized code]
An $(\nummsg_1,\nummsg_2,\nummsg_3,n)$ randomized {\em code} for the \mac \mach consists of the following: 
\begin{enumerate}[label=(\roman*)]
\item three message sets, $\mathcal{M}_i = \{1,\ldots,\nummsg_i\}$, $i=1,2,3$,
\item three independent randomized encoders, $F_{i}:\mathcal{M}_i\rightarrow \mathcal{X}_i^n$ where $F_i$ takes values in $\mathcal{F}_i,\, i = 1,2,3$ and
\item a privately randomized decoder, $\Phi:\mathcal{Y}^n\times\mathcal{F}_1\times\mathcal{F}_2\times\mathcal{F}_3\rightarrow\mathcal{M}_1\times\mathcal{M}_2\times\mathcal{M}_3$ where $\Phi(y^n,F_1,F_2,F_3) = (\Psi_1(y^n,F_1,F_2,F_3),\Psi_2(y^n,F_1,F_2,F_3),\Psi_3(y^n,F_1,F_2,F_3))$ for some randomized functions $\Psi_i:\mathcal{Y}^n\times\mathcal{F}_1\times\mathcal{F}_2\times\mathcal{F}_3\rightarrow\mathcal{M}_1\times\mathcal{M}_2\times\mathcal{M}_3, \, i = 1,2,3$.
\end{enumerate}
\end{defn}}

\shortonly{A randomized code consists of {\em independent} random encoding maps
$F_1,F_2,F_3$ and a decoder $\Phi$ (which may depend on $F_1,F_2,F_3$), i.e.,
the encoders randomize independently of each other and their randomization is
available to the decoder. }This is analogous to the randomized code of
Jahn~\cite{Jahn81} for 2-user AV-MACs. Probability of error and capacity region
for randomized codes can be defined similarly\footnote{Clearly, for randomized
codes, the capacity region will remain unchanged for maximum and average
probabilities of error criteria. Hence we only consider the average error
criterion here.}. \shortonly{Please see the extended draft for the
definitions~\cite{ExtendedDraft}.} We will consider the natural setting where
the adversarial user has access to its random encoding map and denote its {\em
randomized coding capacity region} by $\mathcal{R}_{\random}$.  We also
consider two slightly artificial settings -- (i) adversarial user-$i$ is a {\em
strong} adversary if it may choose its encoding map $f_i$ in addition to the
input vector $x_i^n$, and (ii) an adversary is a {\em weak} adversary if it
does not have access to its random encoding map when choosing its input vector.
We denote the corresponding randomized coding capacity region by
$\mathcal{R}_{\random}^{\strong}$ and $\mathcal{R}_{\random}^{\weak}$
respectively. Clearly,
$\mathcal{R}_{\random}^{\strong} \subseteq \mathcal{R}_{\random} \subseteq \mathcal{R}_{\random}^{\weak}$. 
\longonly{ 
Analogous to the deterministic case, the average probability error $P^{\text{strong}}_{e}$ under the strong adversary is defined as 
\[ P^{\text{strong}}_{e}(F_1,F_2,F_3,\Phi)\defineqq\max \{ P^{\text{rand}}_{e,0}, P^{\text{strong}}_{e,1}, P^{\text{strong}}_{e,2}, P^{\text{strong}}_{e,3} \},\]
where 
\begin{align*}
&P^{\text{rand}}_{e,0}\defineqq \frac{1}{\nummsg_1\nummsg_2\nummsg_3}
\sum_{m_1,m_2,m_3} e^{\text{rand}}_0(m_1,m_2,m_3),\text{ where}\\
&e^{\text{rand}}_0(m_1,m_2,m_3)=
\Prob\Big( \Phi(Y^n,F_1,F_2,F_3) \neq (m_1,m_2,m_3) \Big|X_1^n=F_1(m_1),  X_2^n=F_2(m_2), X_3^n=F_3(m_3)\Big).
\end{align*}
 $P^{\text{strong}}_{e,1}$ is as below, $P^{\text{strong}}_{e,2},P^{\text{strong}}_{e,3}$ can be defined similarly.
\begin{align*}
&P^{\text{strong}}_{e,1}\defineqq \max_{x_1^n\in\mathcal{X}^n, f_1\in\mathcal{F}_1} \frac{1}{\nummsg_2\nummsg_3} \sum_{m_2,m_3} e^{\text{strong}}_{f_1}(x_1^n,m_2,m_3),\text{ where}\\
&e^{\text{strong}}_{f_1}(x_1^n,m_2,m_3)= \Prob\Big( (\Psi_2(Y^n,f_1,F_2,F_3),\Psi_3(Y^n,f_1,F_2,F_3)) \neq (m_2,m_3) \Big|\shortonly{\\
&\qquad\qquad\qquad\qquad }X_1^n=x_1^n, X_2^n=F_2(m_2), X_3^n=F_3(m_3) \Big).
\end{align*}
We say a rate triple $(R_1,R_2,R_3)$ is {\em achievable} against the strong adversary, if there is a sequence of $(\lfloor2^{nR_1}\rfloor,\lfloor2^{nR_2}\rfloor,\lfloor2^{nR_3}\rfloor,n)$ codes  $\{F_1^{(n)},F_2^{(n)},F_3^{(n)},\Phi^{(n)}\}_{n=1}^\infty$ such that $\lim_{n\rightarrow\infty}P^{\text{strong}}_{e}(F_1^{(n)},F_2^{(n)},F_3^{(n)},\Phi^{(n)})\rightarrow0$. The {\em strong randomized coding capacity region} $\mathcal{R^{\text{strong}}_{\text{random}}}$ is the closure of the set of all achievable rate triples. 

Probability of error and capacity region for randomized codes with weak adversary can be defined by replacing  $P^{\text{strong}}_{e,i}$ with $P^{\text{weak}}_{e,i},\, i = 1,2,3$ in the above definition,  where
\begin{align}
&P^{\text{weak}}_{e,1}\defineqq \max_{x_1^n\in \mathcal{X}^n} \frac{1}{\nummsg_2\nummsg_3} \sum_{m_2,m_3} e^{\text{weak}}_{1}(x_1^n,m_2,m_3),\label{eq:pe1_weak}\\
&\text{ where}\quad e^{\text{weak}}_{1}(x_1^n,m_2,m_3)= \Prob\Big( (\Psi_2(Y^n,F_1,F_2,F_3),\Psi_3(Y^n,F_1,F_2,F_3)) \neq (m_2,m_3) \Big|\nonumber\\
&\qquad\qquad\qquad\qquad X_1^n=x_1^n, X_2^n=F_2(m_2), X_3^n=F_3(m_3) \Big).\nonumber
\end{align}
$P^{\text{weak}}_{e,2}$ and $P^{\text{weak}}_{e,3}$ can be defined similarly.\\
For the standard adversary who has access to the random encoding map but is not allowed to choose it, the probability of error and capacity region can be defined in a similar fashion, by replacing $P^{\text{strong}}_{e,i}$ with $P^{\text{random}}_{e,i},\, i = 1,2,3$. We define $P^{\text{random}}_{e,1}$ below. $P^{\text{rand}}_{e,2}$ and $P^{\text{rand}}_{e,3}$ can be defined similarly. 
\begin{align*}
&P^{\text{rand}}_{e,1}\defineqq \max_{x_1^n:\mathcal{F}_1\rightarrow \mathcal{X}^n} \frac{1}{\nummsg_2\nummsg_3} \sum_{m_2,m_3} e^{\text{rand}}_{1}(x_1^n(\cdot),m_2,m_3),\text{ where}\\
&e^{\text{rand}}_{1}(x_1^n(\cdot),m_2,m_3)= \Prob\Big( (\Psi_2(Y^n,F_1,F_2,F_3),\Psi_3(Y^n,F_1,F_2,F_3)) \neq (m_2,m_3) \Big|\shortonly{\\
&\qquad\qquad\qquad\qquad }X_1^n=x_1^n(F_1), X_2^n=F_2(m_2), X_3^n=F_3(m_3) \Big).
\end{align*}

The notation $x_1^n(F_1)$ signifies that the adversary can choose the sequence $x_1^n$ with the knowledge of the encoding map $F_1$. The capacity region in this setting is denoted by $\mathcal{R}_{\random}$.
}

\subsection{Randomized coding capacity region}
Let $\mathcal{R}$ be the closure of the set of all rate triples $(R_1,R_2,R_3)$ such that for some $p(u)p(x_1|u)p(x_2|u)p(x_3|u)$ the following conditions hold for all permutations $(i,j,k)$ of $(1,2,3)$:
\begin{align}
R_i &\leq \min_{q(x_k|u)} I(X_i;Y|X_j,U),\quad\text{and} \label{eq:rateconstraint1}\\
R_i+R_j &\leq \min_{q(x_k|u)} I(X_i,X_j;Y|U), \label{eq:rateconstraint2}
\end{align}
where the mutual information terms above are evaluated using the joint distribution
$p(u) p(x_i|u)p(x_j|u)q(x_k|u) \allowbreak \mach(y|x_1,x_2,x_3)$. Here $|\mathcal{U}|\leq 3$.
\vspace{-0.25em}
\begin{thm}\label{thm:random}
\[ \mathcal{R}_{\random}^{\strong} = \mathcal{R}_{\random} = \mathcal{R}_{\random}^{\weak} = \mathcal{R}.\]
\end{thm}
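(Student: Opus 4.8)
The plan is to prove the two inclusions $\mathcal{R}\subseteq\mathcal{R}_{\random}^{\strong}$ (achievability against the \emph{strongest} adversary) and $\mathcal{R}_{\random}^{\weak}\subseteq\mathcal{R}$ (a converse against the \emph{weakest} adversary); together with the trivial sandwich $\mathcal{R}_{\random}^{\strong}\subseteq\mathcal{R}_{\random}\subseteq\mathcal{R}_{\random}^{\weak}$ noted above, these give the theorem.

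\emph{Achievability.} Fix a rate triple in the interior of $\mathcal{R}$, a generating distribution $p(u)p(x_1|u)p(x_2|u)p(x_3|u)$, and a deterministic string $u^n$ of composition close to $p(u)$. The secret key of user $\ell$ is used to draw a codebook $\{X_\ell^n(m)\}_m$ whose codewords are independent and uniform on the conditional type class of $p(x_\ell|u)$ given $u^n$; each codebook is revealed to the decoder but not to the other users. I would use the following robust joint-typicality decoder: for each user $\ell$, declare $\hat m_\ell$ to be the unique message $m$ (if any) that is \emph{consistent with user $\ell$ being honest under some guess of which of the other two users $a,b$, if either, is the adversary} --- i.e., for which there exist a codeword of one of $a,b$ and an \emph{arbitrary} sequence of some conditional type given $u^n$ for the other, making $(u^n,X_\ell^n(m),\cdot,\cdot,Y^n)$ jointly typical for $p(u)p(x_1|u)p(x_2|u)p(x_3|u)\mach(y|x_1,x_2,x_3)$; otherwise flag an error on coordinate $\ell$. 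An honest user's true message is always consistent (take the guess to match reality), hence a candidate with probability $\to1$. Conversely, since the adversary never sees an honest user's codebook even in the strong model, for any honest $\ell$ and wrong $m'\neq m_\ell$ the codeword $X_\ell^n(m')$ is independent of $Y^n$, of the adversary's input, and of the other users' codebooks; a standard type-counting estimate then bounds the probability that $X_\ell^n(m')$ completes to a jointly typical tuple with a \emph{correct} codeword of the remaining honest user by $2^{-n(\min_q I(X_\ell;Y|X_a,U)-\epsilon)}$, and with a \emph{wrong} codeword by $2^{-n(\min_q I(X_\ell,X_a;Y|U)-\epsilon)}$, the minima over the conditional type of the guessed adversary's input arising from a union bound over the polynomially many such types. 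Summing over the $2^{nR_\ell}$ (resp.\ $2^{n(R_\ell+R_a)}$) choices and the at most two guesses, the candidate set of every honest user collapses to the correct message once the rates obey \eqref{eq:rateconstraint1}--\eqref{eq:rateconstraint2}. As at most one user is adversarial, at least two users are honest and decoded correctly --- all three when there is no adversary, so $P_{e,0}$ also vanishes. This is essentially a compound-channel strengthening of Jahn's randomized AV-MAC coding theorem: one code must serve the three AV-MACs $W^{(k)}$ and the no-adversary channel at once.

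\emph{Converse.} Let $(R_1,R_2,R_3)\in\mathcal{R}_{\random}^{\weak}$, fix codes achieving it, and set $U=T$ uniform on $\{1,\dots,n\}$, $X_\ell=X_{\ell,T}$ with $X_\ell^n=F_\ell(M_\ell)$ and $M_\ell$ uniform; mutual independence of the pairs $(M_\ell,F_\ell)$ makes the induced law factor as $p(u)p(x_1|u)p(x_2|u)p(x_3|u)$. For a permutation $(i,j,k)$ and a law $q(x_k|u)$, let the (weak) adversary at $k$ send $X_{k,t}$ i.i.d.\ from $q(\cdot\,|\,u=t)$ using its own fresh randomness --- legitimate, as it depends only on the time index. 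Because the honest messages must still be decoded, Fano's inequality with a genie revealing $M_j$ and all the encoders' randomness, followed by the usual single-letterization (here it helps that $Y_t$ given the inputs is independent of the encoders' randomness, so that randomness marginalizes out by the tower property and never enters the single-letter bound), gives $R_i\le I(X_i;Y|X_j,U)+\epsilon_n$ and $R_i+R_j\le I(X_i,X_j;Y|U)+\epsilon_n$ under $p(u)p(x_i|u)p(x_j|u)q(x_k|u)\mach(y|x_1,x_2,x_3)$. Minimizing over $q(x_k|u)$ --- which, for fixed per-symbol conditionals, decomposes into a sum over $u$ of single-letter minima --- and letting $n\to\infty$ yields precisely \eqref{eq:rateconstraint1}--\eqref{eq:rateconstraint2}, and a standard convexity argument reduces $U$ to $|\mathcal{U}|\le3$, placing the triple in $\mathcal{R}$.

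\emph{Main obstacle.} I expect the hard part to be the achievability decoder: it must succeed \emph{simultaneously} under all four scenarios with no knowledge of which one holds, so one has to rule out that a wrong message of an honest user is ``explained away'' by \emph{some} admissible guess of the adversary's identity. The enabling fact --- and the reason this works for randomized but not deterministic codes --- is that the secrecy of each honest user's codebook renders the adversary's input independent of that user's wrong codewords, which is exactly what lets the type-counting bound close at the rates defining $\mathcal{R}$; the failure of this step for deterministic codes is the source of the symmetrizability conditions studied later. The only delicacy in the converse is performing the cardinality reduction while respecting the minimization over the adversary's strategy.
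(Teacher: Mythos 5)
Your proposal is sound and reaches the correct theorem, but the achievability mechanism is genuinely different from the paper's. You build a monolithic joint-typicality decoder from scratch: for each user $\ell$ you look for a unique message that completes to a typical tuple under \emph{some} hypothesis about which of the other two users is adversarial, and you kill wrong candidates by type-counting, using the fact that the adversary (even a strong one that picks its own codebook and input vector) never sees an honest user's random codebook, so wrong codewords $X_\ell^n(m')$ are independent of $(Y^n, X_k^n)$. The paper instead takes a \emph{modular} route: each encoder $F_i = G_i\circ L_i$ is a random Jahn-style codebook $G_i$ preceded by a uniformly random injection $L_i$ of the true message set $\cM_i$ into a set $\tilde\cM_i$ that is $3/\epsilon$ times larger. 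For each hypothesis ``user $k$ is the adversary'' the decoder runs Jahn's $2$-user AV-MAC decoder $\Gamma_i^{(k)}$ as a black box (thereby inheriting the nine rate constraints directly from Jahn's theorem rather than rederiving them), and then the secret random hash $L_i$ is used as a validity check: when the actual adversary is some $j\ne k$ and $\Gamma_i^{(k)}$ outputs a wrong inner message, that inner message lands in the small set $L_i(\cM_i)\subset\tilde\cM_i$ with probability only $\approx|\cM_i|/|\tilde\cM_i| = \epsilon/3$, so $\Phi_i^{(k)}$ declares $\bot$ and the three partial decoders can be reconciled. So where you pay in a longer error analysis (rederiving AV-MAC-style type-counting bounds for every hypothesis), the paper pays in an extra layer of shared secret randomness; in exchange, the paper never has to argue about typicality at all and can treat Jahn's coding theorem as a primitive. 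Both rest on the same structural observation (one code must work for the three compound AV-MACs $W^{(1)},W^{(2)},W^{(3)}$ and the no-attack channel simultaneously), and your converse — a weak adversary at node $k$ sending i.i.d.\ $q(\cdot|u_t)$, Fano with $M_j$ and the encoder randomness given, single-letterize, then Carath\'eodory — is essentially verbatim the paper's.

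One point worth tightening in your sketch: when an honest user $j$ is wrongly guessed to be the adversary, you must still handle the sub-case where the actual adversary's vector happens to coincide (in type) with one of user $j$'s codewords, since then $m_j'$ is not independent of $Y^n$; splitting into ``$m_j'$ equals the index of the adversary's impersonated codeword'' versus ``$m_j'$ differs'' is what recovers both the single-rate and the sum-rate constraints, exactly as in the two sub-cases you listed. Your proposal implicitly does this, but it is the crux of why no constraint beyond \eqref{eq:rateconstraint1}--\eqref{eq:rateconstraint2} appears, and is the step the paper sidesteps entirely via the $L_i$ hash.
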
 \vspace{-0.25em}\vspace{-0.25em}
We prove this by showing an achievability for the strong adversary and a converse for the weak adversary.

\subsection{Deterministic coding capacity region}

\begin{figure}
\centering
\begin{subfigure}[t]{.49\columnwidth}
\centering
\begin{tikzpicture}[scale=0.5]
	\draw (1.7,6) rectangle ++(1,1) node[pos=.5]{$q$};
	\draw (4.1,4) rectangle ++(1.3,1.5) node[pos=.5]{$W$};
	\draw[->] (1,4.75) node[anchor=east]{$x_2$} -- ++ (3.1,0);

	\draw[->] (1,6.8) node[anchor=east]{$\tilde{x}_2$} -- ++ (0.7,0) ;
	\draw[->] (1,6.2) node[anchor=east]{$\tilde{x}_3$} -- ++ (0.7,0) ;
	\draw[->] (3.4,4.25) -- ++(0.7,0);
	\draw[->] (3.4,5.25) -- ++(0.7,0);
	\draw[-] (3.4,5.25) -- ++ (0,1.25);
	\draw[-] (3.4,4.25) -- ++ (0,-1);
	\draw[-] (2.7, 6.5) -- ++(0.7,0);
	\draw[-] (1,3.25) node[anchor=east]{$x_3$} -- ++ (2.4,0);

	\draw[->] (5.4,4.75) --  ++ (0.5,0)node[anchor= west] {$y$};
\end{tikzpicture}
\end{subfigure}
\begin{subfigure}[t]{.49\columnwidth}
\centering
\begin{tikzpicture}[scale=0.5]
	\draw (1.7,6) rectangle ++(1,1) node[pos=.5]{$q$};
	\draw (4.1,4) rectangle ++(1.3,1.5) node[pos=.5]{$W$};
	\draw[->] (1,4.75) node[anchor=east]{$\tilde{x}_2$} -- ++ (3.1,0);

	\draw[->] (1,6.8) node[anchor=east]{$x_2$} -- ++ (0.7,0) ;
	\draw[->] (1,6.2) node[anchor=east]{$x_3$} -- ++ (0.7,0) ;
	\draw[->] (3.4,4.25) -- ++(0.7,0);
	\draw[->] (3.4,5.25) -- ++(0.7,0);
	\draw[-] (3.4,5.25) -- ++ (0,1.25);
	\draw[-] (3.4,4.25) -- ++ (0,-1);
	\draw[-] (2.7, 6.5) -- ++(0.7,0);
	\draw[-] (1,3.25) node[anchor=east]{$\tilde{x}_3$} -- ++ (2.4,0);

	\draw[->] (5.4,4.75) --  ++ (0.5,0)node[anchor= west] {$y$};
\end{tikzpicture}
\end{subfigure}
\caption{We say $\mach$ is $\cX_2\times\cX_3$-{symmetrizable by}~$\cX_1$ if, for each $(x_2,\tlx_2,x_3,\tlx_3)$, the conditional output distributions in the two cases above are the same. Thus, the receiver is unable to tell whether users~2 and~3 are sending $(x_2,x_3)$ or $(\tlx_2,\tlx_3)$.}
\label{fig:symm1}
\end{figure}
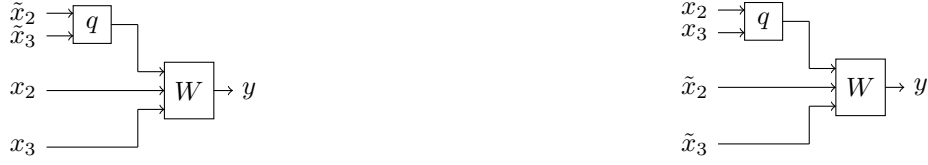

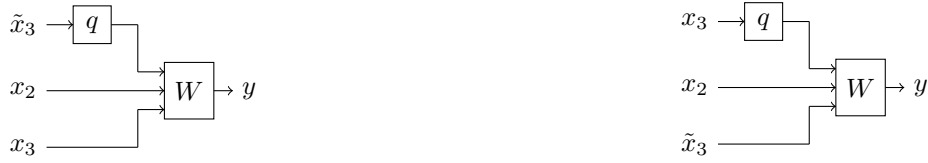
\begin{figure}
\centering
\begin{subfigure}[b]{.49\columnwidth}
\centering
\begin{tikzpicture}[scale=0.5]
	\draw (1.7,6) rectangle ++(1,1) node[pos=.5]{$q$};
	\draw (4.1,4) rectangle ++(1.3,1.5) node[pos=.5]{$W$};
	\draw[->] (1,4.75) node[anchor=east]{$x_2$} -- ++ (3.1,0);
	
	\draw[->] (1,6.5) node[anchor=east]{$\tilde{x}_3$} -- ++ (0.7,0) ;
	\draw[->] (3.4,4.25) -- ++(0.7,0);
	\draw[->] (3.4,5.25) -- ++(0.7,0);
	\draw[-] (3.4,5.25) -- ++ (0,1.25);
	\draw[-] (3.4,4.25) -- ++ (0,-1);
	\draw[-] (2.7, 6.5) -- ++(0.7,0);
	\draw[-] (1,3.25) node[anchor=east]{$x_3$} -- ++ (2.4,0);

	\draw[->] (5.4,4.75) --  ++ (0.5,0)node[anchor= west] {$y$};
\end{tikzpicture}
\end{subfigure}
\begin{subfigure}[b]{.49\columnwidth}
\centering
\begin{tikzpicture}[scale=0.5]
	\draw (1.7,6) rectangle ++(1,1) node[pos=.5]{$q$};
	\draw (4.1,4) rectangle ++(1.3,1.5) node[pos=.5]{$W$};
	\draw[->] (1,4.75) node[anchor=east]{$x_2$} -- ++ (3.1,0);

	\draw[->] (1,6.5) node[anchor=east]{$x_3$} -- ++ (0.7,0) ;
	\draw[->] (3.4,4.25) -- ++(0.7,0);
	\draw[->] (3.4,5.25) -- ++(0.7,0);
	\draw[-] (3.4,5.25) -- ++ (0,1.25);
	\draw[-] (3.4,4.25) -- ++ (0,-1);
	\draw[-] (2.7, 6.5) -- ++(0.7,0);
	\draw[-] (1,3.25) node[anchor=east]{$\tilde{x}_3$} -- ++ (2.4,0);

	\draw[->] (5.4,4.75) --  ++ (0.5,0)node[anchor= west] {$y$};
\end{tikzpicture}
\end{subfigure}
\caption{We say $\mach$ is $\cX_3|\cX_2$-{symmetrizable by}~$\cX_1$ if, for each $(x_2,x_3,\tlx_3)$, the conditional output distributions in the two cases above are the same. The receiver is unable to tell whether user~3 is sending $x_3$ or $\tlx_3$.}
\label{fig:symm2}
\end{figure}

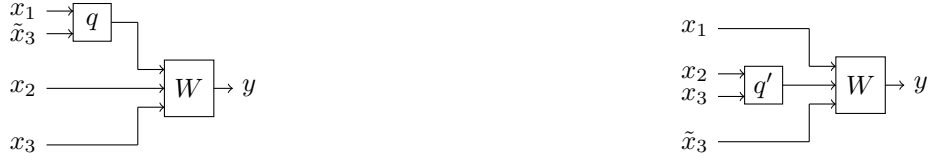
\begin{figure}[t]
\centering
\begin{subfigure}[t]{.49\columnwidth}
\centering
\begin{tikzpicture}[scale=0.5]
	\draw (1.7,6) rectangle ++(1,1) node[pos=.5]{$q$};
	\draw (4.1,4) rectangle ++(1.3,1.5) node[pos=.5]{$W$};
	\draw[->] (1,4.75) node[anchor=east]{$x_2$} -- ++ (3.1,0);
	
	\draw[->] (1,6.8) node[anchor=east]{$x_1$} -- ++ (0.7,0) ;
	\draw[->] (1,6.2) node[anchor=east]{$\tilde{x}_3$} -- ++ (0.7,0) ;
	\draw[->] (3.4,4.25) -- ++(0.7,0);
	\draw[->] (3.4,5.25) -- ++(0.7,0);
	\draw[-] (3.4,5.25) -- ++ (0,1.25);
	\draw[-] (3.4,4.25) -- ++ (0,-1);
	\draw[-] (2.7, 6.5) -- ++(0.7,0);
	\draw[-] (1,3.25) node[anchor=east]{$x_3$} -- ++ (2.4,0);

	\draw[->] (5.4,4.75) --  ++ (0.5,0)node[anchor= west] {$y$};
\end{tikzpicture}
\end{subfigure}
\begin{subfigure}[t]{.49\columnwidth}
\centering
\begin{tikzpicture}[scale=0.5]
	\draw (1.7,4.25) rectangle ++(1,1) node[pos=.5]{$q'$};
	\draw (4.1,4) rectangle ++(1.3,1.5) node[pos=.5]{$W$};
	\draw[->] (1,4.45) node[anchor=east]{$x_3$} -- ++ (0.7,0);
	\draw[->] (1,5.05) node[anchor=east]{$x_2$} -- ++ (0.7,0);
	\draw[->] (2.7,4.75) -- ++ (1.4,0);

	\draw[-] (1,6.25) node[anchor=east]{$x_1$} -- ++ (2.4,0) ;
	\draw[->] (3.4,4.25) -- ++(0.7,0);
	\draw[->] (3.4,5.25) -- ++(0.7,0);
	\draw[-] (3.4,5.25) -- ++ (0,1);
	\draw[-] (3.4,4.25) -- ++ (0,-1);
	\draw[-] (1,3.25) node[anchor=east]{$\tilde{x}_3$} -- ++ (2.4,0);

	\draw[->] (5.4,4.75) --  ++ (0.5,0)node[anchor= west] {$y$};
\end{tikzpicture}
\end{subfigure}
\caption{We say $\mach$ is $\cX_3$-{symmetrizable by}~$\cX_1/\cX_2$ if, for each $(x_1,x_2,x_3,\tlx_3)$, the conditional output distributions in the two cases above are the same. The receiver is unable to tell whether user~3 is sending $x_3$ (and  user~1 being malicious) or user~3 is sending $\tlx_3$ (and user~2 being malicious).}
\label{fig:symm3}
\end{figure}

\vspace{-0.25em}
\begin{thm}\label{thm:deterministic}
\[ \mathcal{R}_{\deterministic} = \mathcal{R},\qquad \text{if } \textup{int}(\mathcal{R}_{\deterministic})\neq\varnothing.\]
\end{thm}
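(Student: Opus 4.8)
The plan is to prove the two inclusions separately. The inclusion $\mathcal{R}_{\deterministic}\subseteq\mathcal{R}$ holds with no hypothesis: a deterministic code is the degenerate randomized code in which each encoder ensemble $F_i$ is a point mass and $\Phi$ ignores its $\mathcal{F}_i$-arguments, and viewed this way its deterministic probability of error equals its randomized probability of error, so $\mathcal{R}_{\deterministic}\subseteq\mathcal{R}_{\random}=\mathcal{R}$ by Theorem~\ref{thm:random}. It therefore remains to show, under the hypothesis $\textup{int}(\mathcal{R}_{\deterministic})\neq\varnothing$, the reverse inclusion $\mathcal{R}\subseteq\mathcal{R}_{\deterministic}$; this is an instance of Ahlswede's elimination/derandomization technique, carried out much as in the AV-MAC treatment of Ahlswede and Cai~\cite{AhlswedeC99}.

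For the achievability, fix $(R_1,R_2,R_3)\in\textup{int}(\mathcal{R})$ and a small $\delta>0$. By the achievability half of Theorem~\ref{thm:random} for the \emph{strong} adversary, for each $n$ there is a randomized code (encoder ensembles $F_1,F_2,F_3$ with values in $\mathcal{F}_1,\mathcal{F}_2,\mathcal{F}_3$, decoder $\Phi$) of blocklength $n$ at rates $(R_1,R_2,R_3)$ with $P^{\text{strong}}_e\le\epsilon_n\to 0$. The first step is to \emph{eliminate the randomness}: for $i=1,2,3$ draw $N$ i.i.d.\ copies $f_i^{(1)},\dots,f_i^{(N)}$ of $F_i$, the three lists independent, forming lists $\Gamma_i$, and consider the randomized code that picks $(\gamma_1,\gamma_2,\gamma_3)$ uniformly from $[N]^3$ and uses $(f_1^{(\gamma_1)},f_2^{(\gamma_2)},f_3^{(\gamma_3)})$. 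A routine Chernoff (two-level, because of the product structure of the three lists) followed by a union bound over the identity of the adversary, over the adversary's encoding map (which, as far as the decoder is concerned, now lives in the finite list $\Gamma_i$), and over its input word $x_i^n\in\mathcal{X}_i^n$, shows that some $N$ polynomial in $n$ makes the strong error of this reduced code at most $2\epsilon_n$; one uses here that, for a fixed attack, the strong-error bound of the original code is an average over the two honest users' ensembles $F_j,F_k$, which concentrates as we average over $\Gamma_j\times\Gamma_k$. After this step each user needs only to convey an index in $[N]$, i.e.\ $O(\log n)$ bits, to the receiver; these indices play the role of the honest users' private coins, and the index announced by the adversarial user is immaterial precisely because the reduced code was made reliable for \emph{every} element of $\Gamma_i$ — which is exactly why we invoke the \emph{strong} adversary.

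The second step bootstraps on the hypothesis. Since $\textup{int}(\mathcal{R}_{\deterministic})\neq\varnothing$, there is a deterministic code achieving a rate triple with all coordinates positive; in particular, for a suitable blocklength $\ell=\ell(n)=O(\log n)=o(n)$ it lets each user reliably send $O(\log n)$ bits even when one (unknown) user is adversarial. Prepend such a code as a prefix in which user $i$ transmits $\gamma_i$; after the prefix the receiver holds $(\hat\gamma_1,\hat\gamma_2,\hat\gamma_3)$ with $\hat\gamma_j$ correct for every honest $j$, and then decodes the length-$n$ main block using $\Phi(\,\cdot\,,f_1^{(\hat\gamma_1)},f_2^{(\hat\gamma_2)},f_3^{(\hat\gamma_3)})$. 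A union bound over the prefix-error event and the (conditional) main-block-error event gives vanishing overall error, and the $\ell$ extra channel uses do not affect rates in the limit. To make the code genuinely deterministic, split each message as $m_i=(\gamma_i,m_i')$ with $\gamma_i\in[N]$, let user $i$ use $\gamma_i$ as its prefix payload and $f_i^{(\gamma_i)}$ as its main-block encoder applied to $m_i'$ (note user $i$ needs only its own $\gamma_i$, since in a randomized code the $i$-th encoder depends only on $F_i$); as $\log N=o(n)$ the rate loss vanishes, and since the messages are uniform $\gamma_i$ is uniform, so the error estimates above go through verbatim (it is convenient to draw the $f_i^{(\gamma)}$ already as encoders on the shrunken message set, to avoid a full-versus-sub message-set mismatch). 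Letting $\delta\to 0$ finishes the proof.

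The delicate part is the elimination step. Unlike the single external jammer of the AVC/AV-MAC, here the adversary is one of the three users and \emph{any} of the three might be adversarial, so one reduced family of codes must simultaneously defeat three families of attacks; moreover the adversary's effective encoding map ranges over the very list $\Gamma_i$ that we are in the middle of choosing, so the order of the quantifiers and of the concentration steps must be arranged carefully to keep $N$ polynomial. Everything else — the prefix and the derandomization — is routine once Theorem~\ref{thm:random} is in hand, and the nonempty-interior hypothesis enters only to supply the short, adversary-robust prefix.
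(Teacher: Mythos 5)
Your proposal follows essentially the same route as the paper's proof: Ahlswede/Jahn-style elimination to reduce each encoder's randomness to polynomially many (here $n^2$) values, followed by a short deterministic prefix — whose existence is exactly what the hypothesis $\textup{int}(\mathcal{R}_{\deterministic})\neq\varnothing$ supplies — to communicate those indices, after which the main block runs the reduced randomized code. You supply considerably more detail than the paper's brief sketch, and you deliberately invoke the \emph{strong}-adversary half of Theorem~\ref{thm:random} whereas the paper names the weak-adversary region; your choice is better motivated, for the reason you give: when user $i$ is adversarial the decoder's prefix estimate $\hat\gamma_i$ is arbitrary, so the reduced code must tolerate the decoder plugging in \emph{any} $f_i\in\Gamma_i$, which is precisely the strong criterion.

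One point you pass over a bit lightly (as does the paper's one-paragraph sketch): $\hat\gamma_i$ is a deterministic function of the received prefix, and the received prefix depends on the honest users' prefix codewords, so $\hat\gamma_i$ is in general \emph{correlated} with $(\gamma_j,\gamma_k)$. The strong-adversary guarantee for the reduced code — small error for each fixed $\gamma_i$ and $x_i^n$, \emph{averaged over $\gamma_j,\gamma_k$ uniformly} — therefore does not apply verbatim to the concatenated scheme, since in the scheme the index the decoder uses is itself a (possibly adversarially steered) random function of $(\gamma_j,\gamma_k)$. This is fixable but needs an extra step: e.g.\ bound the main-block error by $\mathbf{E}_{\gamma_j,\gamma_k}[\max_{\gamma_i} e(\gamma_i,\gamma_j,\gamma_k)]$, then apply Markov per $\gamma_i$ at level $\sqrt{\epsilon_n}$ and union over the $N$ values of $\gamma_i$, using that for rates strictly inside the region the original randomized code's error $\epsilon_n$ can be taken to decay exponentially, so $N\sqrt{\epsilon_n}\to 0$. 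With that patch your "union bound over the prefix-error event and the conditional main-block-error event" is rigorous, and the remainder of the argument — polynomial $N$ via two-level Chernoff over the product lists, the $o(n)$ prefix, and splitting $m_i=(\gamma_i,m_i')$ to derandomize — matches the intended proof.
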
 \vspace{-0.25em}\vspace{-0.25em}
The proof uses an extension of the elimination 
technique~\cite{Ahlswede78,Jahn81} to first show that
$n^2$-valued randomness at each encoder is sufficient to achieve
any rate-triple in $\mathcal{R}_{\random}^{\weak}$. A deterministic
code of small rate can be used to send $2\log_2 n$ bits out of each
message. These message bits are then used as the encoder randomness
in the next phase to communicate the rest of the message bits using
a randomized code.\\

\begin{defn}
Let $(i,j,k)$ be some permutation of $(1,2,3)$. We define three symmetrizability conditions for $\mach$.
\begin{enumerate}
\item We say that $\mach$ is $\cX_j\times\cX_k$-{\em symmetrizable by}~$\cX_i$ if for some distribution $q(x_i|x_j,x_k)$
\begin{align}
&\sum_{x_i} q(x_i|\tlx_j,\tlx_k)\machijk(y|x_i,x_j,x_k)\notag\\
&\qquad = \sum_{x_i} q(x_i|x_j,x_k)\machijk(y|x_i,\tlx_j,\tlx_k),\notag\\
&\qquad\qquad\quad \forall\; x_j,\tlx_j\in\mathcal{X}_j,\; x_k,\tlx_k\in\mathcal{X}_k,\; y\in\mathcal{Y}. \label{eq:symm1}
\end{align}
\item  We say that $\mach$ is $\cX_k|\cX_j$-{\em symmetrizable by}~$\cX_i$ if for some distribution $q(x_i|x_k)$
\begin{align}
&\sum_{x_i} q(x_i|\tlx_k)\machijk(y|x_i,x_j,x_k)\notag\\
&\qquad = \sum_{x_i} q(x_i|x_k)\machijk(y|x_i,x_j,\tlx_k),\notag\\
&\qquad\qquad\quad \forall\; x_j\in\mathcal{X}_j,\; x_k,\tlx_k\in\mathcal{X}_k,\; y\in\mathcal{Y}. \label{eq:symm2}
\end{align}
\item We say that $\mach$ is $\cX_k$-{\em symmetrizable by}~$\cX_i/\cX_j$ if for some pair of distributions $q(\tlx_i|x_i,x_k)$ and $q'(\tlx_j|x_j,x_k)$
\begin{align}
&\sum_{\tlx_i} q(\tlx_i|x_i,\tlx_k)\machijk(y|\tlx_i,x_j,x_k)\notag\\
&\qquad = \sum_{\tlx_j} q'(\tlx_j|x_j,x_k)\machijk(y|x_i,\tlx_j,\tlx_k),\notag\\
&\qquad\quad\;\forall\; x_i\in\mathcal{X}_i,\;x_j\in\mathcal{X}_j,\; x_k,\tlx_k\in\mathcal{X}_k,\; y\in\mathcal{Y}. \label{eq:symm3}
\end{align}
\end{enumerate}
\end{defn}
We will say that {\em user-$k$ is symmetrizable} if any of the above three symmetrizability conditions \eqref{eq:symm1}-\eqref{eq:symm3} holds for some distinct $i,j \in \{1,2,3\}\setminus\{k\}$. And, we will say that the channel $\mach$ is {\em not symmetrizable} if none of the users are symmetrizable, i.e., the channel is not  $\cX_j\times\cX_k$-{\em symmetrizable by}~$\cX_i$, $\cX_k|\cX_j$-{\em symmetrizable by}~$\cX_i$, and $\cX_k$-{\em symmetrizable by}~$\cX_i/\cX_j$ for all permutations $(i,j,k)$ of $(1,2,3)$.

\vspace{-0.25em}
\begin{thm}\label{thm:symmetrizability}
$\mathcal{R}_{\deterministic} = \mathcal{R}$ if $\mach$ is not symmetrizable. Furthermore, $\textup{int}(\mathcal{R}_{\deterministic})\neq\varnothing$ only if $\mach$ is not symmetrizable.
\end{thm}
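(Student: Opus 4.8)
The plan is to handle the two directions separately. For the ``furthermore'' part I would show that each of the twelve symmetrizability conditions forces one of $R_1,R_2,R_3$ to vanish on every achievable deterministic triple, whence $\textup{int}(\mathcal{R}_{\deterministic})=\varnothing$; for the main assertion, assuming $\mach$ is not symmetrizable, I would prove $\mathcal{R}_{\deterministic}=\mathcal{R}$, which by Theorem~\ref{thm:deterministic} reduces to exhibiting one deterministic code with all three rates positive.

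For necessity, if~\eqref{eq:symm1} holds ($\mach$ is $\cX_j\times\cX_k$-symmetrizable by $\cX_i$) or~\eqref{eq:symm2} holds ($\mach$ is $\cX_k|\cX_j$-symmetrizable by $\cX_i$), I would invoke Gubner's theorem: the two-user AV-MAC $W^{(i)}$ obtained by treating user~$i$'s input as the arbitrarily varying state is then symmetrizable in Gubner's sense, so by~\cite{Gubner90} its deterministic average-error capacity region has empty interior, and since every achievable $(R_1,R_2,R_3)$ makes $(R_j,R_k)$ achievable over $W^{(i)}$ (the observation recorded in the introduction), at least one of $R_j,R_k$ is $0$. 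The new case, \eqref{eq:symm3} ($\mach$ is $\cX_k$-symmetrizable by $\cX_i/\cX_j$ via $q(\tlx_i|x_i,x_k)$, $q'(\tlx_j|x_j,x_k)$), I would attack directly. Fixing any deterministic code, I consider two families of attacks: in \emph{World~A} user~$i$ is adversarial and, having drawn a decoy pair $(B_i,B_k)$ uniformly over $\mathcal{M}_i\times\mathcal{M}_k$, transmits $\tilde{X}_i^n$ with independent coordinates $\tilde{X}_{i,t}\sim q(\cdot\,|\,f_i(B_i)_t,f_k(B_k)_t)$ while users~$j,k$ are honest; in \emph{World~B} user~$j$ is adversarial, draws a decoy pair uniformly over $\mathcal{M}_j\times\mathcal{M}_k$, transmits $\tilde{X}_j^n$ built the same way from $q'$, while users~$i,k$ are honest. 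Each is a legitimate (randomized, hence no stronger than a worst-case deterministic) strategy that ignores the honest messages. A coordinatewise use of~\eqref{eq:symm3} shows that the output distribution in World~A with honest user-$k$ message $a_k$ and decoy $b_k$ equals that in World~B with honest user-$k$ message $b_k$ and decoy $a_k$ --- the indistinguishability of Figure~\ref{fig:symm3}. Since the decoder, ignorant of the adversary's identity, must decode the honest user~$k$ in both worlds, summing its success probabilities over the decoy and message indices and using that its decoded value of user~$k$'s message cannot be both $a_k$ and $b_k$ when $a_k\neq b_k$ should give $2(1-\varepsilon)\leq 1+1/\nummsg_k$ when the error is at most $\varepsilon$, forcing $\nummsg_k$ bounded and $R_k=0$ --- the averaging behind the Csisz\'ar--Narayan symmetrizability converse, with uniform decoys in the role of the state. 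By Example~\ref{ex:symmetrizable} this case is genuinely outside the previous nine, so in every case $\textup{int}(\mathcal{R}_{\deterministic})=\varnothing$.

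For sufficiency, assume $\mach$ is not symmetrizable. I would first note that ``overwhelming'' a single input or a pair --- making $\sum_{x_\ell}q(x_\ell)\mach(y|x_1,x_2,x_3)$ independent of one argument --- is the special case of~\eqref{eq:symm1}/\eqref{eq:symm2} with constant symmetrizing distribution, so that non-symmetrizability makes the product of uniform input distributions (with $U$ trivial) render all of~\eqref{eq:rateconstraint1}--\eqref{eq:rateconstraint2} strictly positive; hence $\textup{int}(\mathcal{R})\neq\varnothing$, and by Theorem~\ref{thm:deterministic} it suffices to produce one positive-rate deterministic code. For this I would draw a random codebook with codewords i.i.d.\ across messages and, conditioned on a common time-sharing sequence, across users, from such a positive point, and decode by a rule that, not knowing the adversary's identity, considers the four hypotheses (no adversary, or user~$\ell$ adversarial, $\ell\in\{1,2,3\}$) and, under each, looks for an assignment of messages to the honest slots jointly typical with $Y^n$ under $\mach$ for some sequence on the adversarial slot, outputting the honest users' messages when these are consistently and unambiguously determined and declaring an error otherwise; codewords participating in too many confusable configurations are expurgated.

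The main obstacle, I expect, is the error analysis of this code --- showing that in each of the four scenarios the expurgated code decodes with vanishing error. The no-adversary scenario is essentially a standard MAC argument. When user~$k$ is in fact adversarial and sends $x_k^n$, an error means a wrong configuration jointly typical with $Y^n\sim\prod_t\mach(\cdot\,|\,f_i(m_i)_t,f_j(m_j)_t,x_{k,t})$; I would organize these configurations by the node to which the decoder (mis)attributes the attack and by the decoded messages, and bound each with a method-of-types / list-size estimate in the spirit of~\cite{AhlswedeC99}. Non-symmetrizability is then exactly what eliminates every bad joint type: the nine conditions~\eqref{eq:symm1}--\eqref{eq:symm2} inherited from the AV-MACs $W^{(1)},W^{(2)},W^{(3)}$ rule out the errors in which the decoder still blames a single node and merely mis-decodes that node's partners, while the three instances of the new condition~\eqref{eq:symm3} rule out the \emph{cross} events, in which the adversary at node~$k$ makes $Y^n$ look as though some node $j\neq k$ were adversarial and thereby renders the remaining user's message ambiguous. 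Making this precise requires the right generalization of the Ahlswede--Cai decoder and its typicality/list-size bounds to the three-user, single-unknown-adversary setting; once the positive-rate deterministic code is built, Theorem~\ref{thm:deterministic} upgrades it to $\mathcal{R}_{\deterministic}=\mathcal{R}$.
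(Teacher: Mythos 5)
Your converse matches the paper's argument almost exactly: conditions~\eqref{eq:symm1}--\eqref{eq:symm2} inherit from Gubner via the embedded AV-MACs $W^{(i)}$, and for~\eqref{eq:symm3} the paper also sets up two settings (user~$i$ adversarial passing $(f_i(M_i),f_k(\tilde m_k))$ through $q^n$; user~$j$ adversarial passing $(f_j(M_j),f_k(m_k))$ through $q'^n$), uses~\eqref{eq:symm3} coordinatewise to equate the two output laws, and averages over the ordered pair of user-$k$ messages \`a la Csisz\'ar--Narayan to show $P_e\geq 1/8$. Your counting inequality $2(1-\varepsilon)\leq 1+1/N_k$ is an equivalent rendering of that averaging, so this part is fine.

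For the achievability your logical routing differs a little from the paper's but is sound: you note that non-symmetrizability kills the constant-$q$ instances of~\eqref{eq:symm1}/\eqref{eq:symm2}, so under a full-support product input every bound in~\eqref{eq:rateconstraint1}--\eqref{eq:rateconstraint2} is strictly positive and $\textup{int}(\mathcal{R})\neq\varnothing$; you then propose to exhibit a single positive-rate deterministic code and invoke Theorem~\ref{thm:deterministic}. The paper instead shows directly that every point of $\textup{int}(\mathcal{R})$ is achievable, bypassing Theorem~\ref{thm:deterministic}; both yield $\mathcal{R}_{\deterministic}=\mathcal{R}$ once combined with the Theorem~\ref{thm:random} converse.

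The place where the sketch is genuinely thin is the decoder. Your description (``consider the four hypotheses \ldots output when consistently and unambiguously determined'') is coarser than what is actually needed. The paper's decoder (Definition~\ref{def:decoder}) decodes each user's message \emph{separately}: $\mathbf{y}\in D_{m_1}$ if for \emph{some} choice of which of nodes $2,3$ is adversarial there is an honest codeword and a state sequence on the adversarial slot making a near-channel joint type, \emph{and} all three mutual-information tests (a), (b), (c) pass. Tests (a) and (b) are the Ahlswede--Cai conditions; test (c) is new and is exactly what rules out the cross events you identify, where an adversary at node~$k$ makes $\mathbf{y}$ look as if some other node were adversarial and thereby renders $m_1$ ambiguous. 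Well-definedness of this decoder (Lemma~\ref{lemma:dec}) is proved by showing that a decoding ambiguity would force a uniform closeness between two induced channels, contradicting the negation of one of~\eqref{eq:symm1}--\eqref{eq:symm3}; and the error analysis rests on a codebook lemma (Lemma~\ref{codebook}) whose type-counting bounds~\eqref{code_eq5}--\eqref{code_eq7} include new estimates for exactly these cross configurations (e.g., pairs $(r,s)$ confusable via some $u\neq r$ and some codeword $\mathbf{x}_{kt}$ from the third codebook). Your intuition about which symmetrizability condition kills which error class is correct, but the precise typicality/list-size machinery that makes it go is where almost all the work in the paper lives, and your proposal (as you yourself note) leaves it open.
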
 \vspace{-0.25em}\vspace{-0.25em}

\begin{remark}\label{rem:gap} We prove the converse part of Theorem~\ref{thm:symmetrizability} by showing that if user-$k$ is symmetrizable, then any achievable rate triple $(R_1,R_2,R_3)$ must be such that $R_k=0$. Our capacity region characterization does not cover the case where some (but not all) users are symmetrizable. In this case, by Theorem~\ref{thm:random}, $\mathcal{R}$ restricted to rates of non-symmetrizable users is clearly an outer bound on $\mathcal{R}_\deterministic$. It is tempting to conjecture that these regions are equal. A similar conjecture for the two-user AV-MAC was recently proved by Pereg and Steinberg~\cite{PeregS19}. \end{remark}

Clearly, symmetrizability conditions for the two-user AV-MAC with ${X}_i$ as the state and ${X}_j,{X}_k$ as the inputs are also symmetrizability conditions for our problem (Figures~\ref{fig:symm1} and~\ref{fig:symm2}). Thus, for the converse statement, the first two conditions \eqref{eq:symm1}-\eqref{eq:symm2} follow from Gubner~\cite{Gubner90}. The third condition \eqref{eq:symm3} is new and arises from the Byzantine nature of the users in this problem. Consider Figure~\ref{fig:symm3}. Here the receiver is unable to tell apart the two possibilities shown, i.e., whether user~1 is malicious with user~3 sending $x_3$ or user~2 is malicious with user~3 sending $\tlx_3$.

The following example shows that the third symmetrizability condition  does not imply  the others. The channel below is neither $\cX_j\times\cX_k$-{symmetrizable by}~$\cX_i$ nor $\cX_k|\cX_j$-{symmetrizable by}~$\cX_i$ for any permutation $(i,j,k)$ of $(1,2,3)$. However, it is $\cX_3$-symmetrizable by~$\cX_1/\cX_2$.
\begin{example}\label{ex:symmetrizable}
Let $\cX_1=\cX_2=\cY=\{0,1\}^3$ and $\cX_3=\{0,1\}$. We denote $x_1=(a_1,b_1,c_1)$, $x_2=(a_2,b_2,c_2)$, and $y=(y_1,y_2,y_3)$. Consider the channel $\mach$ defined by
\begin{align*}
(Y_1,Y_2)&=(C_1,C_2),\\
Y_3&=\left\{\begin{array}{ll} 
B_1\oplus(A_1\odot X_3) &\text{w.p. } 1/2\\
B_2\oplus(A_2 \odot X_3) &\text{w.p. } 1/2
\end{array}\right.
\end{align*}
where $\odot$ denotes multiplication and $\oplus$ denotes addition modulo~2.

To see that this channel is $\cX_3$-symmetrizable by~$\cX_1/\cX_2$, consider the ``deterministic'' $q((\tla_1,\tlb_1,\tlc_1)|(a_1,b_1,c_1),\tlx_3)$ and $q'((\tla_2,\tlb_2,\tlc_2)|(a_2,b_2,c_2),x_3)$ defined as follows: let $g,g':\{0,1\}^4\rightarrow\{0,1\}^2$ be defined as
\begin{align*}
g((a_1,b_1,c_1),\tlx_3) &= (0,b_1 \oplus (a_1\odot \tlx_3),c_1),\\
g'((a_2,b_2,c_2),x_3) &= (0,b_2 \oplus (a_2\odot x_3),c_2).
\end{align*}
Then 
\begin{align*}
q((\tla_1,\tlb_1,\tlc_1)|(a_1,b_1,c_1),\tlx_3)
&= 1_{(\tla_1,\tlb_1,\tlc_1)=g((a_1,b_1,c_1),\tlx_3)},\\
q'((\tla_2,\tlb_2,\tlc_2)|(a_2,b_2,c_2),x_3)
&= 1_{(\tla_2,\tlb_2,\tlc_2)=g'((a_2,b_2,c_2),x_3)}.
\end{align*}
Consider the two cases shown in Figure~\ref{fig:symm3} with $x_1=(a_1,b_1,c_1)$, $x_2=(a_2,b_2,c_2)$, and $q$ and $q'$ defined as above.
It follows that, in both the cases, the channel output $Y$ has the same conditional distribution given each input. In particular,
\begin{align*}
(Y_1,Y_2)&=(c_1,c_2),\\
Y_3&=\left\{\begin{array}{ll} 
b_1\oplus(a_1\odot \tlx_3) &\text{w.p. } 1/2\\
b_2\oplus(a_2 \odot x_3) &\text{w.p. } 1/2.
\end{array}\right.
\end{align*}
\noindent This shows that the symmetrizability condition~\eqref{eq:symm3} holds for $(i,j,k)=(1,2,3)$. 

Since $(Y_1,Y_2)=(C_1,C_2)$, it is clear that neither user-1 nor user-2 is symmetrizable. It only remains to show that the channel is neither $\cX_3|\cX_2$-{symmetrizable by}~$\cX_1$ nor $\cX_3|\cX_1$-{symmetrizable by}~$\cX_2$. Suppose the channel is $\cX_3|\cX_2$-{symmetrizable by}~$\cX_1$. Then, to satisfy \eqref{eq:symm2} for $x_2=(0,0,c_2)$ and $(x_3,\tlx_3)=(0,1)$, it is easy to verify that we must have
\begin{align*}
&q(0,0,0|1)+q(0,0,1|1)+q(1,0,0|1)+q(1,0,1|1)\\
&\qquad= q(1,1,0|0)+q(1,1,1|0)+q(0,0,0|0)+q(0,0,1|0).
\end{align*}
However, to satisfy \eqref{eq:symm2} for $x_2=(1,0,c_2)$ and $(x_3,\tlx_3)=(0,1)$, we can show that we must have
\begin{align*}
&1+q(0,0,0|1)+q(0,0,1|1)+q(1,0,0|1)+q(1,0,1|1)\\
&\qquad= q(1,1,0|0)+q(1,1,1|0)+q(0,0,0|0)+q(0,0,1|0).
\end{align*}
Hence, the channel is not $\cX_3|\cX_2$-{symmetrizable by}~$\cX_1$. By symmetry, it is not  $\cX_3|\cX_1$-{symmetrizable by}~$\cX_2$.

\end{example}

\section{Proof Sketches}
\subsection{Randomized coding capacity region}
\usetikzlibrary{arrows, shapes,positioning,
                chains,% <--- new
                decorations.markings,
                shadows, shapes.arrows}
  \definecolor{lightgreen}{rgb}{0.4,0.4,0.1}
\definecolor{lightblue}{rgb}{0.7372549019607844,0.8313725490196079,0.9019607843137255}
\definecolor{darkblue}{rgb}{0.08235294117647059,0.396078431372549,0.7529411764705882}
\definecolor{orangered}{rgb}{0.6,0.3,0.1}
\definecolor{gray}{rgb}{0.7529411764705882,0.7529411764705882,0.7529411764705882}

\begin{figure}\longonly{\centering} \scalebox{0.85}{\begin{tikzpicture}[line cap=round,line join=round,>=triangle 45,x=1.1cm,y=1.1cm]
\tikzstyle{myarrows}=[line width=0.5mm,draw = black!20!green!40!red,-triangle 45,postaction={draw, line width=2mm, shorten >=4mm, -}]

\fill[line width=0.4pt,color=lightgreen,fill=lightgreen,fill opacity=0.07] (-9.2,2.7) -- (-9.2,-.7) -- (-7.8,-.7) -- (-7.8,2.7) -- cycle;
\draw [rotate around={90:(-11.5,1)},line width=0.4pt,dotted,color=lightblue,fill=lightblue,fill opacity=0.46] (-11.5,1) ellipse (0.73cm and 0.367cm);
\draw [rotate around={90:(-10.25,1.5)},line width=0.4pt,dotted,color=lightblue,fill=lightblue,fill opacity=0.46] (-10.25,1.5) ellipse (0.73cm and 0.367cm);
\draw [rotate around={90:(-10.25,1)},line width=0.4pt,color=gray,fill=gray,fill opacity=0.2] (-10.25,1) ellipse (1.83cm and 0.55cm);
\draw [line width=0.4pt,color=lightgreen] (-9.2,2.7)-- (-9.2,-.7);
\draw [line width=0.4pt,color=lightgreen] (-9.2,-.7)-- (-7.8,-.7);
\draw [line width=0.4pt,color=lightgreen] (-7.8,-.7)-- (-7.8,2.7);
\draw [line width=0.4pt,color=lightgreen] (-7.8,2.7)-- (-9.2,2.7);

\draw [color = blue!60!red,->,line width=0.1pt] (-11.5,0.5) -- (-10.25,1.7);
\draw [color = blue!60!red,->,line width=0.1pt] (-11.5,1.5) -- (-10.25,1);
\draw [color = blue!60!red,->,line width=0.7pt] (-11.5,1.2) -- (-10.25,2);

\draw [color = black!20!green!40!red,->,line width=0.7pt] (-10.25,2) -- (-8.3,1.1);
\draw [color = black!20!green!40!red,->,line width=0.1pt] (-10.25,1) -- (-8.6,2);
\draw [color = black!20!green!40!red,->,line width=0.1pt] (-10.25,1.7) -- (-8.8,0.7);
\draw [color = black!20!green!40!red,->,line width=0.1pt] (-10.25,0.5) -- (-8.3,1.6);
\draw [color = black!20!green!40!red,->,line width=0.1pt] (-10.25,0.2) -- (-8.6,0.3);
\draw [color = black!20!green!40!red,->,line width=0.1pt] (-10.25,-0.5) -- (-8.3,-0.1);
\draw [color = black!20!green!40!red,->,line width=0.1pt] (-10.25,2.5) -- (-8.4,2.4);

\draw [rotate around={-90:(-8.5,1)},line width=0.4pt,color=gray,fill=gray,fill opacity=0.26] (-8.5,1) ellipse (1.83cm and 0.55cm);

\draw [fill=darkblue] (-11.5,1.2) circle (2.5pt);
\draw [fill=darkblue,opacity=0.8] (-11.5,1.5) circle (2.5pt);
\draw [fill=darkblue,opacity=0.8] (-11.5,0.5) circle (2.5pt);
\draw[color=darkblue] (-11.5,0) node {$\cM_i=\left[\frac{\epsilon 2^{nR_i}}{3}\right]$};
\draw[color=black] (-11.8,1.15) node {{\scriptsize $m_i$}};
\draw [fill=white] (-10.25,0.5) circle (2.5pt);
\draw [fill=white] (-10.25,0.2) circle (2.5pt);
\draw [fill=white] (-10.25,-0.5) circle (2.5pt);
\draw [fill=darkblue,opacity=0.8] (-10.25,1.7) circle (2.5pt);%
\draw [fill=darkblue] (-10.25,2) circle (2.5pt);%
\draw [fill=darkblue,opacity=0.8] (-10.25,1) circle (2.5pt);%
\draw [fill=white] (-10.25,2.5) circle (2.5pt);
\draw (-10.25,-1) node {$\tilde{\cM}_i=\left[2^{nR_i}\right]$};
\draw[color = lightgreen] (-8.5,-1) node {$\cX_i^n$};
\draw[color = blue!60!red] (-11,1.75) node {$L_i$};
\draw[color = blue!60!red] (-9.8,2.2) node {{\scriptsize ${\color{black}\tilde{m}_i=}{\color{blue!60!red}L_i}{\color{black}(m_i)}$}};
\draw[color = black!20!green!40!red] (-9.6,2.7) node {$G_i$};
\draw[color = black] (-7.8,0.95) node {{\scriptsize $x_i^n={\color{black!20!green!40!red}G_i}({\color{blue!60!red}L_i}{\color{black}(m_i)})$}};

\draw [fill=darkblue,opacity=0.8] (-8.6,2) circle (2.5pt);%
\draw [fill=darkblue] (-8.3,1.1) circle (2.5pt);%
\draw [fill=darkblue,opacity=0.8] (-8.8,0.7) circle (2.5pt);%
\draw [fill=white] (-8.3,1.6) circle (2.5pt);
\draw [fill=white] (-8.6,0.3) circle (2.5pt);
\draw [fill=white] (-8.3,-0.1) circle (2.5pt);
\draw [fill=white] (-8.4,2.4) circle (2.5pt);

%Y^n block
\draw [line width=0.4pt,color=orangered] (-7.2,2.7)-- (-7.2,-.7);
\draw [line width=0.4pt,color=orangered] (-7.2,-.7)-- (-5.8,-.7);
\draw [line width=0.4pt,color=orangered] (-5.8,-.7)-- (-5.8,2.7);
\draw [line width=0.4pt,color=orangered] (-5.8,2.7)-- (-7.2,2.7);
\fill[line width=0.4pt,color=orangered,fill=orangered,fill opacity=0.07] (-7.2,2.7) -- (-7.2,-.7) -- (-5.8,-.7) -- (-5.8,2.7) -- cycle;
\draw[color = orangered] (-6.5,-1) node {$\cY^n$};
\draw [fill=black] (-6.25,1.4) circle (2.5pt);
\draw[color = black] (-6.25,1.65) node {$y^n$};

% decoder step 1
\draw [rotate around={90:(-4.75,1)},line width=0.4pt,color=gray,fill=gray,fill opacity=0.2] (-4.75,1) ellipse (1.83cm and 0.55cm);
\draw [rotate around={90:(-4.75,1.5)},line width=0.4pt,dotted,color=lightblue,fill=lightblue,fill opacity=0.46] (-4.75,1.5) ellipse (0.73cm and 0.367cm);
\draw [fill=white] (-4.75,0.5) circle (2.5pt);
\draw [fill=white] (-4.75,0.2) circle (2.5pt);
\draw [fill=white] (-4.75,-0.5) circle (2.5pt);
\draw [fill=darkblue] (-4.75,1.7) circle (2.5pt);%
\draw [fill=darkblue] (-4.75,2) circle (2.5pt);%
\draw [fill=darkblue] (-4.75,1) circle (2.5pt);%
\draw [fill=white] (-4.75,2.5) circle (2.5pt);
\draw (-4.75,-1) node {$\tilde{\cM}_i$};
\draw[color = black!20!green!40!red] (-5.45,1.5) node {$\Gamma^{(k)}_i$};
\draw[color = blue!60!red] (-4,2) node {$\Lambda_i$};

\draw [->,color=darkblue] (-5.4,3) to [out=250,in=150] (-4.9,2);
\draw[color = black] (-5.3,3.2) node {valid inner messages for $L_i$};

\draw [myarrows](-5.9,1)--(-5.1,1);

%decoder step 2
\draw [rotate around={90:(-3.5,1.5)},line width=0.4pt,dotted,color=lightblue,fill=lightblue,fill opacity=0.46] (-3.5,1.5) ellipse (0.73cm and 0.367cm);
\draw [fill=darkblue] (-3.5,1.7) circle (2.5pt);
\draw [fill=darkblue] (-3.5,2) circle (2.5pt);
\draw [fill=darkblue] (-3.5,1) circle (2.5pt);
\draw [color=darkblue] (-3.3,.6) node {$\cM_i$};
\draw [color=red!50!blue,fill=red!70!blue!20!white] (-3.5,0) circle (2.5pt);
\draw [color=red!50!blue] (-3.3,-0.1) node {$\bot$};

\draw [color = blue!60!red,->,line width=0.2pt] (-4.75,1.7) -- (-3.5,1);
\draw [color = blue!60!red,->,line width=0.2pt] (-4.75,1) -- (-3.5,2);
\draw [color = blue!60!red,->,line width=0.2pt] (-4.75,2) -- (-3.5,1.7);

\draw [color = blue!60!red,->,line width=0.2pt] (-4.75,0.5) -- (-3.55,0);
\draw [color = blue!60!red,->,line width=0.2pt] (-4.75,0.2) -- (-3.55,0);
\draw [color = blue!60!red,->,line width=0.2pt] (-4.75,-0.5) -- (-3.55,0);
\draw [color = blue!60!red,->,line width=0.2pt] (-4.75,2.5) -- (-3.55,0);

\draw (-9.75,-1.7) node {\small {(a)} Encoder $F_i:L_i\circ G_i$};
\draw (-5.25,-1.7) node {\small {(b)} Decoder $\Phi^{(k)}_i:\Gamma_i^{(k)}\circ\Lambda_i$};

\end{tikzpicture}}
\caption{The encoders and decoders for Theorem~\ref{thm:random}.}\label{fig:random_proof}

\end{figure}
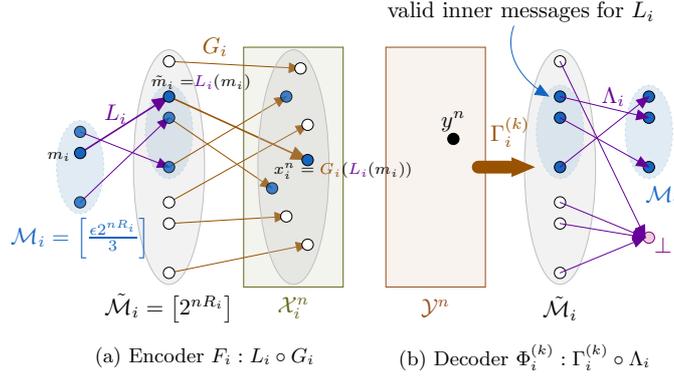

Below we sketch the proof of achievability for the strong adversary setting. \longonly{Full details and a converse proof for the weak adversary case are available in Appendix~\ref{app:A}}. \shortonly{Full details and a converse proof for the weak adversary case are available in the extended draft~\cite{ExtendedDraft}.} 
\vspace{-0.25em}
\begin{proof}[Proof sketch (achievability of Theorem~\ref{thm:random})] Our achievability relies on the following encoder and decoder constructions (Figure~\ref{fig:random_proof}).\\
\noindent\underline{\em(a) Encoder $F_i=L_i\circ G_i$:} First, for each user $i$, a random injective map $L_i$ embeds the given message $m_i\in\cM_i$ into the set $\tilde{\cM}_i$ whose size is $3/\epsilon$ times that of $\cM_i$ to obtain the inner message $\tilde{m}_i$. Next, the codeword $x_i^n\in\cX_i^n$ is generated using a random code $G_i$ drawn for the message set $\tilde{\cM}_i$ using an input distribution $p_i$. The encoder maps $L_i$ and $G_i$ are made available to the decoder as the shared secret between user-$i$ and the decoder.\\ \noindent\underline{\em (b) Decoders $\Phi_i^{(k)}=\Gamma_i^{(k)}\circ \Lambda_i$, $k\neq i$:} For each $k\neq i$, we first use a decoding rule $\Gamma_i^{(k)}$ for a $2$-user AV-MAC that is formed by treating user $k$'s channel input as the adversarial state. If the reconstructed inner message $\hat{\tilde{m}}_i^{(k)}=\Gamma_i^{(k)}(y^n)$ is valid (under the embedding $L_i$), the map $\Lambda_i$ outputs message reconstruction $\hat{m}_i^{(k)}=L_i^{-1}(\hat{\tilde{m}}_i^{(k)})$. Else, it outputs a failure, denoted by $\bot$. 

The achievability follows by showing that, as long as the rate triple $(R_1,R_2,R_3)$ satisfy the constraints~\eqref{eq:rateconstraint1} and~\eqref{eq:rateconstraint2} with $U$ being a constant value -- thus, each pair of rates lie in the corresponding AV-MAC capacity region -- the following hold simultaneously: \emph{(i)} when there is no adversary or when node $k$ is the only adversarial node, $\hat{m}_i$ equals $m_i$ w.h.p., and \emph{(ii)} when a node other that node $k$ is the adversarial node, w.h.p., either $\hat{m}_i$ equals $m_i$ or the decoder $\Phi_i^{(k)}$ outputs $\bot$. Thus, with high probability, for each non adversarial user $i$, \emph{(i)} at least one of the $\Phi_i^{(k)}$'s outputs the correct message, and \emph{(ii)} none of the $\Phi_i^{(k)}$'s output an incorrect message (though they may output $\bot$). Finally, the achievability of any rate triple satisfying~\eqref{eq:rateconstraint1} and~\eqref{eq:rateconstraint2} (\emph{i.e.}, with an arbitrary auxiliary random variable $U$ in the bounds) follows from a time sharing argument. 
\end{proof} \vspace{-0.25em}\vspace{-0.25em}

\subsection{Deterministic coding capacity region (Theorem~\ref{thm:symmetrizability})}
\begin{proof}[Proof (Converse of Theorem~\ref{thm:symmetrizability})]

\shortonly{
Clearly, symmetrizability conditions for the two-user AV-MAC with $\mathcal{X}_i$ as the state alphabet and $\mathcal{X}_j,\mathcal{X}_k$ as the input alphabets are also symmetrizability conditions for our problem. Conditions~1 and 2 follow from Gubner~\cite{Gubner90}.

To show condition~3, consider $(i,j,k)=(1,2,3)$, the other cases follow similarly. Suppose $q(\tlx_1|x_1,x_3)$ and $q'(\tlx_j|x_j,x_k)$ satisfy \eqref{eq:symm3}. i.e.,
\begin{align}
&\sum_{\tlx_1} q(\tlx_1|x_1,\tlx_3)\mach(y|\tlx_1,x_2,x_3)\notag\\
&\qquad = \sum_{\tlx_2} q'(\tlx_2|x_2,x_3)\mach(y|x_1,\tlx_2,\tlx_3),\notag\\
&\qquad\quad\;\forall\; x_1\in\mathcal{X}_1,\;x_2\in\mathcal{X}_2,\; x_3,\tlx_3\in\mathcal{X}_3,\; y\in\mathcal{Y}. \label{eq:conversesymm}
\end{align}
Let $m_3,\tilde{m}_3\in\mathcal{M}_3$ be distinct, and let $x_3^n=f_3(m_3)$ and $\tlx_3^n=f_3(\tilde{m}_3)$. We consider two different settings in which user-3 sends $x_3^n$ and $\tlx_3^n$ respectively:
\begin{enumerate}
\item[(i)] In the first setting, user-1 is adversarial. It chooses an $M_1\sim\textup{Unif}(\mathcal{M}_1)$. Let $X_1^n=f_1(M_1)$. To produce its input $\tilde{X}_1^n$ to the channel, it passes $(X_1^n,\tlx_3^n)$ through $q^n$, the $n$-fold product of the channel $q(\tlx_1|x_1,x_3)$. User-2, being non-adversarial, sends as its input to the channel $X_2^n=f_2(M_2)$, where $M_2\sim\textup{Unif}(\mathcal{M}_2)$. User-3 sends $x_3^n$ corresponding to message ${m}_3$. The distribution of the received vector in this case is
\begin{align*}
\frac{1}{\nummsg_1\nummsg_2}\sum_{m_1,m_2}
\prod_{t=1}^n &\sum_{\tlx_{1,t}}q(\tlx_{1,t}|f_{1,t}(m_1),\tlx_{3,t})\\&  \mach(y_t|\tlx_{1,t},f_{2,t}(m_2),x_{3,t}).
\end{align*}
\item[(ii)] In the second setting, user-2 is adversarial. It chooses an $M_2\sim\textup{Unif}(\mathcal{M}_2)$. Let $X_2^n=f_2(M_2)$. To produce its input $\tilde{X}_2^n$ to the channel, it passes $(X_2^n,x_3^n)$ through $q'^n$, the $n$-fold product of the channel $q'(\tlx_2|x_2,x_3)$. User-1, being non-adversarial now, sends as its input to the channel $X_1^n=f_1(M_1)$, where $M_1\sim\textup{Unif}(\mathcal{M}_1)$. User-3 here sends $\tlx_3^n$ corresponding to message $\tilde{m}_3$. Here, the distribution of the received vector is
\begin{align*}
\frac{1}{\nummsg_1\nummsg_2}\sum_{m_1,m_2}
\prod_{t=1}^n &\sum_{\tlx_{2,t}}q'(\tlx_{2,t}|f_{2,t}(m_2),x_{3,t})\\&  \mach(y_t|f_{1,t}(m_1),\tlx_{2,t},\tlx_{3,t}).
\end{align*}
\end{enumerate}
By \eqref{eq:conversesymm}, the above two distributions are identical. Hence, for any decoder, the sum of probabilities of decoding error for messages $m_3$ and $\tilde{m}_3$ must be at least 1, i.e.,
if we define $e^{3}_1(m_3,\tlx_1^n)\defineqq \frac{1}{\nummsg_2}\sum_{m_2'} e_1(\tlx_1^n,m_2',m_3)$ and similarly $e^{3}_2(\tilde{m}_3,\tlx_2^n)\defineqq\frac{1}{\nummsg_1}\sum_{m_1'} e_2(m_1',\tlx_2^n,\tilde{m}_3)$, then
\[ E_{\tilde{X}_1^n}[e^{3}_1(m_3,\tilde{X}_1^n)] + E_{\tilde{X}_2^n}[e^{3}_2(\tilde{m}_3,\tilde{X}_2^n)] \geq 1.\]
	Note that the distribution of $\tilde{X}_1^n$ (resp. $\tilde{X}_2^n$) does not depend on $m_3$ (resp. $\tilde{m}_3$). Arguing along the lines of~\cite[(3.29) in page~187]{CsiszarN88}, we can show that the average probability of error is at least 1/8. \shortonly{See the extended version~\cite{ExtendedDraft} for more details.}
\end{proof}

}

\longonly{ Clearly, symmetrizability conditions for the two-user AV-MAC with $\mathcal{X}_i$ as the state alphabet and $\mathcal{X}_j,\mathcal{X}_k$ as the input alphabets are also symmetrizability conditions for our problem. Conditions~1 and 2 follow from Gubner~\cite{Gubner90}.

To show condition~3, consider $(i,j,k)=(1,2,3)$, the other cases follow similarly. Suppose $q(\tlx_1|x_1,x_3)$ and $q'(\tlx_j|x_j,x_k)$ satisfy \eqref{eq:symm3}, i.e.,
\begin{align}
&\sum_{\tlx_1} q(\tlx_1|x_1,\tlx_3)\mach(y|\tlx_1,x_2,x_3)\notag\\
&\qquad = \sum_{\tlx_2} q'(\tlx_2|x_2,x_3)\mach(y|x_1,\tlx_2,\tlx_3),\notag\\
&\qquad\quad\;\forall\; x_1\in\mathcal{X}_1,\;x_2\in\mathcal{X}_2,\; x_3,\tlx_3\in\mathcal{X}_3,\; y\in\mathcal{Y}. \label{eq:conversesymm}
\end{align}
Let $m_3,\tilde{m}_3\in\mathcal{M}_3$ be distinct, and let $x_3^n=f_3(m_3)$ and $\tlx_3^n=f_3(\tilde{m}_3)$. We consider two different settings in which user-3 sends $x_3^n$ and $\tlx_3^n$ respectively:
\begin{enumerate}
\item[(i)] In the first setting, user-1 is adversarial. It chooses an $M_1\sim\textup{Unif}(\mathcal{M}_1)$. Let $X_1^n=f_1(M_1)$. To produce its input $\tilde{X}_{1,\tilde{m}_3}^n$ to the channel, it passes $(X_1^n,\tlx_3^n)$ through $q^n$, the $n$-fold product of the channel $q(\tlx_1|x_1,x_3)$. User-2, being non-adversarial, sends as its input to the channel $X_2^n=f_2(M_2)$, where $M_2\sim\textup{Unif}(\mathcal{M}_2)$. User-3 sends $x_3^n$ corresponding to message ${m}_3$. The distribution of the received vector in this case is
\begin{align*}
\frac{1}{\nummsg_1\nummsg_2}\sum_{m_1,m_2}
\prod_{t=1}^n &\sum_{\tlx_{1,\tilde{m}_3,t}}q(\tlx_{1,\tilde{m}_3,t}|f_{1,t}(m_1),\tlx_{3,t}) \mach(y_t|\tlx_{1,\tilde{m}_3,t},f_{2,t}(m_2),x_{3,t}).
\end{align*}
\item[(ii)] In the second setting, user-2 is adversarial. It chooses an $M_2\sim\textup{Unif}(\mathcal{M}_2)$. Let $X_2^n=f_2(M_2)$. To produce its input $\tilde{X}_{2,m_3}^n$ to the channel, it passes $(X_2^n,x_3^n)$ through $q'^n$, the $n$-fold product of the channel $q'(\tlx_2|x_2,x_3)$. User-1, being non-adversarial now, sends as its input to the channel $X_1^n=f_1(M_1)$, where $M_1\sim\textup{Unif}(\mathcal{M}_1)$. User-3 here sends $\tlx_3^n$ corresponding to message $\tilde{m}_3$. Here, the distribution of the received vector is
\begin{align*}
\frac{1}{\nummsg_1\nummsg_2}\sum_{m_1,m_2}
\prod_{t=1}^n &\sum_{\tlx_{2,m_3,t}}q'(\tlx_{2,m_3,t}|f_{2,t}(m_2),x_{3,t})\mach(y_t|f_{1,t}(m_1),\tlx_{2,m_3,t},\tlx_{3,t}).
\end{align*}
\end{enumerate}
By \eqref{eq:conversesymm}, the above two distributions are identical. Hence, for any decoder, the sum of probabilities of decoding error for messages $m_3$ and $\tilde{m}_3$ must be at least 1, i.e.,
if we define $e^{3}_1(m_3,\tlx_1^n)\defineqq \frac{1}{\nummsg_2}\sum_{m_2'} e_1(\tlx_1^n,m_2',m_3)$ and similarly $e^{3}_2(\tilde{m}_3,\tlx_2^n)\defineqq\frac{1}{\nummsg_1}\sum_{m_1'} e_2(m_1',\tlx_2^n,\tilde{m}_3)$, then
\shortonly{
\[ E_{\tilde{X}_1^n}[e^{3}_1(m_3,\tilde{X}_1^n)] + E_{\tilde{X}_2^n}[e^{3}_2(\tilde{m}_3,\tilde{X}_2^n)] \geq 1.\]}
\longonly{
\begin{align*}
E_{\tilde{X}_{1,\tilde{m}_3}^n}[e^{3}_1(m_3,\tilde{X}_{1,\tilde{m}_3}^n)]& + E_{\tilde{X}_{2,m_3}^n}[e^{3}_2(\tilde{m}_3,\tilde{X}_{2,m_3}^n)] = \\
&\quad \sum_{y^n:\phi(y^n)\neq m_3}\left(\frac{1}{\nummsg_1\nummsg_2}\sum_{m_1,m_2}\prod_{t=1}^n \sum_{\tlx_{1,t}}q(\tlx_{1,t}|f_{1,t}(m_1),\tlx_{3,t}) \mach(y_t|\tlx_{1,t},f_{2,t}(m_2),x_{3,t})\right)\\
&\quad+\sum_{y^n:\phi(y^n)\neq \tilde{m}_3}\left(\frac{1}{\nummsg_1\nummsg_2}\sum_{m_1,m_2}
\prod_{t=1}^n \sum_{\tlx_{2,t}}q'(\tlx_{2,t}|f_{2,t}(m_2),x_{3,t})\mach(y_t|f_{1,t}(m_1),\tlx_{2,t},\tlx_{3,t})\right)\\
&\quad\stackrel{\text{(a)}}{\geq} 1,
\end{align*}
where (a) follows from \eqref{eq:conversesymm}.

}
	Note that the distribution of $\tilde{X}_1^n$ (resp. $\tilde{X}_2^n$) does not depend on $m_3$ (resp. $\tilde{m}_3$). Arguing along the lines of~\cite[(3.29) in page~187]{CsiszarN88},\shortonly{ we can show that the average probability of error is at least 1/8. See the extended version~\cite{ExtendedDraft} for more details.}

\longonly{
\begin{align*}
2P_{e}(f_1,f_2,f_3,\phi)&\geq P_{e,1}+P_{e,2}\\
&\geq \frac{1}{N_3}\sum_{m_3}E_{\tilde{X}_{1}^n}[e^{3}_1(m_3,\tilde{X}_{1}^n)] + \frac{1}{N_3}\sum_{m_3}E_{\tilde{X}_{2}^n}[e^{3}_1(m_3,\tilde{X}_{2}^n)]
\end{align*}
for any attack vectors $\tilde{X}_1^n$ and $\tilde{X}_2^n$. In particular, for the attack vectors $\frac{1}{N_3}\sum_{\tilde{m}_3}\tilde{X}_{1,\tilde{m_3}}$ and $\frac{1}{N_3}\sum_{m_3}\tilde{X}_{2,m_3}$,
\begin{align*}
2P_{e}(f_1,f_2,f_3,\phi)\geq \frac{1}{N^2_3}\sum_{\tilde{m}_3}\sum_{m_3}\left(E_{\tilde{X}_{1,\tilde{m}_3}^n}[e^{3}_1(m_3,\tilde{X}_{1,\tilde{m}_3}^n)] + E_{\tilde{X}_{2,m_3}^n}[e^{3}_2(\tilde{m}_3,\tilde{X}_{2,m_3}^n)]\right).
\end{align*}
For $m_3\neq \tilde{m}_3$, the term in brackets on the right is upper bounded by 1, otherwise it is upper bounded by zero. Thus, 
\begin{align*}
P_{e}(f_1,f_2,f_3,\phi)&\geq \frac{N_3(N_3-1)/2}{2N_3^2}\\
&\geq \frac{1}{8}.
\end{align*}}}

\end{proof}
Our achievability proof uses ideas from~\cite{CsiszarN88} and is along the lines of the achievability proofs in~\cite{AhlswedeC99,PeregS19}. \longonly{It is presented in Appendix~\ref{app:3}.} \shortonly{In the interest of space, we only present our decoder  here.}
For a random variable $X$, let $\mathcal{P}_{X}$ denote the set of empirical distributions for blocklength $n$. For a distribution $P_X \in \mathcal{P}_X$, let $T^n_X$ be the set of typical sequences with relative frequencies specified by $P_X$.

\begin{defn}[Decoder]\label{def:decoder}
For $\eta>0, \,(m_1,m_2,m_3)\in \mathcal{M}_1\times\mathcal{M}_2\times\mathcal{M}_3$, and encoding maps, $f_1:\mathcal{M}_1\rightarrow \mathcal{X}_1^n, \, f_2:\mathcal{M}_2\rightarrow \mathcal{X}_2^n$ and $f_3:\mathcal{M}_3\rightarrow \mathcal{X}_3^n$, the decoding set $D_{m_1,m_2,m_3}  \subseteq \mathcal{Y}^n$ is defined as the intersection of three decoding sets $D_{m_1},D_{m_2},D_{m_3}\subseteq \mathcal{Y}^n$ defined below.\\
 For $m_1\in \mathcal{M}_1$, a sequence $\mathbf{y} \in D_{m_1}$ if there exists  some permutation $(j, k)$ of $(2,3)$, $m_j \in \mathcal{M}_j,\,\mathbf{x}_k\in \mathcal{X}^n_k$, and random variables $X_1, X_j, X_k$ with $(f_1(m_1), f_j(m_j), \mathbf{x}_{k}, \mathbf{y}) \in T^{n}_{X_1,X_j,X_k,Y}$ and $D(P_{X_1,X_j,X_k, Y}||P_{X_1}\times P_{X_j}\times P_{X_k}\times W)< \eta$ such that the following hold:
	\begin{description}			
			 \item[(a)] If there exists $m'_1 \in \mathcal{M}_1$, $m'_1 \neq m_1$, $m'_j \in \mathcal{M}_j$, $m'_j\neq m_j$, $\mathbf{x}'_k\in \mathcal{X}^n_k$, and random variables $X_1',X_j', X_k'$ such that $(f_1(m_1), f_1(m'_1),f_j(m_j), f_j(m'_j), \mathbf{x}_k, \mathbf{x}'_k, \mathbf{y}) \in T^{n}_{X_1 ,X_1', X_j,X'_j, X_k ,X_k',Y}$ and  $D(P_{X_1',X'_j,X'_k, Y}||P_{X'_1}\times P_{X'_j}\times P_{X'_k}\times W)< \eta$, then $I(X_1,X_j,Y;X'_1,X'_j|X_k) < \eta$.

			\item[(b)] If there exists $m'_1 \in \mathcal{M}_1$, $m'_1 \neq m_1$, $\mathbf{x}'_k\in \mathcal{X}^n_k$, and random variables $X_1',X_k'$ such that $(f_1(m_1),f_1(m'_1),f_j(m_j), \allowbreak \mathbf{x}_k, \mathbf{x}'_k, \mathbf{y}) \in T^{n}_{X_1, X_1',X_j, X_k, X_k', Y}$ and  $D(P_{X_1',X_j,X_k', Y}||P_{X_1'}\times P_{X_j}\times P_{X_k'}\times W)< \eta$, then $I(X_1,X_j,Y;X_1'|X_k) < \eta$.

			\item[(c)] If there exists $m'_1 \in \mathcal{M}_1$, $m'_1 \neq m_1$, $\mathbf{x}_j\in \mathcal{X}^n_j$, $m_k\in \mathcal{M}_k$, and random variables $X_1',X_j', X_k'$ such that $(f_1(m_1),f_1(m'_1),f_j(m_j), \mathbf{x}_j, \mathbf{x}_k, f_k(m_k), \mathbf{y}) \in T^{n}_{X_1, X_1', X_j, X_j', X_k, X_k',Y}$ and  $D(P_{X_1',X_j',X_k', Y}||P_{X_1'}\times P_{X_j}\times P_{X_k'}\times W)< \eta$, then $I(X_1,X_j,Y;X_1',X_k'|X_k) < \eta$.

	\end{description}
The decoding sets $D_{m_2}$ and $D_{m_3}$ are defined similarly. 
\end{defn}
While we decode each user's message separately, as in joint decoding, the structure of the other users' codebooks is made use of. Conditions {\bf(a)} and {\bf(b)} above are similar to the decoding conditions in~\cite{AhlswedeC99}. Condition~{\bf(c)} is associated with our new non-symmetrizability criterion. It handles the situation in which an adversarial user tries to make another user appear adversarial while pretending to act honestly.  

The following lemma guarantees that the decoder above is well-defined. This is analogous to~\cite[Lemma~4]{CsiszarN88}.
\begin{lemma}[Disambiguity of decoding]\label{lemma:dec}
Suppose the channel \mach is not symmetrizable. Let $P_{X_1}\in \mathcal{P}_{X_1}, P_{X_2} \in \mathcal{P}_{X_2}$ and $P_{X_3}\in \mathcal{P}_{X_3}$ be distributions such that for some $\alpha >0,$ $\min_{x_1}P_{X_1}(x_1),\min_{x_2}P_{X_2}(x_2),\allowbreak \min_{x_3}P_{X_3}(x_3)\geq \alpha$. Let $f_1:\mathcal{M}_1\rightarrow T^{n}_{X_1}, \, f_2:\mathcal{M}_2\rightarrow T^{n}_{X_2}$ and $f_3:\mathcal{M}_3\rightarrow T^{n}_{X_3}$ be any encoding maps. There exists a choice of $\eta>0$  such that if $(\tilde{m}_1,\tilde{m}_2,\tilde{m}_3)\neq(m_1,m_2,m_3),\, D_{\tilde{m}_1,\tilde{m}_2,\tilde{m}_3}\cap D_{m_1,m_2,m_3} = \emptyset$.
\end{lemma}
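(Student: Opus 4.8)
The plan is to mimic the disambiguity argument of Csisz\'ar--Narayan~\cite[Lemma~4]{CsiszarN88} and its multiuser extensions in~\cite{AhlswedeC99,PeregS19}, but now with the three-way decoding conditions \textbf{(a)}--\textbf{(c)} and, crucially, the new non-symmetrizability condition \eqref{eq:symm3}. First I would argue it suffices to show that for each fixed $k\in\{1,2,3\}$, if $\tilde m_k\neq m_k$ then $D_{m_k}\cap D_{\tilde m_k}=\emptyset$; indeed $D_{m_1,m_2,m_3}=D_{m_1}\cap D_{m_2}\cap D_{m_3}$, and if $(\tilde m_1,\tilde m_2,\tilde m_3)\neq(m_1,m_2,m_3)$ then they differ in at least one coordinate. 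So fix $k=3$ (the others are symmetric) and suppose toward a contradiction that $\mathbf y\in D_{m_3}\cap D_{\tilde m_3}$ for some $m_3\neq\tilde m_3$. Membership in $D_{m_3}$ supplies, for some permutation $(i,j)$ of $(1,2)$, a message $m_i$, a sequence $\mathbf x_j$, and a joint type $P_{X_3,X_i,X_j,Y}$ close to a ``product-times-$W$'' type with small divergence, witnessing $\mathbf y$; membership in $D_{\tilde m_3}$ supplies an analogous witness with possibly a different permutation $(i',j')$, message $m_{i'}$, sequence $\mathbf x_{j'}$, and type. The bookkeeping then splits into cases according to which honest-user coordinate each witness uses as its ``anchor.''

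The heart of the argument is to combine the two witnesses into a single joint type on all the relevant codeword coordinates (those of the anchors $f_3(m_3),f_3(\tilde m_3),f_i(m_i),f_{i'}(m_{i'})$, the free sequences $\mathbf x_j,\mathbf x_{j'}$, and $\mathbf y$) and then apply the appropriate sub-condition among \textbf{(a)}--\textbf{(c)} to force the two ``explanations'' of $\mathbf y$ to carry essentially no extra information about each other, i.e.\ the conditional mutual information terms appearing in \textbf{(a)}--\textbf{(c)} are $O(\eta)$. Pushing this through, one obtains a chain of approximate Markov/independence relations showing that the empirical input--output statistics are consistent, up to $O(\eta)$ (and $O(1/n)$ type-approximation slack), with \emph{both} interpretations simultaneously. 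In the case that is governed by \textbf{(c)} --- where one witness treats user-$3$'s message as $m_3$ with user-$i$ acting as the adversary that spoofs user-$i'$, while the other treats it as $\tilde m_3$ with the roles reversed --- the consistency relations exhibit exactly a symmetrization of $W$ for $\mathcal X_3$ by $\mathcal X_i/\mathcal X_{i'}$ in the sense of \eqref{eq:symm3}: the two empirical conditional kernels $q(\tilde x_i|x_i,\tilde x_3)$ and $q'(\tilde x_{i'}|x_{i'},x_3)$ are read off from the combined type, and the equality of output distributions they must satisfy is \eqref{eq:symm3} up to $O(\eta)$. The cases governed by \textbf{(a)} and \textbf{(b)} similarly reduce to \eqref{eq:symm1} and \eqref{eq:symm2} up to $O(\eta)$, exactly as in~\cite{AhlswedeC99}. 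Since the relevant symmetrizing maps live in a compact simplex and the defining equalities are continuous, a ``not symmetrizable $\Rightarrow$ bounded-away-from-satisfiable'' compactness step shows these approximate-symmetrization conclusions are false once $\eta$ (and $1/n$) is small enough, giving the contradiction and fixing the threshold $\eta>0$ uniformly (the lower bound $\alpha$ on the marginals is what keeps all the conditional types well-behaved and the continuity uniform).

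The main obstacle I expect is the case analysis and type-combination bookkeeping: unlike the two-user AV-MAC, each witness here carries its own choice of permutation $(i,j)$ of the honest users, so there are several genuinely different configurations (same anchor vs.\ different anchors; same free user vs.\ swapped), and in some of them the natural combined type involves five or six codeword coordinates plus $\mathbf y$, so one must be careful that the joint type exists and that the required conditional-independence structure is actually delivered by conditions \textbf{(a)}--\textbf{(c)} rather than merely assumed. In particular, isolating precisely the configuration that yields \eqref{eq:symm3} --- and checking that conditions \textbf{(a)} and \textbf{(b)} alone genuinely cannot rule it out, which is why \textbf{(c)} is needed --- is the delicate part; Example~\ref{ex:symmetrizable} is the sanity check that this configuration is not vacuous. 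The rest (wiring the empirical kernels into \eqref{eq:symm1}--\eqref{eq:symm3}, the compactness/continuity argument, and chasing the $O(\eta)+O(1/n)$ slacks) is routine along the lines of~\cite[Lemma~4]{CsiszarN88}.
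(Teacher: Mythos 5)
Your proposal follows the same route as the paper's proof: reduce to per-user disambiguity $D_{m_k}\cap D_{\tilde m_k}=\emptyset$, split on whether the two witnesses for $\mathbf{y}$ use the same or swapped permutations of the other two users (the paper's Case~1(a), 1(b), and Case~2, corresponding exactly to conditions \textbf{(a)}, \textbf{(b)}, \textbf{(c)} and to \eqref{eq:symm1}, \eqref{eq:symm2}, \eqref{eq:symm3}), combine the two witness types, drive the gap between the two induced ``overwriting'' kernels to $O(\sqrt{\eta}/\alpha^4)$ via the log-sum inequality and Pinsker, and close with the compactness/continuity argument that non-symmetrizability bounds this gap away from zero. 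The steps you flag as routine are exactly the ones the paper executes in detail, and your identification of Case~2 (swapped permutation) as the one that forces \eqref{eq:symm3} is the correct new ingredient.
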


\shortonly{In the extended draft~\cite{ExtendedDraft} we show the existence of codes which when used with the above decoder allow us to achieve all rates in $\textup{int}(\mathcal{R})$ whenever the MAC is non-symmetrizable.}
\longonly{In Appendix~\ref{app:3}, we show the existence of codes which when used with the above decoder allow us to achieve all rates in $\textup{int}(\mathcal{R})$ whenever the MAC is non-symmetrizable.}

\longonly{
\begin{appendices}

\section{Proof of Theorem~\ref{thm:random}}\label{app:A}
\begin{proof}[Proof (Achievability of Theorem 1)] For each $k=1,2,3$, let $W^{(k)}$ be the $2$-user AVMAC formed by channel inputs from node $k$ as the state and the remaining channel inputs as legitimate inputs.  Let $(R_1,R_2,R_3)$ be a rate triple such that, for some $p_1(x_1)p_2(x_2)p_3(x_3)$, the following conditions hold for all permutations $(i,j,k)$ for $(1,2,3)$: \begin{align}
R_i &\leq \min_{q(x_k)} I(X_i;Y|X_j),\quad\text{and}\label{eq:rateconstraint1c}\\
R_i+R_j &\leq \min_{q(x_k)} I(X_i,X_j;Y),\label{eq:rateconstraint2c}
\end{align} 
with the mutual information terms evaluated using the joint distribution
$p_i(x_i)p_j(x_j)q(x_k) \mach(y|x_1,x_2,x_3)$. Let $\epsilon>0$ be arbitrary and let $n$ be large enough. Note that, by~\cite{Jahn81}, the rate pair $(R_i,R_j)$ is an  achievable rate pair for the AVMAC $W^{(k)}$. For each $i\in[3]$, let $\tilde{\cM}_i=[2^{nR_i}]$ and $\cM_i=[\epsilon 2^{nR_i}/3]$. In the following, we show the existence of a randomized $(\epsilon 2^{nR_1}/3,\epsilon 2^{nR_2}/3,\epsilon 2^{nR_3}/3,n)$ code  $(F_1,F_2,F_3,\Phi)$ with $P_e$ no larger than $\epsilon$.

\paragraph{Code design} Before describing our code, we describe the following maps. First, for each user $i$, let $L_i:\cM_i\to\tilde{\cM}_i$ and $\Lambda_i:\tilde{\cM}_i\to\cM_i$ be the forward and reverse maps for a uniformly chosen injection from $\cM_i$ to $\tilde{\cM}_i$. Next, let $G_i:\tilde{\cM}_i\to\cX_i^n$ be the encoder map for a randomly chosen code whose each letter is drawn independently from the probability distribution $p_{i}$. Further, for each $k$ and $i\neq k$, let $\Gamma^{(k)}_i:\cY^n\to\tilde{\cM}_i$ be the reconstruction map for the message from user $i$ (as described in~\cite{Jahn81}) when the (randomly drawn) encoder maps $(G_i:i\in\set{1,2,3}\setminus\set{k})$ are used to generate the codewords sent over the AVMAC $W^{(k)}$. Let $n$ be large enough such that the code $((G_i:i\in\set{1,2,3}\setminus\set{k}),(\Gamma^{(k)}_i:i\in\set{1,2,3}\setminus\set{k}))$ has error probability no larger than $\epsilon/3$.
 
For each $i\in\set{1,2,3}$, the encoder map $F_i:\cM_i\to\cX^n_i$ is defined as $F_i(m_i)= G_i(L_i(m_i))$ for every $m_i\in\cM_i$. For each $k\in\set{1,2,3}\setminus\set{i}$, let 
$$\Phi_i^{(k)}(y^n)=\begin{cases}\Lambda_i(\Gamma^{(k)}_i(y^n)) &\mbox{if }\Gamma^{(k)}_i(y^n)\in L_i(\cM_i),\\ \bot &\mbox{otherwise}.\end{cases}$$ 
The decoder $\Phi:\cY^n\to\cM_1\times\cM_2\times\cM_3$ outputs $\Phi(y^n)=(\hat{m}_1,\hat{m}_2,\hat{m}_3)$, where, for each $i\in\set{1,2,3}$ and $(j,k)$ a permutation of $\set{1,2,3}\setminus\set{i}$, 
$$\hat{m}_i=\begin{cases}\Phi_i^{(j)}(y^n)& \mbox{ if }  \Phi_i^{(j)}(y^n)=\Phi_i^{(k)}(y^n)\neq\bot \\
\Phi_i^{(j)}(y^n)& \mbox{ if } \Phi_i^{(j)}(y^n)\neq\bot\mbox{ and } \Phi_i^{(k)}(y^n)=\bot\\
\Phi_i^{(k)}(y^n)& \mbox{ if } \Phi_i^{(k)}(y^n)\neq\bot\mbox{ and } \Phi_i^{(j)}(y^n)=\bot\\
1&\mbox{otherwise.}
\end{cases}$$
\paragraph{Error Analysis}
We first show that as long as the rate triple $(R_1,R_2,R_3)$ satisfy the constraints~\eqref{eq:rateconstraint1c} and~\eqref{eq:rateconstraint2c} with $U$ being a constant value -- thus, each pair of rates lie in the corresponding AV-MAC capacity region -- the following hold simultaneously for every user $i$ and potentially adversarial user $k\neq i$: \emph{(i)} $\Phi_i^{(k)}$ equals $m_i$ w.h.p. if user $k$ is indeed adversarial and \emph{(ii)} $\Phi_i^{(k)}$ either equals $\bot$ or $m_i$ if user $k$ is not adversarial. To this end, consider any permutation $(i,j,k)$ of $(1,2,3)$ and assume that the  adversarial user (if any) is user $k$. Note that the rate pair $(R_i,R_j)$ is an achievable rate for the AV-MAC $W^{(k)}$. 
\begin{description}
\item[$(i)$]First, consider  $\Phi_i^{(k)}$. Applying the achievability proof for AV-MACs~\cite{Jahn81} to the AV-MAC $W^{(k)}$, with the random code with encoder mappings $G_i$ and $G_j$ and the decoder mapping $(\Gamma_i^{(k)},\Gamma_j^{(k)})$,  $\Gamma_i^{(k)}(Y^n)$ equals $L_i(m_i)$ with probability at least $1-\epsilon/3$.  Thus, with probability at least $1-\epsilon/3$, $\Phi_i^{(k)}(Y^n)$ equals $m_i$. This holds true both when $k$ is an adversarial node or a non-adversarial node. 
\item[$(ii)$] Next, consider $\Phi_i^{(j)}$ and let $X_k^n$ be the random variable denoting user $k$'s  potentially adversarial channel input. In this case, $\Gamma_i^j(Y^n)$ may not equal $L_i(m_i)$ (with a high probability) as the vector transmitted from node $k$ may not be a valid codeword from the codebook for $F_k$. Let $\Prob_{F_i,M_i,F_j,X_k^n,Y^n}$ be the joint probability distribution of the quintuplet $(F_i,M_i,F_j,X_k^n,Y^n)$ when $X_k^n$ is generated according to a possibly adversarial distribution $\Prob_{X_k^n}$, $F_i$ and $F_j$ are generated according to our random code constructions, $M_i$ and $M_j$ are uniformly chosen from $\cM_i$ and $\cM_j$ respectively, and $Y^n$ is the channel output random variable when the channel inputs are $F_i(M_i)$, $F_j(M_j)$, and $X_k^n$.  Further, let $L_i$, $G_i$, and $\Gamma_i^{(j)}$  be the random variables denoting the maps described in our code design (such that $F_i=L_i\circ G_i$). In the following chain of inequalities, we let $f_i$, $l_i$, $g_i$, $\gamma_i^{(j)}$, $m_i$, $f_j$, $x_k^n$, and $y^n$  denote instantiations of the  random variables $F_i$, $L_i$, $G_i$, $\Gamma_i^{(j)}$, $M_i$, $F_j$, $X_k^n$, and $Y^n$ respectively. We have, 
\begin{align*}
&\Prob_{F_i,M_i,F_j,M_j,X_k^n,Y^n}(\Phi_i^{(j)}(Y^n)\notin \set{M_i,\bot})	\\
&= \Prob_{F_i,M_i,F_j,M_j,X_k^n,Y^n}(\Gamma_i^{(j)}(Y^n)\in L_i(\cM_i)\setminus\set{L_i(M_i)})	\\
&= \sum_{\substack{l_i,g_i,m_i,f_j,m_j,x_k^n\\ y^n: \gamma_i^{(j)}(y^n)\in l_i(\cM_i)\setminus\set{l_i(m_i)}}}\Prob_{L_i}(l_i)\Prob_{G_i}(g_i)\Prob_{M_i}(m_i)\Prob_{F_j}(f_j)\Prob_{M_j}(m_j)\Prob_{X_k^n}(x_k^n)W_{Y^n|X_i^n,X_j^n,X_k^n}(y^n|f_i(m_i),f_j(m_j),x_k^n) \\&=\sum_{g_i,m_i,f_j,m_j,x_k^n,y^n}\Prob_{G_i}(g_i)\Prob_{M_i}(m_i)\Prob_{F_j}(f_j)\Prob_{M_j}(m_j)\Prob_{X_k^n}(x_k^n)\\
&\qquad\qquad\times\sum_{l_i}\Prob_{L_i}(l_i) W_{Y^n|X_i^n,X_j^n,X_k^n}(y^n|g_i(l_i(m_i)),f_j(m_j),x_k^n) \mathbf{1}_{\set{\gamma_i^{(j)}(y^n)\in l_i(\cM_i)\setminus\set{l_i(m_i)}}}\\
&=\sum_{g_i,m_i,f_j,m_j,x_k^n,y^n}\Prob_{G_i}(g_i)\Prob_{M_i}(m_i)\Prob_{F_j}(f_j)\Prob_{M_j}(m_j)\Prob_{X_k^n}(x_k^n)\\
&\qquad\qquad\times\sum_{l_i}\sum_{\tilde{m}_i\in\tilde{\cM}_i}\sum_{\tilde{m}_i'\in\tilde{\cM}_i}\Prob_{L_i}(l_i) \mathbf{1}_{\set{l_i(m_i)=\tilde{m}_i}} W_{Y^n|X_i^n,X_j^n,X_k^n}(y^n|g_i(\tilde{m}_i),f_j(m_j),x_k^n) \mathbf{1}_{\set{\gamma_i^{(j)}(y^n)=\tilde{m}_i'}}\mathbf{1}_{\set{\tilde{m}_i'\in l_i(\cM_i)\setminus\set{\tilde{m}_i}}}\\
&=\sum_{g_i,m_i,f_j,m_j,x_k^n,y^n}\Prob_{G_i}(g_i)\Prob_{M_i}(m_i)\Prob_{F_j}(f_j)\Prob_{M_j}(m_j)\Prob_{X_k^n}(x_k^n)\\
&\qquad\qquad\times\sum_{\tilde{m}_i\in\tilde{\cM}_i}\sum_{\tilde{m}_i'\in\tilde{\cM}_i} W_{Y^n|X_i^n,X_j^n,X_k^n}(y^n|g_i(\tilde{m}_i),f_j(m_j),x_k^n) \mathbf{1}_{\set{\gamma_i^{(j)}(y^n)=\tilde{m}_i'}}\sum_{l_i}\Prob_{L_i}(l_i) \mathbf{1}_{\set{l_i(m_i)=\tilde{m}_i}}\mathbf{1}_{\set{\tilde{m}_i'\in l_i(\cM_i)\setminus\set{\tilde{m}_i}}}\\
&=\sum_{g_i,m_i,f_j,m_j,x_k^n,y^n}\Prob_{G_i}(g_i)\Prob_{M_i}(m_i)\Prob_{F_j}(f_j)\Prob_{M_j}(m_j)\Prob_{X_k^n}(x_k^n)\\
&\qquad\qquad\times\sum_{\tilde{m}_i\in\tilde{\cM}_i}\sum_{\tilde{m}_i'\in\tilde{\cM}_i} W_{Y^n|X_i^n,X_j^n,X_k^n}(y^n|g_i(\tilde{m}_i),f_j(m_j),x_k^n) \mathbf{1}_{\set{\gamma_i^{(j)}(y^n)=\tilde{m}_i'}}\Prob_{L_i}(L_i(m_i)=\tilde{m}_i,\tilde{m}_i'\in L_i(\cM_i\setminus\set{m_i}))\\
&=\sum_{g_i,m_i,f_j,m_j,x_k^n,y^n}\Prob_{G_i}(g_i)\Prob_{M_i}(m_i)\Prob_{F_j}(f_j)\Prob_{M_j}(m_j)\Prob_{X_k^n}(x_k^n)\\
&\qquad\qquad\times\sum_{\tilde{m}_i\in\tilde{\cM}_i}\sum_{\tilde{m}_i'\in\tilde{\cM}_i} W_{Y^n|X_i^n,X_j^n,X_k^n}(y^n|g_i(\tilde{m}_i),f_j(m_j),x_k^n) \mathbf{1}_{\set{\gamma_i^{(j)}(y^n)=\tilde{m}_i'}}\frac{1}{\lvert\tilde{\cM}_i\rvert}\frac{\lvert\cM_i\rvert-1}{\lvert\tilde{\cM}_i\rvert-1}\\
&=\sum_{\tilde{m}_i\in\tilde{\cM}_i}  \frac{1}{\lvert\tilde{\cM}_i\rvert}\frac{\lvert\cM_i\rvert-1}{\lvert\tilde{\cM}_i\rvert-1}\\
&\leq \frac{\lvert\cM_i\rvert}{\lvert\tilde{\cM}_i\rvert}= \epsilon/3.
\end{align*}
Thus, with high probability, for each non-adversarial user $i$, at least one of the decoders $\Phi_i^{(j)}$ or $\Phi_i^{(k)}$ outputs the true message while the other decoder outputs either the true message or $\bot$.
\end{description}
\end{proof}

\begin{proof}[{Proof (Converse of Theorem~\ref{thm:random})}]
We show converse for the weak adversary. Suppose $(F_1,F_2,F_3,\Phi)$ is a $(2^{nR_1},2^{nR_2},2^{nR_3},n)$ randomized code such that $P_e\leq \epsilon$ for some $\epsilon>0$. Recall that $F_1,F_2,F_3$ are independent. 
Let $M_i\sim\textup{Unif}(\mathcal{M}_i)$, $i=1,2,3$ be independent. Let $\hat{M}_i= \Psi_i(Y^n,F_1,F_2,F_3), \, i = 1,2,3$. Then, $\epsilon$ is an upperbound on~\eqref{eq:pe1_weak} which is given by
\begin{align*}
 P_{e,1}^{\text{weak}}&= \max_{x_1^n} \;\Prob_{F_2,F_3,\Phi}\Big( (\Msgh_2,\Msgh_3)\neq(\Msg_2,\Msg_3)\Big|X_1^n=x_1^n,X_2^n=F_2(\Msg_2), X_3^n=F_3(\Msg_3)\Big)\\
 &= \max_{p_{X_1^n}} \Prob_{F_2,F_3,\Phi}\Big( (\Msgh_2,\Msgh_3)\neq(\Msg_2,\Msg_3)\Big| X_2^n=F_2(\Msg_2), X_3^n=F_3(\Msg_3)\Big).
\end{align*}
We consider the following $p_{X_1^n}$.
\[ p_{X_1^n}(x_1^n) = \prod_{i=1}^n q_{X_{1,i}} (x_{1,i}).\]
By Fano's inequality, under this $p_{X_1^n}$ and when $X_i^n=F_i(W_i)$, $i=2,3$, 
\begin{align*}
H(\Msg_2,\Msg_3|Y^n,\Phi) \leq 1+n\epsilon(R_2+R_3).
\end{align*}
Ignoring small terms, we have
\begin{align*}
n(R_2+R_3) &\leq H(M_2,M_3)\\
 &\leq H(M_2,M_3|\Phi,F_2,F_3)\\
 &\stackrel{\text{(a)}}{\approx} I(M_2,M_3;Y^n|\Phi,F_2,F_3)\\
 &= \sum_{i=1}^n I(M_2,M_3;Y_i|Y^{i-1},\Phi,F_2,F_3)\\
 &\leq \sum_{i=1}^n I(M_2,M_3,\Phi,F_2,F_3,Y^{i-1};Y_i)\\
 &= \sum_{i=1}^n I(M_2,M_3,\Phi,F_2,F_3,Y^{i-1},X_{2,i},X_{3,i};Y_i)\\
 &\stackrel{\text{(b)}}{=} \sum_{i=1}^n I(X_{2,i},X_{3,i};Y_i),
\end{align*}
where (a) follows from Fano's inequality (ignoring an $O(n\epsilon)$ term), (b) follows from the memorylessness of the channel and the independence of $X_{1,i}$ over $i=1,\ldots,n$ for the particular $p_{X_1^n}$ under consideration.

Let $U\sim\textup{Unif}\{1,2,\ldots,n\}$ independent of $(M_1,M_2,M_3,F_1,F_2,F_3,\Phi,Y^n)$. We have (where we ignore an additive $O(\epsilon)$ term) 
\begin{align*}
R_2+R_3\leq I(X_{2,U},X_{3,U};Y_U|U).
\end{align*}
Since, the above bound holds for all $p_{X_1^n}(x_1^n)=\prod_{i=1}^n q_{X_{i,i}}(x_{1,i})$, and noticing that conditioned on $X_{1,U},X_{2,U},X_{3,U}$ the channel law $\mach$ gives the conditional probability of $Y_U$, we may write
\begin{align}
R_2+R_3 \leq  \min_{q(x_1|u)} I(X_2,X_3;Y|U). \label{eq:converseR2+R3}
\end{align}
We note that the distribution of $U,X_1,X_2,X_3,Y$ is $p(u)q(x_1|u)p(x_2|u)p(x_3|u)\mach(y|x_1,x_2,x_3)$ where $p(x_2|u)$ is determined by the distribution of $F_2$ and $p(x_3|u)$ is determined by the distribution of $F_3$.
 
Proceeding similarly, for $p_{X_1^n}(x_1^n)=\prod_{i=1}^n q_{X_{1,i}}(x_{1,i})$,
\begin{align*}
nR_2 &\leq H(M_2)\\
 &\leq H(M_2|M_3,\Phi,F_2,F_3)\\
 &\approx I(M_2;Y^n|M_3,\Phi,F_2,F_3)\\
 &= \sum_{i=1}^n I(M_2;Y_i|Y^{i-1},M_3,\Phi,F_2,F_3)\\
 &= \sum_{i=1}^n I(M_2,X_{2,i};Y_i|X_{3,i},Y^{i-1},M_3,\Phi,F_2,F_3)\\
 &\leq \sum_{i=1}^n I(X_{2,i}, Y^{i-1},M_2,M_3,\Phi,F_2,F_3; Y_i|X_{3,i})\\
 &= \sum_{i=1}^n I(X_{2,i};Y_i|X_{3,i}).
\end{align*}
Hence, we have
\begin{align}
R_2 \leq \min_{q_(x_1|u)} I(X_2;Y|X_3,U), \label{eq:converseR2}
\end{align}
where the joint distribution  of the random variables is $p(u)q(x_1|u)p(x_2|u)p(x_3|u)\mach(y|x_1,x_2,x_3)$. We note that $p(u)p(x_2|u)p(x_3|u)$ are the same as in \eqref{eq:converseR2+R3}. Similarly,
\begin{align}
R_3\leq \min_{q(x_1|u)} I(X_3;Y|X_2,U). \label{eq:converseR3}
\end{align}

Similarly, considering $P_{e,2}^{\text{weak}}$ with $p_{X_2^n}(x_2^n)=\prod_{i=1}^n q_{X_{2,i}}(x_{2,i})$ (and $X_i^n=F_i(W_i)$, $i=1,3$), we get
\begin{align}
R_3 &\leq \min_{q(x_2|u)} I(X_3;Y|X_1,U), \label{eq:converseR3B}\\
R_1 &\leq \min_{q(x_2|u)} I(X_1;Y|X_3,U), \label{eq:converseR1B}\\
R_3+R_1 &\leq \min_{q(x_2|u)} I(X_3,X_1;Y|U), \label{eq:converseR3+R1}
\end{align}
where the joint distribution of the random variables is $p(u)p(x_1|u)q(x_2|u)p(x_3|u)\mach(y|x_1,x_2,x_3)$. We note that $p(u)$ and $p(x_3|u)$ here is the same as in \eqref{eq:converseR2+R3}-\eqref{eq:converseR3}. Considering $P_{e,3}^{\text{weak}}$ with $p_{X_3^n}(x_3^n)=\prod_{i=1}^n q_{X_{3,i}}(x_{3,i})$ (and $X_i^n=F_i(W_i)$, $i=1,2$), we similarly arrive at
\begin{align}
R_1 &\leq \min_{q(x_3|u)} I(X_1;Y|X_3,U), \label{eq:converseR1C}\\
R_2 &\leq \min_{q(x_3|u)} I(X_2;Y|X_2,U), \label{eq:converseR2C}\\
R_1+R_2 &\leq \min_{q(x_3|u)} I(X_1,X_2;Y|U), \label{eq:converseR1+R2}
\end{align}
where the joint distribution of the random variables is $p(u)p(x_1|u)p(x_2|u)q(x_3|u)\mach(y|x_1,x_2,x_3)$. The $p(u)$, $p(x_1|u)$, and $p(x_2|u)$ are the same as in \eqref{eq:converseR2+R3}-\eqref{eq:converseR1+R2}. By Caratheodory's theorem, it is enough to choose $U$ such that $|\mathcal{U}|\leq 3.$ This completes the proof of converse.
\end{proof}

\section{Proof of Theorem~\ref{thm:deterministic}}
\begin{proof}[{Proof of Theorem~\ref{thm:deterministic}}]
We argue that under the given condition,
$$\mathcal{R}_{\random}^{\weak} = \mathcal{R}_{\deterministic}.$$

The proof is similar to that of the corresponding result 
\cite[Theorem 1]{Jahn81} for AVMAC. Jahn uses an extension of Ahlswede's
elimination technique to reduce the randomness of each encoder (shared
with the decoder) to take only $n^2$ values. A further extension to three
users in a similar manner leads to the same result for our setup
under the week adversary model - the rate-triples in
$\mathcal{R}_{\random}^{\weak}$ can be achieved by codes where each encoder
randomness is limited to take only $n^2$ values. A deterministic code for 
any rate triple in
$\mathcal{R}_{\random}^{\weak}$ can then be constructed as the concatenation of
two codes. A $o(n)$-length deterministic code with
$n^2$ codewords  can be used to communicate $2\log_2 n$ bits out of each
message. An arbitrarily small rate is required for this code. 
Then a randomized code, which uses the $2\log_2 n$ bits messages of the
deterministic code as encoder randomness', can be used to transmit the rest of
the message bits.
\end{proof}

\section{Proof of Theorem~\ref{thm:symmetrizability}}\label{app:3}
For a random variable $X$, let $\mathcal{P}_{X}$ denote the set of empirical distributions for blocklength $n$. For a distribution $P_X \in \mathcal{P}_X$, let $T^n_X$ be the set of typical sequences with relative frequencies specified by $P_X$.

\begin{lemma}\label{codebook}
For any  $\epsilon>0,\,  n\geq n_0(\epsilon), \, N_1, N_2, N_3\geq\exp(n\epsilon)$ and types $P_1, P_2, P_3$ over $\mathcal{X}_1, \mathcal{X}_2, \mathcal{X}_3$ respectively, there exists codebooks $\left\{\mathbf{x}_{11}, \ldots,  \mathbf{x}_{1N_1}\in \mathcal{X}_1^n\right\}, \left\{\mathbf{x}_{21}, \ldots,  \mathbf{x}_{2N_2}\in \mathcal{X}_2^n\right\}$ and $ \left\{\mathbf{x}_{31}, \ldots,  \mathbf{x}_{3N_3}\in \mathcal{X}_3^n\right\}$ each of type $P_1, P_2$ and $P_3$ respectively such that for every permutation $(i,j,k)$ of $(1,2,3)$, for every $(\mathbf{x}_i,\mathbf{x}_j, \mathbf{x}_k)  \in \mathcal{X}^n_i \times \mathcal{X}^n_j \times \mathcal{X}^n_k$, for every joint type $P_{X_i,X'_i,X_j,X'_j,X_k,X'_k}$ and for $R_i \defineqq (1/ n )\log_2{N_i},R_j \defineqq (1/ n )\log_2{N_j}$ and $R_k \defineqq (1/ n )\log_2{N_k}$, the following holds.
\begin{align}
&\frac{1}{N_iN_j}|\{(r,s): (\mathbf{x}_{ir},\mathbf{x}_{js})\in T^{n}_{X_i, X_j} \}| \leq \exp\left(-n\epsilon\right)\text{ if }I(X_i;X_j)>\epsilon,   \label{code_eq1}\\
&\frac{1}{N_iN_j}|\{(r,s): (\mathbf{x}_{ir},\mathbf{x}_{js}, \mathbf{x}_k)\in T^{n}_{X_i, X_j, X_k} \}| \leq \exp\left(-n\epsilon/2\right) \text{ if } I(X_i,X_j;X_k)>\epsilon,  \label{code_eq2}\\
&|\{u:(\mathbf{x}_i, \mathbf{x}_{iu}, \mathbf{x}_j, \mathbf{x}_k)\in T^{n}_{X_i, X_i^{'}, X_j, X_k}\}|\leq \exp\left\{n\left(|R_i-I(X_i';X_i,X_j,X_k)|^+ +\epsilon\right)\right\}, \label{code_eq3}\\
&|\{u, v:(\mathbf{x}_i, \mathbf{x}_{iu}, \mathbf{x}_j, \mathbf{x}_{jv}, \mathbf{x}_k)\in T^{n}_{X_i, X_i^{'}, X_j, X'_j, X_k}\}|\leq \exp\left\{n\left(|R_i+R_j-I(X_i',X_j';X_i,X_j,X_k)|^+ +\epsilon\right)\right\}, \label{code_eq4}\\
&\frac{1}{N_iN_j}|\{(r,s): (\mathbf{x}_{ir},\mathbf{x}_{iu}, \mathbf{x}_{js},  \mathbf{x}_k)\in T^{n}_{X_i,X'_i,  X_j, X_k} \text{ for some }u\neq r \}| \leq \exp\left(-n\epsilon/2\right) \nonumber\\
&\qquad \qquad\text{ if } I(X_i,X_j;X'_i,X_k)-|R_i-I(X'_i;X_k)|^{+}>\epsilon, \label{code_eq5} \\
&\frac{1}{N_iN_j}|\{(r,s): (\mathbf{x}_{ir},\mathbf{x}_{iu}, \mathbf{x}_{js},  \mathbf{x}_k, \mathbf{x}_{kt}\in T^{n}_{X_i,X'_i,X_j,X_k,X_k'} \text{ for some }u\neq r \text{ and some } t \in [1:N_k]\}| \leq \exp\left(-n\epsilon/2\right)  \nonumber\\
&\qquad \qquad\text{ if } I(X_i,X_j;X'_i, X'_k, X_k)-|R_i+R_k-I(X'_i,X'_k;X_k)|^{+}>\epsilon, \text{ and} \label{code_eq6} \\
&\frac{1}{N_iN_j}|\{(r,s): (\mathbf{x}_{ir},\mathbf{x}_{iu}, \mathbf{x}_{js}, \mathbf{x}_{jv}, \mathbf{x}_k)\in T^{n}_{X_i,X'_i,X_j,X_j',X_k} \text{ for some }u\neq r \text{ and }	v\neq s\}| \leq \exp\left(-n\epsilon/2\right) \nonumber \\
&\qquad \qquad\text{ if } I(X_i,X_j;X'_i ,X'_j, X_k)-|R_i+R_j-I(X'_i,X'_j;X_k)|^{+}>\epsilon. \label{code_eq7}
\end{align}
\end{lemma}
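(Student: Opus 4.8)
The plan is to prove the lemma by the random-coding-with-expurgation method of Csisz\'ar and Narayan~\cite{CsiszarN88}, in the three-user form used in~\cite{AhlswedeC99,PeregS19}. We draw the three codebooks independently, each codeword $\mathbf{x}_{ir}$ chosen uniformly and independently from the type class $T^n_{P_i}$ (allowing repeats if $N_i>|T^n_{P_i}|$, which is harmless). It suffices to show that for $n$ large a random triple of codebooks satisfies all of~\eqref{code_eq1}--\eqref{code_eq7}, for all six permutations $(i,j,k)$ of $(1,2,3)$, with positive probability. Since the number of joint types $P_{X_i,X'_i,X_j,X'_j,X_k,X'_k}$ over the fixed alphabets is polynomial in $n$, it is enough to bound, for each fixed permutation and each fixed joint type, the probability that the random codebooks violate the corresponding inequality: this probability must be $o(1)$, and in fact doubly exponentially small whenever the inequality is quantified over all sequences $(\mathbf{x}_i,\mathbf{x}_j,\mathbf{x}_k)$, so that the union bound over the exponentially many such sequences still tends to $0$.

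\textbf{Counting estimates~\eqref{code_eq3},~\eqref{code_eq4}.} Fix a tuple $(\mathbf{x}_i,\mathbf{x}_j,\mathbf{x}_k)$ and a joint type. The number of indices $u$ (resp.\ pairs $(u,v)$) completing the tuple to the prescribed joint type is a sum of $N_i$ (resp.\ $N_iN_j$) independent indicators, each with mean $\exp(-nI(X'_i;X_i,X_j,X_k)+o(n))$ (resp.\ $\exp(-nI(X'_i,X'_j;X_i,X_j,X_k)+o(n))$) by the standard estimate for joint typicality of independent type-class vectors; hence the sum has mean $\exp(n(R_i-I(X'_i;X_i,X_j,X_k))+o(n))$, and analogously for~\eqref{code_eq4}. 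A Chernoff bound shows the sum exceeds $\exp(n(|R_i-I(X'_i;X_i,X_j,X_k)|^+ +\epsilon))$ with probability at most $\exp(-\exp(cn))$, $c=c(\epsilon)>0$; being doubly exponentially small, this survives the union bound over all $\le(|\mathcal{X}_i||\mathcal{X}_j||\mathcal{X}_k|)^n$ tuples and all joint types.

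\textbf{Fraction and expurgation estimates~\eqref{code_eq1},~\eqref{code_eq2},~\eqref{code_eq5}--\eqref{code_eq7}.} In each case the number of bad pairs $(r,s)$ has expectation $N_iN_j\exp(-n\Theta+o(n))$, where $\Theta$ is --- up to a loss that can be made $\le\epsilon$ by invoking~\eqref{code_eq1}--\eqref{code_eq2} for the same codebook to discard pairs whose $\mathbf{x}_{ir},\mathbf{x}_{js}$ are already atypically correlated --- the mutual information in the hypothesis of that item: $I(X_i;X_j)$ for~\eqref{code_eq1}, $I(X_i,X_j;X_k)$ for~\eqref{code_eq2}, and the difference $I(X_i,X_j;X'_i,X_k)-|R_i-I(X'_i;X_k)|^+$ for~\eqref{code_eq5}, and so on. For~\eqref{code_eq5}--\eqref{code_eq7} one first bounds the number of admissible ``confusing'' indices --- those $u$ (resp.\ pairs $(u,t)$) for which $\mathbf{x}_{iu}$ alone (resp.\ $\mathbf{x}_{iu}$ together with $\mathbf{x}_{kt}$) has the correct joint type with the given $\mathbf{x}_k$ --- by $\exp(n(|R_i-I(X'_i;X_k)|^+ +\epsilon'))$ (resp.\ $\exp(n(|R_i+R_k-I(X'_i,X'_k;X_k)|^+ +\epsilon'))$) via the counting estimate above, and then, for each fixed such index, bounds the fraction of completing pairs $(r,s)$ by $\exp(-nI(X_i,X_j;X'_i,X_k)+o(n))$ (resp.\ the analogous quantity). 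Multiplying these and invoking the hypothesis makes the expected number of bad pairs at most $N_iN_j\exp(-n\epsilon+o(n))$; a Markov step gives the claimed $\exp(-n\epsilon/2)$ fraction, with failure probability exponentially small for a fixed $\mathbf{x}_k$, and promoting this to a doubly exponentially small failure probability --- needed so that it survives the union bound over all sequences $\mathbf{x}_k$ --- requires a further Chernoff argument obtained by successively conditioning on the user-$i$ and user-$j$ codebooks, as in~\cite[Appendix]{CsiszarN88} and~\cite{AhlswedeC99}.

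\textbf{Main obstacle.} I expect the crux to be exactly this last step for~\eqref{code_eq5}--\eqref{code_eq7}: identifying in each case which ``degrees of freedom'' the adversary may search over cheaply (captured by the $|\,\cdot\,|^+$ exponents) versus which remain exponentially costly, and arranging the conditioning so that the Chernoff concentration is uniform over the free sequence $\mathbf{x}_k$. Inequality~\eqref{code_eq6}, which underlies the new symmetrizability notion~\eqref{eq:symm3} and hence condition~(c) of the decoder in Definition~\ref{def:decoder}, is the most delicate, since a confusing codeword $\mathbf{x}_{iu}$ from codebook~$i$ and an extra codeword $\mathbf{x}_{kt}$ from codebook~$k$ are simultaneously in play, so the exponent $I(X_i,X_j;X'_i,X'_k,X_k)$ must be weighed against the joint search budget $|R_i+R_k-I(X'_i,X'_k;X_k)|^+$. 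The remaining points --- well-definedness of the ensemble, uniformity of the $o(n)$ terms in the type-counting estimates, and the precise Chernoff constants --- are routine and follow as in~\cite{CsiszarN88}.
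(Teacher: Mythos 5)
Your proposal is correct and follows essentially the same approach the paper takes: draw each codebook uniformly at random from its type class, compute expectations by standard type-counting, apply Chernoff-type concentration, and take a union bound over (polynomially many) joint types and (exponentially many) free sequences, in the style of Csisz\'ar--Narayan, Lemma~3. The paper's own proof is in fact only a two-sentence sketch saying exactly this and deferring to~\cite{CsiszarN88}, so your more detailed discussion of the counting versus expurgation cases and of the conditioning needed to get doubly exponential concentration is a faithful elaboration rather than a different route.
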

\begin{proof}[Proof sketch]
For each $i=1,2,3$, we draw $N_i$ codewords independently and uniformly at random from the type class $P_i$. The inequalities follow by first using standard counting arguments based on types to calculate the expected number of codewords lying in the sets specified by~\eqref{code_eq1}-~\eqref{code_eq7} and then applying concentration inequalities to bound probability that the actual number of codewords deviates from the expectation. We skip the details here as the proof follows similarly to~\cite[Lemma~3]{CsiszarN88}.
	
\end{proof}

\noindent We recall the definition of the decoder.
\begin{defn*}[Decoder]
For $\eta>0, \,(m_1,m_2,m_3)\in \mathcal{M}_1\times\mathcal{M}_2\times\mathcal{M}_3$, and encoding maps, $f_1:\mathcal{M}_1\rightarrow \mathcal{X}_1^n, \, f_2:\mathcal{M}_2\rightarrow \mathcal{X}_2^n$ and $f_3:\mathcal{M}_3\rightarrow \mathcal{X}_3^n$, the decoding set $D_{m_1,m_2,m_3}  \subseteq \mathcal{Y}^n$ is defined as the intersection of three decoding sets $D_{m_1},D_{m_2},D_{m_3}\subseteq \mathcal{Y}^n$ defined below.\\
 For $m_1\in \mathcal{M}_1$, a sequence $\mathbf{y} \in D_{m_1}$ if there exists  some permutation $(j, k)$ of $(2,3)$, $m_j \in \mathcal{M}_j,\,\mathbf{x}_k\in \mathcal{X}^n_k$, and random variables $X_1, X_j, X_k$ with $(f_1(m_1), f_j(m_j), \mathbf{x}_{k}, \mathbf{y}) \in T^{n}_{X_1,X_j,X_k,Y}$ and $D(P_{X_1,X_j,X_k, Y}||P_{X_1}\times P_{X_j}\times P_{X_k}\times W)< \eta$ such that the following hold:
	\begin{description}			
			 \item[(a)] If there exists $m'_1 \in \mathcal{M}_1$, $m'_1 \neq m_1$, $m'_j \in \mathcal{M}_j$, $m'_j\neq m_j$, $\mathbf{x}'_k\in \mathcal{X}^n_k$, and random variables $X_1',X_j', X_k'$ such that $(f_1(m_1), f_1(m'_1),f_j(m_j), f_j(m'_j), \mathbf{x}_k, \mathbf{x}'_k, \mathbf{y}) \in T^{n}_{X_1 ,X_1', X_j,X'_j, X_k ,X_k',Y}$ and  $D(P_{X_1',X'_j,X'_k, Y}||P_{X'_1}\times P_{X'_j}\times P_{X'_k}\times W)< \eta$, then $I(X_1,X_j,Y;X'_1,X'_j|X_k) < \eta$.

			\item[(b)] If there exists $m'_1 \in \mathcal{M}_1$, $m'_1 \neq m_1$, $\mathbf{x}'_k\in \mathcal{X}^n_k$, and random variables $X_1',X_k'$ such that $(f_1(m_1),f_1(m'_1),f_j(m_j),\allowbreak  \mathbf{x}_k, \mathbf{x}'_k, \mathbf{y}) \in T^{n}_{X_1, X_1',X_j, X_k, X_k', Y}$ and  $D(P_{X_1',X_j,X_k', Y}||P_{X_1'}\times P_{X_j}\times P_{X_k'}\times W)< \eta$, then $I(X_1,X_j,Y;X_1'|X_k) < \eta$.

			\item[(c)] If there exists $m'_1 \in \mathcal{M}_1$, $m'_1 \neq m_1$, $\mathbf{x}_j\in \mathcal{X}^n_j$, $m_k\in \mathcal{M}_k$, and random variables $X_1',X_j', X_k'$ such that $(f_1(m_1),f_1(m'_1),f_j(m_j), \mathbf{x}_j, \mathbf{x}_k, f_k(m_k), \mathbf{y}) \in T^{n}_{X_1, X_1', X_j, X_j', X_k, X_k',Y}$ and  $D(P_{X_1',X_j',X_k', Y}||P_{X_1'}\times P_{X_j}\times P_{X_k'}\times W)< \eta$, then $I(X_1,X_j,Y;X_1',X_k'|X_k) < \eta$.

	\end{description}
The decoding sets $D_{m_2}$ and $D_{m_3}$ are defined similarly. 
\end{defn*}
The following Lemma implies that the decoder above is well defined.
\begin{duplicatelemma}[\ref{lemma:dec}]
Suppose the channel \mach is not symmetrizable. Let $P_{X_1}\in \mathcal{P}_{X_1}, P_{X_2} \in \mathcal{P}_{X_2}$ and $P_{X_3}\in \mathcal{P}_{X_3}$ be distributions such that for some $\alpha >0,$ $\min_{x_1}P_{X_1}(x_1),\min_{x_2}P_{X_2}(x_2),\min_{x_3}P_{X_3}(x_3)\geq \alpha$. Let $f_1:\mathcal{M}_1\rightarrow T^{n}_{X_1}, \, f_2:\mathcal{M}_2\rightarrow T^{n}_{X_2}$ and $f_3:\mathcal{M}_3\rightarrow T^{n}_{X_3}$ be any encoding maps. There exists a choice of $\eta>0$  such that if $(\tilde{m}_1,\tilde{m}_2,\tilde{m}_3)\neq(m_1,m_2,m_3),\, D_{\tilde{m}_1,\tilde{m}_2,\tilde{m}_3}\cap D_{m_1,m_2,m_3} = \emptyset$.
\end{duplicatelemma}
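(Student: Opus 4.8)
The plan is to argue by contradiction, following the template of~\cite[Lemma~4]{CsiszarN88} but now accounting for the three decoding conditions \textbf{(a)}, \textbf{(b)}, \textbf{(c)} and the twelve non-symmetrizability hypotheses. Suppose, for every $\eta>0$ (equivalently, along a sequence $\eta\to 0$), there exist message triples $(m_1,m_2,m_3)\neq(\tilde m_1,\tilde m_2,\tilde m_3)$ and a channel output $\mathbf{y}$ lying in $D_{m_1,m_2,m_3}\cap D_{\tilde m_1,\tilde m_2,\tilde m_3}$. Since the decoding set is an intersection $D_{m_1}\cap D_{m_2}\cap D_{m_3}$, and the triples differ, there is some coordinate $k$ with $m_k\neq\tilde m_k$; WLOG take $k=3$, so $\mathbf{y}\in D_{m_3}\cap D_{\tilde m_3}$ with $m_3\neq\tilde m_3$. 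Unpacking the definition of $D_{m_3}$ (with the roles of the indices $(1,2,3)$ relabeled appropriately), membership supplies random variables $X_1,X_2,X_3,Y$ with $(f_1(m_1),f_2(m_2),f_3(m_3),\mathbf{y})$ jointly typical, the divergence bound $D(P_{X_1X_2X_3Y}\|P_{X_1}P_{X_2}P_{X_3}W)<\eta$ forcing $P_{Y|X_1X_2X_3}\approx W$, and analogously primed variables $X_1',X_2',X_3',Y'$ coming from $\mathbf{y}\in D_{\tilde m_3}$ with $(f_1(\tilde m_1),f_2(\tilde m_2),f_3(\tilde m_3),\mathbf{y})$ typical for those. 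The same $\mathbf{y}$ appearing on both sides is the lever: it ties the two empirical joint types together through a common $Y$-marginal.

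The key steps, in order, are as follows. First, I would take $\eta\to 0$ and pass to a subsequence so that all the relevant empirical joint types converge to genuine distributions; by finiteness of the alphabets this is automatic, and the strict inequalities ($I(\cdot;\cdot)<\eta$, $D(\cdot\|\cdot)<\eta$) become equalities ($I=0$, i.e.\ independence; $D=0$, i.e.\ the conditional law equals $W$) in the limit. Second, I would do a case analysis on which of the three sub-conditions \textbf{(a)}, \textbf{(b)}, \textbf{(c)} is ``active'' — i.e.\ which of the escape clauses is triggered by the existence of the primed codewords witnessing $\mathbf{y}\in D_{\tilde m_3}$. In each case, combining the conditional-independence conclusion forced by the decoding rule (for instance, in \textbf{(a)}, $I(X_1,X_2,Y;X_1',X_2'|X_3)=0$ in the limit) with the constraint that both $P_{Y|X_1X_2X_3}$ and $P_{Y|X_1'X_2'X_3'}$ equal $W$, I would extract a symmetrizing kernel. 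Concretely: the limiting joint type yields a conditional distribution $q(x_i'\mid x_i, x_3)$ (or $q(x_i'\mid x_3)$, or a pair $q,q'$, matching the shape of conditions \eqref{eq:symm1}, \eqref{eq:symm2}, \eqref{eq:symm3} respectively) under which the two expressions for the $Y$-law agree for all arguments attained with positive probability; the lower bound $P_{X_i}(x_i)\geq\alpha>0$ on the marginals, preserved in the limit, guarantees that ``attained with positive probability'' covers the full alphabets, so the identity holds for all $x_i,\tilde x_i,x_3,\tilde x_3,y$. This contradicts the hypothesis that $\mach$ is not symmetrizable. The sub-case where the primed witness has $m_j'=m_j$ or $m_3'=m_3$ (no genuine collision at that index) must be handled separately and leads to the AVC-type symmetrizability conditions \eqref{eq:symm1}–\eqref{eq:symm2} inherited from Gubner; the genuinely three-way collision (condition \textbf{(c)}, where an adversary at node~1 is confused with an adversary at node~2) is exactly what produces the new condition \eqref{eq:symm3}.

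The main obstacle I anticipate is the bookkeeping in step two: one must carefully track, across the six decoding sets $D_{m_i}$ and $D_{\tilde m_i}$ and their sub-conditions \textbf{(a)}–\textbf{(c)}, exactly which messages coincide and which differ between the two triples, because a single output $\mathbf{y}$ in the intersection can differ from $(m_1,m_2,m_3)$ in one, two, or all three coordinates, and each pattern routes into a different sub-condition and hence a different symmetrizability type. Making sure the limiting kernels are honest conditional distributions (normalization, measurability over the support) and that the positivity $\alpha$ really does promote ``for almost every argument'' to ``for every argument'' is the delicate analytic point; the rest is the standard type-counting and continuity argument. Once the case analysis is organized, each branch terminates by exhibiting one of the twelve forbidden symmetrizing kernels, so non-symmetrizability of $\mach$ closes every branch and forces $D_{m_1,m_2,m_3}\cap D_{\tilde m_1,\tilde m_2,\tilde m_3}=\emptyset$ for small enough $\eta$.
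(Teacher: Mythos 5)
Your proposal is correct and essentially matches the paper's proof. The structure is identical: assume $\mathbf{y}\in D_{m_k}\cap D_{\tilde m_k}$ with $m_k\neq\tilde m_k$, split according to whether the two witnesses use the same or opposite permutation of the remaining two indices (and, in the same-permutation branch, whether the honest-user messages agree), and in each branch read off a symmetrizing kernel from the competing joint types; the opposite-permutation branch is precisely the one that produces the new condition~\eqref{eq:symm3}, as you anticipate. The only substantive difference is technical: you reach the contradiction via a compactness argument in $\eta$, whereas the paper reaches it quantitatively by chaining the log-sum inequality and Pinsker's inequality to obtain $\max\lvert\tilde V_\ell - V_\ell\rvert\le 2c\sqrt{3\eta}/\alpha^4$ for the two kernels induced by the competing joint types, and then invoking a uniform positive lower bound $\zeta_\ell$ on that quantity coming from non-symmetrizability together with compactness of the kernel space (in the style of Ahlswede--Cai), which yields the explicit threshold $\eta<\zeta^2\alpha^8/(12c^2)$. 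Both routes are sound. One point worth making explicit in the compactness version: the achievability argument downstream needs the resulting $\eta$ to be uniform over all encoding maps and all blocklengths $n$, so the counterexample joint types must be collected over that entire family before passing to the limit; this is fine because they all live in a fixed finite-dimensional simplex whose relevant marginals are bounded below by $\alpha$, which is exactly what your positivity remark ultimately provides.
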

\begin{proof}
To prove the lemma, it is sufficient to show that the decoding regions of the decoders for each user are disjoint. Let us consider the decoder for user 1. Suppose $\mathbf{y}\in\mathcal{Y}^n$ is such that $\mathbf{y}$ lies in the decoding regions, $D_{m_1},D_{\tilde{m}_1}$ of $m_1,\tilde{m}_1\in \mathcal{M}_1$ where $\tilde{m}_1\neq m_1$. Then there exist permutations $(i,j)$ and $(\tilde{i},\tilde{j})$ of $(2,3)$ such that one of the following cases holds.\\
{\bf Case 1}: $(\tilde{i}, \tilde{j}) = (i,j)$\\
There exist $m_j,\tilde{m}_j\in \mathcal{M}_j$, sequences $\mathbf{x}_i,\tilde{\mathbf{x}}_i\in \mathcal{X}_i^n$, and random variables $X_1,\tilde{X}_1,X_j,\tilde{X}_j,X_i,\tilde{X}_i$ with $(f_1(m_1),f_1(\tilde{m}_1),\allowbreak f_j(m_j),f_j(\tilde{m}_j),\mathbf{x}_i,\tilde{\mathbf{x}}_i)\in T^{n}_{X_1\tilde{X}_1X_j\tilde{X}_jX_i\tilde{X}_i}$ such that $D(P_{X_1X_jX_i Y}||P_{X_1}\times P_{X_j}\times P_{X_i}\times W),\, D(P_{\tilde{X}_1\tilde{X}_j\tilde{X}_i Y}||P_{\tilde{X}_1}\times P_{\tilde{X}_j}\times P_{\tilde{X}_i}\times W)< \eta$ and
\begin{description}
	\item [Case 1(a)] if $\tilde{m}_j\neq m_j$, then $I(X_1X_jY;\tilde{X}_1\tilde{X}_j|X_i), I(\tilde{X}_1\tilde{X}_jY;X_1X_j|\tilde{X}_i) < \eta $.
	\item [Case 1(b)] if $\tilde{m}_j =  m_j$, then $\tilde{X}_j = X_j$ and $I(X_1X_jY;\tilde{X}_1|X_i), I(\tilde{X}_1X_jY;X_1|\tilde{X}_i) < \eta$. 
\end{description}
{\bf Case 2}: $(\tilde{i}, \tilde{j}) = (j,i)$\\
There exist $m_j\in \mathcal{M}_j$, $\tilde{m}_i\in \mathcal{M}_i$, sequences $\tilde{\mathbf{x}}_{j}\in \mathcal{X}_{j}^n$, $\mathbf{x}_i\in \mathcal{X}_i^n$ and random variables $X_1$, $\tilde{X}_1$, $X_j$, $\tilde{X}_{j}$, $X_i$, $\tilde{X}_i$ with $(f_1(m_1)$, $f_1(\tilde{m}_1)$, $f_j(m_j)$, $\tilde{\mathbf{x}}_{j}$, $\mathbf{x}_i$, $f_i(\tilde{m}_i))\in T^{n}_{X_1\tilde{X}_1X_j\tilde{X}_{j}X_i\tilde{X}_i}$ such that $D(P_{X_1X_jX_i Y}||P_{X_1}\times P_{X_j}\times P_{X_i}\times W)$, $ D(P_{\tilde{X}_1\tilde{X}_{j}\tilde{X}_i Y}||P_{\tilde{X}_1}\times P_{\tilde{X}_{j}}\times P_{\tilde{X}_i}\times W)< \eta$ and
 $I(X_1  X_j Y;\tilde{X}_1 \tilde{X}_i|X_i)$,$ I(\tilde{X}_1\tilde{X}_iY;X_1X_j|\tilde{X}_{j})<\eta$.\\

\noindent We first analyze {\bf Case 1(a)}. Let $W_{Y|X_1X_jX_i}$ be denoted by $W$.
\begin{align*}
&D(P_{X_1X_jX_i Y}||P_{X_1}\times P_{X_j}\times P_{X_i}\times W) + D(P_{\tilde{X}_1,\tilde{X}_j}||P_{\tilde{X}_1}\times P_{\tilde{X}_j}) + I(X_1X_jY;\tilde{X}_1\tilde{X}_j|X_i) \stackrel{(a)}{=} \\
&\quad\sum_{x_1,x_j,x_i,y}P_{X_1X_jX_iY}(x_1,x_j,x_i,y)\log{\frac{P_{X_1X_jX_iY}(x_1,x_j,x_i,y)}{P_{X_1}(x_1)P_{X_j}(x_j)P_{X_i}(x_i)W(y|x_1,x_j,x_i)}} + \sum_{\tilde{x}_1,\tilde{x}_j}P_{\tilde{X}_1\tilde{X}_j}(\tilde{x}_1,\tilde{x}_j)\log{\frac{P_{\tilde{X}_1\tilde{X}_j}(\tilde{x}_1,\tilde{x}_j)}{P_{\tilde{X}_1}(\tilde{x}_1)P_{\tilde{X}_j}(\tilde{x}_j)}}\\
&\quad + \sum_{x_1,\tilde{x}_1,x_j,\tilde{x}_j,x_i,y}P_{X_1\tilde{X}_1X_j\tilde{X}_jX_iY}(x_1,\tilde{x}_1,x_j,\tilde{x}_j,x_i,y)\log{\frac{P_{X_1\tilde{X}_1X_j\tilde{X}_jY|X_i}(x_1,\tilde{x}_1,x_j,\tilde{x}_j,y|x_i)}{P_{X_1X_jY|X_i}(x_1,x_j,y|x_i)P_{\tilde{X}_1\tilde{X}_j|X_i}(\tilde{x}_1,\tilde{x}_j|x_i)}}\\
&\quad =\sum_{x_1,\tilde{x}_1,x_j,\tilde{x}_j,x_i,y}P_{X_1\tilde{X}_1X_j\tilde{X}_jX_iY}(x_1,\tilde{x}_1,x_j,\tilde{x}_j,x_i,y)\log{\frac{P_{X_1\tilde{X}_1X_j\tilde{X}_jX_iY}(x_1,\tilde{x}_1,x_j,\tilde{x}_j,x_i,y)}{P_{X_1}(x_1)P_{\tilde{X}_1}(\tilde{x}_1)P_{X_j}(x_j)P_{\tilde{X}_j}(\tilde{x}_j)P_{X_i|\tilde{X}_1\tilde{X}_j}(x_i|\tilde{x}_1,\tilde{x}_j)W(y|x_1,x_j,x_i)}}\\
&\quad = D(P_{X_1\tilde{X}_1X_j\tilde{X}_jX_iY}||P_{X_1}P_{\tilde{X}_1}P_{X_j}P_{\tilde{X}_j}P_{X_i|\tilde{X}_1\tilde{X}_j}W)\\
&\quad \stackrel{\text{(b)}}{\geq} D(P_{X_1\tilde{X}_1X_j\tilde{X}_jY}||P_{X_1}P_{\tilde{X}_1}P_{X_j}P_{\tilde{X}_j}\tilde{V}_1)\text{ where }\tilde{V}_1(y|x_1,\tilde{x}_1,x_j,\tilde{x}_j) = \sum_{x_i}P_{X_i|\tilde{X}_1\tilde{X}_j}(x_i|\tilde{x}_1,\tilde{x}_j)W(y|x_1,x_j,x_i),
\end{align*}
where (b) follows from the log sum inequality. From the given conditions, we know that the term on the LHS of (a) is no greater than $3\eta$. Thus, $D(P_{X_1\tilde{X}_1X_j\tilde{X}_jY}||P_{X_1}P_{\tilde{X}_1}P_{X_j}P_{\tilde{X}_j}\tilde{V}_1) \leq 3\eta$. Using Pinsker's inequality, it follows that 
\begin{equation}\label{eq:a5}
\sum_{x_1,\tilde{x}_1,x_j,\tilde{x}_j,y}\Big|P_{X_1\tilde{X}_1X_j\tilde{X}_jY}(x_1,\tilde{x}_1,x_j,\tilde{x}_j,y)-P_{X_1}(x_1)P_{\tilde{X}_1}(\tilde{x}_1)P_{X_j}(x_j)P_{\tilde{X}_j}(\tilde{x}_j)\tilde{V}_1(y|x_1,\tilde{x}_1,x_j,\tilde{x}_j)\Big| \leq c\sqrt{3\eta},
\end{equation}
where $c$ is some positive constant. Following a similar line of argument, we can show that 
\begin{align*}
3\eta &\geq D(P_{\tilde{X}_1\tilde{X}_j\tilde{X}_i Y}||P_{\tilde{X}_1}\times P_{\tilde{X}_j}\times P_{\tilde{X}_i}\times W) + D(P_{X_1X_j}||P_{X_1}\times P_{X_j}) + I(\tilde{X}_1\tilde{X}_jY;X_1 X_j|\tilde{X}_i)\\
& \geq D(P_{X_1\tilde{X}_1X_j\tilde{X}_jY}||P_{X_1}P_{\tilde{X}_1}P_{X_j}P_{\tilde{X}_j}V_1)\text{ where }V_1(y|x_1,\tilde{x}_1,x_j,\tilde{x}_j) = \sum_{\tilde{x}_i}P_{\tilde{X}_i|X_1X_j}(\tilde{x}_i|x_1,x_j)W(y|\tilde{x}_1,\tilde{x}_j,\tilde{x}_i)
\end{align*}
\noindent Using Pinsker's inequality, it follows that 
\begin{equation}\label{eq:a6}
\sum_{x_1,\tilde{x}_1,x_j,\tilde{x}_j,y}\Big|P_{X_1\tilde{X}_1X_j\tilde{X}_jY}(x_1,\tilde{x}_1,x_j,\tilde{x}_j,y)-P_{X_1}(x_1)P_{\tilde{X}_1}(\tilde{x}_1)P_{X_j}(x_j)P_{\tilde{X}_j}(\tilde{x}_j)V_1(y|x_1,\tilde{x}_1,x_j,\tilde{x}_j)\Big| \leq c\sqrt{3\eta}.
\end{equation}
From \eqref{eq:a5} and \eqref{eq:a6}, 
\begin{equation*}
\sum_{x_1,\tilde{x}_1,x_j,\tilde{x}_j,y}P_{X_1}(x_1)P_{\tilde{X}_1}(\tilde{x}_1)P_{X_j}(x_j)P_{\tilde{X}_j}(\tilde{x}_j)\Big|\tilde{V}_1(y|x_1,\tilde{x}_1,x_j,\tilde{x}_j)-V_1(y|x_1,\tilde{x}_1,x_j,\tilde{x}_j)\Big| \leq 2c\sqrt{3\eta}.
\end{equation*}
This implies that 
\begin{equation}\label{eq:1a}
\max_{x_1,\tilde{x}_1,x_j,\tilde{x}_j,y}\Big|\tilde{V}_1(y|x_1,\tilde{x}_1,x_j,\tilde{x}_j)-V_1(y|x_1,\tilde{x}_1,x_j,\tilde{x}_j)\Big| \leq \frac{2c\sqrt{3\eta}}{\alpha^4}.
\end{equation}
Similar to~\cite[(A.15) on page~748]{AhlswedeC99}, since $\mach$ is not $\cX_1\times\cX_j$-{\em symmetrizable by}~$\cX_i$ (i.e.,~\eqref{eq:symm1} does not hold for $(i,j,k) = (i,j,1)$), we can show that for any pair of channels $P_{\tilde{X}_i|X_1X_j}$ and $P_{X_i|\tilde{X}_1\tilde{X}_j}$, there exists $\zeta_1>0$ such that
\begin{equation*}
\max_{x_1,\tilde{x}_1,x_j,\tilde{x}_j,y}\Big|\tilde{V}_1(y|x_1,\tilde{x}_1,x_j,\tilde{x}_j)-V_1(y|x_1,\tilde{x}_1,x_j,\tilde{x}_j)\Big| \geq \zeta_1.
\end{equation*}
This contradicts \eqref{eq:1a} if $\eta < \frac{\zeta_1^2\alpha^8}{12c^2}$.\\

\noindent We now analyze {\bf Case 1(b)}. 
\begin{align*}
&D(P_{X_1X_jX_i Y}||P_{X_1}\times P_{X_j}\times P_{X_i}\times W)  + I(X_1X_jY;\tilde{X}_1|X_i) \stackrel{(a)}{=} \\
&\quad\sum_{x_1,x_j,x_i,y}P_{X_1X_jX_iY}(x_1,x_j,x_i,y)\log{\frac{P_{X_1X_jX_iY}(x_1,x_j,x_i,y)}{P_{X_1}(x_1)P_{X_j}(x_j)P_{X_i}(x_i)W(y|x_1,x_j,x_i)}} \\
&\quad\quad + \sum_{x_1,\tilde{x}_1,x_j,x_i,y}P_{X_1\tilde{X}_1X_jX_iY}(x_1,\tilde{x}_1,x_j,x_i,y)\log{\frac{P_{X_1\tilde{X}_1X_jY|X_i}(x_1,\tilde{x}_1,x_j,y|x_i)}{P_{X_1X_jY|X_i}(x_1,x_j,y|x_i)P_{\tilde{X}_1|X_i}(\tilde{x}_1|x_i)}}\\
&\quad\quad =\sum_{x_1,\tilde{x}_1,x_j,x_i,y}P_{X_1\tilde{X}_1X_jX_iY}(x_1,\tilde{x}_1,x_j,x_i,y)\log{\frac{P_{X_1\tilde{X}_1X_jX_iY}(x_1,\tilde{x}_1,x_j,x_i,y)}{P_{X_1}(x_1)P_{\tilde{X}_1}(\tilde{x}_1)P_{X_j}(x_j)P_{X_i|\tilde{X}_1}(x_i|\tilde{x}_1)W(y|x_1,x_j,x_i)}}\\
&\quad\quad = D(P_{X_1\tilde{X}_1X_jX_iY}||P_{X_1}P_{\tilde{X}_1}P_{X_j}P_{X_i|\tilde{X}_1}W)\\
&\qquad \stackrel{\text{(b)}}{\geq} D(P_{X_1\tilde{X}_1X_jY}||P_{X_1}P_{\tilde{X}_1}P_{X_j}\tilde{V}_2)\text{ where }\tilde{V}_2(y|x_1,\tilde{x}_1,x_j) = \sum_{x_i}P_{X_i|\tilde{X}_1}(x_i|\tilde{x}_1)W(y|x_1,x_j,x_i),
\end{align*}
where (b) follows from the log sum inequality. From the given conditions, we know that the term on the LHS of (a) is no greater than $2\eta$. Thus, $D(P_{X_1\tilde{X}_1X_jY}||P_{X_1}P_{\tilde{X}_1}P_{X_j}\tilde{V}_2) \leq 2\eta$. Using Pinsker's inequality, it follows that 
\begin{equation}\label{eq:a52}
\sum_{x_1,\tilde{x}_1,x_j,y}\Big|P_{X_1\tilde{X}_1X_jY}(x_1,\tilde{x}_1,x_j,y)-P_{X_1}(x_1)P_{\tilde{X}_1}(\tilde{x}_1)P_{X_j}(x_j)\tilde{V}_2(y|x_1,\tilde{x}_1,x_j)\Big| \leq c\sqrt{2\eta},
\end{equation}
where $c$ is some positive constant. Following a similar line of argument, we can show that 
\begin{align*}
2\eta& \geq D(P_{\tilde{X}_1 X_j \tilde{X}_i Y}||P_{\tilde{X}_1}\times P_{X_j}\times P_{\tilde{X}_i}\times W)  + I(\tilde{X}_1X_jY;X_1|\tilde{X}_i)\\
\qquad & \geq D(P_{X_1\tilde{X}_1X_jY}||P_{X_1}P_{\tilde{X}_1}P_{X_j}V_2)\text{ where }V_2(y|x_1,\tilde{x}_1,x_j) = \sum_{\tilde{x}_i}P_{\tilde{X}_i|X_1}(\tilde{x}_i|x_1)W(y|\tilde{x}_1,x_j,\tilde{x}_i). 
\end{align*}
Using Pinsker's inequality, it follows that 
\begin{equation}\label{eq:a62}
\sum_{x_1,\tilde{x}_1,x_j,y}\Big|P_{X_1\tilde{X}_1X_jY}(x_1,\tilde{x}_1,x_j,y)-P_{X_1}(x_1)P_{\tilde{X}_1}(\tilde{x}_1)P_{X_j}(x_j)V_2(y|x_1,\tilde{x}_1,x_j)\Big| \leq c\sqrt{3\eta}.
\end{equation}
From \eqref{eq:a52} and \eqref{eq:a62}, 
\begin{equation*}
\sum_{x_1,\tilde{x}_1,x_j,y}P_{X_1}(x_1)P_{\tilde{X}_1}(\tilde{x}_1)P_{X_j}(x_j)\Big|\tilde{V}_2(y|x_1,\tilde{x}_1,x_j)-V_2(y|x_1,\tilde{x}_1,x_j)\Big| \leq 2c\sqrt{3\eta}.
\end{equation*}
This implies that 
\begin{equation}\label{eq:1b}
\max_{x_1,\tilde{x}_1,x_j,y}\Big|\tilde{V}_2(y|x_1,\tilde{x}_1,x_j)-V_2(y|x_1,\tilde{x}_1,x_j)\Big| \leq \frac{2c\sqrt{2\eta}}{\alpha^4}.
\end{equation}
Similar to~\cite[(A.5) on page~747]{AhlswedeC99}, since $\mach$ is not $\cX_1|\cX_j$-{\em symmetrizable by}~$\cX_i$ (i.e.,~\eqref{eq:symm2} does not hold for $(i,j,k) = (i,j,1)$), we can show that for any pair for channels $P_{\tilde{X}_i|X_1}$ and $P_{X_i|\tilde{X}_1}$, there exists $\zeta_2>0$ such that
\begin{equation*}
\max_{x_1,\tilde{x}_1,x_j,y}\Big|\tilde{V}_2(y|x_1,\tilde{x}_1,x_j)-V_2(y|x_1,\tilde{x}_1,x_j)\Big| \geq \zeta_2.
\end{equation*}
This contradicts \eqref{eq:1b} if $\eta < \frac{\zeta_2^2\alpha^8}{8c^2}$.\\

 \noindent We now analyse {\bf Case 2}. 
\begin{align*}
&D(P_{X_1X_jX_i Y}||P_{X_1}\times P_{X_j}\times P_{X_i}\times W) + D(P_{\tilde{X}_1,\tilde{X}_i}||P_{\tilde{X}_1}\times P_{\tilde{X}_i}) + I(X_1  X_j Y;\tilde{X}_1 \tilde{X}_i|X_i) \stackrel{(a)}{=} \\
&\quad\sum_{x_1,x_j,x_i,y}P_{X_1X_jX_iY}(x_1,x_j,x_i,y)\log{\frac{P_{X_1X_jX_iY}(x_1,x_j,x_i,y)}{P_{X_1}(x_1)P_{X_j}(x_j)P_{X_i}(x_i)W(y|x_1,x_j,x_i)}} + \sum_{\tilde{x}_1,\tilde{x}_i}P_{\tilde{X}_1\tilde{X}_i}(\tilde{x}_1,\tilde{x}_i)\log{\frac{P_{\tilde{X}_1\tilde{X}_i}(\tilde{x}_1,\tilde{x}_i)}{P_{\tilde{X}_1}(\tilde{x}_1)P_{\tilde{X}_i}(\tilde{x}_i)}}\\
&\quad + \sum_{x_1,\tilde{x}_1,x_j,x_i,\tilde{x}_i,y}P_{X_1\tilde{X}_1X_jX_i\tilde{X}_iY}(x_1,\tilde{x}_1,x_j,x_i,\tilde{x}_i,y)\log{\frac{P_{X_1\tilde{X}_1X_j\tilde{X}_iY|X_i}(x_1,\tilde{x}_1,x_j,\tilde{x}_i,y|x_i)}{P_{\tilde{X}_1\tilde{X}_i|X_i}(\tilde{x}_1,\tilde{x}_i|x_i)P_{X_1X_jY|X_i}(x_1,x_j,y|x_i)}}\\
&\quad =\sum_{x_1,\tilde{x}_1,x_j,x_i,\tilde{x}_i,y}P_{X_1\tilde{X}_1X_jX_i\tilde{X}_iY}(x_1,\tilde{x}_1,x_j,x_i,\tilde{x}_i,y)\log{\frac{P_{X_1\tilde{X}_1X_jX_i\tilde{X}_iY}(x_1,\tilde{x}_1,x_j,x_i,\tilde{x}_i,y)}{P_{X_1}(x_1)P_{\tilde{X}_1}(\tilde{x}_1)P_{X_j}(x_j)P_{\tilde{X}_i}(\tilde{x}_i)P_{X_i|\tilde{X}_1\tilde{X}_i}(x_i|\tilde{x}_1,\tilde{x}_i)W(y|x_1,x_j,x_i)}}\\
&\quad = D(P_{X_1\tilde{X}_1X_jX_i\tilde{X}_iY}||P_{X_1}P_{\tilde{X}_1}P_{X_j}P_{\tilde{X}_i}P_{X_i|\tilde{X}_1\tilde{X}_i}W)\\
&\quad \stackrel{\text{(b)}}{\geq} D(P_{X_1\tilde{X}_1X_j\tilde{X}_iY}||P_{X_1}P_{\tilde{X}_1}P_{X_j}P_{\tilde{X}_i}\tilde{V}_3)\text{ where }\tilde{V}_3(y|x_1,\tilde{x}_1,x_j,\tilde{x}_i) = \sum_{x_i}P_{X_i|\tilde{X}_1\tilde{X}_i}(x_i|\tilde{x}_1,\tilde{x}_i)W(y|x_1,x_j,x_i),
\end{align*}
where (b) follows from the log sum inequality. From the given conditions, we know that the term on the LHS of (a) is no greater than $3\eta$. Thus, $D(P_{X_1\tilde{X}_1X_j\tilde{X}_iY}||P_{X_1}P_{\tilde{X}_1}P_{X_j}P_{\tilde{X}_i}\tilde{V}_3) \leq 3\eta$. Using Pinsker's inequality, it follows that 
\begin{equation}\label{eq:a1}
\sum_{x_1,\tilde{x}_1,x_j,\tilde{x}_i,y}\Big|P_{X_1\tilde{X}_1X_j\tilde{X}_iY}(x_1,\tilde{x}_1,x_j,\tilde{x}_i,y)-P_{X_1}(x_1)P_{\tilde{X}_1}(\tilde{x}_1)P_{X_j}(x_j)P_{\tilde{X}_i}(\tilde{x}_i)\tilde{V}_3(y|x_1,\tilde{x}_1,x_j,\tilde{x}_i)\Big| \leq c\sqrt{3\eta}
\end{equation}
for some constant $c>0$.
\noindent Following a similar line of argument, 
\begin{align*}
&D(P_{\tilde{X}_1\tilde{X}_{j}\tilde{X}_i Y}||P_{\tilde{X}_1}\times P_{\tilde{X}_{j}}\times P_{\tilde{X}_i}\times W) + D(P_{X_1X_j}||P_{X_1}\times P_{X_{j}}) + I(\tilde{X}_1 \tilde{X}_iY ;X_1  X_j|\tilde{X}_{j}) \stackrel{(a)}{=} \\
&\quad\sum_{\tilde{x}_1,\tilde{x}_{j},\tilde{x}_i,y}P_{\tilde{X}_1\tilde{X}_{j}\tilde{X}_iY}(\tilde{x}_1,\tilde{x}_{j},\tilde{x}_i,y)\log{\frac{P_{\tilde{X}_1\tilde{X}_{j}\tilde{X}_iY}(\tilde{x}_1,\tilde{x}_{j},\tilde{x}_i,y)}{P_{\tilde{X}_1}(\tilde{x}_1)P_{\tilde{X}_{j}}(\tilde{x}_{j})P_{\tilde{X}_i}(\tilde{x}_i)W(y|\tilde{x}_1,\tilde{x}_{j},\tilde{x}_i)}} + \sum_{x_1,x_j}P_{X_1X_{j}}(x_1,x_{j})\log{\frac{P_{X_1X_{j}}(x_1,x_{j})}{P_{X_1}(x_1)P_{X_{j}}(x_{j})}}\\
&\quad + \sum_{x_1,\tilde{x}_1,x_j,\tilde{x}_{j},\tilde{x}_i,y}P_{X_1\tilde{X}_1X_j\tilde{X}_{j}\tilde{X}_iY}(x_1,\tilde{x}_1,x_j,\tilde{x}_{j},\tilde{x}_i,y)\log{\frac{P_{X_1\tilde{X}_1X_j\tilde{X}_iY|\tilde{X}_{j}}(x_1,\tilde{x}_1,x_j,\tilde{x}_i,y|\tilde{x}_{j})}{P_{\tilde{X}_1\tilde{X}_iY|\tilde{X}_{j}}(\tilde{x}_1,\tilde{x}_i,y|\tilde{x}_{j})P_{X_1X_j|\tilde{X}_{j}}(x_1,x_j|\tilde{x}_{j})}}\\
&\quad =\sum_{x_1,\tilde{x}_1,x_j,\tilde{x}_{j},\tilde{x}_i,y}P_{X_1\tilde{X}_1X_j\tilde{X}_{j}\tilde{X}_iY}(x_1,\tilde{x}_1,x_j,\tilde{x}_{j},\tilde{x}_i,y)\log{\frac{P_{X_1\tilde{X}_1X_j\tilde{X}_{j}\tilde{X}_iY}(x_1,\tilde{x}_1,x_j,\tilde{x}_{j},\tilde{x}_i,y)}{P_{X_1}(x_1)P_{\tilde{X}_1}(\tilde{x}_1)P_{X_j}(x_j)P_{\tilde{X}_i}(\tilde{x}_i)P_{\tilde{X}_{j}|X_1X_j}(\tilde{x}_{j}|x_1,x_j)W(y|\tilde{x}_1,\tilde{x}_{j},\tilde{x}_i)}}\\
&\quad = D(P_{X_1\tilde{X}_1X_j\tilde{X}_{j}\tilde{X}_iY}||P_{X_1}P_{\tilde{X}_1}P_{X_j}P_{\tilde{X}_{j}P_{\tilde{X}_{j}}|X_1X_{j}}W)\\
&\quad \geq D(P_{X_1\tilde{X}_1X_j\tilde{X}_iY}||P_{X_1}P_{\tilde{X}_1}P_{X_j}P_{\tilde{X}_i}V_3)\text{ where }V_3(y|x_1,\tilde{x}_1,x_j,\tilde{x}_i) = \sum_{\tilde{x}_{j}}P_{\tilde{X}_{j}|X_1X_j}(\tilde{x}_{j}|x_1,x_j)W(y|\tilde{x}_1,\tilde{x}_{j},\tilde{x}_i).
\end{align*}

\noindent From the given conditions, the term on the left of (a) is no larger than $3\eta$. Thus, $D(P_{X_1\tilde{X}_1X_j\tilde{X}_iY}||P_{X_1}P_{\tilde{X}_1}P_{X_j}P_{\tilde{X}_i}V_3) \leq 3\eta$.

\noindent 
Using Pinsker's inequality, it follows that 
\begin{equation}\label{eq:a2}
\sum_{x_1,\tilde{x}_1,x_j,\tilde{x}_i,y}\Big|P_{X_1\tilde{X}_1X_j\tilde{X}_iY}(x_1,\tilde{x}_1,x_j,\tilde{x}_i,y)-P_{X_1}(x_1)P_{\tilde{X}_1}(\tilde{x}_1)P_{X_j}(x_j)P_{\tilde{X}_i}(\tilde{x}_i)V_3(y|x_1,\tilde{x}_1,x_j,\tilde{x}_i)\Big| \leq c\sqrt{3\eta}.
\end{equation}
From \eqref{eq:a1} and \eqref{eq:a2}, 
\begin{equation}
\sum_{x_1,\tilde{x}_1,x_j,\tilde{x}_j,y}P_{X_1}(x_1)P_{\tilde{X}_1}(\tilde{x}_1)P_{X_j}(x_j)P_{\tilde{X}_i}(\tilde{x}_i)\Big|\tilde{V}_3(y|x_1,\tilde{x}_1,x_j,\tilde{x}_i)-V_3(y|x_1,\tilde{x}_1,x_j,\tilde{x}_i)\Big| \leq 2c\sqrt{3\eta}.
\end{equation}

\noindent This implies that 
\begin{equation}\label{eq:1c}
\max_{x_1,\tilde{x}_1,x_j,\tilde{x}_j,y}\Big|\tilde{V}_3(y|x_1,\tilde{x}_1,x_j,\tilde{x}_i)-V_3(y|x_1,\tilde{x}_1,x_j,\tilde{x}_i)\Big| \leq \frac{2c\sqrt{3\eta}}{\alpha^4}.
\end{equation}
Since $\mach$ is not $\cX_1$-{\em symmetrizable by}~$\cX_j/\cX_i$ (i.e.,~\eqref{eq:symm3} does not hold for $(i,j,k) = (i,j,1)$), for any pair of channels $P_{X_i|\tilde{X}_1\tilde{X}_i}$ and $P_{\tilde{X}_{\tilde{j}}|X_1X_j}$, there exists $\zeta_3>0$, such that
\begin{equation*}
\max_{x_1,\tilde{x}_1,x_j,\tilde{x}_j,y}\Big|\tilde{V}_3(y|x_1,\tilde{x}_1,x_j,\tilde{x}_i)-V_3(y|x_1,\tilde{x}_1,x_j,\tilde{x}_i)\Big| \geq \zeta_3.
\end{equation*}
This contradicts \eqref{eq:1c} if $\eta < \frac{\zeta_3^2\alpha^8}{12c^2}$.
Let $\zeta\defineqq \min{\{\zeta_1,\zeta_2,\zeta_3\}}$, any $\eta$ satisfying $0<\eta<\frac{\zeta^2\alpha^8}{12c^2}$ ensures disjoint decoding regions.

\end{proof}

\begin{proof}[{Proof (Achievability of Theorem~\ref{thm:symmetrizability})}]
Our achievability proof uses ideas from~\cite{CsiszarN88} and is along the lines of the achievability proofs in~\cite{AhlswedeC99,PeregS19}.
Fix distributions $P_{1}\in\mathcal{P}_{X_1},\,P_{2}\in\mathcal{P}_{X_2}$ and $P_3\in\mathcal{P}_{X_3}$. For these distributions, consider the codebook given by Lemma~\ref{codebook} and the decoder given by Definition~\ref{def:decoder} for some $0<\eta<\frac{\zeta^2\alpha^8}{12c^2}$, which will be specified later. For this code, the analysis of probability of error is as below.
\noindent Assume user 3 is adversarial. For $\mathbf{y}\in\mathcal{Y}^n$, let $\phi(\mathbf{y}) = (\psi_1(\mathbf{y}),\psi_2(\mathbf{y}),\psi_3(\mathbf{y}))$. Then, the probability of error as defined earlier is given by 
\begin{align*}
P_{e,3}&\defineqq \max_{\mathbf{x_3}} \frac{1}{N_1N_2} \sum_{r\in \mathcal{M}_1,s\in\mathcal{M}_2} \Prob\Big( (\psi_1(\mathbf{y}),\psi_2(\mathbf{y})) \neq (r, s) \, \Big| X_1^n=\mathbf{x}_{1r},\, X_2^n=\mathbf{x}_{2s},\, X_3^n=\mathbf{x}_3 \Big). 
\end{align*}
Using union bound, we can upper bound $P_{e,3}$ as follows.
\begin{align}
P_{e,3}&\leq \max_{\mathbf{x_3}} \left\{\frac{1}{N_1N_2} \sum_{r\in \mathcal{M}_1,s\in\mathcal{M}_2} \Prob\Big( \psi_1(\mathbf{y}) \neq r\, \Big| X_1^n=\mathbf{x}_{1r},\, X_2^n=\mathbf{x}_{2s},\, X_3^n=\mathbf{x}_3 \Big)\right. \nonumber\\
&\qquad \qquad \left. + \frac{1}{N_1N_2} \sum_{r\in \mathcal{M}_1,s\in\mathcal{M}_2} \Prob\Big( \psi_2(\mathbf{y})	 \neq  s \,\Big| X_1^n=\mathbf{x}_{1r},\, X_2^n=\mathbf{x}_{2s},\, X_3^n=\mathbf{x}_3 \Big)\right\}.\label{eq:second_term}
\end{align}
For a fixed sequence $\mathbf{x}_3\in \mathcal{X}^n_3$ and a received vector $\mathbf{y}\in \mathcal{Y}^n$, Let 
\begin{align*}
\mathcal{Q} &\defineqq \{P_{X_1X_2X_3Y} : I(X_1;X_2)\leq\epsilon, I(X_1X_2;X_3)\leq\epsilon, D(P_{X_1X_2X_3Y}||P_{X_1}P_{X_2}P_{X_3}W)\geq\eta\},\\
A &\defineqq \{(r,s):(\mathbf{x}_{1r},\mathbf{x}_{2s}) \in \cup_{P_{X_1,X_2}:I(X_1;X_2)>\epsilon}\,T^{n}_{X_1X_2}\},\\
B &\defineqq \{(r,s):(\mathbf{x}_{1r},\mathbf{x}_{2s}, \mathbf{x}_3) \in \cup_{P_{X_1X_2X_3}:I(X_1X_2;X_3)>\epsilon}\,T^{n}_{X_1X_2}\},\\
C &\defineqq \{(r,s): (\mathbf{x}_{1r},\mathbf{x}_{2s},\mathbf{x}_3, \mathbf{y})\in \cup _{P_{X_1X_2X_3Y} \in \mathcal{Q}}\,T^{n}_{X_1X_2X_3Y} \},\\
D &\defineqq \{(r,s):(\mathbf{x}_{1r},\mathbf{x}_{2s},\mathbf{x}_3, \mathbf{y})\in \cup_{P_{X_1X_2X_3Y} :  D(P_{X_1X_2X_3Y}||P_{X_1}P_{X_2}P_{X_3}W)<\eta}\,T^{n}_{X_1X_2X_3Y} \text{ and } \psi_1(\mathbf{y})\neq r\}.
\end{align*}
We obtain an upper bound on the the first term of  \eqref{eq:second_term} in the braces in terms of these sets.
\begin{align*}
\frac{1}{N_1N_2} \sum_{r,s} \Prob\Big( \psi_1(\mathbf{y}) \neq r\, \big| X_1^n=\mathbf{x}_{1r},\, X_2^n=\mathbf{x}_{2s},\, X_3^n=\mathbf{x}_3\Big)&\leq \frac{|A|}{N_1N_2}+\frac{|B|}{N_1N_2}+ \frac{1}{N_1N_2}\sum_{(r,s)\in C}W^n(\mathbf{y}|\mathbf{x}_{1r}, \mathbf{x}_{2s}, \mathbf{x}_3)\\+ \frac{1}{N_1N_2}\sum_{(r,s)\in D}W^n(\mathbf{y}|\mathbf{x}_{1r}, \mathbf{x}_{2s}, \mathbf{x}_3).
\end{align*}
We analyse each term on the RHS seperately.
\begin{align*}
\frac{|A|}{N_1N_2} &\leq |\mathcal{P}_{X_1X_2}|\frac{|\{(r,s): (\mathbf{x}_{1r},\mathbf{x}_{2s})\in T^{n}_{X_1, X_2} \}|}{N_1N_2} \\
&\stackrel{\text{(a)}}{\leq} 2^{-n\epsilon/2} \text{ for large enough }n. 
\end{align*}
Here, (a) follows from \eqref{code_eq1} and by noting that the number of joint types is polynomial in $n$. Similarly, using \eqref{code_eq2}
\begin{align*}
\frac{|B|}{N_1N_2} &\leq |\mathcal{P}_{X_1X_2X_3}|\frac{|\{(r,s): (\mathbf{x}_{1r},\mathbf{x}_{2s}, \mathbf{x}_3)\in T^{n}_{X_1, X_2, X_3} \}|}{N_1N_2} \\
&\leq 2^{-n\epsilon/3}\text{ for large enough }n.
\end{align*}
We now analyse the third term.
\begin{align*}
\sum_{(r,s)\in C}W^n(\mathbf{y}|\mathbf{x}_{1r}, \mathbf{x}_{2s}, \mathbf{x}_3) &\leq \sum_{P_{X_1X_2X_3Y} \in \mathcal{P}_{X_1X_2X_3Y}\cap\mathcal{Q}}\,\,\sum_{\mathbf{y}\in T^{n}_{Y|X_1,X_2,X_3}(\mathbf{x}_{1r},\mathbf{x}_{2s},\mathbf{x}_3)}W^n(\mathbf{y}|\mathbf{x}_{1r}, \mathbf{x}_{2s}, \mathbf{x}_3)\\
&\leq |\mathcal{P}_{X_1X_2X_3Y}|\exp{(-nD(P_{X_1X_2X_3Y}||P_{X_1X_2X_3}W)}).
\end{align*}
Note that for any $P_{X_1X_2X_3Y}, \, D(P_{X_1X_2X_3Y}||P_{X_1X_2X_3}W) = D(P_{X_1X_2X_3Y}||P_{X_1}P_{X_2}P_{X_3}W)-I(X_1X_2;X_3)-I(X_1;X_2)$. If $P_{X_1X_2X_3Y} \in \mathcal{Q}$, then $ D(P_{X_1X_2X_3Y}||P_{X_1X_2X_3}W) > \eta -2\epsilon$ and therefore, 
\begin{align*}
\sum_{(r,s)\in C}W^n(\mathbf{y}|\mathbf{x}_{1r}, \mathbf{x}_{2s}, \mathbf{x}_3)&\leq |\mathcal{P}_{X_1X_2X_3Y}|\exp{(-n(\eta-2\epsilon))}\\
&\leq \exp{(-n(\eta-3\epsilon))}\text{ for large enough }n,\\
&\rightarrow0 \text{ as }n\rightarrow\infty \text{ when }\eta>3\epsilon.
\end{align*}
We are left to analyze the last term. For $(r,s) \in D$, error happens when one of the following holds.
\begin{description}
	 \item[(1)] There exists $u \in \mathcal{M}_1$, $u \neq r$, $v \in \mathcal{M}_2$, $v\neq s$, $\mathbf{x}'_3\in \mathcal{X}^n_3$ and random variables $X_1',X_2', X_3'$ such that $(\mathbf{x}_{1r}, \mathbf{x}_{1u},\mathbf{x}_{2s}, \mathbf{x}_{2v}, \mathbf{x}_3, \mathbf{x}'_3, \mathbf{y}) \in T^{n}_{X_1 X_1' X_2X'_2 X_3 X_3'Y}$ and  $D(P_{X_1'X'_2X'_3 Y}||P_{X'_1}\times P_{X'_2}\times P_{X'_3}\times W)< \eta$ and \linebreak $I(X_1X_2Y;X'_1X'_2|X_3) \geq \eta$.

	\item[(2)] There exists $u \in \mathcal{M}_1, u \neq r, \mathbf{x}'_3\in \mathcal{X}^n_3$ and random variables $X_1',X_3'$ such that $(\mathbf{x}_{1r},\mathbf{x}_{1u},\mathbf{x}_{2s}, \mathbf{x}_3, \mathbf{x}'_3, \mathbf{y}) \in T^{n}_{X_1 X_1'X_2 X_3 X_3' Y}$ and  $D(P_{X_1'X_2X_3' Y}||P_{X_1'}\times P_{X_2}\times P_{X_3'}\times W)< \eta$ and $I(X_1X_2Y;X_1'|X_3) \geq \eta$.

	\item[(3)] There exists $u \in \mathcal{M}_1$, $u \neq r$, $\mathbf{x}_2\in \mathcal{X}^n_2$, $t\in \mathcal{M}_3$ and random variables $X_1',X_2', X_3'$ such that $(\mathbf{x}_{1r},\mathbf{x}_{1u},\mathbf{x}_{2s}, \mathbf{x}_2, \mathbf{x}_3, \mathbf{x}_{3t}, \mathbf{y}) \in T^{n}_{X_1 X_1' X_2 X_2' X_3 X_3'Y}$ and  $D(P_{X_1'X_2'X_3' Y}||P_{X_1'}\times P_{X'_2}\times P_{X_3'}\times W)< \eta$ and $I(X_1X_2Y;X_1'X_3'|X_3) \geq \eta$.
\end{description}
Let $\mathcal{Q}_{1}$ be the set of distributions $P_{X_1X'_1X_2X'_2X_3Y}\in \mathcal{P}_{X_1X'_1X_2X'_2X_3Y}$ satisfying  $D(P_{X_1X_2X_3Y}||P_{X_1}P_{X_2}P_{X_3}W)<\eta,\, D(P_{X_1'X'_2X'_3 Y}||P_{X'_1}\times P_{X'_2}\times P_{X'_3}\times W)< \eta$ and $I(X_1X_2Y;X'_1X'_2|X_3) \geq \eta$. When condition {\bf(1)} holds,
\begin{align}
&\frac{1}{N_1N_2}\sum_{(r,s)\in D}W^n(\mathbf{y}|\mathbf{x}_{1r}, \mathbf{x}_{2s}, \mathbf{x}_3) \nonumber\\
&\quad \leq \frac{1}{N_1N_2}\sum_{r,s}\sum_{P_{X_1X'_1X_2X'_2X_3Y}\in \mathcal{Q}_{1}}\sum_{\substack{u,v:(\mathbf{x}_{1r},\mathbf{x}_{1u},\mathbf{x}_{2s},\mathbf{x}_{2v}, \mathbf{x}_3)\\ \in T^{n}_{X_1 X_1'X_2 X'_2 X_3}, \text{ for some }u\neq r,v\neq s}}\sum_{\mathbf{y}\in T^{n}_{Y|X_1X'_1X_2X'_2X_3(\mathbf{x}_{1r},\mathbf{x}_{1u},\mathbf{x}_{2s},\mathbf{x}_{2v}, \mathbf{x}_3)}}W^n(\mathbf{y}|\mathbf{x}_{1r}, \mathbf{x}_{2s}, \mathbf{x}_3).\label{eq_error}
\end{align}
From \eqref{code_eq7}, we observe that it is sufficient to evaluate the right hand term only when
\begin{equation}\label{eq:analysis1}
I(X_1X_2;X'_1 X'_2 X_3)-|R_1+R_2-I(X'_1X'_2;X_3)|^{+}\leq\epsilon.
\end{equation}
For a distribution $P_{X_1X'_1X_2X'_2X_3Y}\in \mathcal{Q}_1$ satisfying the above condition, let
\begin{align*}
P_{e,X_1X'_1X_2X'_2X_3Y} &= \sum_{\substack{u,v:(\mathbf{x}_{1r},\mathbf{x}_{1u},\mathbf{x}_{2s},\mathbf{x}_{2v}, \mathbf{x}_3)\\ \in T^{n}_{X_1 X_1'X_2 X'_2 X_3}, \text{ for some }u\neq r,v\neq s}}\sum_{\mathbf{y}\in T^{n}_{Y|X_1,X'_1,X_2,X'_2,X_3(\mathbf{x}_{1r},\mathbf{x}_{1u},\mathbf{x}_{2s},\mathbf{x}_{2v}, \mathbf{x}_3)}}W^n(\mathbf{y}|\mathbf{x}_{1r}, \mathbf{x}_{2s}, \mathbf{x}_3)\\
& \stackrel{\text{(a)}}{\leq} \exp{\{n(|R_1+R_2-I(X_1'X_2';X_1X_2X_3)|^+ +\epsilon) +nH(Y|X_1X'_1X_2X'_2X_3) - n(H(Y|X_1X_2X_3)-\epsilon)\}}\\
&=\exp{\{n\left(|R_1+R_2-I(X_1'X_2';X_1X_2X_3)|^+ - I(Y;X'_1X'_2|X_1X_2X_3)+2\epsilon\right)\}}.
\end{align*} 
Here, (a) follows from \eqref{code_eq4} and properties of types. Let $R_1+R_2<I(X'_1X'_2;X_3)$,
\begin{align*}
I(Y;X'_1X'_2|X_1X_2X_3) &=  I(X_1X_2Y;X'_1X'_2|X_3)-I(X_1X_2;X'_1X'_2|X_3) \\
&\geq I(X_1X_2Y;X'_1X'_2|X_3) -I(X_1X_2;X'_1X'_2X_3)\\
& \geq \eta-\epsilon,
\end{align*}
where the last inequality follows from the definition of $\mathcal{Q}_1$ and \eqref{eq:analysis1}. Also, $R_1+R_2<I(X'_1X'_2;X_3)\leq I(X'_1X'_2;X_1X_2X_3)$. Thus, 
\begin{align*}
P_{e,X_1X'_1X_2X'_2X_3Y} &\leq \exp\{-n(\eta - 3\epsilon)\}\\
&\rightarrow 0 \text{ as } n\rightarrow0 \text{ if }\eta>3\epsilon.
\end{align*}  
When $R_1+R_2\geq I(X'_1X'_2;X_3)$, \eqref{eq:analysis1} implies 
\begin{align*}
R_1+R_2 &> I(X'_1X'_2;X_3) + I(X_1X_2;X'_1 X'_2 X_3) - \epsilon\\
&= I(X'_1X'_2;X_3) + I(X_1X_2;X_3) + I(X'_1X'_2;X_1X_2|X_3) - \epsilon\\
&\geq I(X'_1X'_2;X_1X_2X_3) - \epsilon.
\end{align*}
This implies that $|R_1+R_2-I(X_1'X_2';X_1X_2X_3)|^+ \leq R_1+R_2-I(X_1'X_2';X_1X_2X_3) + \epsilon$. Thus, 
\begin{align*}
P_{e,X_1X'_1X_2X'_2X_3Y} &\leq \exp\{n(R_1+R_2-I(X_1'X_2';X_1X_2X_3) + \epsilon -I(Y;X'_1X'_2|X_1X_2X_3)+2\epsilon\}\\
&= \exp\{n(R_1+R_2-I(X'_1X'_2;X_1X_2X_3Y)+3\epsilon)\}\\
&\leq \exp\{n(R_1+R_2-I(X'_1X'_2;Y)+3\epsilon)\}.
\end{align*}
Since $P_{X'_1X'_2X'_3Y}$ is such that $D(P_{X_1'X'_2X'_3 Y}||P_{X'_1}\times P_{X'_2}\times P_{X'_3}\times W)< \eta$ where $\eta$ can be chosen arbitrarily small. Thus, $P_{X_1'X'_2X'_3 Y}$ is arbitrarily close to $P_{\tilde{X}_1\tilde{X}_2\tilde{X}_3\tilde{Y}} \defineqq P_{X'_1}\times P_{X'_2}\times P_{X'_3}\times W$. So, for small positive number $\gamma_1$,  $I(X_1'X_2';Y)\geq I(\tilde{X}_1\tilde{X}_2;\tilde{Y})-\gamma_1 \geq \min_{P_{X'_3}} I(\tilde{X}_1\tilde{X}_2;\tilde{Y})-\gamma_1$. Thus, if 
\begin{align}
R_1+R_2&<\min_{P_{X'_3}}I(\tilde{X}_1\tilde{X}_2;Y)-3\epsilon-\gamma_1, \label{eq:cond1}\\
\text{then, }R_1+R_2&\leq \min_{P_{X'_3}}I(X'_1X'_2;Y)-3\epsilon, \nonumber
\end{align}
and therefore, $P_{e,X_1X'_1X_2X'_2X_3Y} \rightarrow 0$ as $n\rightarrow 0$. Since, there are only polynomially many types, this implies that~\eqref{eq_error} tends to zero as $n$ tends to infinity.

To analyse condition {\bf(2)}, let $\mathcal{Q}_2$ be the set of distributions $P_{X_1X'_1X_2X_3Y}\in \mathcal{P}_{X_1X'_1X_2X_3Y}$ satisfying \linebreak $D(P_{X_1X_2X_3Y}||P_{X_1}P_{X_2}P_{X_3}W)<\eta,\, D(P_{X_1'X_2X'_3 Y}||P_{X'_1}\times P_{X_2}\times P_{X'_3}\times W)< \eta$ and $I(X_1X_2Y;X'_1|X_3) \geq \eta$.
\begin{align}
&\frac{1}{N_1N_2}\sum_{(r,s)\in D}W^n(\mathbf{y}|\mathbf{x}_{1r}, \mathbf{x}_{2s}, \mathbf{x}_3) \nonumber\\
&\quad \leq \frac{1}{N_1N_2}\sum_{r,s}\sum_{P_{X_1X'_1X_2X_3Y}\in \mathcal{Q}_2}\sum_{\substack{u:(\mathbf{x}_{1r},\mathbf{x}_{1u},\mathbf{x}_{2s}, \mathbf{x}_3)\\ \in T^{n}_{X_1 X_1'X_2 X_3}, \text{ for some }u\neq r}}\sum_{\mathbf{y}\in T^{n}_{Y|X_1X'_1X_2X_3(\mathbf{x}_{1r},\mathbf{x}_{1u},\mathbf{x}_{2s}, \mathbf{x}_3)}}W^n(\mathbf{y}|\mathbf{x}_{1r}, \mathbf{x}_{2s}, \mathbf{x}_3). \label{eq_error_2}
\end{align}
From \eqref{code_eq5}, we observe that it is sufficient to evaluate the right hand term only when
\begin{equation}\label{eq:analysis2}
I(X_1X_2;X'_1 X_3)-|R_1-I(X'_1;X_3)|^{+}\leq\epsilon.
\end{equation}
For a distribution $P_{X_1,X'_1,X_2X_3Y}\in \mathcal{Q}_2$ satisfying the above condition, let
\begin{align*}
P_{e,_{X_1,X'_1,X_2X_3Y}} &= \sum_{\substack{u:(\mathbf{x}_{1r},\mathbf{x}_{1u},\mathbf{x}_{2s},\mathbf{x}_3)\\ \in T^{n}_{X_1 X_1'X_2 X_3}, \text{ for some }u\neq r}}\sum_{\mathbf{y}\in T^{n}_{Y|X_1X'_1X_2X_3(\mathbf{x}_{1r},\mathbf{x}_{1u},\mathbf{x}_{2s}, \mathbf{x}_3)}}W^n(\mathbf{y}|\mathbf{x}_{1r}, \mathbf{x}_{2s}, \mathbf{x}_3)\\
& \leq \exp{\{n(|R_1-I(X_1';X_1X_2X_3)|^+ +\epsilon) +nH(Y|X_1X'_1X_2X_3) - n(H(Y|X_1X_2X_3)-\epsilon)\}}\\
&=\exp{\{n\left(|R_1-I(X_1';X_1X_2X_3)|^+ - I(Y;X'_1|X_1X_2X_3)+2\epsilon\right)\}}.
\end{align*} 
Let $R_1<I(X'_1;X_3)$, then
\begin{align*}
I(Y;X'_1|X_1X_2X_3) &=  I(X_1X_2Y;X'_1|X_3)-I(X_1X_2;X'_1|X_3) \\
&\geq I(X_1X_2Y;X'_1|X_3) -I(X_1X_2;X'_1X_3)\\
& \geq \eta-\epsilon,
\end{align*}
where the last inequality follows from the definition of $\mathcal{Q}_2$ and \eqref{eq:analysis2}. Also, $R_1<I(X'_1;X_3)\leq I(X'_1;X_1X_2X_3)$. Thus, 
\begin{align*}
P_{e,_{X_1,X'_1,X_2X_3Y}} &\leq \exp\{-n(\eta - 3\epsilon)\}\\
&\rightarrow 0 \text{ as } n\rightarrow0 \text{ if }\eta>3\epsilon.
\end{align*}  
When $R_1\geq I(X'_1;X_3)$, \eqref{eq:analysis2} implies 
\begin{align*}
R_1 &\geq I(X'_1;X_3) + I(X_1X_2;X'_1  X_3) - \epsilon\\
&= I(X'_1;X_3) + I(X_1X_2;X_3) + I(X_1X_2;X'_1|X_3) - \epsilon\\
&\geq I(X'_1;X_1X_2X_3) - \epsilon.
\end{align*}
This implies that $|R_1-I(X_1';X_1X_2X_3)|^+ \leq R_1-I(X_1';X_1X_2X_3) + \epsilon$. Thus, 
\begin{align*}
P_{e,_{X_1,X'_1,X_2X_3Y}} &\leq \exp\{n(R_1-I(X_1';X_1X_2X_3) + \epsilon -I(Y;X'_1|X_1X_2X_3)+2\epsilon)\}\\
&= \exp\{n(R_1-I(X'_1;X_1X_2X_3Y)+3\epsilon)\}\\
&\leq \exp\{n(R_1-I(X'_1;X_2Y)+3\epsilon)\}.
\end{align*}
Since $P_{X'_1X_2X'_3Y}$ is such that $D(P_{X_1'X_2X'_3 Y}||P_{X'_1}\times P_{X_2}\times P_{X'_3}\times W)< \eta$ where $\eta$ can be chosen arbitrarily small, $P_{X_1'X_2X'_3 Y}$ is arbitrarily close to $P_{\tilde{X}_1\tilde{X}_2\tilde{X}_3\tilde{Y}} \defineqq P_{X'_1}\times P_{X_2}\times P_{X'_3}\times W$. So, for small positive number $\gamma_2$,  $I(X_1';X_2Y) \geq I(\tilde{X}_1;\tilde{Y}|\tilde{X}_2)-\gamma_2 \geq \min_{P_{X'_3}} I(\tilde{X}_1;\tilde{Y}|\tilde{X}_2)-\gamma_2$. Thus, if 
\begin{align}
R_1&<\min_{P_{X'_3}}I(\tilde{X}_1;\tilde{Y}|\tilde{X}_2)-3\epsilon-\gamma_2, \label{eq:cond2}\\
\text{then, }R_1 &\leq \min_{P_{X'_3}}I(X'_1;Y|X_2)-3\epsilon,\nonumber
\end{align}
and therefore, $P_{e,_{X_1,X'_1,X_2X_3Y}} \rightarrow 0$ as $n\rightarrow 0$. Since, there are only polynomially many types, this implies that~\eqref{eq_error_2} tends to zero as $n$ tends to infinity.

When condition {\bf(3)} holds, let $\mathcal{Q}_{3}$ be the set of distributions $P_{X_1X'_1X_2X_3X'_3Y}\in \mathcal{P}_{X_1X'_1X_2X_3X'_3Y}$ satisfying \linebreak $D(P_{X_1X_2X_3Y}||P_{X_1}P_{X_2}P_{X_3}W)<\eta,\, D(P_{X_1'X'_2X'_3 Y}||P_{X'_1}\times P_{X'_2}\times P_{X'_3}\times W)< \eta$ and $I(X_1X_2Y;X'_1X'_3|X_3) \geq \eta$,
\begin{align}
&\frac{1}{N_1N_2}\sum_{(r,s)\in D}W^n(\mathbf{y}|\mathbf{x}_{1r}, \mathbf{x}_{2s}, \mathbf{x}_3) \nonumber\\
&\quad \leq \frac{1}{N_1N_2}\sum_{r,s}\sum_{P_{X_1,X'_1,X_2,X_3X'_3Y}\in \mathcal{Q}_3}\sum_{\substack{u,t:(\mathbf{x}_{1r},\mathbf{x}_{1u},\mathbf{x}_{2s},\mathbf{x}_{3}, \mathbf{x}_{3t})\\ \in T^{n}_{X_1 X_1'X_2 X_3 X'_3}, \text{ for some }u\neq r}}\sum_{\mathbf{y}\in T^{n}_{Y|X_1X'_1X_2X_3X'_3(\mathbf{x}_{1r},\mathbf{x}_{1u},\mathbf{x}_{2s},\mathbf{x}_{3}, \mathbf{x}_{3t})}}W^n(\mathbf{y}|\mathbf{x}_{1r}, \mathbf{x}_{2s}, \mathbf{x}_3). \label{eq_error_3}
\end{align}
From \eqref{code_eq6}, we observe that it is sufficient to evaluate the right hand term only when
\begin{equation}\label{eq:analysis3}
I(X_1X_2;X'_1 X'_3 X_3)-|R_1+R_3-I(X'_1X'_3;X_3)|^{+}\leq\epsilon.
\end{equation}
For a distribution $P_{X_1X'_1X_2X_3X'_3Y}\in \mathcal{Q}_3$ satisfying the above condition, let
\begin{align*}
P_{e,X_1X'_1X_2X_3X'_3Y} &= \sum_{\substack{u,t:(\mathbf{x}_{1r},\mathbf{x}_{1u},\mathbf{x}_{2s},\mathbf{x}_{3}, \mathbf{x}_{3t})\\ \in T^{n}_{X_1 X_1'X_2 X_3 X'_3}, \text{ for some }u\neq r}}\sum_{\mathbf{y}\in T^{n}_{Y|X_1X'_1X_2X_3X'_3(\mathbf{x}_{1r},\mathbf{x}_{1u},\mathbf{x}_{2s},\mathbf{x}_{3}, \mathbf{x}_{3t})}}W^n(\mathbf{y}|\mathbf{x}_{1r}, \mathbf{x}_{2s}, \mathbf{x}_3)\\
& \leq \exp{\{n(|R_1+R_3-I(X_1'X_3';X_1X_2X_3)|^+ +\epsilon) +nH(Y|X_1X'_1X_2X_3X'_3) - n(H(Y|X_1X_2X_3)-\epsilon)\}}\\
&=\exp{\{n\left(|R_1+R_3-I(X_1'X_3';X_1X_2X_3)|^+ - I(Y:X'_1X'_3|X_1X_2X_3)+2\epsilon\right)\}}.
\end{align*} 
Let $R_1+R_3<I(X'_1X'_3;X_3)$, then
\begin{align*}
I(Y;X'_1X'_3|X_1X_2X_3) &=  I(X_1X_2Y;X'_1X'_3|X_3)-I(X_1X_2;X'_1X'_3|X_3) \\
&\geq I(X_1X_2Y;X'_1X'_3|X_3) -I(X_1X_2;X'_1X'_3X_3)\\
& \geq \eta-\epsilon,
\end{align*}
where the last inequality follows from the definition of $\mathcal{Q}_3$ and \eqref{eq:analysis3}. Also, $R_1+R_3<I(X'_1X'_3;X_3)\leq I(X'_1X'_3;X_1X_2X_3)$. Thus, 
\begin{align*}
P_{e,X_1X'_1X_2X_3X'_3Y} &\leq \exp\{-n(\eta - 3\epsilon)\}\\
&\rightarrow 0 \text{ as } n\rightarrow0 \text{ if }\eta>3\epsilon.
\end{align*}  
When $R_1+R_3\geq I(X'_1X'_3;X_3)$, \eqref{eq:analysis3} implies 
\begin{align*}
R_1+R_3 &> I(X'_1X'_3;X_3) + I(X_1X_2;X'_1 X'_3 X_3) - \epsilon\\
&= I(X'_1X'_3;X_3) + I(X_1X_2;X_3) + I(X'_1X'_3;X_1X_2|X_3) - \epsilon\\
&\geq I(X'_1X'_3;X_1X_2X_3) - \epsilon.
\end{align*}
This implies that $|R_1+R_3-I(X_1'X_3';X_1X_2X_3)|^+ \leq R_1+R_3-I(X_1'X_3';X_1X_2X_3) + \epsilon$. Thus, 
\begin{align*}
P_{e,X_1X'_1X_2X_3X'_3Y} &\leq \exp\{n(R_1+R_3-I(X_1'X_3';X_1X_2X_3) + \epsilon -I(Y;X'_1X'_3|X_1X_2X_3)+2\epsilon\}\\
&= \exp\{n(R_1+R_3-I(X'_1X'_3;X_1X_2X_3Y)+3\epsilon)\}\\
&\leq \exp\{n(R_1+R_3-I(X'_1X'_3;Y)+3\epsilon)\}.
\end{align*}
Since $P_{X'_1X'_2X'_3Y}$ is such that $D(P_{X_1'X'_2X'_3 Y}||P_{X'_1}\times P_{X'_2}\times P_{X'_3}\times W)< \eta$ where $\eta$ can be chosen arbitrarily small. Thus, $P_{X_1'X'_2X'_3 Y}$ is arbitrarily close to $P_{\tilde{X}_1\tilde{X}_2\tilde{X}_3\tilde{Y}} \defineqq P_{X'_1}\times P_{X'_2}\times P_{X'_3}\times W$. So, for small positive number $\gamma_3$,  $I(X_1'X_3';Y)\geq I(\tilde{X}_1\tilde{X}_3;\tilde{Y})-\gamma_3 \geq \min_{P_{X'_2}} I(\tilde{X}_1\tilde{X}_3;\tilde{Y})-\gamma_3$. Thus, if 
\begin{align}
R_1+R_3&<\min_{P_{X'_2}}I(\tilde{X}_1\tilde{X}_3;Y)-3\epsilon-\gamma_3, \label{eq:cond3}\\
\text{then, }R_1+R_3&\leq \min_{P_{X'_2}}I(X'_1X'_3;Y)-3\epsilon, \nonumber
\end{align}
and therefore, $P_{e,X_1X'_1X_2X_3X'_3Y} \rightarrow 0$ as $n\rightarrow 0$. Since, there are only polynomially many types, this implies that \eqref{eq_error_3} tends to zero as $n$ tends to infinity. By combining \eqref{eq:cond1}, \eqref{eq:cond2} and \eqref{eq:cond3}, and analyzing the second term of~\eqref{eq:second_term} in a similar manner, we get the following conditions on the set of achievable rates.
\begin{align*}
R_1+R_2&<\min_{P_{X'_3}}I(\tilde{X}_1\tilde{X}_2;Y)-\delta_1\\
R_1&<\min_{P_{X'_3}}I(\tilde{X}_1;\tilde{Y}|\tilde{X}_2)-\delta_2\\
R_1+R_3&<\min_{P_{X'_2}}I(\tilde{X}_1\tilde{X}_3;Y)-\delta_3\\
R_2&<\min_{P_{X'_3}}I(\tilde{X}_2;\tilde{Y}|\tilde{X}_1)-\delta_4\\
R_2+R_3&<\min_{P_{X'_1}}I(\tilde{X}_2\tilde{X}_3;Y)-\delta_5,
\end{align*} 
where $\delta_1,\delta_2,\delta_3,\delta_4$ and $\delta_5$ are positive constants which can be made arbitrarily small by choosing small enough $\epsilon$ and $\eta$. The joint distribution of the random variables $\tilde{X}_1,\,\tilde{X}_2\,, \tilde{X}_3$ and $Y$ is given by $p(x_1)p(x_2)p(x_3)W_{Y|X_1,X_2,X_3}(y|x_1,x_2,x_3)$. Similarly, $P_{e,1}$ and $P_{e,2}$ tends to zero as $n$ tends to infinity if, for all permutations $(i,j,k)$ of $(1,2,3)$ with $k = 1$ and $k = 2$ respectively, the following holds for a small positive constant $\delta$ which can be made arbitrarily small by choosing small enough $\epsilon$ and $\eta$:
\begin{align*}
R_i&<\min_{P_{X'_k}}I(\tilde{X}_i;\tilde{Y}|\tilde{X}_j)-\delta,\\
R_i+R_j&<\min_{P_{X'_k}}I(\tilde{X}_i\tilde{X}_j;Y)-\delta.
\end{align*}
Now, via a time sharing argument, we can conclude that all points in $\textup{int}(\mathcal{R})$ are achievable if the channel is non-symmetrizable.
\end{proof}
}
\end{appendices}
\newpage
\vspace*{\fill}
\begin{center}
\huge\bfseries{ Part II\\ Authenicated Communication in the Presence of an Adversarial User\par}
\end{center}
\vspace*{\fill}
\newpage

\part{Authenicated Communication in the Presence of an Adversarial User}

\section{Introduction} 
Consider a two-user memoryless multiple access channel (MAC) where one of the
users may behave adversarially (Fig.~\ref{fig:authcomMAC}).  If both users are
non-adversarial, we require that their messages be reliably decoded.  However,
if one of the users is adversarial, we would like to ensure that either the
decoder correctly recovers the message of the other user or detects the
presence of an adversary and aborts\footnote{\label{ftnote:AVC-MAC}A more
stringent requirement could be that even in the presence of adversarial
behavior, the honest user's message be reliably decoded. It is easy to see that
this setting amounts to treating the MAC as two arbitrarily varying channels
(AVC)~\cite{BlackwellBTAMS60} where each user treats the other as the
adversary. Thus, the capacity region is the rectangular region defined by the
AVC capacities of the two channels, \emph{i.e.}, there is no trade-off between
the rates. This observation holds true under deterministic coding, stochastic
encoding, and randomized coding settings under both maximum and average
probabilities of error. The AVC capacity is zero whenever the channel is {\em
symmetrizable}~\cite{CsiszarN88} under maximum (resp., average) error
criterion for codes with stochastic encoders (resp., deterministic codes). As
we will see in Remark~\ref{rem:AVC-MAC}, under our formulation, the capacity
region is non-trivial when a condition weaker than non-symmetrizability is
satisfied.}. In other words, an adversarial user may not cause an undetected
decoding error for the other (honest) user's message. We call this the problem
of {\em authenticated communication} over MAC. We characterize the
authenticated communication capacity region of the Discrete Memoryless MAC
(DM-MAC) and the Gaussian MAC.
\begin{figure}[h]\centering
\resizebox{0.8\columnwidth}{!}{\begin{tikzpicture}[scale=0.5]
	\draw (2,0) rectangle ++(3,2) node[pos=.5]{$F_2$};
	\draw (2,2.4) rectangle ++(3,2) node[pos=.5]{$F_1$};
	\draw (10,0) rectangle ++(3,4.4) node[pos=.5]{\mch};
	\draw (16,1.2) rectangle ++(3,2) node[pos=.5]{$\phi$};
	\draw[->] (1,1) node[anchor=east]{$\msg_2$} -- ++ (1,0) ;
	\draw[->] (5,1) -- node[above] {$Y^n=F_2(\msg_2)$} ++ (5,0);
	
	\draw[->] (1,3.4) node[anchor=east]{$\msg_1$} -- ++ (1,0) ;
	\draw[->] (5,3.4) -- node[above] {$X^n=F_1(\msg_1)$} ++ (5,0);

	\draw[->] (13,2.2) -- node[above] {$Z^n$} ++ (3,0);
	\draw[->] (19,2.2) -- ++ (1,0) node[anchor=west]{\begin{array}{c}\msgh_1,\msgh_2\\\text{or}\\\bot,\bot\end{array}};
\end{tikzpicture}}
\caption{Adversarial MAC: Reliable decoding of both the messages is required
when the users are non-adversarial. An adversarial user must not cause an
undetected decoding error for an honest user. Clearly, no decoding guarantees
are given for adversarial user(s).}\label{fig:authcomMAC}
\end{figure}
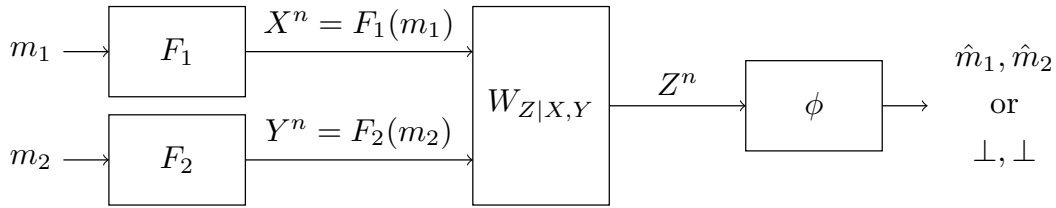
\begin{figure}[h]\centering
\resizebox{0.8\columnwidth}{!}{\begin{tikzpicture}[scale=0.5]
    \draw [dashed] (2,-2.35) rectangle ++(3.5,1.425) node[pos=.5]{Adversary};
    \draw (2,0) rectangle ++(3,2) node[pos=.5]{$\tagenc$};
    \draw (10,0) rectangle ++(3,2) node[pos=.5]{\tagchannel};
    \draw (16,0) rectangle ++(3,2) node[pos=.5]{\tagdec($\msgh,\cdot)$};
    \draw[->] (1,1) node[anchor=east]{$\msg$} -- ++ (1,0) ;
    \draw[dashed] (0.5,0.5)  -- ++ (0,-2.1) ;
    \draw[dashed,->] (0.5,-1.6) -- ++ (1.5,0);
    \draw[dashed,->] (5.5,-1.3) -- ++ (4.5,0);
    \draw[dashed,->] (5.5,-2.05) -- ++ (11.3,0);
    \draw[->] (5,1) -- node[above] {$\tU^\tl=\tagenc(\msg)$} ++ (5,0);
    \draw[->] (13,1) -- node[above] {$\rV^\tl$} ++ (3,0);
    \draw[->] (19,1) -- ++ (1,0) node[anchor=west]{\accept/\reject};
    \draw[->] (11.5,-0.5)node[anchor=north]{$\state^\tl$ or $\state_0^\tl$} -- ++ (0,0.5);
    \draw[->] (17.5,-1.3)node[anchor=north]{$\hat{m}$} -- ++ (0,1.3);
\end{tikzpicture}}
\caption{Point-to-point authentication over an AVC with ``\na''
state $\state_0$: The receiver wants to authenticate whether a message
$\hat{m}\in {\mathcal M}$ it has received (via some unauthenticated
means possibly controlled by an adversary) is indeed the
actual message $m\in {\mathcal M}$ the sender intended. The receiver must
reject ($\reject$) a $\hat{m}\neq m$ irrespective of whether there is an
adversarial attack or not, but must accept ($\accept$) when $\hat{m}=m$ if
there is no adversarial attack (\emph{i.e.}, when the state sequence of the AVC is
$s_0^t$).}\label{fig:authAVC}
\end{figure}
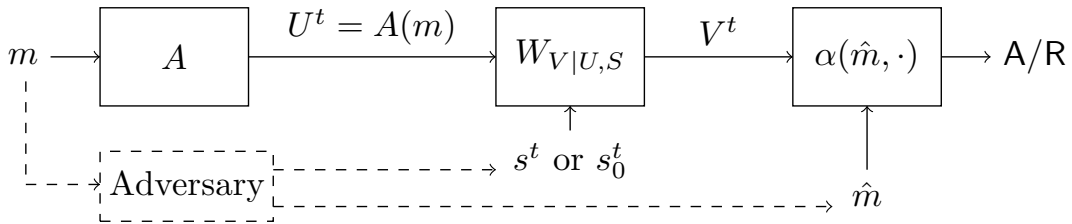

We consider the possibility of using a standard MAC code to achieve
unauthenticated communication followed by authenticating the decoded messages
using an {authentication tag} of vanishing rate.  To this end, we study the
following point-to-point authentication problem which maybe of independent
interest (Fig.~\ref{fig:authAVC}): a receiver wants to authenticate whether a
message $\hat{m}\in {\mathcal M}$ it has received (via some unauthenticated
communication channel possibly controlled by an adversary) is indeed the actual
message $m\in {\mathcal M}$ the sender intended. The authentication is
performed over a memoryless arbitrarily varying channel (AVC) from the sender
to the receiver that has a special ``no-attack'' state symbol which represents
the default behavior of the channel when there is no adversarial attack. The
adversary, when present, may choose the state sequence with the knowledge of
the actual message $m$ (and $\hat{m}$), but not the channel input
produced by the stochastic encoder using private randomness. The goal of the
authentication scheme is to ensure that the receiver does not accept a
$\hat{m}\neq m$ irrespective of whether there is an adversarial attack or not,
but does accept when $\hat{m}=m$ if there is no adversarial attack
(\emph{i.e.}, when the state of the AVC is set as the ``no-attack'' symbol).
We show that when the AVC can be used for authentication at all, the number of
messages grows double-exponentially in the tag blocklength~$t$, \emph{i.e.},
$\log\log|{\mathcal M}|=tR$ for $R>0$. We characterize the rate of
double-exponential growth of the size of messages which can be authenticated
and show that this is \longonly{in fact} equal to the Shannon capacity of the
nominal channel for non-overwritable Discrete Memoryless AVCs (DM-AVCs) and for Gaussian memoryless AVCs. Our results are reminiscent of identification capacity of
Ahlswede and Dueck~\cite{AhlswedeDIT89}\longonly{ and, as will become clear,
the connection is not coincidental} (see
Remark~\ref{rem:IDconnection}\longonly{ and proofs of Theorem~\ref{thm: auth
capacity} and Lemma~\ref{lem:auth gaussian positive}}). 

Returning to the MAC problem with a possibly adversarial user, we show that the
following 3-phase scheme is optimal (Fig.~\ref{fig:MACscheme}). In the first
phase we use a standard MAC code to communicate the messages, albeit without
any guarantee that the decoded messages are indeed what the senders intended.
In the second phase, user~1 authenticates its message to the decoder while
treating user~2 as a potential adversary; simultaneously, user~2 is required to
send a certain ``no-attack'' symbol\footnote{As will be seen, to support any
positive rate of communication to user~1 in the MAC, there must exist such a
``no-attack'' symbol in user~2's alphabet such that authentication is possible
over the AVC from user~1 to the receiver resulting from the MAC by treating the
alphabet of user~2 as the set of states.} from its channel input alphabet
throughout this phase with deviation from this amounting to adversarial
behavior.  In phase~3, the roles of the users are reversed.  The
double-exponential growth of the number of  messages that can be authenticated
means that the authentication phase  does not affect the rate of communication;
specifically, we may choose the authentication tag blocklength $t= {\mathcal O}(\log n)$,
where $n$ is the blocklength of the entire transmission. Thus, whenever
positive rates of communication can be supported to both users (either of whom
may be adversarial), we can, in fact, support all points in the non-adversarial
MAC capacity-region itself. When a positive rate can be supported to only one
of the users, we can support the largest rate of that user in the
non-adversarial MAC capacity-region.  

\shortonly{The work that is closest to ours is by Kosut and
Kliewer~\cite{KosutKITW18} which studies the same AVC model with a
``no-attack'' symbol as in the point-to-point authentication problem above.
However, they study the problem of authenticated {\em communication} capacity,
\emph{i.e.}, unlike the AVC setup above, where the receiver's goal is to use
the channel only to authenticate a message $\hat{m}$ it received through other
means, in~\cite{KosutKITW18}, the channel is used for the entire authenticated
communication. The goal is for the receiver to reliably decode the correct
message when there is no attack and, when under attack (\emph{i.e.}, the state
sequence is not all ``no-attack''), to either declare the presence of an
adversary or output the message transmitter intended. Our authentication
results for the finite-alphabet DM-AVC case build on the authenticated
communication capacity result from~\cite{KosutKITW18}. As we discuss in
Remark~\ref{rem:Kosut-Kliewer comparison}, our result also, arguably, explains
the nature of the result there.}

\shortonly{We also derive analogous results for the additive Gaussian memoryless AVCs
using an {\em independent} approach to first show that the (non-adversarial)
Gaussian channel capacity is achievable even in the authenticated
communication setup. This allows us to show doubly-exponential growth of
authenticable messages and thereby show that the authenticated communication
capacity region of the Gaussian MAC equals the unauthenticated capacity
region.}

\longonly{The work that is closest to ours is by Kosut and
Kliewer~\cite{KosutKITW18} which studies the same AVC model with a
``no-attack'' symbol as in the point-to-point authentication problem above.
However, they study the problem of authenticated {\em communication} capacity,
\emph{i.e.}, unlike the AVC setup above, where the receiver's goal is to use
the channel only to authenticate a message $\hat{m}$ it received through other
means, in~\cite{KosutKITW18}, the \longonly{channel is used for the entire
authenticated communication. The }goal is for the receiver to reliably decode
the correct message when there is no attack and, when under attack
(\emph{i.e.}, the state sequence is not all ``no-attack''), to either declare
the presence of an adversary or output the message transmitter intended.
Clearly, the authenticated communication capacity cannot exceed the Shannon
capacity of the nominal channel (the channel where state is set to the
``no-attack'' symbol). The main result of~\cite{KosutKITW18} is that, for
finite alphabet-finite state channels, the Shannon capacity can be achieved
unless the AVC meets a condition called {\em overwritability} in which case
the authenticated communication capacity is zero.}

\longonly{Our results for finite alphabet-finite state AVCs build
on~\cite{KosutKITW18}; using capacity achieving authenticated communication
codes from~\cite{KosutKITW18}, we first conclude that when the AVC is not
overwritable, authentication of a double-exponentially growing number of
messages is feasible. We also derive analogous results for additive  Gaussian
AVCs using an independent approach to first show that the (non-adversarial)
Gaussian channel capacity is achievable even in the authenticated
communication setup. \longonly{As in the finite alphabet-finite state setting,
this allows us to show a double-exponential growth of the number of messages
that can be authenticated. }These results show that, along the lines of our
MAC scheme, one way to achieve authenticated communication at rates
approaching Shannon capacity of the nominal channel is to first perform
unauthenticated communication using a good code for the nominal channel, and
follow this up with a short authentication phase which does not affect the
overall rate\footnote{Note that the result in~\cite{KosutKITW18} is for a
deterministic encoder/decoder under average error criterion. The description
here is for stochastic encoder under maximum error as our authentication
encoder is necessarily stochastic.  However, standard techniques show that
capacities of authenticated communication under these two settings must be the
same; also see Remark~\ref{rem:det+avg=stoch+max}.}. This gives an alternative
architecture (but not an alternative proof) for achieving the result
in~\cite{KosutKITW18} and, arguably, explains the nature of that result --
whenever authenticated communication is possible at all, since authentication
can be performed with a double-exponentially growing number of messages, the
rate of communication is simply governed by the largest rate of
unauthenticated communication possible, \emph{i.e.}, the Shannon capacity of
the nominal channel.
}

There are several strands of other related work. Communication over MACs with
strategic users modeled as a cooperative game has been studied
in~\cite{LaADIMACS04}. As noted in~\cite{BassalygoBPPI96} in a different
context, the authentication problem is closely related to the identification
problem. The latter has been the subject of extensive study  for
non-AVCs~\cite{AhlswedeDIT89,BurnashevIT00} as well as AVCs~\cite{BocheD:19}.
Message authentication codes where the users have pre-shared keys and
communicate over noiseless channels have been extensively
studied~\cite{SimmonsCRYPTO84,MaurerIT00}. Message authentication over noisy
channels has also been considered~\cite{LaiEPIT09,TuLIT18}. These works differ
from our point-to-point authentication problem: (a) the channel model is
different - the decoder either hears from the adversary or from the sender,
but not from both as in an AVC, however, the adversary may wiretap the
transmitted signal, and (b) they also consider impersonation attacks where the
adversary aims to make the receiver accept a spurious message when the sender
may not have a legitimate message to send, something which does not arise in
our  problem. Byzantine attack on the nodes and edges of a network has been
studied under omniscient and weaker adversarial models in~\cite{KosutTTIT14}
and~\cite{JaggiLKHKM07}, respectively. Unlike the authenticated communication
model in the current paper, the decoders there are required to decode the
source message under byzantine attack. A Gaussian two-hop network with an
eavesdropping and byzantine adversarial relay has been considered
in~\cite{HeYIT13}, where the requirement is decoding with message secrecy and
byzantine attack detection.
 
 \noindent\ \;The main contributions of our work are:\vspace{-0.3em}
\begin{enumerate}
\item Authenticated communication capacity region of DM-MAC under adversarial users;
\item Authentication capacity of point-to-point DM-AVCs with a ``no-attack'' symbol;
\item Explains (arguably) the authenticated communication capacity result of Kosut \& Kliewer~\cite{KosutKITW18} for DM-AVCs with a ``no-attack'' symbol;
\item Analogous results for Gaussian channels.
\end{enumerate}

\section{Problem Setup and Main Results}
\subsection{Adversarial MAC}
\subsubsection{System model}\label{section:MAC_model}
Consider the two-user MAC setup shown in Fig.~\ref{fig:authcomMAC}. The memoryless channel \mch has input alphabets $\mathcal{X,Y}$, and output alphabet ${\mathcal Z}$. We consider stochastic encoders which use independent private randomness.

\shortonly{An $(\nummsg_1,\nummsg_2,n)$  {\em authenticated communication code} for the multiple access channel \mch consists of: \vspace{-0.25em}
\begin{enumerate}[label=(\roman*)]
\item Two message sets, $\mathcal{M}_i = \{1,\ldots,\nummsg_i\}$, $i=1,2$,
\item Two stochastic encoders, $F_{1}:\mathcal{M}_1\rightarrow \mathcal{X}^n$ and $F_{2}:\mathcal{M}_2\rightarrow \mathcal{Y}^n$, and
\item A deterministic decoder, $\phi:\mathcal{Z}^n\rightarrow(\mathcal{M}_1\times\mathcal{M}_2)\cup\{(\bot,\bot)\}.$ 
\end{enumerate}}

\longonly{\begin{defn}[Authenticated communication code]
An $(\nummsg_1,\nummsg_2,n)$  {\em code} for the \mac \mch consists of the following: 
\begin{enumerate}[label=(\roman*)]
\item Two message sets, $\mathcal{M}_i = \{1,\ldots,\nummsg_i\}$, $i=1,2$,
\item Two stochastic encoders, $F_{1}:\mathcal{M}_1\rightarrow \mathcal{X}^n$ and $F_{2}:\mathcal{M}_2\rightarrow\mathcal{Y}^n$, and
\item A deterministic decoder, $\phi:\mathcal{Z}^n\rightarrow(\mathcal{M}_1\times\mathcal{M}_2)\cup\{(\bot,\bot)\}.$ 
\end{enumerate}
\end{defn}}
We define the probability of error in terms of the maximum error probabilities under no-attack and attack conditions as follows. Let $(\hat{M}_1,\hat{M}_2) = \phi(Z^n).$ 
\begin{align*}
&P_{e,\na}\hspace{-0.25em} \defineqq\ \  \smashoperator[lr]{\max_{\vecm\in\cM_1\times\cM_2}}\ \ \, \Prob_{F_1,F_2}\left((\hat{M}_1,\hat{M}_2)\neq\vecm|(M_1,M_2) = \vecm\right)\\
&P_{e,\malone} \defineqq  {\max_{m_2,x^n}}\Prob_{F_2}\left(\hat{M}_2\notin\left\{m_2,\bot\right\}\left|M_2 = m_2,X^n =x^n\right.\right),\\
&P_{e,\maltwo} \defineqq \max_{m_1,y^n}\Prob_{F_1}\left(\hat{M}_1\notin\left\{m_1,\bot\right\}\left|M_1 = m_1,Y^n =y^n\right.\right),
\end{align*}
where the probabilities are over the channel and the independent, private randomness of the stochastic encoders indicated by the subscripts.
The probability of error is defined as
\begin{align*}
P_{e}(F_1,F_2,\phi)=\max{\{P_{e,\na},P_{e,\malone},P_{e,\maltwo} \}}.	
\end{align*}
\longonly{\begin{defn}[Achievable rate pair, Authenticated communication capacity region]}
We say that $(R_1,R_2)$ is an {\em achievable authenticated communication rate pair} if there exists a sequence of $(\lfloor2^{nR_1}\rfloor,\lfloor2^{nR_2}\rfloor,n)$ codes $\{F_1^{(n)},F_2^{(n)},\phi^{(n)}\}_{n=1}^\infty$ such that $\lim_{n\rightarrow\infty}P_{e}(F_1^{(n)},F_2^{(n)},\phi^{(n)})\rightarrow0.$ The {\em authenticated communication capacity region} $\mathcal{R}_{\authcomm}$ is the closure of the set of all achievable rate pairs.
\longonly{\end{defn}}
\subsubsection{Main result}
Suppose $\mathcal{X,Y,Z}$ are finite. We will use \mac to refer to this channel. \longonly{We need the following definition to state our main result.
\begin{defn}[Overwriting in \mac]\label{non_overwrite}}
 We say that user~1 can {\em overwrite} user~2 if there is a distribution $P_{X'|X,Y}$ such that 
\begin{equation}\label{eq1}
\sum_{x'\in \mathcal{X}}P_{X'|X,Y}(x'|x,y)W(z|x',y') = W(z|x,y) 
\end{equation}
for all $y,y'\in \mathcal{Y}, x\in \mathcal{X}$ and $z\in \mathcal{Z}$.
\longonly{\end{defn}} In other words, user~1 can overwrite user~2 if it (user~1) can replace any symbol $y'$ sent by user~2 with any symbol $y$ of its choosing while pretending to send any symbol $x$, again, of its choosing. In doing this, user~1 does not have knowledge of the symbol $y'$ transmitted by user~2. Overwritability of user~1 by user~2 can be defined in a similar manner. 

Let $\mathcal{R}_{\nonadv}$ denote the non-adversarial capacity region of the DM-MAC with stochastic encoders under the maximum probability of error criterion. By non-adversarial, we mean that both users are guaranteed to act honestly. Note that this region is same as the usual non-adversarial capacity region of the DM-MAC under the average probability of error criterion using deterministic codes~\cite{CaiENTROPY14}. The following theorem characterizes the authenticated communication capacity of the \mac.\vspace{-0.25em}
\begin{thm}\label{thm:authMAC}
If the \mac is such that neither user can overwrite the other, then $\mathcal{R}_{\authcomm} = \mathcal{R}_{\nonadv}$. Otherwise, if user 1 can overwrite user 2, but user 2 cannot overwrite user 1, $\mathcal{R}_{\authcomm} = \{(R,0)|(R,0)\in \mathcal{R}_{\nonadv}\}$. If user 2 can overwrite user 1, but user 1 cannot overwrite user 2, $\mathcal{R}_{\authcomm} = \{(0,R)|(0,R)\in \mathcal{R}_{\nonadv}\}$. When both users can overwrite each other, $\mathcal{R}_{\authcomm} = \{(0,0)\}.$
\end{thm}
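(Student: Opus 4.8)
The plan is to establish Theorem~\ref{thm:authMAC} from three ingredients: an elementary ``honest-mode'' outer bound, a converse showing that overwritability destroys the rate of the overwritten user, and a phased achievability scheme that invokes our point-to-point AVC authentication result. The unifying observation is a correspondence between the single-letter overwriting condition~\eqref{eq1} and overwritability of the induced point-to-point AVC: when user~$i$ authenticates, the relevant AVC has user~$i$'s alphabet as input, $\mathcal Z$ as output, and the other user's alphabet as the state set, with a designated ``no-attack'' state symbol; user~$j$ can overwrite user~$i$ in the \mac precisely when \emph{every} choice of no-attack symbol leaves this AVC overwritable in the Kosut--Kliewer sense, and otherwise there is a no-attack symbol with positive nominal capacity over which authentication is feasible (this is the content alluded to in the footnote on existence of the no-attack symbol).

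Outer bound. First, $\mathcal R_{\authcomm}\subseteq\mathcal R_{\nonadv}$ unconditionally: since $P_{e,\na}\to0$ and $P_{e,\na}$ is precisely the maximum-over-message-pairs error in the non-adversarial setting with stochastic encoders, an authenticated-communication code is in particular a good non-adversarial \mac code, so its rate pair lies in $\mathcal R_{\nonadv}$. Second, if user~1 can overwrite user~2 then $R_2=0$. Suppose $|\mathcal M_2|\ge2$ and pick distinct $m_2,\tilde m_2$. Consider the attack in which the adversarial user~1 draws a fresh uniform $\tilde M_2$, generates a ``ghost'' codeword using an independent copy of user~2's (public) stochastic encoder, draws its own honest input, and transmits the result of passing both through the $n$-fold memoryless extension of the channel $P_{X'|X,Y}$ of~\eqref{eq1}, while the honest user~2 sends $F_2(M_2)$ with $M_2$ uniform. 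By~\eqref{eq1}, conditioned on user~1's honest input and the ghost codeword the channel output law is $\prod_t W(z_t\mid X_t,\tilde Y_t)$, which does not depend on user~2's actual transmission at all; hence the overall output law coincides with the honest scenario in which user~1 sends $M_1$ and user~2 sends $\tilde M_2$. In that honest scenario the decoder outputs $\hat M_2=\tilde M_2$ with high probability, so under the attack $\hat M_2=\tilde M_2$ with high probability while the true message is $M_2$, independent of $\tilde M_2$; averaging over the attack's internal randomness and over $m_2$ gives error at least $1-1/|\mathcal M_2|-o(1)$, so some deterministic realization of user~1's transmitted sequence and some $m_2$ attain at least this much, contradicting $P_{e,\malone}\to0$. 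Symmetrically, user~2 overwriting user~1 forces $R_1=0$. Combined with $\mathcal R_{\authcomm}\subseteq\mathcal R_{\nonadv}$, this yields the claimed outer bound in each of the four cases.

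Achievability. I would use the three-phase code of Fig.~\ref{fig:MACscheme}. Fix $(R_1,R_2)$ in the interior of the target region. Phase~1, of blocklength $n_1$, is a standard stochastic-encoder maximum-error \mac code for $(R_1,R_2)$; the receiver forms tentative decisions $(\hat m_1,\hat m_2)$. Phase~2, of length $t_1$, has user~1 send an authentication tag for its true message $m_1$ over the AVC obtained by treating $\mathcal Y$ as the state set, while the honest user~2 transmits the no-attack symbol $y_0^{t_1}$; Phase~3, of length $t_2$, is the mirror image with a no-attack symbol $x_0\in\mathcal X$. Because the number of messages that can be authenticated over a non-overwritable AVC grows doubly-exponentially in the tag length, $t_1,t_2=\Theta(\log n)$ suffice, so $n_1=n-t_1-t_2$ with $n_1/n\to1$ and Phase~1 attains the full rate. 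The decoder outputs $(\hat m_1,\hat m_2)$ if both verifiers accept and $(\bot,\bot)$ otherwise. Under no attack, Phase~1 is correct with high probability and both verifiers observe the all-no-attack state and accept, so the output is correct. If user~$j$ attacks, the honest user~$i$'s verifier rejects any claimed $\hat m_i\neq m_i$ with high probability by soundness of the AVC authentication scheme (valid because the relevant AVC is non-overwritable, i.e.\ user~$j$ cannot overwrite user~$i$), so the output is either $(\hat m_1,\hat m_2)$ with the honest coordinate correct or $(\bot,\bot)$. In the cases where one user, say user~2, is never authenticable, we omit Phase~3 and give user~2 a one-element message set, so $\hat M_2\in\{m_2,\bot\}$ holds trivially and only a rate pair $(R,0)\in\mathcal R_{\nonadv}$ need be achieved (by closure it suffices to handle $R$ below the supremum); the all-zero case is immediate.

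Main obstacle. I expect the crux to be the correspondence step: proving that the single-letter \mac-overwriting condition~\eqref{eq1} and its negation translate exactly into overwritability of the induced AVC, respectively into non-overwritability together with a no-attack symbol of positive nominal capacity, so that the AVC-authentication theorem applies with matching hypotheses and tag length $\Theta(\log n)$. A secondary subtlety is bookkeeping in the converse---verifying that the ghost-codeword attack reproduces the honest output law exactly despite stochastic encoders, and that the resulting bound survives the maximization over deterministic adversary inputs in the definition of $P_{e,\malone}$.
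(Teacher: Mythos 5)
Your proposal follows essentially the same route as the paper's own proof: the converse is the ghost-codeword attack built from the overwriting kernel $P_{X'|X,Y}$ showing the output law becomes independent of user~2's true codeword (with the same extraction of a deterministic $x^n$ via averaging), and the achievability is the identical three-phase scheme with $t=\Theta(\log n)$ authentication tags whose validity rests on the correspondence you flag as the crux, which is exactly the paper's Lemma~\ref{lemma:1} together with the short argument that the resulting nominal channel $W(\cdot|x_0,\cdot)$ has positive capacity.
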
 \vspace{-0.25em}\vspace{-0.25em}
\begin{remark}\label{rem:det+avg=stoch+max}
We show in Appendix~\ref{App:stocDet} that the capacity region of the \mac for deterministic codes under average probability of error criterion is the same as the capacity region we defined above (i.e., for codes with stochastic encoders under maximum probability of error criterion).\end{remark}\vspace{-0.25em}\vspace{-0.25em}
\begin{remark}\label{rem:AVC-MAC}
As mentioned in Footnote~\ref{ftnote:AVC-MAC}, there are channels where the authenticated communication capacity region of the \mac is non-trivial, but the capacity region of the DM-MAC when the decoder is required to decode the messages correctly with high probability even when under adversarial attack is zero. For example, consider the so-called binary erasure MAC, the inputs are binary and the output $Z=X+Y$ is ternary.  It is easy to see that the point-to-point channel from one user to the receiver treating the other user as an adversary is symmetrizable~\cite{CsiszarN88} and thus the capacity of the point-to-point AVC is zero (for codes with stochastic encoders under maximum error criterion). However, neither user can overwrite the other since, if one user sends 0, the other user cannot make it look like 1 while pretending to send 1.
Thus, by Theorem~\ref{thm:authMAC}, the authenticated communication capacity region of this channel is its non-adversarial MAC capacity region which is non-trivial, e.g., $(3/4,3/4)\in \mathcal{R}_{\authcomm}$.
\end{remark}\vspace{-0.5em}

\subsection{Authentication over an AVC}\label{subsec:authAVC}
\subsubsection{System model}
Consider the setup shown in Fig.~\ref{fig:authAVC}. The arbitrarily varying channel \tagchannel\ has input alphabet $\cU$, output alphabet $\cV$ and state stace $\cS$. The channel has a nominal or \emph{\na state},  $\state_0$, which is the default channel state when there is no adversarial attack.  \longonly{The transmitter has access to the true message $\msg$ and transmits a tag $\tU^\tl$ of length $\tl$. The receiver has a candidate message $\msgh$ and observes the received tag $\rV^\tl$. Here $\msgh$ may be different from $\msg$. The adversary knows the true message $\msg$ and the candidate message $\msgh$, but not the actual transmitted tag $\tU^\tl$. We say that the system is under attack if $\state^\tl$ is different from $\state_0^\tl$.} The goal for the transmitter and the receiver is to be able to \emph{authenticate} the message $\msgh$ by accepting it when the system is not under attack provided that $\msgh$ equals $\msg$,  and by rejecting it if $\msgh$ is different from $\msg$ (regardless of whether there is an attack). 

\longonly{
We formalize an authentication code for this setup as following.
\begin{defn}[Authentication Code]\label{def:auth}
An $\left(\nummsg,\tl\right)$-{\em authentication code} $(\tagenc,\tagdec)$ for an AVC \tagchannel\  with state space $\cS$ and the \na state $\state_0$ consists of
\begin{enumerate}
\item[(i)] a message set $\msgset = \left\{1,2, \ldots, \nummsg\right\}$,
\item[(ii)] a stochastic encoder $\tagenc:\msgset \rightarrow \cU^{t}$, and
\item[(iii)] a deterministic decoder $\tagdec:\msgset \times \cV^t \rightarrow \left\{\accept, \reject\right\}$.

\end{enumerate}
\end{defn}}
\shortonly{
An $\left(\nummsg,\tl\right)$-{\em authentication code} $(\tagenc,\tagdec)$ for an AVC \tagchannel\  with state space $\cS$ and the \na state $\state_0$ consists of:\vspace{-0.25em}\begin{enumerate}
\item[(i)] a message set $\msgset = \left\{1,2, \ldots, \nummsg\right\}$,
\item[(ii)] a stochastic encoder $\tagenc:\msgset \rightarrow \cU^{t}$, and
\item[(iii)] a deterministic decoder $\tagdec:\msgset \times \cV^t \rightarrow \left\{\accept, \reject\right\}$.

\end{enumerate}\vspace{-0.5em}
}The authentication error probability $\peauth(\tagenc,\tagdec)$ for an authentication code $(\tagenc,\tagdec)$ is the maximum of  $\pfa$, the probability of false rejection under no attack, and $\pwa$, the probability of accepting a wrong message, \emph{i.e.},
\begin{align}
&\pfa\defineqq \max_{\msg\in\msgset}\Prob\left(\tagdec\left(m,\vecrV\right) = \reject| \State^\tl=\state_0^\tl, M = m\right),\label{eq:pfa}\\
&\pwa\defineqq \max_{\substack{\msg,\msgh\in\msgset:\msgh\neq\msg\\\state^\tl\in\cS^\tl}}
\Prob\left(\tagdec\left(\msgh,\vecrV\right) = \accept| \State^\tl=\state^\tl, M = m\right),\label{eq:pwa}\\
&\peauth(\tagenc,\tagdec)\defineqq \max\set{\pfa,\pwa},\label{eq:peauth}
\end{align}
where the probabilities are over the channel and the randomness of the stochastic encoder. Also, $s^t_0$ is a $t$-length vector consisting only of the \na symbol $s_0\in \mathcal{S}$.
\longonly{\begin{defn}[Achievable rates, Authentication Capacity]\label{def:autherr}
We say that $R$ is an {\em achievable authentication rate} for the AVC $\tagchannel$ with the \na state $\state_0$ if there exists a sequence of  codes $\set{(\tagenc^{(\tl)},\tagdec^{(\tl)})}_{\tl=1}^\infty$ such that for each $\tl$, $(\tagenc^{(\tl)},\tagdec^{(\tl)})$ is a $(\lfloor 2^{2^{\tl R}}\rfloor,\tl)$-authentication code with the authentication error probabilities satisfying $\lim_{\tl\to\infty} \peauth(\tagenc^{(\tl)},\tagdec^{(\tl)})=0$. The {\em authentication capacity} $C_{\auth}$ of the AVC \tagchannel\ with the \na state $\state_0$ is the supremum over all achievable authentication rates.
\end{defn}}
\shortonly{
We say that $R$ is an {\em achievable authentication rate} for the AVC $\tagchannel$ with the \na state $\state_0$ if there exists a sequence of  codes $\set{(\tagenc^{(\tl)},\tagdec^{(\tl)})}_{\tl=1}^\infty$ such that for each $\tl$, $(\tagenc^{(\tl)},\tagdec^{(\tl)})$ is a $(\lfloor 2^{2^{\tl R}}\rfloor,\tl)$-authentication code with the authentication error probabilities satisfying $\lim_{\tl\to\infty} \peauth(\tagenc^{(\tl)},\tagdec^{(\tl)})=0$. The {\em authentication capacity} $C_{\auth}$ of the AVC \tagchannel\ with the \na state $\state_0$ is the supremum over all achievable authentication rates.
}
\subsubsection{Main result} We give the exact authentication capacity for AVCs
with finite input, output, and state alphabets in the following theorem. In
particular, similar  to the identification problem~\cite{AhlswedeDIT89}, we
show that the number of messages that can be authenticated with vanishingly
small error probability scales \emph{doubly exponentially} with the tag
blocklength $t$ provided that the AVC is not overwritable.  Let
$C(\state_0)\defineqq\max_{P_\tU}I(U;V|S=\state_0)$ denote the Shannon
capacity of the nominal channel (i.e., the AVC when the state is fixed to be
$\state_0$). Following~\cite{KosutKITW18}, we say that $\tagchannel$ is {\em overwritable} if there exists a conditional distribution $P_{\State|\tU'}$ such that 
\begin{equation}\sum_{\state\in\cS}P_{\State|\tU'}(\state|\tu')\tagchannel(\rv|\tu,\state)=\tagchannel(\rv|\tu',\state_0) \label{eq:overwritable}
\end{equation}
for all $(\tu,\tu',\rv)\in\cU\times\cU\times\cV$. If no such $P_{\State|\tU'}$ exists, we say that \tagchannel\ is not overwritable.\vspace{-0.25em}
\begin{thm}\label{thm: auth capacity}
For DM-AVC $\tagchannel$ with a \na state $s_0\in\cS$, the authentication capacity \shortonly{$C_{\auth}$ equals $C(\state_0)$ if the channel is not overwritable and it is $0$ otherwise.} \longonly{is given by 
\begin{equation}
C_{\auth}=\begin{cases} C(\state_0) & \mbox{if \tagchannel\ is not overwritable},\\ 0 &\mbox{otherwise.}\end{cases}	
\end{equation}}\vspace{-0.25em}
\end{thm}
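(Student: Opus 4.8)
The plan is to prove the theorem in three pieces: the unconditional bound $C_{\auth}\le C(\state_0)$; the matching lower bound $C_{\auth}\ge C(\state_0)$ when $\tagchannel$ is not overwritable; and $C_{\auth}=0$ when it is overwritable. For the upper bound I would reduce to the identification capacity of the \emph{nominal} channel (the AVC with state fixed to $\state_0$). Given any $(\nummsg,\tl)$-authentication code $(\tagenc,\tagdec)$, restrict the adversary to the no-attack state $\state_0^\tl$ and let $\mu_i$ be the resulting law of $\vecrV$ when the true message is $i$; by memorylessness $\mu_i$ is an output distribution of the nominal DMC. Writing $\cD_i\defineqq\set{\vecrv:\tagdec(i,\vecrv)=\accept}$, condition~\eqref{eq:pfa} gives $\mu_i(\cD_i)\ge 1-\pfa$, while \eqref{eq:pwa} instantiated at $\state^\tl=\state_0^\tl$ gives $\mu_i(\cD_{i'})\le\pwa$ for every $i'\neq i$. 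Hence $\set{(\mu_i,\cD_i)}_{i=1}^{\nummsg}$ is an identification code of error at most $\peauth$ for the nominal DMC, and the converse for the identification capacity of DMCs~\cite{AhlswedeDIT89} yields $\tfrac1\tl\log\log\nummsg\le C(\state_0)+o(1)$, i.e.\ $C_{\auth}\le C(\state_0)$.

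\textbf{The lower bound when $\tagchannel$ is not overwritable.} Here I would build on the authenticated-communication capacity result of Kosut and Kliewer~\cite{KosutKITW18}: since $\tagchannel$ is not overwritable, for every $R<C(\state_0)$ and all large $n$ there is a blocklength-$n$ code for the AVC with no-attack symbol having $K_n=\lfloor 2^{nR}\rfloor$ messages that decodes reliably under $\state_0^n$ and, under \emph{every} state sequence, either aborts or outputs the correct message, all with vanishing error (a routine expurgation passes from the average-error, deterministic-encoder form of~\cite{KosutKITW18} to a maximum-error one over messages; cf.\ Remark~\ref{rem:det+avg=stoch+max}). On top of this I would layer an identification-style hashing step: one checks by a Chernoff/union-bound argument that there exist sets $A_1,\dots,A_{\nummsg}\subseteq[K_n]$ with $|A_i|\approx\epsilon K_n$ and $|A_i\cap A_{i'}|\le\epsilon|A_i|$ for $i\neq i'$, with $\nummsg$ as large as $2^{\lfloor c\epsilon^2K_n\rfloor}$. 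The authentication tag for message $i$ (blocklength $\tl=n$) samples $W\sim\mathrm{Unif}(A_i)$ and transmits the communication codeword for $W$; the decoder $\tagdec(\msgh,\cdot)$ runs the communication decoder, outputs $\reject$ on abort, and otherwise outputs $\accept$ iff the decoded index lies in $A_\msgh$. Under $\state_0^n$ with $\msgh=\msg$ the decoded index equals $W\in A_\msg=A_\msgh$ w.h.p., so $\pfa\to0$; for $\msgh\neq\msg$ and \emph{any} state sequence, accepting $\msgh$ forces the (correctly) decoded index into $A_\msg\cap A_\msgh$, an event of probability at most $|A_\msg\cap A_\msgh|/|A_\msg|\le\epsilon$ irrespective of the attack (the oblivious adversary never observes $W$), plus the vanishing probability of an undetected decoding error, so $\pwa\le\epsilon+o(1)$. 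Since $\tfrac1n\log\log\nummsg\to R$ and $\epsilon$ and $R<C(\state_0)$ are arbitrary, $C_{\auth}\ge C(\state_0)$.

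\textbf{The overwritable case.} If $\tagchannel$ is overwritable, I would show that even two messages cannot be authenticated. Fix any code with $\nummsg\ge2$, a true message $\msg$, and a candidate $\msgh\neq\msg$ known to the adversary. The adversary draws a tag $\tilde{\tU}^{\tl}$ distributed as $\tagenc(\msgh)$ (the code is public) and then, independently per coordinate $r$, draws $\State_r\sim P_{\State|\tU'}(\cdot\mid\tilde{\tU}_r)$, where $P_{\State|\tU'}$ witnesses~\eqref{eq:overwritable}. Marginalizing each coordinate over $\State_r$ via~\eqref{eq:overwritable} shows the received $\vecrV$ has exactly the law it would have if $\msgh$ were the true message under the no-attack state, so by~\eqref{eq:pfa} the decoder outputs $\accept$ on $(\msgh,\vecrV)$ with probability at least $1-\pfa$. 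Since the true message is $\msg\neq\msgh$, \eqref{eq:pwa} forces $\pwa\ge1-\pfa$, hence $\peauth\ge\tfrac12$, ruling out any sequence of codes with $\peauth\to0$; thus $C_{\auth}=0$.

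\textbf{Main obstacle.} The delicate step is the lower bound, and within it, making the layered scheme reach the \emph{full} $C(\state_0)$ rather than a fraction of it: this hinges on choosing the hash sets $A_i$ sufficiently \emph{sparse}, so that their \emph{relative} intersections (not absolute sizes) are $\le\epsilon$, and on the observation that the AVC adversary, being oblivious to the random index $W$, cannot steer $\pwa$ above $\epsilon+o(1)$ for any state sequence. One must also verify that the Kosut--Kliewer code can be invoked here after expurgation and without any per-message secret key, since $W$ itself supplies all the randomness the argument needs. The converse direction is comparatively routine once the reduction to identification over the nominal channel is set up, the only imported ingredient being the identification-capacity converse.
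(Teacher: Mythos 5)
Your proposal is correct and follows essentially the same architecture as the paper's proof: the converse is obtained by viewing an authentication code as an identification code over the nominal channel $W_{\rV|\tU,\State}(\cdot|\cdot,\state_0)$ and invoking the identification-capacity converse of Ahlswede--Dueck; the overwritable case is handled by the same attack (sample a fake tag from $\tagenc(\msgh)$'s law and pass it coordinatewise through $P_{\State|\tU'}$, making $\vecrV$ distributionally indistinguishable from $\msgh$ under no attack, forcing $\peauth\ge 1/2$); and achievability layers an identification-style code over the Kosut--Kliewer authenticated-communication code, with the decoder accepting $\msgh$ iff the decoded inner index lands in $A_\msgh$. The $A_i$'s you construct by a Chernoff/union bound are exactly an identification code for the identity channel as in \cite[Proposition~1]{AhlswedeDIT89}, which is what the paper cites.

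The one place you gloss over something that is not actually routine is the claim that ``a routine expurgation passes from the average-error, deterministic-encoder form of~\cite{KosutKITW18} to a maximum-error one over messages.'' Because the authenticated-communication error criterion involves a maximum over all state sequences, standard per-message expurgation does not directly yield a code all of whose messages have uniformly small error simultaneously for every $\state^\tl$: the bad subset of messages could depend on $\state^\tl$. The paper instead uses Ahlswede's elimination technique (its Lemma~\ref{lemma:authCOMavgTOstoc}): randomly permute the message labels to get a random code with small maximum error, apply a random-code-reduction argument to cut the shared randomness down to $n^2$ deterministic codes, and then convey the $\cO(\log n)$-bit index of the chosen code via a short preamble that exploits the positive-rate hypothesis. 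This is essential, since without a genuine per-message bound the step ``$\pwa\le\epsilon+o(1)$'' would only control an average over $w\in A_\msg$ of the undetected-error probability, not the maximum. You do flag the issue by citing Remark~\ref{rem:det+avg=stoch+max}, so your proof plan is compatible with the correct argument; the word ``expurgation'' just mischaracterizes the tool.
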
\vspace{-0.25em}\vspace{-0.25em}
\begin{remark} \label{rem:Kosut-Kliewer comparison}On the one hand, our proof
of the above theorem builds on the result of Kosut and
Kliewer~\cite{KosutKITW18} on authenticated communication capacity of
adversarial channels\footnote{Note that we use the term ``authenticated
communication'' to refer to the problem in~\cite{KosutKITW18} which is
different from the ``authentication'' problem of this section.}. On the other
hand, the above theorem gives us insights into the nature of the result
of~\cite{KosutKITW18}. There, the goal is to communicate (and not just
authenticate) a message $m$ reliably when there is no attack and, when under
attack, either abort (i.e., output $\bot$) or output the intended message $m$,
i.e., the decoder does not already possess a message $\hat{m}$ which it simply
wants to authenticate. Clearly, for this authenticated {\em communication}
problem, the rate (of single exponential growth of the message set) cannot
exceed the capacity $C(s_0)$ of the nominal channel. The result
of~\cite{KosutKITW18} is that whenever the DM-AVC is not overwritable, the
authenticated communication capacity is indeed $C(s_0)$, and it is zero
otherwise. The theorem above suggests, along the lines of our MAC scheme
(Fig.~\ref{fig:MACscheme}), an alternative architecture (but not an alternative
proof) to achieve this result -- first perform unauthenticated communication
using a good code for the nominal channel, and follow this up with a short
authentication phase which does not affect the overall rate\footnote{Note that
the result of~\cite{KosutKITW18} is for a deterministic encoder/decoder under
average error criterion. The description here is for stochastic encoder under
maximum error as our authentication encoder is necessarily stochastic.
However, the observation in Remark~\ref{rem:det+avg=stoch+max} applies in this
case too.}.  Arguably, this explains the nature of the result
of~\cite{KosutKITW18} -- whenever authenticated communication is possible at
all, since authentication can be performed for a set of doubly-exponentially
growing number of messages, the rate of communication is simply governed by the
largest rate of unauthenticated communication possible, i.e., the Shannon
capacity of the nominal channel.
\end{remark}\vspace{-0.25em}\vspace{-0.25em}
\begin{remark}\label{rem:IDconnection} The authentication problem studied here
borrows features from the identification setup over
non-AVCs~\cite{AhlswedeDIT89} as well as that over AVCs~\cite{BocheD:19}. On
the one hand, as also in the identification setup over the non-AVC
$W_{V|U,S}(.|.,s_0)$, we desire that the true message be accepted under no
attack. On the other, as in the identification setup over the AVC
$\tagchannel$, we desire that a wrong candidate message be rejected regardless
of the channel state vector. However, unlike the identification setup, when
the system is under attack, we do not have any requirement on the decoder
behavior if the candidate message is the same as the true message. Our proofs
for authentication further emphasize this connection as we build on ideas
from~\cite{AhlswedeDIT89} in our achievabilities as well as converses.
\end{remark}\vspace{-0.25em}

\subsection{The Gaussian case}

We derive analogues to Theorem~\ref{thm:authMAC}, Theorem~\ref{thm: auth capacity} and~\cite[Theorem~2]{KosutKITW18} for additive Gaussian noise channels. We first consider the adversarial MAC setting over an additive Gaussian MAC modeled as $Z=X+Y+W$, with $X,Y,Z\in\bbR$ and $W\sim\cN(0,\sigma^2)$. We assume that each legitimate user is subject to their power constraints, while an \emph{adversarial} user has no such power constraints and can output {\em any} real value. \vspace{-0.25em}
\begin{thm} \label{thm:auth comm gaussian mac}  For an additive Gaussian adversarial MAC with average input power constraints $\rho_1$ and $\rho_2$ on the first and second users and noise variance $\sigma^2$, $\cR_{\mathrm{auth-comm}}$ equals the capacity region of the corresponding non-adversarial MAC.
\end{thm}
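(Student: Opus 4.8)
The plan is to mirror the finite-alphabet argument behind Theorem~\ref{thm:authMAC}. The converse is immediate: restricted to the event that both users are honest (the ``no-attack'' scenario), any authenticated-communication code with $P_{e,\na}\to0$ reliably decodes $(M_1,M_2)$ while its honest encoders respect the power constraints $\rho_1,\rho_2$; hence it is a legitimate non-adversarial Gaussian MAC code, and $\mathcal{R}_{\authcomm}\subseteq\mathcal{R}_{\nonadv}$. (As in the discrete case, stochastic encoders and the maximum-error criterion do not enlarge the non-adversarial Gaussian MAC region.) So the entire content is achievability: every $(R_1,R_2)$ in the interior of $\mathcal{R}_{\nonadv}$ is an achievable authenticated-communication rate pair.

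For achievability I would use the three-phase scheme of Fig.~\ref{fig:MACscheme}. Split the blocklength as $n-2t$, $t$, $t$ with $t=\Theta(\log n)$. In Phase~1, use a good non-adversarial Gaussian MAC code of length $n-2t$ at a rate pair slightly above $(R_1,R_2)$ but still inside $\mathcal{R}_{\nonadv}$, operating at per-symbol powers slightly below $\rho_1,\rho_2$ so that a vanishing power margin remains for the later phases; this yields a tentative decode $(\hat m_1,\hat m_2)$. In Phase~2, user~1 authenticates $m_1$ over the additive Gaussian AVC $V=U+S+W$ obtained by treating user~2's channel input as the state, with ``no-attack'' symbol $s_0=0$, while the honest user~2 transmits $0^t$; because (see the obstacle below) this AVC admits authentication of a number of messages growing doubly exponentially in $t$, the choice $t=\Theta(\log n)$ already suffices to authenticate the $\lfloor 2^{nR_1}\rfloor$ messages of user~1. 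Phase~3 is the mirror image, with the roles of the users swapped. The decoder outputs $(\hat m_1,\hat m_2)$ iff both authenticators accept, and $(\bot,\bot)$ otherwise.

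The error analysis then has three cases. Under no attack, Phase~1 decodes correctly w.h.p.\ and both authenticators accept w.h.p.\ (false-rejection probability $\pfa\to0$), so the output is $(m_1,m_2)$. If user~2 is adversarial, it may arbitrarily disrupt Phases~1 and~3 (it has unbounded power) and replace $0^t$ in Phase~2 by any real sequence; but the false-acceptance guarantee $\pwa\to0$, which holds for \emph{every} state sequence, implies that if $\hat m_1\neq m_1$ the Phase-2 authenticator rejects w.h.p.\ and the decoder outputs $(\bot,\bot)$, while if $\hat m_1=m_1$ the output is already correct for the honest user~1 --- in either case there is no undetected error for user~1, and no guarantee is required for the adversary's own message. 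The case of an adversarial user~1 is handled symmetrically by Phase~3. The honest users meet their average power constraints since Phases~2 and~3 add only $\Theta(\log n)$ bounded-power symbols, a vanishing fraction of the energy budget, and the rate loss from the shortened Phase~1 is $\Theta((\log n)/n)\to0$.

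The main obstacle is establishing the Gaussian authentication lemma invoked above: over the additive Gaussian AVC $V=U+S+W$ with \emph{power-unconstrained} state $S$ and no-attack symbol $s_0=0$, one can authenticate $\lfloor 2^{2^{tR}}\rfloor$ messages with vanishing error for every $R<\tfrac12\log(1+\rho_1/\sigma^2)$. Since Kosut--Kliewer~\cite{KosutKITW18} is stated for finite alphabets, I would take the independent route used in the Gaussian parts of the paper: first prove the Gaussian analogue of~\cite[Theorem~2]{KosutKITW18}, namely that the nominal Gaussian capacity is achievable for \emph{authenticated communication} (reliable decoding under no attack, abort-or-correct under any unbounded attack); the crucial structural point is that the additive Gaussian channel is never overwritable, because the compensation an overwriter would have to inject depends on the unobserved channel input symbol, so no fixed $P_{\State|\tU'}$ can satisfy~\eqref{eq:overwritable} --- this is precisely why the Gaussian statement carries no overwritability caveat. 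Then, using a capacity-achieving authenticated-communication code as a black box and appending a fresh per-message key to the payload in the style of Ahlswede--Dueck~\cite{AhlswedeDIT89}, bootstrap to the doubly-exponential authentication claim exactly as in the proof of Theorem~\ref{thm: auth capacity}. The delicate technical work lies in the Gaussian achievability step --- controlling sphere-hardening/typicality estimates against an adversary with unbounded energy (one only needs that it cannot force acceptance of a wrong message without knowing the stochastic encoder's private randomness; forcing a rejection is permitted) and allocating a vanishing slice of the power and blocklength budgets to authentication without disturbing the rate.
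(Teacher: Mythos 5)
Your high-level plan is the same as the paper's: converse by restriction to the no-attack event, achievability by the three-phase scheme of Fig.~\ref{fig:MACscheme} with $t=\Theta(\log n)$ and the adversary's unbounded power captured by the Gaussian AVC model in the authentication phases. You also correctly note the structural point that the additive Gaussian AVC is never overwritable, which is why the Gaussian statement carries no overwritability caveat.

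Where you diverge, and where the proposal has a genuine gap, is in the chain of supporting lemmas behind the phrase ``sphere-hardening/typicality estimates against an adversary with unbounded energy.'' You propose to first prove the Gaussian analogue of \cite[Theorem~2]{KosutKITW18} (Theorem~\ref{thm:auth comm gaussian}) directly, and only afterwards bootstrap to doubly-exponential authentication. The paper inverts this ordering: it first proves \emph{positivity} of the Gaussian authentication rate (Lemma~\ref{lem:auth gaussian positive}) by an explicit random construction, then obtains Theorem~\ref{thm:auth comm gaussian} as a corollary via a two-phase code (a nominal-channel code followed by a short positive-rate authentication tag), and finally upgrades to the exact authentication capacity (Theorem~\ref{thm:auth gaussian}) by composing with an identification code. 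The construction in Lemma~\ref{lem:auth gaussian positive} is precisely the ``delicate technical work'' you flag but do not carry out: codewords of norm $5\sigma\sqrt{t}$ and pairwise distance $\geq 7\sigma\sqrt{t}$ (a Blachman/Gilbert--Varshamov code), composed with an identification code for the identity channel through a uniformly random permutation of inner messages, and a Hoeffding-without-replacement concentration bound (plus a net over state vectors) to control, \emph{uniformly over the adversary's shift $\state^t$}, the fraction of $\msg$'s codewords that land within $3\sigma\sqrt{t}$ of some codeword of $\msgh$. Merely appealing to sphere-hardening does not by itself give you a bound that holds simultaneously for all of continuum-many attack vectors, which is exactly what the permutation-plus-concentration-plus-net argument is there for. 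A secondary, minor inaccuracy: for the three-phase MAC scheme you only need a doubly-exponential authentication rate $R>0$ (i.e.\ Lemma~\ref{lem:auth gaussian positive}), not the sharp threshold $R<\frac12\log(1+\rho_1/\sigma^2)$ that you invoke; and the power bookkeeping in Lemma~\ref{lem:auth gaussian positive} is handled by pooling energy over a longer block to reach $25\sigma^2$ on the tag, not by shaving a vanishing per-symbol margin off Phase~1 as you describe, though the two are compatible.
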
\vspace{-0.25em}

Our result builds on an authentication code for an additive Gaussian AVC, modeled as $V=U+S+W$, with $\tU,\rV,\State\in\bbR$ and $W\sim\cN(0,\sigma^2)$. \longonly{In the following theorem, we consider the authentication problem over an additive Gaussian AVC whose channel input and output are real valued and are related as   $V=U+S+W$, where $S\in\bbR$ is adversarially chosen (and equals zero for the nominal channel), and $W$ is zero mean Gaussian noise with variance $\sigma^2$. }As in Theorem~\ref{thm: auth capacity}, we see that the maximum number of messages that can be authenticated grows doubly exponentially with the blocklength, the exponent being equal to the capacity of the nominal channel.\vspace{-0.25em}
\begin{thm}\label{thm:auth gaussian}
For an additive Gaussian AVC with average input power constraint $\rho$ and noise variance $\sigma^2$, the authentication capacity equals the channel capacity of the nominal channel, \emph{i.e.}, $C_{\mathrm{auth}}=	\frac{1}{2}\log\left(1+\frac{\rho}{\sigma^2}\right)$.
\end{thm}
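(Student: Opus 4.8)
The plan is to prove $C_{\mathrm{auth}}=C^\star$, where $C^\star\defineqq\frac12\log\bigl(1+\rho/\sigma^2\bigr)$ is the Shannon capacity of the nominal Gaussian channel $V=U+W$ with $W\sim\cN(0,\sigma^2)$ under power $\rho$ (the channel obtained by freezing the state to the no-attack symbol $\state_0=0$), by establishing the two inequalities separately. The argument follows the template used for Theorem~\ref{thm: auth capacity}, but replaces the finite-alphabet non-overwritability machinery with Gaussian-specific ingredients.

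\emph{Converse, $C_{\mathrm{auth}}\le C^\star$.} The key observation is that any $(\lfloor 2^{2^{tR}}\rfloor,t)$-authentication code for the Gaussian AVC is simultaneously an identification code of the same size over the \emph{nominal} Gaussian channel, with the same power constraint. Writing $\mathcal{A}_m\defineqq\{v^t:\tagdec(m,v^t)=\accept\}$: specializing \eqref{eq:pfa} says that for each $m$ the stochastic encoder output of $m$ passed through the nominal channel lands in $\mathcal{A}_m$ with probability at least $1-\peauth$ (type-I identification error), while specializing \eqref{eq:pwa} to the state sequence $\state_0^t$ says that for each $m\neq\hat m$ the encoder output of $m$ through the nominal channel lands in $\mathcal{A}_{\hat m}$ with probability at most $\peauth$ (type-II identification error). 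Since the identification capacity of the power-constrained Gaussian channel equals its Shannon capacity $C^\star$ with a strong converse (by a discretization argument from \cite{AhlswedeDIT89}; see also \cite{BurnashevIT00}), we must have $2^{tR}\le 2^{t(C^\star+o(1))}$, i.e.\ $R\le C^\star$. This is the Gaussian instance of the identification connection noted in Remark~\ref{rem:IDconnection}.

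\emph{Achievability, Step 1: a Gaussian authenticated-communication code.} I would first establish a Gaussian analogue of \cite[Theorem~2]{KosutKITW18}: for every $R'<C^\star$ and all large $t$ there is a length-$t$ stochastic-encoder code carrying $K\defineqq\lfloor 2^{tR'}\rfloor$ inner messages over the Gaussian AVC $V=U+S+W$ (honest input power $\le\rho$; adversary of unconstrained power but ignorant of $U^t$) such that, under no attack the decoder recovers the transmitted inner message with small maximum error, and under an \emph{arbitrary} attack the decoder outputs either the transmitted inner message or $\bot$, again with small error uniformly over $S^t$. Such a code can be built from an i.i.d.\ $\cN(0,\rho-\delta)$ random codebook together with the encoder's private randomness: the decoder first applies an empirical-power test to $v^t$ and declares $\bot$ if it fails (defeating any high-power jammer), and otherwise performs nearest-codeword decoding. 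Because the adversary does not observe $U^t$, for a random codebook the probability that a moderate-power $S^t$ nudges $U^t+S^t$ into the decoding region of a wrong codeword while still passing the power test is exponentially small, uniformly over $S^t$ (reducing the continuum of $S^t$ to a fine net and union-bounding over the $K$ codewords); this is exactly what produces the ``$\bot$-or-correct'' guarantee.

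\emph{Achievability, Step 2 and conclusion.} On top of this inner code I would layer the Ahlswede--Dueck covering construction. Fix $\epsilon'>2\epsilon>0$, set $\ell\defineqq\lceil\epsilon K\rceil$, and (by a standard probabilistic argument) obtain $M$ subsets $\mathcal{T}_1,\dots,\mathcal{T}_M\subseteq\{1,\dots,K\}$ with $|\mathcal{T}_a|=\ell$ and $|\mathcal{T}_a\cap\mathcal{T}_b|\le\epsilon'\ell$ for $a\neq b$, where $\log M=\Omega(\ell)=\Omega\!\left(2^{tR'}\right)$ since the per-pair failure probability is $e^{-\Omega(\ell)}$ while $\log\binom{M}{2}=o(\ell)$. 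The authentication encoder for message $m$ draws $J$ uniformly from $\mathcal{T}_m$ (its private randomness) and transmits the inner codeword for $J$; given a candidate $\hat m$ and $v^t$, the authentication decoder runs the inner decoder, rejects if it outputs $\bot$, and otherwise accepts iff the decoded inner message lies in $\mathcal{T}_{\hat m}$. Under no attack with true message $m$ the inner decoder returns $J\in\mathcal{T}_m=\mathcal{T}_{\hat m}$, so $\pfa$ is small; under an attack with true message $m$ and candidate $\hat m\neq m$ the inner decoder returns $\bot$ (reject) or $J$, and in the latter case acceptance requires $J\in\mathcal{T}_{\hat m}$, which has probability at most $|\mathcal{T}_m\cap\mathcal{T}_{\hat m}|/\ell\le\epsilon'$ because $J$ is uniform on $\mathcal{T}_m$ — note that the authentication adversary is in fact \emph{weaker} than the inner one since it never learns $J$. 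Hence $\peauth\to 0$, and as $\tfrac1t\log\log M\to R'$ with $R'<C^\star$ arbitrary, $C_{\mathrm{auth}}\ge C^\star$; together with the converse this proves the theorem. The hard part will be Step 1 --- the Gaussian authenticated-communication lemma against an unbounded-power adversary: choosing the power-test threshold, reducing the continuum of adversary strategies to a net, and handling the interplay of the power constraint with the abort option is where the real work lies, whereas the converse reduces to the (known) Gaussian identification capacity and Step 2 is combinatorial bookkeeping.
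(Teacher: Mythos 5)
Your converse (any authentication code is an identification code over the nominal Gaussian channel; invoke~\cite{BurnashevIT00}) and your Step~2 (Ahlswede--Dueck covering on top of an authenticated-communication code for the Gaussian AVC) are exactly what the paper does. The divergence is in Step~1. You want to establish the Gaussian authenticated-communication result directly by a single-shot construction: random Gaussian codebook, power test, nearest-codeword decoding, a net over $\state^\tl$. The paper instead proves Theorem~\ref{thm:auth comm gaussian} \emph{indirectly}: Lemma~\ref{lem:auth gaussian positive} first shows $C_{\mathrm{auth}}>0$ using a fixed low-rate minimum-distance code (power $25\sigma^2$, pairwise distance at least $7\sigma\sqrt{\tl}$, rate below $1/8$) composed with a random permutation and an identification code for the identity channel; Theorem~\ref{thm:auth comm gaussian} then reaches the nominal capacity by appending an $\cO(\log n)$-length tag from that lemma to a standard length-$n$ nominal Gaussian code. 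This factorization puts all the adversarial robustness into a short tag whose codewords are separated by a large-deviations margin, so the capacity-achieving code never has to be robust. Your Step~1 asks the capacity-achieving code itself to be robust, which is exactly what the paper --- noting that~\cite{KosutKITW18} ``relies on the finiteness of the alphabets and the state-set'' --- avoids.

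Step~1 as sketched has a genuine gap. ``Power test plus nearest-codeword decoding'' does not give the abort-or-correct guarantee near capacity: as $R'$ approaches the nominal capacity, codewords in a random Gaussian codebook are tightly packed, so a low-power $\state^\tl$ that leaves $\norm{\rV^\tl}$ essentially unchanged can still push the received vector across a decision boundary, yielding an undetected wrong decode rather than a $\bot$. The decoder must additionally test that $\rV^\tl-\tu^\tl(\msgh)$ looks like $\tl$ i.i.d.\ $\cN(0,\sigma^2)$ samples --- the Gaussian analogue of the conditional-typicality check in~\cite{KosutKITW18}. More seriously, with a deterministic inner encoder the attack $\state^\tl=\tu^\tl(\msg')-\tu^\tl(\msg)$ makes the additive Gaussian AVC exactly overwritable, so ``the encoder's private randomness'' cannot be a parenthetical remark: the whole analysis has to run under the average-error criterion (equivalently, over the encoder's randomness), and the crux is to show that a single $\state^\tl$, chosen with full knowledge of the codebook, can pass that consistency check for only a vanishing fraction of codewords. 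You correctly flag Step~1 as the hard part, but the sketch does not show how to close it. The paper's Lemma~\ref{lem:auth gaussian positive} proof runs the analogous net-and-concentration argument, but only in the forgiving low-rate, large-minimum-distance regime (where Propositions~\ref{prop:gaussian confusable} and~\ref{prop:gaussian confusion set} are essentially geometric); the two-phase bootstrap exists precisely so that nobody has to run it at capacity.
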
\vspace{-0.25em}
As in the case of Theorem~\ref{thm: auth capacity}, we prove Theorem~\ref{thm:auth gaussian} using an authenticated communication code for the additive Gaussian AVC. The following theorem is an analogue of~\cite[Theorem~2]{KosutKITW18} for the additive Gaussian AVC setting. Our proof technique is significantly different from that of~\cite[Theorem~2]{KosutKITW18}, which relies on the finiteness of the alphabets and the state-set.\vspace{-0.25em}
\begin{thm} \label{thm:auth comm gaussian}  For an additive Gaussian AVC with average input power constraint $\rho$ and noise variance $\sigma^2$, the authenticated communication capacity equals the channel capacity of the nominal channel, \emph{i.e.}, 
$C_{\mathrm{auth-comm}}=	\frac{1}{2}\log\left(1+\frac{\rho}{\sigma^2}\right)$.\end{thm}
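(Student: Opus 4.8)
The plan is to prove the two directions separately. The converse $C_{\mathrm{auth-comm}}\le\tfrac12\log(1+\rho/\sigma^2)$ is immediate, while the achievability is obtained by a \emph{direct} single-shot construction in which one code simultaneously conveys the message and certifies it; in particular no finiteness of the alphabet or state set is used, which is why the argument must depart from that of~\cite{KosutKITW18}. Note that Theorem~\ref{thm:auth gaussian} will be \emph{derived} from the present theorem (as Theorem~\ref{thm: auth capacity} is derived from~\cite{KosutKITW18}), so we do not invoke it here.

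\textbf{Converse.} Under the no-attack state $S^n=0$ the AVC $V=U+S+W$ reduces to the nominal additive Gaussian channel $V=U+W$, $W\sim\cN(0,\sigma^2)$, with per-block input power $\rho$. By definition an authenticated communication code of rate $R$ must decode the message reliably, and must not output $\bot$, when $S^n=0$; hence it is a genuine channel code for the nominal Gaussian channel, and the classical Gaussian converse gives $R\le\tfrac12\log(1+\rho/\sigma^2)$.

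\textbf{Achievability.} Fix $R<\tfrac12\log(1+\rho/\sigma^2)$, a small ``key rate'' $\gamma>0$ with $R+\gamma<\tfrac12\log(1+\rho/\sigma^2)$, and small slacks $\delta,\delta'>0$. Generate a codebook $\{U^n(m,k):m\in[2^{nR}],\,k\in[2^{n\gamma}]\}$ with i.i.d.\ $\cN(0,\rho-\delta)$ entries (truncated to respect the power constraint). To send $m$, the stochastic encoder draws $K$ uniformly on $[2^{n\gamma}]$ using its private randomness and transmits $U^n(m,K)$. The decoder minimum-distance decodes a candidate pair $(\hat m,\hat k)$ and then applies an authentication test: output $\hat m$ if $\|V^n-U^n(\hat m,\hat k)\|^2\le n(\sigma^2+\delta)$, and output $\bot$ otherwise. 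When $S^n=0$, $V^n-U^n(m,K)=W^n$ concentrates at energy $n\sigma^2$, decoding succeeds since $R+\gamma$ lies below the nominal capacity, and the test passes, so the no-attack error vanishes.

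\textbf{Attack analysis and the hard part.} Write $V^n=U^n(m,K)+S^n+W^n$, where $S^n$ may depend on $m$ and the codebook but not on $K$ or $W^n$. If $\|S^n\|^2>nB$ for a suitable constant $B$, then $\|V^n\|$ is too large for any codeword to pass the energy test and the decoder aborts. For $\|S^n\|^2\le nB$, using that $W^n$ is essentially orthogonal to the other terms and $\|W^n\|^2\approx n\sigma^2$, the test can pass with $\hat m\ne m$ only if $\|U^n(m,K)+S^n-U^n(\hat m,\hat k)\|^2\le n\delta'$; that is, the adversary must have \emph{predicted} the random vector $U^n(\hat m,\hat k)-U^n(m,K)$ to within a ball of radius $\sqrt{n\delta'}$. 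For a fixed $S^n$, averaging over $K$ and the codebook, this probability is at most $2^{n(R+\gamma)}$ times the probability that one i.i.d.\ Gaussian codeword falls in a fixed ball of radius $\sqrt{n\delta'}$, which is $\exp(-n\,c(\delta'))$ with $c(\delta')\to\infty$ as $\delta'\to0$. Since the quantity above is $O(\sqrt n)$-Lipschitz in $S^n$ on $\{\|S^n\|^2\le nB\}$, it suffices to union-bound over a grid of spacing $\Theta(\sqrt n)$ there, which has only $\exp(O(n))$ points; choosing $\delta$ (hence $\delta'$) small enough that $c(\delta')$ exceeds $(R+\gamma)\ln 2$ plus the grid exponent plus a margin, the probabilistic method yields one code whose undetected-error probability under \emph{every} $S^n$ is exponentially small. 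Letting $\gamma\downarrow0$ and $R\uparrow\tfrac12\log(1+\rho/\sigma^2)$ finishes the proof. The hard part will be exactly these two points: (i) turning the energy/typicality test into the ``prediction'' event uniformly over the continuum of real jamming vectors --- handled by the bounded-energy reduction and the Lipschitz covering --- and (ii) quantifying how the unpredictability of the stochastic key $K$, which is uniform on an exponentially large set and independent of $S^n$, keeps that event exponentially rare even after the union bound over all $2^{n(R+\gamma)}$ wrong codewords, which is what lets $R$ reach the nominal capacity. The same code then yields Theorem~\ref{thm:auth gaussian} by using it to authenticate a doubly-exponential family of messages in the manner of~\cite{AhlswedeDIT89}, and, via the three-phase MAC scheme, Theorem~\ref{thm:auth comm gaussian mac}.
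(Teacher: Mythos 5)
Your converse matches the paper's. Your achievability, however, takes a genuinely different route, and the comparison is worth making precise. The paper proves Theorem~\ref{thm:auth comm gaussian} via a two-phase concatenation: a standard channel code for the nominal Gaussian channel over the main blocklength $n$, followed by a short authentication tag of length $t=\mathcal{O}(\log n)$ whose existence is supplied by Lemma~\ref{lem:auth gaussian positive}; that lemma in turn builds a positive-rate authentication code by composing an identification code for the identity channel with a random permutation and a minimum-distance Gaussian code, and then performing a concentration argument over the permutation plus a union bound over a dense grid of adversarial state vectors. Your construction is single-phase: one i.i.d.\ Gaussian codebook indexed by a (message, low-rate key) pair, joint minimum-distance decoding, and an energy consistency test, with the key's unpredictability (and $\gamma\downarrow 0$ at the end) doing the work of the separate authentication tag. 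Both arguments ultimately hinge on a covering/union bound over the real-valued state sequences, so the core technical obstacle is the same; you simply absorb the authentication mechanism into the channel code rather than appending it as a phase. What your route buys is a more self-contained proof that bypasses Lemma~\ref{lem:auth gaussian positive} entirely (and from it one still recovers Theorems~\ref{thm:auth gaussian} and~\ref{thm:auth comm gaussian mac} exactly as you say). What the paper's modular route buys is reuse of the identification-code machinery, which makes the double-exponential scaling more transparent and keeps the tag's energy budget negligible by construction.

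Two details of your sketch deserve more care before the argument is watertight. First, the ``$O(\sqrt n)$-Lipschitz'' step is really a monotonicity argument: if the test passes with radius $\sqrt{n\delta'}$ for some $\state^\tl$, it passes with a slightly larger radius $\sqrt{n\delta''}$ for all $\state^\tl$ within grid spacing $\sqrt{n}(\sqrt{\delta''}-\sqrt{\delta'})$ of it; this is the cleaner way to pass from the grid to the continuum, and it is also essentially what the paper does inside Lemma~\ref{lem:auth gaussian positive}. Second, the adversary knows the realized codebook, so the probabilistic-method step should be phrased as ``with high probability over the codebook, simultaneously for all $m$ and all grid points $\state^\tl$, $\Pr_K[\cdot]\le\epsilon$''; this requires a Markov argument on $\Pr_K[\cdot]$ and a union bound over $m$ (an extra $2^{nR}$ factor you do not mention) in addition to the grid, but since $c(\delta')\to\infty$ as $\delta'\to 0$ this costs nothing. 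With those fixes the argument is sound.
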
\vspace{-0.25em}

\section{Proof Sketches}
\subsection{Proof sketch of Theorem~\ref{thm: auth capacity}}
The detailed proof is given in Appendix~\ref{App:AuthAVC}.\\
The achievability scheme involves transmitting an identification code for the identity channel~\cite[Proposition~1]{AhlswedeDIT89} using an authenticated communication code. We use the observation that given a deterministic authenticated communication code from~\cite{KosutKITW18} that has a small average probability of error, we can construct a stochastic authenticated communication code with a slightly lower rate and with a small maximal probability of error. When the channel is non-overwritable, the converse follows from the identification capacity converse for the channel $W_{\rV|\tU,\State}(.|.,\state_0)$. When the channel is overwritable, for any message $m'$, the adversary can mount an attack under which, irrespective of the actual message $m$, the output distribution is the same as for $m'$ under the nominal channel. Using this we argue that $\peauth \geq 1/2$.\vspace{-0.5em}
\subsection{Proof sketch of Theorem~\ref{thm:authMAC}}
The detailed proof is given is Appendix~\ref{App:authMAC}.\\
Suppose neither user can overwrite the other. Our achievability scheme consists of three phases: a DM-MAC code (of blocklength $n$, say) with {\em stochastic} encoders for non-adversarial users with maximum probability of error vanishing as $n\rightarrow\infty$~\cite{CaiENTROPY14}; this is followed by two authentication phases, one for each user.
While authenticating user~2, we consider the point-to-point channel from user~2 to the receiver and treat user~1 as possibly adversarial. For this AVC, user~1's alphabet corresponds to the set of adversarial states. The following lemma shows that if user~1 cannot overwrite user~2 (i.e., \eqref{eq1} does not hold), this AVC has a ``\na'' state ($x_0$ in the lemma) such that it is not overwritable (i.e., \eqref{eq:overwritable} does not hold). This will allow us to use the authentication codes from Section~\ref{subsec:authAVC} with tag blocklength $t={\mathcal O}(\log{n})$.\vspace{-0.25em}
\begin{lemma}\label{lemma:1}
When user~1 cannot overwrite user~2, there exists $x_0\in \mathcal{X}$ such that for all $P_{X'|Y}$, there exists $(y,y',z)\in \mathcal{Y}\times\mathcal{Y}\times\mathcal{Z}$, that satisfies 
\begin{equation}\label{eq:overwrite lemma}
\sum_{x'\in\mathcal{X}}P_{X'|Y}(x'|y)W(z|x',y')\neq W(z|x_0,y).
\end{equation}
\end{lemma}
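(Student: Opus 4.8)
The plan is to argue by contraposition: assuming that the conclusion of the lemma fails, I will construct a conditional distribution $P_{X'|X,Y}$ witnessing that user~1 \emph{can} overwrite user~2 in the sense of \eqref{eq1}, contradicting the hypothesis. Conceptually, the point is that the MAC notion of overwriting in \eqref{eq1} permits the ``overwriting kernel'' to depend on the symbol $x$ that user~1 pretends to send, whereas the no-attack symbol $x_0$ in \eqref{eq:overwrite lemma} is fixed; so if no single $x_0$ works, one can glue together the per-$x_0$ witnesses into one kernel indexed by $x$.

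First I would unwind the negated conclusion. The statement ``there exists $x_0\in\mathcal{X}$ such that for all $P_{X'|Y}$ there exists $(y,y',z)$ with $\sum_{x'}P_{X'|Y}(x'|y)W(z|x',y')\neq W(z|x_0,y)$'' fails precisely when, for \emph{every} $x_0\in\mathcal{X}$, there is a conditional distribution $P^{(x_0)}_{X'|Y}$ of $X'$ given $Y$ such that
\[
\sum_{x'\in\mathcal{X}}P^{(x_0)}_{X'|Y}(x'|y)\,W(z|x',y') = W(z|x_0,y)
\qquad\text{for all }(y,y',z)\in\mathcal{Y}\times\mathcal{Y}\times\mathcal{Z}.
\]
In the language of Section~\ref{subsec:authAVC}, this says that the AVC from user~2 to the receiver obtained by treating user~1's input as the state is overwritable (in the sense of \eqref{eq:overwritable}) for \emph{every} choice of no-attack symbol $x_0$.

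Next I would glue these kernels along the coordinate $x$: define $P_{X'|X,Y}(x'|x,y):=P^{(x)}_{X'|Y}(x'|y)$. Since each $P^{(x)}_{X'|Y}(\,\cdot\,|y)$ is a probability distribution on $\mathcal{X}$, this $P_{X'|X,Y}$ is a bona fide conditional distribution of $X'$ given $(X,Y)$. Specializing the displayed equality to $x_0=x$ gives, for all $(x,y,y',z)$,
\[
\sum_{x'\in\mathcal{X}}P_{X'|X,Y}(x'|x,y)\,W(z|x',y') = \sum_{x'\in\mathcal{X}}P^{(x)}_{X'|Y}(x'|y)\,W(z|x',y') = W(z|x,y),
\]
which is exactly \eqref{eq1}. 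Hence user~1 can overwrite user~2, contradicting the hypothesis; this establishes the lemma.

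I do not expect a genuine obstacle here: the argument is essentially just the quantifier manipulation plus the observation that the overwriting witnesses $P^{(x_0)}_{X'|Y}$ can be reindexed by $x$. The only care needed is in handling the quantifiers in the negated statement correctly and verifying that the glued object $P_{X'|X,Y}$ is indeed a valid conditional distribution. (If one wanted to be fully explicit one could instead phrase the same argument without contraposition, directly exhibiting a witness $P_{X'|X,Y}$ from a failure of the conclusion; the content is identical.)
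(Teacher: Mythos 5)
Your proposal is correct and matches the paper's argument: both proceed by contraposition and observe that per-$x_0$ overwriting kernels $P^{(x_0)}_{X'|Y}$ can be glued into a single $P_{X'|X,Y}$ witnessing \eqref{eq1}. You simply spell out the quantifier bookkeeping more explicitly than the paper does.
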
\vspace{-0.25em}
The proof follows by noting that if \eqref{eq:overwrite lemma} does not hold, then, for every $x\in\mathcal{X}$, we can find an attack distribution $P_{X'|X=x,Y}$ which satisfies \eqref{eq1}.  
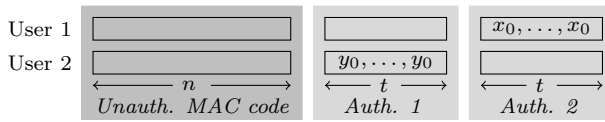
\begin{figure}\centering
\begin{tikzpicture}[xscale=0.5,yscale=0.3]
    \node[] at (-1.8,5.5) {\footnotesize User 2} ;
    \node[] at (-1.8,7) {\footnotesize User 1} ;
    \draw [fill = black!25!white, black!25!white] (-0.7,3) rectangle ++(5.8,5);
    \draw [fill = black!15!white, black!15!white] (5.4,3) rectangle ++(3.8,5);
    \draw [fill = black!15!white, black!15!white] (9.5,3) rectangle ++(3.8,5);
    \draw (-0.4,5) rectangle ++(5.2,1) node[pos=.5]{$$};
    \draw[<-] (-0.4,4.5) -- ++ (2.1,0) node[anchor=west] {\footnotesize$n$};
    \draw[<-] (4.8,4.5) -- ++ (-2.1,0) ;

    \draw (-0.4,6.5) rectangle ++(5.2,1) node[pos=.5]{$$};
    \draw (5.7,5) rectangle ++(3.2,1) node[pos=.5]{\footnotesize$y_0,\ldots,y_0$};
    \draw (5.7,6.5) rectangle ++(3.2,1) node[pos=.5]{};
    \draw[<-] (5.7,4.5) -- ++ (1.2,0) node[anchor=west] {\footnotesize$t$};
    \draw[<-] (8.9,4.5) -- ++ (-1.2,0) ;

    \draw (9.8,5) rectangle ++(3.2,1) node[pos=.5]{};
    \draw (9.8,6.5) rectangle ++(3.2,1) node[pos=.5]{\footnotesize$x_0,\ldots,x_0$};
    \draw[<-] (9.8,4.5) -- ++ (1.2,0) node[anchor=west] {\footnotesize$t$};
    \draw[<-] (13,4.5) -- ++ (-1.2,0) ;
    \node[align=center] at (2.2,3.6) {{\em \footnotesize Unauth. MAC code}};
    \node[align=center] at (7.2,3.6) {{\em \footnotesize Auth. 1}};
    \node[align=center] at (11.3,3.6) {{\em \footnotesize Auth. 2}};

\end{tikzpicture}
\caption{A 3-phase scheme for authenticated communication over MAC. The tag blocklengths $t={\mathcal O}(\log n)$ of the authentication phases are much smaller than the blocklength $n$ of the preceding unauthenticated communication phase.}\label{fig:MACscheme}
\end{figure}
Also, note that the resulting nominal channel $W(.|x_0,.)$ has non-zero capacity. If not, consider a distribution $P_{X'|Y}$ in \eqref{eq:overwrite lemma} such that $P_{X'|Y}(x_0|y) = 1$, for all $y\in{\mathcal Y}$. This implies that $W(z|x_0,y')\neq W(z|x_0,y)$ for some $z\in \mathcal{Z}$ and $y,y'\in \mathcal{Y}$.
Similarly, for the AVC from user~1 to the receiver with ${\mathcal Y}$ as the state alphabet, we can find a no-attack state $y_0\in \mathcal{Y}$ for which it is non-overwritable. Our coding scheme is depicted in Fig.~\ref{fig:MACscheme}. The decoder outputs a message pair estimate only if both of the authentication decoders accept, otherwise it outputs $(\bot,\bot)$ to declare the presence of an adversarial user. We show vanishing error probabilities for all rates pairs in $\mathcal{R}_{\nonadv}$. The converse follows from arguing that a user (say, user~1) who can overwrite the other can induce a channel output distribution corresponding to any message pair $(m_1',m_2')$ of its choosing irrespective of the message ($m_2$) the overwritten user sends.\vspace{-0.5em}

\subsection{Proof sketches for the Gaussian case}
The detailed proofs are given in Appendix~\ref{App:Gaussian}.\\
We first prove Theorem~\ref{thm:auth comm gaussian}.   The achievability proof departs significantly from the finite alphabet setting of~\cite[Theorem~2]{KosutKITW18}. We use a two stage code consisting of a  channel code for (non-authenticated) communication over blocklength $n$ on the nominal Gaussian channel followed by followed by an authentication tag of blocklength $\tl=\cO(\log{n})$. The existence of such a code is guaranteed by Lemma~\ref{lem:auth gaussian positive}, which shows that the number of authenticable messages scales doubly exponentially with the tag blocklength.\vspace{-0.25em}
\begin{lemma}\label{lem:auth gaussian positive}
For an additive Gaussian AVC with input power constraint $25\sigma^2$ and noise variance $\sigma^2$, $C_{\mathrm{auth}}>0$.\vspace{-0.25em}\vspace{-0.25em}
\end{lemma}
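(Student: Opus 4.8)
The plan is to prove $C_{\mathrm{auth}}>0$ for this AVC by exhibiting a family of authentication codes whose message sets grow doubly exponentially in the tag blocklength $t$ at a strictly positive exponent. I would mirror the two-layer strategy used in the finite-alphabet proof of Theorem~\ref{thm: auth capacity}: an \emph{inner} authenticated communication code of some fixed positive rate $r_0$ for the additive Gaussian AVC $V=U+S+W$ (with $S=0$ the nominal state, power $25\sigma^2$), and an \emph{outer} Ahlswede--Dueck identification code for the identity channel~\cite[Proposition~1]{AhlswedeDIT89} run on top of it. The essential difference from Theorem~\ref{thm: auth capacity} is that the inner code cannot be imported from~\cite{KosutKITW18} (infinite alphabet/state set), so it must be built by hand; but because the lemma only asks for positivity we merely need the inner code to have \emph{some} positive rate, not the nominal capacity, which is far easier than Theorem~\ref{thm:auth comm gaussian}. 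Given the inner code, the outer overlay is routine: to authenticate $m$, the encoder samples the identification codeword attached to identity $m$ and ships it over (a constant number of uses of) the inner code; the decoder, holding a candidate $\hat m$, accepts iff the inner decoder does not abort \emph{and} the delivered value lies in the identification decoding set $D_{\hat m}$. Under no attack the delivered value is the true one, which lands in $D_m$; under any attack the inner code's specification guarantees the decoder either aborts (reject, which is fine for $\hat m\neq m$) or recovers the true value (drawn for $m$), which by the small-intersection property of the identification code lies outside $D_{\hat m}$ for $\hat m\neq m$ with high probability. Since the identification construction over the (under-no-attack noiseless) $\approx\ell r_0$-bit channel provided by the inner code handles $2^{2^{\Theta(\ell r_0)}}$ identities, this yields $\tfrac1\ell\log\log|\mathcal M|\to r_0>0$, i.e. $C_{\mathrm{auth}}\ge r_0>0$.

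For the inner code I would take a random Gaussian codebook of blocklength $\ell$ and power $25\sigma^2$ whose codewords $c^\ell(m,k)$ are indexed by a message $m$ and a \emph{key} $k$ drawn uniformly by the stochastic encoder (the key supplies the encoder's private randomness). The decoder does minimum-distance decoding but \emph{aborts} whenever the received vector is not within $\approx\sqrt{\ell}\,\sigma$ of any codeword; this abort rule is what turns the code into an authenticated communication code, and it also disposes of attacks with very large $\|s^\ell\|$, which push the output off every decoding shell. Choosing the combined rate below the nominal capacity $\tfrac12\log(1+25)$, standard Gaussian channel-coding/typicality arguments give vanishing average error under no attack, and expurgating half the codewords upgrades this to vanishing maximum error at the cost of one bit.

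The crux is unforgeability under attack: for every $m$, every $\hat m\neq m$, and \emph{every} state vector $s^\ell$, the probability (over the key and noise) that $c^\ell(m,k)+s^\ell+W^\ell$ is decoded to a codeword carrying a message other than $m$ must vanish. The point is that the adversary knows $m$ but not $k$, so it must commit to a single $s^\ell$; it succeeds only if $c^\ell(m,k)+s^\ell$ falls into the decoding ball of some $c^\ell(m',k')$ with $m'\neq m$, i.e. only if the difference $c^\ell(m',k')-c^\ell(m,k)$ lies within $\mathcal{O}(\sqrt\ell\,\sigma)$ of $s^\ell$. For a random codebook the expected number of codeword pairs whose difference lands in a fixed ball of radius $\mathcal{O}(\sqrt\ell\,\sigma)$ is (number of pairs)$\cdot 2^{-\kappa\ell}$, where $\kappa$ is a Gaussian sphere-covering exponent that increases with the SNR; one then union-bounds this estimate over an exponentially large $\epsilon$-net of the (effective, bounded-norm) range of attack vectors and argues that a codebook which is simultaneously good for reliability, for the identification overlay, and against every net point exists. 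The arithmetic balances three exponents — the decoding-ball volume, the net size, and the key-space size — and it is exactly the need for positive slack in this balance that forces the power to be a sufficiently large multiple of $\sigma^2$; the value $25\sigma^2$ (equivalently, antipodal-type signalling at amplitude $5\sigma$ against noise of deviation $\sigma$) is a convenient choice for which these inequalities close with a concrete positive $r_0$.

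I expect the main obstacle to be precisely this uniform-over-$s^\ell$ sphere-covering count: making the ``adversary must commit before seeing the key'' intuition quantitative requires controlling, simultaneously for all attack vectors in an exponential net, the number of keys $k$ for which $c^\ell(m,k)+s^\ell$ lands in a wrong codeword's decoding region, and then separating the three competing exponents — which is where the specific (large-enough) SNR is used. Everything else — the Ahlswede--Dueck overlay, the no-attack reliability of the Gaussian code, the expurgation to maximum error, and the out-of-shell abort rule — is standard once this estimate is in hand.
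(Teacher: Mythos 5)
Your approach is structurally in the right family — an Ahlswede--Dueck identification overlay composed with a power-$25\sigma^2$ spherical code — but it differs from the paper's proof in a substantive way, and the key step you flag as the ``main obstacle'' is precisely where the paper makes a cleaner choice.

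You propose to mirror Theorem~\ref{thm: auth capacity} by first hand-building an \emph{inner authenticated-communication code} (decoder outputs a message or $\bot$), then running the identification code over it. The paper avoids this detour entirely: it never constructs an intermediate authenticated-communication decoder for the lemma. Instead, the authentication encoder is $\tagenc = g\circ\permchoice\circ\idenc$ with $g$ a \emph{fixed, deterministic} spherical code from Blachman with norm $5\sigma\sqrt t$ and minimum distance $7\sigma\sqrt t$, and $\permchoice$ a uniformly random permutation; the decoder is the bare verifier ``accept $\msgh$ iff $\vecrv$ is within $\tfrac32\sigma\sqrt t$ of some codeword in $\cA_{\msgh}$.'' There is no minimum-distance-decode-or-abort step. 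This matters for two reasons. First, the size of the confusion set $\cD(\msgh,\state^\tl)$ is bounded deterministically by $|\cA_{\msgh}|$ using only the fixed code's $d>2\rho$ packing property (Proposition~\ref{prop:gaussian confusable}) — you get ``at most one dangerous codeword per target'' for free, with no need to establish minimum distance of a random codebook at positive rate. Second, the concentration step (Proposition~\ref{prop:gaussian confusion set}) is Hoeffding for sampling without replacement under the uniform random permutation: $\cA_\msg\setminus\cA_{\msgh}$ is literally a uniform subset of the fixed codeword universe, so bounding $|\cA_\msg\cap\cD(\msgh,\state^\tl)|$ is a clean urn argument. Your version would need the analogous concentration over a random codebook, which is doable but messier: you must first establish that the random codebook has both the packing property and good reliability, then control the dependence between the ``bad key'' indicators through the shared random target codewords, and only then union-bound over $\msg$, $\msgh$, and the $\epsilon$-net of attack vectors.

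The specific gap in your write-up is that the three-exponent arithmetic you say is the crux is not actually carried out, and the setup as stated is slightly misaligned. The quantity you must control per $(\msg,\state^\tl)$ pair is not ``expected number of codeword pairs with difference near $\state^\tl$'' — that average-over-$\msg$ estimate leaves a gap to the \emph{max}-over-$\msg$ criterion the authentication error requires, and expurgation can't close it because the bad set depends on $\state^\tl$. You need a per-$(\msg,\state^\tl)$ concentration of the fraction of bad keys (plus a union bound over $\msg$ and the net), and the exponents that actually compete are the cap-probability exponent $\kappa$ against the rate $R'+r_0$ for the \emph{expectation} step, and then the doubly-exponential tail $e^{-\Theta(2^{\ell r_0})}$ against the merely exponential union-bound cost $2^{\ell(R'+c')}$ for the \emph{concentration} step (the latter always wins, so it is not a balancing act). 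Also note you cannot cite Theorem~\ref{thm:auth comm gaussian} for the inner code — that theorem in the paper is proved \emph{from} this lemma — so you correctly need a from-scratch low-rate code; but then none of the heavy lifting has been done. In short: your plan is a valid alternative architecture, but it trades the paper's fixed-code-plus-random-permutation trick for a random-codebook analysis that is strictly more work, and the part you yourself identify as the crux is still a sketch rather than a proof.
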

The main ideas behind the proof of Lemma~\ref{lem:auth gaussian positive} are illustrated in Fig.~\ref{fig:auth gaussian}. Using Theorem~\ref{thm:auth comm gaussian}, the proofs of Theorems~\ref{thm:auth gaussian} and~\ref{thm:auth comm gaussian mac}  follow on similar lines as the finite alphabet setting. The converse in Theorem~\ref{thm:auth gaussian} follows from the identification capacity converse for the nominal Gaussian channel~\cite{BurnashevIT00}, while the achievability uses an authenticated communication code for the adversarial Gaussian AVC to authenticably transmit an identification code for the noiseless channel. Finally, we show the achievability in Theorem~\ref{thm:auth comm gaussian mac} by using a three-phase communication scheme using the authentication code from Theorem~\ref{thm:auth gaussian} (similar to the proof of Theorem~\ref{thm:authMAC}).\vspace{-0.25em}
\begin{figure}[t]\centering
\scalebox{0.63}{\begin{tikzpicture}[scale=0.35]
	\draw[dotted] (-1,1) rectangle ++(4,12) node[anchor=west]{$\tagenc$};
	\draw (0,2) rectangle ++(2,2) node[pos=.5]{$g$};
	\draw (0,6) rectangle ++(2,2) node[pos=.5]{$\permchoice$};
	\draw (0,10) rectangle ++(2,2) node[pos=.5]{$\idenc$};
	\draw[<-] (1,0) node[below]{$\tagenc(\msg)\in\cA_{\msg}$} -- ++ (0,2) ;
	\draw[<-] (1,4) -- node[above] {} ++ (0,2);
	\draw[<-] (1,8) -- node[above] {} ++ (0,2);
	\draw[<-] (1,12) -- ++ (0,2) node[above]{$\msg$};
\end{tikzpicture}}\ \ \qquad\scalebox{1.13}{\begin{tikzpicture}[scale=0.4]

\coordinate (m1) at ({3*cos(70)},{3*sin(70)});
\coordinate (m2) at ({3*cos(300)},{3*sin(300)});
\coordinate (m3) at ({3*cos(210)},{3*sin(210)});

\coordinate (mh1) at ({3*cos(100)},{3*sin(100)});
\coordinate (mh2) at ({3*cos(370)},{3*sin(370)});
\coordinate (mh3) at ({3*cos(135)},{3*sin(135)});

\coordinate (mmh) at ({3*cos(175)},{3*sin(175)});

\coordinate (cw1) at ({3*cos(270)},{3*sin(270)});
\coordinate (cw2) at ({3*cos(330)},{3*sin(330)});
\coordinate (cw3) at ({3*cos(40)},{3*sin(40)});
\coordinate (cw4) at ({3*cos(240)},{3*sin(240)});

\coordinate (s) at (.3,1.6);

\coordinate (icw) at ({6.5*cos(150)-1.25},{6.5*sin(150)+1});
\coordinate (im) at ({6.5*cos(150)-1.25},{6.5*sin(150)+0.5});
\coordinate (imh) at ({6.5*cos(150)-1.25},{6.5*sin(150)});
\coordinate (immh) at ({6.5*cos(150)-1.25},{6.5*sin(150)-0.5});

\draw[line width=0.01,gray,opacity=0.3] (0,0) ++ (3,0) arc  (0:360:3) -- cycle;

\fill[fill opacity=1,fill=green!60!olive] (m1) ++ (.1,0) arc (0:360:.1) -- cycle;
\fill[fill opacity=1,fill=green!60!olive] (m2) ++ (.1,0) arc (0:360:.1) -- cycle;
\fill[fill opacity=1,fill=green!60!olive] (m3) ++ (.1,0) arc (0:360:.1) -- cycle;

\fill[fill opacity=0.3,fill=red!60!white] (mh1) ++ (.8,0) arc (0:360:.8) -- cycle;
\fill[fill opacity=0.3,fill=red!60!white] (mh2) ++ (.8,0) arc (0:360:.8) -- cycle;
\fill[fill opacity=0.3,fill=red!60!white] (mh3) ++ (.8,0) arc (0:360:.8) -- cycle;

\fill[fill opacity=1,fill=red!60!blue] (mh1) ++ (.1,0) arc (0:360:.1) -- cycle;
\fill[fill opacity=1,fill=red!60!blue] (mh2) ++ (.1,0) arc (0:360:.1) -- cycle;
\fill[fill opacity=1,fill=red!60!blue] (mh3) ++ (.1,0) arc (0:360:.1) -- cycle;

\fill[fill opacity=0.3,fill=red!60!white] (mmh) ++ (.8,0) arc (0:360:.8) -- cycle;

\fill[fill opacity=1,fill=red!60!blue] (mmh) ++ (.1,0) arc (0:180:.1) -- cycle;
\fill[fill opacity=1,fill=green!60!olive] (mmh) ++ (-.1,0) arc (180:360:.1) -- cycle;

\fill[-,white,opacity=1] (cw1) ++ (.1,0) arc  (0:360:.1) -- cycle; 
\fill[-,white,opacity=1] (cw2) ++ (.1,0) arc  (0:360:.1) -- cycle; 
\fill[-,white,opacity=1] (cw3) ++ (.1,0) arc  (0:360:.1) -- cycle; 
\fill[-,white,opacity=1] (cw4) ++ (.1,0) arc  (0:360:.1) -- cycle; 

\draw[-,black,opacity=1] (m1) ++ (.1,0) arc  (0:360:.1) -- cycle; 
\draw[-,black,opacity=1] (m2) ++ (.1,0) arc  (0:360:.1) -- cycle; 
\draw[-,black,opacity=1] (m3) ++ (.1,0) arc  (0:360:.1) -- cycle; 
\draw[-,black,opacity=1] (mh1) ++ (.1,0) arc  (0:360:.1) -- cycle; 
\draw[-,black,opacity=1] (mh2) ++ (.1,0) arc  (0:360:.1) -- cycle; 
\draw[-,black,opacity=1] (mh3) ++ (.1,0) arc  (0:360:.1) -- cycle; 
\draw[-,black,opacity=1] (mmh) ++ (.1,0) arc  (0:360:.1) -- cycle; 

\draw[-,black,opacity=1] (cw1) ++ (.1,0) arc  (0:360:.1) -- cycle; 
\draw[-,black,opacity=1] (cw2) ++ (.1,0) arc  (0:360:.1) -- cycle; 
\draw[-,black,opacity=1] (cw3) ++ (.1,0) arc  (0:360:.1) -- cycle; 
\draw[-,black,opacity=1] (cw4) ++ (.1,0) arc  (0:360:.1) -- cycle; 

\fill[green,opacity=0.2] (m3) ++ (s) ++ (.1,0) arc  (0:360:.1) -- cycle; 
\draw[line width=0.3,black,opacity=0.1] (m3) ++ (s) ++ (.1,0) arc  (0:360:.1) -- cycle; 
\fill[green,opacity=0.2] (m1) ++ (s) ++ (.1,0) arc  (0:360:.1) -- cycle; 
\draw[line width=0.3,black,opacity=0.1] (m1) ++ (s) ++ (.1,0) arc  (0:360:.1) -- cycle; 
\fill[green,opacity=0.2] (m2) ++ (s) ++ (.1,0) arc  (0:360:.1) -- cycle; 
\draw[line width=0.3,black,opacity=0.1] (m2) ++ (s) ++ (.1,0) arc  (0:360:.1) -- cycle; 

\draw (m1) node[below] {\tiny $\vectu(1)$};
\draw (m2) node[right] {\tiny $\vectu(2)$};
\draw (m3) node[left] {\tiny $\vectu(3)$};

\draw (mh1) node[above] {\tiny $\hat{u}^\tl(4)$};
\draw (mh2) node[left] {\tiny $\hat{u}^\tl(1)$};
\draw (mh3) node[above ] {\tiny $\hat{u}^\tl(2)$};

\draw (mmh) node[left] {\tiny $\vectu(4)=\hat{u}^\tl(3)$};

\draw[green!40!black] (m1)++(s) node[left] {\tiny $\vectu(1)+\state^\tl$};
\draw[green!40!black] (m2)++(s) node[left] {\tiny $\vectu(2)+\state^\tl$};

\draw[red!70!black] (m3)++(s) node[right] {\tiny $\vectu(3)+\state^\tl$};

\draw[brown,dotted,->] (m1) -- ++ (s) node[midway,above=-.2em,sloped] {\tiny $\state^\tl$};
\draw[brown,dotted,->] (m2) -- ++ (s) node[midway,above=-.2em,sloped] {\tiny $\state^\tl$};
\draw[brown,dotted,->] (m3) -- ++ (s) node[midway,above=-.2em,sloped] {\tiny $\state^\tl$} ;

\draw[-,black,opacity=1] (icw) ++ (.1,0) arc  (0:360:.1) -- cycle; 
\draw(icw) node[right] {\tiny all codewords of $g$};

\fill[fill opacity=1,fill=green!60!olive] (im) ++ (.1,0) arc (0:360:.1) -- cycle;
\draw(im) node[right] {\tiny $\cA_{\msg}\setminus\cA_{\msgh}$};

\fill[fill opacity=1,fill=red!60!blue] (imh) ++ (.1,0) arc (0:360:.1) -- cycle;

\draw(imh) node[right] {\tiny $\cA_{\msgh}\setminus\cA_{\msg}$};

\fill[fill opacity=1,fill=red!60!blue] (immh) ++ (.1,0) arc (0:180:.1) -- cycle;
\fill[fill opacity=1,fill=green!60!olive] (immh) ++ (-.1,0) arc (180:360:.1) -- cycle;

\draw(immh) node[right] {\tiny $\cA_{\msgh}\cap\cA_{\msg}$};

\end{tikzpicture}}
\caption{To prove Lemma~\ref{lem:auth gaussian positive}, we consider a stochastic encoder $\tagenc$ for the authentication code that is the composition of an identification encoder $\idenc$ for the identity channel from~\cite[Proposition~1]{AhlswedeDIT89}, an appropriately chosen permutation $\permchoice$, and an encoder $g$ for a channel code with codewords of power $25\sigma^2$ and minimum distance at least $7\sigma\sqrt{\tl}$. The identification code guarantees a small overlap between $\cA_\msg$ (all codewords for $\msg$, denoted by $u^\tl(i)$'s), and $\cA_{\msgh}$ (all codewords for $\msgh$, denoted by $\hat{u}^\tl(i)$'s)  for every pair of messages $\msg\neq\msgh$. Next, we use a concentration argument over the random selection (under a uniform choice of $\permchoice$) of the set $\cA_\msg\setminus\cA_{\msgh}$ from the set of all codewords for $g$ that are not in $\cA_{\msgh}$ to argue that, w.h.p., under a fixed $\state^\tl$, at most a small fraction of codewords from $\cA_\msg\setminus\cA_{\msgh}$ lie within a distance $3\sigma\sqrt{t}$ of any codeword in $\cA_{\msgh}$. This shows a small misauthentication probability for a fixed $\msg,\msgh$, and $\state^\tl$. Finally, we take a union bound to show that this also holds for all $\msg$, $\msgh$, and a sufficiently dense set of adversarial state vectors.}\label{fig:auth gaussian}
\end{figure}
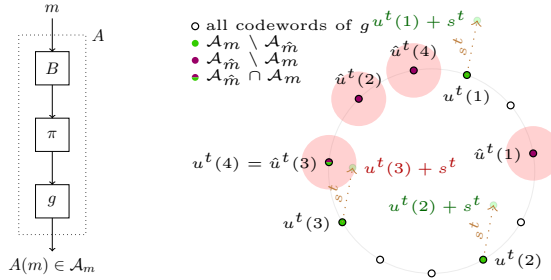

\begin{appendices}
\section{Authentication over DM-AVCs: capacity characterization (Proof of Theorem~\ref{thm: auth capacity})}\label{App:AuthAVC}
	Before proving this result, we restate the definition of an \emph{identification code} over an AVC from~\cite{BocheD:19}.
\begin{defn}[Identification Code]\label{def:ident} An $(\nummsg,\tl)$-identification code $(\idenc,\iddec)$ for an arbitrarily varying channel \tagchannel\ with state space $\cS$ consists of
\begin{enumerate}
\item an index set $\msgset=\set{1,2,\ldots,\nummsg}$,
\item a stochastic encoder $\idenc:\msgset\to\cU^\tl$, and
\item a decoder $\iddec:\msgset\times \cV^\tl\to\set{\accept,\reject}$.
\end{enumerate}
\end{defn}

The {\em misidentification probability} $P_{e,id}(\idenc,\iddec)$ for an identification code $(\idenc,\iddec)$ is the maximum, over all channel state sequences and messages, of $P_{\reject}$, the probability of incorrectly rejecting the true message and  $P_{\accept}$, the probability of incorrectly accepting any given message other than the true message, \emph{i.e.},
\begin{align*}
&P_{\reject} \defineqq \max_{\msg\in\msgset,\state^\tl\in\cS^\tl}\Prob\left(\iddec(\msg,\vecrV)=\reject\vert \State^\tl=\state^\tl,\Msg=\msg\right), \label{eq:pr}\\
&P_{\accept} \defineqq \max_{\msg\in\msgset,\state^\tl\in\cS^\tl,\msgh\neq\msg} \Prob\left(\iddec(\msgh,\vecrV)=\accept\vert \State^\tl=\state^\tl,\Msg=\msg\right),\mbox{ and}\\
&P_{e,id}\defineqq \max\set{P_{\reject},P_{\accept}}.
\end{align*}
\begin{defn}[Achievable Rates, Identification Capacity]\label{def:idcap} We say that $R$ is an achievable identification rate for the AVC $\tagchannel$ if there exists a sequence of  codes $\set{(\idenc^{(\tl)},\iddec^{(\tl)})}_{\tl=1}^\infty$ such that for each $\tl$, $(\idenc^{(\tl)},\iddec^{(\tl)})$ is a $(\lfloor 2^{2^{\tl R}}\rfloor,\tl)$-identification code with the identification error probabilities satisfying $\lim_{\tl\to\infty} P_{e,id}(\idenc^{(\tl)},\iddec^{(\tl)})=0$. Finally, we define the {\em identification capacity} of the AVC \tagchannel\ as the supremum over all achievable identification rates.	
\end{defn}

\begin{proof}[Proof of Theorem~\ref{thm: auth capacity}]For the converse, we note that any $(\nummsg,\tl)$-authentication code $(\tagenc,\tagdec)$ for the AVC \tagchannel\ is also an $(\nummsg,\tl)$-identification code for the nominal channel $W_{\rV|\tU,\State}(.|.,\state_0)$ such that $P_{e,\mathrm{id}}(\tagenc,\tagdec)\leq P_{e,\mathrm{auth}}(\tagenc,\tagdec)$. Thus, the authentication capacity for the the AVC \tagchannel\ is bounded from above by the identification capacity of the the nominal channel $W_{\rV|\tU,\State}(.|.,\state_0)$. By~\cite{AhlswedeDIT89}, the latter equals $C(\state_0)$. 
For $u^t\in \mathcal{U}^t$ and $m\in \mathcal{M}$, define the distribution $Q_{U^t|M}(u^t|m)\defineqq\Prob(\tagenc(m)=u^t)$. When the channel is overwritable, suppose some $R>0$ is achievable. Let $m,m'\in \mathcal{M}, m\neq m'$ be two distinct messages. Consider the adversarial strategy  $P_{S^t}(s^t) = \sum_{\tilde{u}^t\in\mathcal{U}^t}Q_{U^t|M}(\tilde{u}^t|m')\prod_{i=1}^{t}P_{S|U'}(s_i|\tilde{u}_i)$. Here, $s_i$ and $\tilde{u}_i$ are $i^{\text{th}}$ elements of the $s^n$ and $\tilde{u}^n$ sequences respectively and $P_{S|U'}$ satisfies \eqref{eq:overwritable}.

Let $P_{\accept,m\rightarrow m'}$ be the probability of accepting $m'$ under the adversarial attack described above when $m$ was actually sent  and $P_{\reject,m}$ be the probability of rejecting $m$, when $m$ was sent and there was no attack. For $m\in \mathcal{M}$, let $\alpha^{-1}(\accept,m) \defineqq \{v^t\,:\,\alpha(m,v^t)=\accept\}$. Then, 
\begin{align*}
P_{\accept,m\rightarrow m'} &= \sum_{u^t\in\mathcal{U}^t}Q(u^t|m)\sum_{s^t\in \mathcal{S}^t}P_{S^t}(s^t)\sum_{v^t\in\alpha^{-1}(\accept,m')}\prod_{i=1}^t W(v_i|u_i,s_i)\\
&=\sum_{u^t\in\mathcal{U}^t,v^t\in\alpha^{-1}(\accept,m')}Q(u^t|m)\sum_{\tilde{u}^t\in\mathcal{U}^t}Q(\tilde{u}^t|m')\sum_{s^t\in \mathcal{S}^t}\prod_{i=1}^{t}P_{S|U'}(s_i|\tilde{u}_i) W(v_i|u_i,s_i)\\
&\stackrel{\text{(a)}}{=}\sum_{u^t\in\mathcal{U}^t,v^t\in\alpha^{-1}(\accept,m')}Q(u^t|m)\sum_{\tilde{u}^t\in\mathcal{U}^t}Q(\tilde{u}^t|m')\prod_{i=1}^{t}W(v_i|\tilde{u}_i,s_0)\\
&=\sum_{\tilde{u}^t\in\mathcal{U}^t,v^t\in\alpha^{-1}(\accept,m')}Q(\tilde{u}^t|m')\prod_{i=1}^{t}W(v_i|\tilde{u}_i,s_0)\\
&=1-P_{\reject,m'},
\end{align*}
where (a) follows from \eqref{eq:overwritable}.
This implies that $P_{\reject,m'} + P_{\accept,m\rightarrow m'}=1$. Hence, $P_{\reject} + P_{\accept}\geq1.$ This  implies that $\nummsg\leq 1$ for any $(\nummsg,\tl)$-authentication code with $P_{e,\auth}(A,\alpha)<\frac{1}{2}$. Therefore, the authentication capacity of \tagchannel\ is zero.

Next, we show the achievability when \tagchannel\ is not overwritable by constructing a sequence of authentication codes for large enough blocklengths. Our construction consists of an identification code for an identity channel that is sent using an authenticated communication code. 

Let $R<C(\state_0)$, $\epsilon\in[0,1]$, and ${\tl}^*$ be large enough. Lemma~\ref{lemma:authCOMavgTOstoc} and \cite[Proposition~1]{AhlswedeDIT89} guarantee that the following hold for every $\tl>\tl^*$. Let $\tilde{\tl}=\lceil R\tl\rceil$ and let $\delta>0$.
\begin{itemize}
\item There exists a (stochastic) $(2^{2^{\tilde{\tl}-\delta}},\tilde{\tl})$-identification code $(\idenc,\iddec)$ for the identity channel with $P_{e,id}\leq \epsilon/2$. 
\item There exits a (stochastic) $(2^{\tilde{\tl}},\tl)$-authenticated communication code  $(F,\phi)$ for the channel \tagchannel\ with the nominal state $\state_0$ and $e(F,\phi)\leq\epsilon/2$ (see Appendix~\ref{App:AVC_stoc_det} for the definition of an authenticated communication code and its error probability). 
\end{itemize}
Consider the stochastic encoder $\tagenc:\set{0,1,\ldots,2^{2^{\tilde{\tl}-\delta}}}\to\cU^{\tl}$ and the decoder $\tagdec:\set{0,1,\ldots,2^{2^{\tilde{\tl}-\delta}}}\times\cV^{\tl}\to\set{\accept,\reject}$, where, for every $\msg,\msgh\in\set{0,1,\ldots,2^{2^{\tilde{\tl}-\delta}}}$, $u^t\in\mathcal{U}^t$, and $\rv^\tl\in\cV^\tl$, 
 \begin{align*}
 	&\Prob_{\tagenc}\left(\tagenc(\msg)=u^t\right)=\Prob_{F,\idenc}\left(F(\idenc(\msg))=\tu^{\tl}\right),\mbox{ and}\\
 	&\tagdec(\msgh,\rv^\tl)=\begin{cases}\accept &\mbox{if } \phi(\rv^\tl)\neq\bot\mbox{ and }\iddec(\msgh,\phi(\rv^\tl))=\accept\\ \reject & \mbox{otherwise.}\end{cases}
 \end{align*}
We claim that $(\tagenc,\tagdec)$ is a $(2^{2^{\tilde{\tl}-\delta}},\tl)$-authentication code for  the channel \tagchannel\ with the nominal state $\state_0$ and $P_{e,\auth}(\tagenc,\tagdec)\leq\epsilon$. Note that as $\epsilon$ and $\delta$ can be made arbitrarily small, showing the above amounts to proving the achievability all all rates below $C(\state_0)$.

To this end, let $\Msg$, $\tilde{U}^{\ttl}$ and $\vectU$, be the random variables denoting the message, the stochastic output of the identification encoder $B$ upon input $\Msg$, and the stochastic output of the encoder $F$ upon input $\tilde{U}^{\ttl}$, respectively. For any fixed $\state^\tl\in\State^\tl$, let $\vecrV$ denote the random vector corresponding to the channel output when the channel input is $\vectU$. 

We first analyze the probability that $(\tagenc,\tagdec)$ rejects the true message under no attack. Let $\cE_1\defineqq\set{\phi(\vecrV)\neq\tilde{U}^{\ttl}}$ denote the event that the authenticated communication code  makes a decoding error. Note that when there is no attack, the probability of the event $\cE_1$ is no larger than $\epsilon/2$ by our choice of $(F,\phi)$. Further when under $\cE_1^c$, $\tagdec(\msgh,\vecrV)=\iddec(\msgh,\tilde{U}^{\ttl})$ for every $\msgh\in\set{0,1,\ldots,2^{2^{\tilde{\tl}-\delta}}}$. Using these, we note that, for any $\msgh\in\set{0,1,\ldots,2^{2^{\tilde{\tl}-\delta}}}$, 
\begin{align}
&\Prob_{\tagenc,W}\left(\tagdec(\msgh,\vecrV)=\reject\vert \State^\tl=\state_0^\tl,\Msg=\msg\right)	\\
& = \Prob_{\tagenc,W}\left(\set{\tagdec(\msgh,\vecrV)=\reject},\cE_1\vert \State^\tl=\state_0^\tl,\Msg=\msg\right)+\Prob_{\tagenc,W}\left(\set{\tagdec(\msgh,\vecrV)=\reject},\cE_1^c\vert \State^\tl=\state_0^\tl,\Msg=\msg\right)\\
&\leq \Prob_{F,W}\left(\cE_1\vert \State^\tl=\state_0^\tl,\Msg=\msg\right)+\Prob_{\idenc,F,W}\left(\tagdec(\msgh,\vecrV)=\reject\vert\cE_1^c, \State^\tl=\state_0^\tl,\Msg=\msg\right)\\
&= \Prob_{F,W}\left(\cE_1\vert \State^\tl=\state_0^\tl,\Msg=\msg\right) + \Prob_{\idenc}\left(\iddec(\msgh,\tilde{U}^{\ttl})=\reject\vert\Msg=\msg\right)\\
& \leq \epsilon/2 + \Prob_{\idenc}\left(\iddec(\msgh,\tilde{U}^{\ttl})=\reject\vert\Msg=\msg\right).\label{eq:auth reject upper bd}\\
\intertext{Recall that, when $\msgh$ equals $\msg$, by our choice of $(\idenc,\iddec)$, $\Prob_{\idenc}\left(\iddec(\msg,\tilde{U}^{\ttl})=\reject\vert\Msg=\msg\right)\leq\epsilon/2$. Thus, we have}
&\Prob_{\tagenc,W}\left(\tagdec(\msg,\vecrV)=\reject\vert \State^\tl=\state_0^\tl,\Msg=\msg\right)\leq \epsilon.
\end{align}
Next, we analyze the probability that $(\tagenc,\tagdec)$ accepts an incorrect message under any state vector. Let $\cE_\bot\defineqq \set{\phi(\vecrV)=\bot}$. Note that $\cE_\bot\subseteq\cE_1$ and by our choice of $(F,\phi)$, the probability of the event $\cE_1\setminus\cE_\bot$ is no larger than $\epsilon/2$ for any state vector. Further, when under $\cE_\bot$, $\tagdec(\msgh,\vecrV)$ equals $\reject$ with probability one. Thus,  for any $\msg,\msgh\in\set{0,1}^{\tilde{\tl}}$ and $\state^\tl\in\cS^\tl$, we have
\begin{align}
&\Prob_{\tagenc,W}\left(\tagdec(\msgh,\vecrV)=\accept\vert \State^\tl=\state^\tl,\Msg=\msg\right)\\
&=\Prob_{\tagenc,W}\left(\set{\tagdec(\msgh,\vecrV)=\accept},\cE_\bot\vert \State^\tl=\state^\tl,\Msg=\msg\right)+ \Prob_{\tagenc,W}\left(\set{\tagdec(\msgh,\vecrV)=\accept},\cE_1\setminus\cE_\bot\vert \State^\tl=\state^\tl,\Msg=\msg\right)\\
&\qquad+ \Prob_{\tagenc,W}\left(\set{\tagdec(\msgh,\vecrV)=\accept},\cE_1^c\vert \State^\tl=\state^\tl,\Msg=\msg\right)\\
& = 0 + \Prob_{\tagenc,W}\left(\set{\tagdec(\msgh,\vecrV)=\accept},\cE_1\setminus\cE_\bot\vert \State^\tl=\state^\tl,\Msg=\msg\right) + \Prob_{\tagenc,W}\left(\set{\tagdec(\msgh,\vecrV)=\accept},\cE_1^c\vert \State^\tl=\state^\tl,\Msg=\msg\right)\\
&\leq  \Prob_{F,W}\left(\cE_1\setminus\cE_\bot\vert \State^\tl=\state^\tl,\Msg=\msg\right) + \Prob_{\idenc,F,W}\left(\tagdec(\msgh,\vecrV)=\accept\vert\cE_1^c, \State^\tl=\state^\tl,\Msg=\msg\right)\\
& = \Prob_{F,W}\left(\cE_1\setminus\cE_\bot\vert \State^\tl=\state^\tl,\Msg=\msg\right) + \Prob_{\idenc}\left(\iddec(\msgh,\tilde{U}^{\ttl})=\accept\vert\Msg=\msg\right)\\
&\leq \epsilon/2 + \Prob_{\idenc}\left(\iddec(\msgh,\tilde{U}^{\ttl})=\accept\vert\Msg=\msg\right)\\
\intertext{When $\msgh\neq\msg$, by our choice of $(\idenc,\iddec)$, $\Prob_{\idenc}\left(\iddec(\msgh,\tilde{U}^{\ttl})=\accept\vert\Msg=\msg\right)\leq\epsilon/2$. Thus, we have}
&\Prob_{\tagenc,W}\left(\tagdec(\msgh,\vecrV)=\accept\vert \State^\tl=\state^\tl,\Msg=\msg\right)\leq \epsilon.
\end{align}
This shows that $\peauth(\tagenc,\tagdec)\leq\epsilon$.

\end{proof}

\section{Authenticated Communication over Multiple Access Channels (Proof of Theorem~\ref{thm:authMAC})}\label{App:authMAC}

\begin{proof}[Proof of Theorem~\ref{thm:authMAC}]
{\em Converse: }Suppose user~1 can overwrite user~2. Let, if possible, for some $N_2>1$, there exists an $(N_1,N_2,n)$ authenticated communication code with vanishing error probabilities. Define $Q_1(x^n|m_1) \defineqq \Prob(F_1(m_1) = x^n),\,m_2\in \mathcal{M}_2$ and $Q_2(y^n|m_2) \defineqq \Prob(F_2(m_2) = y^n),\,m_2\in\mathcal{M}_2$.
 Let $m_1\in \mathcal{M}_1,\, m_2, m'_2\in \mathcal{M}_2$ such that $m_2\neq m'_2$. Consider an adversarial strategy $\tilde{P}_{X^n}$ where user~1 overwrites any message sent by user 2 by an encoding of $m'_2$ while pretending to send message $m_1$, i.e., for $\tilde{x}^n \in \mathcal{X}^n,$
\begin{align}\label{eq:attack_MAC}
\tilde{P}_{X^n}(\tilde{x}^n) = \sum_{x^n\in\mathcal{X}^n, y^n\in\mathcal{Y}^n}Q_1(x^n|m_1)Q_2(y^n|m'_2)\left(\prod_{i=1}^{n}P_{X'|X,Y}\left(\tilde{x}_i|x_i, y_i\right)\right).
\end{align}
where $\tilde{x}_i, x_i$ and  $y_i$ are $i^{\text{th}}$ elements of the $\tilde{x}^n, x^n$ and $y^n$ sequences respectively and $P_{X'|X,Y}$ satisfies \eqref{eq1}. Let  $P_{e,\malone,m_1,m_2\rightarrow m'_2}$ be the probability of accepting $(m_1,m'_2)$ when user 2 sends $m_2$ and user 1 sends an $\tilde{x}^n$ sequence according to the distribution given by~\eqref{eq:attack_MAC}. Let $P_{e,\na,m_1,m_2}$ denote the probability of rejecting $(m_1,m_2)$ when $(m_1,m_2)$ was sent and there was no attack. Define $\phi^{-1}(m_1, m'_2) \defineqq \{z^n\,:\, \phi(z^n) = (m_1,m'_2)\}$ and $W^n(z^n|\tilde{x}^n, \tilde{y}^n)\defineqq \prod_{i=1}^{n}W(z_i|x_i,y_i)$ where $z_i$ is the $i^{\text{th}}$ element of $z^n$. Then, 
\begin{align*}
&P_{e,\malone,m_1,m_2\rightarrow m'_2} = \sum_{\tilde{x}^n\in \mathcal{X}^n, \tilde{y}^n\in \mathcal{Y}^n}\tilde{P}_{X^n}(\tilde{x}^n)Q_2(\tilde{y}^n|m_2)\Prob\left(\left(\hat{M}_1,\hat{M}_2\right) = \left(m_1,m'_2\right)\left|X^n = \tilde{x}^n, Y^n = \tilde{y}^n\right.\right)\\
&\qquad=  \mathop{\mathop{\sum}_{\tilde{x}^n\in \mathcal{X}^n, \tilde{y}^n\in \mathcal{Y}^n}}_{z^n \in \phi^{-1}(m_1, m'_2)}\tilde{P}_{X^n}(\tilde{x}^n)Q_2(\tilde{y}^n|m_2)W^n(z^n|\tilde{x}^n, \tilde{y}^n)\\
&\qquad= \mathop{\mathop{\sum}_{\tilde{x}^n\in \mathcal{X}^n, \tilde{y}^n\in \mathcal{Y}^n}}_{z^n \in \phi^{-1}(m_1, m'_2)}Q_2(\tilde{y}^n|m_2)\sum_{x^n\in\mathcal{X}^n, y^n\in\mathcal{Y}^n}Q_1(x^n|m_1)Q_2(y^n|m'_2)\left(\prod_{i=1}^{n}P_{X'|X,Y}\left(\tilde{x}_i|x_i, y_i\right)W(z_i|\tilde{x}_i, \tilde{y}_i)\right)\\
&\qquad=\mathop{\mathop{\sum}_{z^n \in \phi^{-1}(m_1, m'_2)}}_{\tilde{y}^n\in \mathcal{Y}^n}Q_2(\tilde{y}^n|m_2)\sum_{x^n\in\mathcal{X}^n, y^n\in\mathcal{Y}^n}Q_1(x^n|m_1)Q_2(y^n|m'_2)\left(\prod_{i=1}^{n}\left(\sum_{\tilde{x}_i\in \mathcal{X}}P_{X'|X,Y}\left(\tilde{x}_i|x_i, y_i\right)W(z_i|\tilde{x}_i, \tilde{y}_i)\right)\right)\\
&\qquad\stackrel{\text{(a)}}{=}\mathop{\mathop{\sum}_{z^n \in \phi^{-1}(m_1, m'_2)}}_{\tilde{y}^n\in \mathcal{Y}^n}Q_2(\tilde{y}^n|m_2)\sum_{x^n\in\mathcal{X}^n, y^n\in\mathcal{Y}^n}Q_1(x^n|m_1)Q_2(y^n|m'_2)W^n(z^n|x^n,y^n)\\
&\qquad=\sum_{x^n\in\mathcal{X}^n, y^n\in\mathcal{Y}^n,z^n \in \phi^{-1}(m_1, m'_2)}Q_1(x^n|m_1)Q_2(y^n|m'_2)W^n(z^n|x^n,y^n)\\
&\qquad= 1- P_{e,\na,m_1,m'_2},
\end{align*}
where (a) follows from \eqref{eq1}. This implies that $P_{e,\na,m_1,m'_2} + P_{e,\malone,m_1,m_2\rightarrow m'_2}=1$. Hence, we get $P_{e,\na}+P_{e,\malone}\geq 1$ whenever $N_2>1$. Thus, $N_2 = 1$ and $\mathcal{R}_{\authcomm} \subseteq  \{(R,0)\,|\,(R,0)\in \mathcal{R}_{\nonadv}\}$. Similarly, when user~2 can overwrite user~1, $\mathcal{R}_{\authcomm} \subseteq \{(0,R)\,|\,(0,R)\in \mathcal{R}_{\nonadv}\}$. This also implies that when both users can overwrite each other, $\mathcal{R}_{\authcomm} = \{(0,0)\}.$ \\

{\em Achievability: }
Suppose neither user can overwrite the other. The scheme when one user can overwrite the other, but not vice versa, follows similarly. Our achievability scheme consists of three phases: a DM-MAC code (of blocklength $n$, say) with {\em stochastic} encoders for non-adversarial users with maximum probability of error vanishing as $n\rightarrow\infty$; this is followed by two authentication phases, one for each user.
While authenticating user~2, we consider the point-to-point channel from user~2 to the receiver and treat user~1 as possibly adversarial. For this AVC, user~1's alphabet corresponds to the set of adversarial states. The following lemma shows that if user~1 cannot overwrite user~2 (i.e., equation~\eqref{eq1} does not hold), this AVC has a ``\na'' state ($x_0$ in the lemma) such that it is not overwritable (i.e., equation~\eqref{eq:overwritable} does not hold). This will allow us to use the authentication codes  from Section~\ref{subsec:authAVC} with blocklength $t={\mathcal O}(\log{n})$.
\begin{duplicatelemma}[\ref{lemma:1}]
When user~1 cannot overwrite user~2, there exists $x_0\in \mathcal{X}$ such that for all $P_{X'|Y}$, there exists $(y,y',z)\in \mathcal{Y}\times\mathcal{Y}\times\mathcal{Z}$, that satisfies 
\begin{align*}
\sum_{x'\in\mathcal{X}}P_{X'|Y}(x'|y)W(z|x',y')\neq W(z|x_0,y). \tag*{\eqref{eq:overwrite lemma}}
\end{align*}
\end{duplicatelemma}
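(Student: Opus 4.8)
The plan is to prove the contrapositive. Concretely, I would assume that the conclusion of the lemma fails — i.e., that for \emph{every} $x\in\mathcal{X}$ there is a conditional distribution $P^{(x)}_{X'|Y}$ on $\mathcal{X}$ with
\[
\sum_{x'\in\mathcal{X}}P^{(x)}_{X'|Y}(x'|y)\,W(z|x',y') = W(z|x,y)\qquad\text{for all } y,y'\in\mathcal{Y},\ z\in\mathcal{Z}
\]
— and then exhibit a kernel $P_{X'|X,Y}$ satisfying~\eqref{eq1}, contradicting the standing hypothesis that user~1 cannot overwrite user~2. Note that the displayed family is exactly what the negation of the lemma gives: "no good $x_0$ exists'' unpacks to "for every $x_0$ there is some $P_{X'|Y}$ making~\eqref{eq:overwrite lemma} fail for \emph{all} $(y,y',z)$'', i.e., for which the displayed equality holds identically.

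The key observation driving the construction is that the overwriting condition~\eqref{eq1} permits its kernel $P_{X'|X,Y}$ to depend on $x$ (the symbol user~1 pretends to send), whereas the kernel $P_{X'|Y}$ appearing in~\eqref{eq:overwrite lemma} does not. So, having collected the family $\{P^{(x)}_{X'|Y}\}_{x\in\mathcal{X}}$ above, I would simply set $P_{X'|X,Y}(x'\mid x,y)\defineqq P^{(x)}_{X'|Y}(x'\mid y)$. For each fixed $x$ this is a genuine probability distribution over $x'$, so $P_{X'|X,Y}$ is a valid channel; plugging it into the left side of~\eqref{eq1} and invoking the displayed identity gives $\sum_{x'}P_{X'|X,Y}(x'|x,y)W(z|x',y')=W(z|x,y)$ for all $x,y,y',z$, which is precisely~\eqref{eq1}. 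That completes the contradiction, and hence some $x_0$ with the desired property must exist. Separately (and this is the observation already used right after the lemma in the main text), once such an $x_0$ is fixed, taking the degenerate $P_{X'|Y}(x_0|y)=1$ shows the nominal channel $W(\cdot|x_0,\cdot)$ is not a constant in its second argument, so it has positive capacity — but strictly speaking this is a remark beyond the lemma itself.

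I do not expect a genuine technical obstacle here — the argument is essentially quantifier bookkeeping, so the "hard part'' is really just keeping the quantifier order straight. The one place to be careful is exactly that order: the witness $P_{X'|Y}$ in the negation is allowed to depend on $x_0$, and it is precisely this permitted dependence that lets us stitch the slices $P^{(x)}_{X'|Y}$ together into the single $x$-indexed kernel required by~\eqref{eq1}; if one insisted on an $x_0$-independent witness the argument would collapse, which is consistent with the deliberate asymmetry between~\eqref{eq1} and~\eqref{eq:overwrite lemma} and is what makes the lemma worth stating.
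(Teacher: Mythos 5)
Your proof is correct and is essentially the paper's own argument: the paper also proves the contrapositive, observing that if no good $x_0$ exists then the $x$-indexed family of witnesses $P^{(x)}_{X'|Y}$ can be stitched into a single kernel $P_{X'|X,Y}(x'|x,y)=P^{(x)}_{X'|Y}(x'|y)$ satisfying \eqref{eq1}. Your write-up just spells out the quantifier bookkeeping that the paper leaves implicit.
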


\begin{proof}
If user~1 cannot overwrite user~2, then from \eqref{eq1} it follows that for all $P_{X'|X,Y}$, there exist $y,y'\in \mathcal{Y},\, x\in \mathcal{X} \text{ and } z\in \mathcal{Z}$ such that
\begin{equation*}
\sum_{x'}P_{X'|X,Y}(x'|x,y)W(z|x',y') \neq W(z|x,y).
\end{equation*}
This implies that there is a symbol $x_0\in \mathcal{X}$ such that for all $P_{X'|Y}$, there are $y,y'\in \mathcal{Y}, z\in \mathcal{Z}$ such that 
\begin{equation}\label{appendix:eq3}
\sum_{x'}P_{X'|Y}(x'|y)W(z|x',y')\neq W(z|x_0,y).
\end{equation}
If not, then, for every $x\in\mathcal{X}$, we can find an attack distribution $P_{X'|X=x,Y}$ which satisfies \eqref{eq1}. 
\end{proof}
Also, note that the resulting nominal channel $W(.|x_0,.)$ has non-zero capacity. If not, consider a distribution $P_{X'|Y}$ in \eqref{eq:overwrite lemma} such that $P_{X'|Y}(x_0|y) = 1$, for all $y\in{\mathcal Y}$. This implies that $W(z|x_0,y')\neq W(z|x_0,y)$ for some $z\in \mathcal{Z}$ and $y,y'\in \mathcal{Y}$.
Similarly, for the AVC from user~1 to the receiver with ${\mathcal Y}$ as the state alphabet, we can find a no-attack state $y_0\in \mathcal{Y}$ for which the AVC is non-overwritable. Our coding scheme is depicted in Fig.~\ref{fig:authcomMAC}. We will use $C(x_0)$ and $C(y_0)$ to refer to the capacities of $W_{Z|X,Y}(.|x_0,.)$ and $W_{Z|X,Y}(.|.,y_0)$ respectively. 

Consider a rate pair $(R_1,R_2)\in \mathcal{R}_{\nonadv}$. For any $\delta, \epsilon>0$, we will show that there exists an $(N_1,N_2,n)$ authenticated communication code $(F_1,F_2,\phi) $ with $\lim_{n\rightarrow\infty}\frac{\log{N_1}}{n}\geq\frac{R_1-\delta}{1+2\delta}$, $\lim_{n\rightarrow\infty}\frac{\log{N_2}}{n}\geq\frac{R_1-\delta}{1+2\delta}$ and $P_e(F_1,F_2,\phi)\leq \epsilon$. We start by noting that for all large enough numbers, say $n\geq n'$, there exist $(2^{n(R_1-\delta)},2^{n(R_2-\delta)},n)$ stochastic codes $(F_{1,\text{MAC}},F_{2,\text{MAC}},\phi_{\text{MAC}})$ for the non-adversarial DM-MAC such that the maximum probability of error is bounded above by $\frac{\epsilon}{2}$.
 Fix $0<r< \min\{C(x_0),C(y_0)\}$ and $a>\frac{1}{r}$. Let $t\defineqq a \log{n}$. Theorem~\ref{thm: auth capacity} implies that for all large numbers, say $n\geq n''$, there exist $\left(2^{2^{tr}}, t\right)$ authentication codes $(A_{1},\alpha_1)$ for the adversarial AVC $W_{Z|X,Y}$ with $\mathcal{Y}$ as the set of states and $y_0$ as the \na state such that $\Prob_{e,\text{auth}}(A_1,\alpha_1) \leq \frac{\epsilon}{4}$. Similarly, for all large numbers, say $n\geq n'''$, there exist $\left(2^{2^{tr}}, t\right)$ authentication codes $(A_{2},\alpha_2)$ for the adversarial AVC $W_{Z|X,Y}$ with $\mathcal{X}$ as the set of states and $x_0$ as the \na state such that $\Prob_{e,\text{auth}}(A_2,\alpha_2)\leq \frac{\epsilon}{4}$. 
Choose $n>\max{\{n',n'', n'''\}}$ which also satisfies $2^{tr}>\max{\{nR_1,nR_2\}}$ and $t<\delta n$.\\
{\em Encoding.} For $(m_1,m_2)\in \mathcal{M}_1\times\mathcal{M}_2$, the encoders are defined as 
\begin{align*}
F_1\left(m_1\right) &\defineqq \left(F_{1,\text{MAC}}\left(m_1\right),F_{1,\text{auth}}\left(m_1\right),x_0^{t}\right),\\
F_2\left(m_2\right) &\defineqq \left(F_{2,\text{MAC}}\left(m_2\right),y_0^{t},F_{2,\text{auth}}\left(m_1\right)\right),
\end{align*}
where $x_0^{t}$ and $y_0^{t}$ are $t$ length sequences of $x_0$ and $y_0$ respectively. \\ 
{\em Decoding.} 
Let $(\tilde{m}_1,\tilde{m}_2) = \phi_{\text{MAC}}(z^n)$. The decoder is defined as 
 \begin{displaymath}
\phi\left(z^{n+2t}\right)=
\left\{\begin{array}{ll}
		\phi_{\text{MAC}}\left(z^n\right),&\alpha_1\left(\tilde{m}_1,z_{n+1}^{n+t}\right) = \accept \text{ and }\\& \alpha_2\left(\tilde{m}_2,z_{n+t+1}^{n+2t}\right) = \accept, \\
		\left(\bot,\bot\right),&\text{otherwise.}
	\end{array}
\right.
\end{displaymath} 
{\em Analysis of probability of error.} 
Let $(\tilde{M}_1,\tilde{M}_2) = \phi_{\text{MAC}}(Z^n)$. Under {\em no attack}, for all $(m_1,m_2) \in \mathcal{M}_2\times\mathcal{M}_2$,
\begin{align*}
&\Prob\left(\phi\left(Z^{n+2t}\right)\neq\left(m_1,m_2\right)\left|M_1 = m_1,M_2 =m_2\right.\right) =\Prob\left(\phi\left(Z^{n+2t}\right)\neq\left(m_1,m_2\right),\left(\tilde{M}_1,\tilde{M}_2 \right)\neq \left(m_1,m_2\right)\left|M_1 = m_1,M_2 =m_2\right.\right)\\
&\qquad\quad\quad+\Prob\left(\phi\left(Z^{n+2t}\right)\neq\left(m_1,m_2\right),\left(\tilde{M}_1,\tilde{M}_2 \right)= \left(m_1,m_2\right)\left|M_1 = m_1,M_2 =m_2\right.\right) \\
&\qquad\leq\Prob\left(\left(\tilde{M}_1,\tilde{M}_2 \right)\neq \left(m_1,m_2\right)\left|M_1 = m_1,M_2 =m_2\right.\right)+ \Prob \left(\left\{\alpha_1\left(m_1,Z_{n+1}^{n+t}\right)=\reject\right\}\left|M_1 = m_1\right.\right)\\
&\qquad\qquad+\Prob \left(\left\{\alpha_2\left(m_2,Z_{n+t +1}^{n+2t}\right)=\reject\right\}\left|M_2 = m_2\right.\right) \\
&\qquad\leq \frac{\epsilon}{2}+\frac{\epsilon}{4}+\frac{\epsilon}{4}\\
&\qquad\leq \epsilon
\end{align*} 
Under {\em attack by user~1}, for all $m_2\in \mathcal{M}_2$ and $x^{n+2t}\in \mathcal{X}^{n+2t}$,
\begin{align*}
&\Prob\left(\hat{M}_2\notin\left\{m_2,\bot\right\}\left|M_2 = m_2,X^{n+2t} =x^{n+2t}\right.\right)=\Prob\left(\tilde{M}_2=m_2,\hat{M}_2\notin\left\{m_2,\bot\right\}\left|M_2 = m_2,X^{n+2t} =x^{n+2t}\right.\right)\\
&\qquad\qquad+\Prob\left(\tilde{M}_2\neq m_2,\hat{M}_2\notin\left\{m_2,\bot\right\}\left|M_2 = m_2,X^{n+2t} =x^{n+2t}\right.\right)\\
&\qquad\leq0+\Prob\left(\alpha_2\left(\tilde{M}_2,Z_{n+t+1}^{n+2t}\right)=\accept\left|M_2 = m_2,X^{n+2t} =x^{n+2t},\tilde{M}_2\neq m_2\right.\right)\\
&\qquad\leq\frac{\epsilon}{4}.
\end{align*}
Similarly, under {\em attack by user~2}, for all $m_1\in \mathcal{M}_1$ and $y^{n+2t}\in \mathcal{Y}^{n+2t}$,
\begin{align*}
\Prob\left(\hat{M}_1\notin\left\{m_1,\bot\right\}\left|M_1 = m_1,Y^{n+2t} =y^{n+2t}\right.\right)
\leq\frac{\epsilon}{4}.
\end{align*}

Thus, $P_{e}(F_1,F_2,\phi) \leq \epsilon$. Let $\{\epsilon_k \}_{k=1}^{\infty}$ be a decreasing sequence of positive numbers 
with $\lim_{k\rightarrow \infty}\epsilon_k\rightarrow 0$. For each $\epsilon_k,\,k \in\{1,2,\ldots\}$, we can find an $n$ such
 that there exists a $(2^{n(R_1-\delta)},2^{n(R_2-\delta)},n+2t)$ authenticated communication code $(F_1^{n+2t},F_2^{n+2t},\phi^{n+2t})$ for the \mac \mch such that $P_{e}(F_1^{n+2t},F_2^{n+2t},\phi^{n+2t}) \leq \epsilon_k$. Thus, we can find a sequence of $(2^{n(R_1-\delta)},2^{n(R_2-\delta),n+2t})$ codes with $\lim_{n\rightarrow \infty}P_{e}(F_1^{n+2t},F_2^{n+2t},\phi^{n+2t})\rightarrow 0$. In fact, it is not difficult to see that we can find a sequence of $(2^{n\frac{R_1-\delta}{1+2 \delta}},2^{n\frac{R_2-\delta}{1+2 \delta}},n)$ authenticated communication codes $(F_1^{(n)},F_2^{(n)},\phi^{(n)})$ with $P_{e}(F_1^{(n)},F_2^{(n)},\phi^{(n)})\rightarrow 0$ as $n$ goes to infinity\footnote{First note that to achieve the given rates and error probabilities, we may use MAC codes of blocklength $n, n+1, n+2, \ldots$ and authentication codes of tag length $t, t+1, t+2, \ldots$ (since if $r$ is such that $2^{2^{tr}}>2^{nR}, \, 2^{2^{(t+1)r}}>2^{2^{(n+1)R}}$ for large enough $n$).Thus, we get codes for every third integer after some $n+2t$. If we choose $n$ such that $2^{2^{tr}}> \max{\{ 2^{(n+2)R_1},2^{(n+2)R_2}\}}$, we can increase the blocklength of just the MAC code to $n+1$ and $n+2$ which fills in the gaps while ensuring the same error probability.}. This implies that the rate pair $\left(\frac{R_1-\delta}{1+2\delta},\frac{R_2-\delta}{1+2\delta}\right)$ is achievable. Since, $\delta$ is any positive number and $\mathcal{R}_{\authcomm}$ is the closure of the set of all achievable rates, we have $(R_1,R_2)\in\mathcal{R}_{\authcomm}$. Hence, $\mathcal{R}_{\authcomm} = \mathcal{R}_{\nonadv}$ when the channel is non-overwritable.  

\end{proof}

\section{Gaussian channels} 
\label{App:Gaussian}
{
\subsection{Positivity of authentication capacity for additive Gaussian AVCs}
In the following, we prove Lemma~\ref{lem:auth gaussian positive}. We show that when the maximum average input power allowed is $25\sigma^2$, positive authentication rates are achievable. Note that this also implies the positivity of authentication capacity when the input power constraint is smaller than $25\sigma^2$ -- one may start with a larger block so that we have enough energy budget to obtain at least $25\sigma^2$ average power on part of the block.
\begin{proof}[Proof of Lemma~\ref{lem:auth gaussian positive}]
In the following, we argue that, for any desired upper bound on the misauthentication probability, for every large enough blocklength $\tl$, there exist authentication codes with positive rate for the additive Gaussian AVC. To this end, we let $\epsilon\in(0,1/2)$,  $0<R<1/8$, $0<\tilde{R}<1$, $0<\delta<\min\set{\epsilon/10,1/\left(2^{1/2\epsilon}+1\right)}$, $\ttl=\lfloor R\tl\rfloor$,  $\nummsg=\lfloor 2^{2^{\tilde{R}\ttl}}\rfloor$, and $\msgset=[1:\nummsg]$. Consider a $(\nummsg,\tl)$-authentication code chosen according the following random procedure.
	
	First, let $(\idenc,\iddec)$ be a $(\nummsg,\ttl)$-identification code for the identity channel~\cite[Proposition~1]{AhlswedeDIT89} satisfying the following properties:
	\begin{itemize}
		\item[(A)] For each $\msg\in\msgset$, the output of the stochastic encoder $\idenc(\msg)$ is uniformly distributed over a set $\mathcal{B}_{\msg}\subset \set{0,1}^{\ttl}$ with $|\mathcal{B}_{\msg}|=\lfloor\delta2^{\ttl}\rfloor$.
		\item[(B)] The $P_{e,\mathrm{id}}(\idenc,\iddec)=\max_{\msg,\msgh\in\msgset:\msg\neq\msgh} |\mathcal{B}_{\msgh}\cap\mathcal{B}_{\msg}|/|\mathcal{B}_{\msg}|<\epsilon/2$.
	\end{itemize}
 Next, let $(g,\gamma)$ be a $(2^{\ttl},\tl)$-channel code for the additive Gaussian AVC satisfying the following:
	\begin{itemize}
		\item[(C)] For each $\tilde{u}\in\set{0,1}^{\ttl}$, $\norm{g(\tilde{u})}=5\sigma\sqrt{ \tl}$.
		\item[(D)] For each pair $\tilde{u},\tilde{u}'\in\set{0,1}^{\ttl}$ with $\tilde{u}\neq\tilde{u}'$, $\norm{g(\tilde{u})-g(\tilde{u}')}\geq 7\sigma\sqrt{\tl}$.
	\end{itemize}
 The existence of such a code is guaranteed by~\cite{Blachman62}. Note that property~(D) implies that, for large enough $\tl$, the maximal probability of decoding error for $(g,\gamma)$ is no larger than $\epsilon/2$.
Finally, let $\permchoice:\set{0,1}^{\ttl}\to\set{0,1}^{\ttl}$ be an appropriately chosen permutation. We comment on the choice of $\permchoice$ in the proof below. Once $\permchoice$ is chosen, it is fixed and is assumed to be known to all parties. 

Using the above, define the $(\nummsg,\tl)$-authentication code $(\tagenc,\tagdec)$ as follows.
\begin{itemize}
\item Let $\tagenc:\msgset\to\bbR^\tl$ be a stochastic encoder with $\tagenc=g\circ\permchoice\circ\idenc$, \emph{i.e.}, for every $\msg\in\msgset$ and $\vectu\in\bbR^t$, $\Prob_\tagenc(\tagenc(\msg)=\vectu)=\Prob_\idenc(g(\permchoice(\idenc(\msg)))=\vectu)$. For each $\msg\in\msgset$, let $\cA_\msg$ be the set of codewords that are output with non-zero probability when the message is $\msg$. From property~(A), note that $\tagenc(\msg)$ is uniformly distributed over $\cA_\msg$ and $|\cA_\msg|=\lfloor\delta 2^{\tl}\rfloor$. Let $\cA=\cup_{\msg\in\msgset}\cA_\msg$
\item Let $\tagdec:\msgset\times\bbR^\tl\to\set{\accept,\reject}$ be a deterministic decoder such that for every $\msgh\in\msgset$, and $\vecrv\in\bbR^\tl$, 
\begin{equation}
\tagdec(\msgh,\vecrv)=\begin{cases}\accept & \mbox{if } \exists\ \hat{u}^\tl\in\cA_{\msgh}\mbox{ s.t. }\norm{\vecrv-\hat{u}^\tl}<\frac{3}{2}\sigma\sqrt{\tl}\\ \reject&\mbox{otherwise.} \end{cases}	
\end{equation}

\end{itemize}

We first note that for the code $(\tagenc,\tagdec)$ defined above, 
\begin{align}
P_{\reject|s_0}(\tagenc,\tagdec)&\leq P_{\reject}(\idenc,\iddec)+P_e(g,\gamma)\\
&\leq 0+\frac{\epsilon}{2}\\
&=\frac{\epsilon}{2}.
\end{align}
Next, that there exists a $\permchoice$ such that we show that $\pwa(\tagenc,\tagdec)\leq \epsilon$. To this end, we use the probabilistic method by first replacing $\permchoice$ by a permutation $\perm$ drawn uniformly from the set of all permutations over $\set{0,1}^{\ttl}$. As with $\permchoice$, we assume that drawn permutation $\perm$ is known to all the parties, including an adversary. We argue that, with a high probability over the choice of $\perm$, $\pwa(\tagenc,\tagdec)\leq \epsilon$. To this end, for each $\msgh\in\msgset$ and $\state^\tl\in\cS^\tl$, let $\cD(\msgh,\state^\tl)\defineqq\set{\vectu\in\cA: \norm{\vectu+\state^\tl-\hat{\tu}^\tl}<3\sigma\sqrt{\tl}\mbox{ for some }\hat{\tu}^\tl\in\cA_{\msgh}}$.

Let $\msg,\msgh\in\msgset$ with $\msg\neq\msgh$ and let $\state^\tl\in\bbR^\tl$. We have

\begin{align}
&\Prob_{\Noisevec,\tagenc}\left(\tagdec\left(\msgh,\vecrV\right)=\accept|\Msg=\msg,\State^\tl=\state^\tl\right)\\
&= \Prob_{\Noisevec,\tagenc}\left(\tagdec\left(\msgh,\tagenc(\msg)+\state^\tl+\Noisevec\right)=\accept\right)\\
&= \frac{1}{|\cA_\msg|}\sum_{\vectu\in\cA_\msg} \Prob_{\Noisevec}\left(\tagdec\left(\msgh,\vectu+\state^\tl+\Noisevec\right)=\accept\right)\\
&=\frac{1}{|\cA_\msg|}\sum_{\vectu\in\cA_\msg\cap\cD(\msgh,\state^\tl)} \Prob_{\Noisevec}\left(\tagdec\left(\msgh,\vectu+\state^\tl+\Noisevec\right)=\accept\right)\\
&\  + \frac{1}{|\cA_\msg|}\sum_{\vectu\in\cA_\msg\setminus\cD(\msgh,\state^\tl)} \Prob_{\Noisevec}\left(\tagdec\left(\msgh,\vectu+\state^\tl+\Noisevec\right)=\accept\right)\\
&\leq \frac{\lvert\cA_\msg\cap\cD(\msgh,\state^\tl)\rvert}{\lvert\cA_\msg\rvert}\\
&\  + \frac{1}{|\cA_\msg|}\sum_{\substack{\vectu\in\cA_\msg\setminus\cD(\msgh,\state^\tl)\\ \hat{u}^\tl\in\cA_{\msgh}}} \Prob_{\Noisevec}\left(\norm{\vectu+\state^\tl+\Noisevec-\hat{u}^\tl}<\frac{3}{2}\sigma\sqrt{\tl}\right)\\
\intertext{Applying Proposition~\ref{prop:gaussian confusion set} to the first term, we obtain that, with probability at least $1- 2^{-\delta^3 2^{R\tl}/4}$ over the choice of $\perm$,}
&\Prob_{\Noisevec,\tagenc}\left(\tagdec\left(\msgh,\vecrV\right)=\accept|\Msg=\msg,\State^\tl=\state^\tl\right)\\
&\leq \epsilon/2+2\delta.\\
&\ +\frac{1}{|\cA_\msg|}\smashoperator[r]{\sum_{\substack{\vectu\in\cA_\msg\setminus\cD(\msgh,\state^\tl)\\ \hat{u}^\tl\in\cA_{\msgh}}}} \Prob_{\Noisevec}\left(\norm{\vectu+\state^\tl-\hat{u}^\tl}-\norm{W^\tl}<\frac{3}{2}\sigma\sqrt{\tl}\right)\\
&\leq \epsilon/2+2\delta+\frac{1}{|\cA_\msg|}\sum_{\substack{\vectu\in\cA_\msg\setminus\cD(\msgh,\state^\tl)\\ \hat{u}^\tl\in\cA_{\msgh}}} \Prob_{\Noisevec}\left(\norm{W^\tl}>\frac{3}{2}\sigma\sqrt{\tl}\right)\label{eq:chi-square break}.
\end{align}

To bound the summand above, we note that 
\begin{align}
\shortonly{&}\Prob_{\Noisevec}\left(\norm{W^\tl}>\frac{3}{2}\sigma\sqrt{\tl}\right) \shortonly {\\} &\shortonly{\ } = \Prob_{\Noisevec}\left(\sum_{\tau=1}^\tl W_\tau^2>\frac{9}{4}\sigma^2\tl\right)\\
&\shortonly{\ }= 	\Prob_{\Noisevec}\left(\frac{1}{\tl}\sum_{\tau=1}^\tl \frac{W_\tau^2-\sigma^2}{\sigma^2}>\frac{5}{4}\right)\\
&\shortonly{\ }\leq \Prob_{\Noisevec}\left(\frac{1}{\tl}\sum_{\tau=1}^\tl \frac{W_\tau^2-\sigma^2}{\sigma^2}>1\right)\shortonly{\\ &\ }\leq 2 e^{-\tl/8}.\label{eq:chi squared}
\end{align}
The last inequality above follows from standard bounds on the tail probability of chi-squared distribution. Continuing the chain of inequalities from~\eqref{eq:chi-square break}, we bound the number of $(\vectu,\hat{u}^\tl)$ pairs in the sum by $|\cA_\msg|^2$ and use~\eqref{eq:chi squared} and the assumption that $R<1/8$ to obtain that, with probability at least $1- 2^{-\delta^3 2^{R\tl}/4}$ over the choice of $\perm$, for large enough $\tl$,
\begin{align}
\shortonly{&}\Prob_{\Noisevec,\tagenc}\left(\tagdec\left(\msgh,\vecrV\right)=\accept|\Msg=\msg,\State^\tl=\state^\tl\right)\shortonly{\\}&\shortonly{\ }\leq 	\epsilon/2+2\delta+2\delta 2^{R\tl}e^{-\tl/8}\\
&\shortonly{\ }\leq \epsilon/2+4\delta\\
&\leq 9\epsilon/10.
\end{align}

Finally, we take a union bound over all $\msg,\msgh\in\msgset$ and over a sufficiently dense set of state vectors $\state^\tl$ to conclude that, with high probability over the choice of $\perm$,  $\pwa(\tagenc,\tagdec)<\epsilon$. 
\end{proof}
In Propositions~\ref{prop:gaussian confusable} and~\ref{prop:gaussian confusion set}, we rely on the notation introduced in the proof of Lemma~\ref{lem:auth gaussian positive}.

\begin{prop}\label{prop:gaussian confusable} 
	 For every $\msgh\in\msgset$ and $\state^\tl\in\bbR^\tl$, $\lvert\cD(\msgh,\state^\tl)\rvert\leq \delta 2^{\ttl}$.
\end{prop}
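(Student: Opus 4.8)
The plan is to exploit the minimum-distance property~(D) of the channel code $g$, so that each codeword of $\cA_{\msgh}$ can be ``close'' (in the $3\sigma\sqrt{\tl}$ sense defining $\cD(\msgh,\state^\tl)$) to at most one codeword of $\cA$. First I would record the structural fact that $\cA=\cup_{\msg\in\msgset}\cA_\msg$ is a subset of $\set{g(\permchoice(\tilde{u})):\tilde{u}\in\set{0,1}^{\ttl}}$; since $\permchoice$ permutes $\set{0,1}^{\ttl}$ and, by property~(D), $g$ is injective on $\set{0,1}^{\ttl}$, any two distinct elements of $\cA$ are distinct codewords of $g$, and hence by~(D) are at Euclidean distance at least $7\sigma\sqrt{\tl}$ from one another.

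Next I would fix $\msgh\in\msgset$, a state vector $\state^\tl\in\bbR^\tl$, and a single $\hat{u}^\tl\in\cA_{\msgh}$, and bound the number of $\vectu\in\cA$ satisfying $\norm{\vectu+\state^\tl-\hat{u}^\tl}<3\sigma\sqrt{\tl}$, i.e.\ the number of elements of $\cA$ lying in the open ball of radius $3\sigma\sqrt{\tl}$ centred at $\hat{u}^\tl-\state^\tl$. If $\vectu$ and $\vectu'$ were two distinct such elements, the triangle inequality would give $\norm{\vectu-\vectu'}<6\sigma\sqrt{\tl}<7\sigma\sqrt{\tl}$, contradicting the minimum-distance bound from the previous paragraph; hence at most one element of $\cA$ lies in that ball.

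Finally, since every $\vectu\in\cD(\msgh,\state^\tl)$ is, by definition, within $3\sigma\sqrt{\tl}$ of some $\hat{u}^\tl\in\cA_{\msgh}$, summing the per-$\hat{u}^\tl$ bound over $\cA_{\msgh}$ yields $\lvert\cD(\msgh,\state^\tl)\rvert\leq\lvert\cA_{\msgh}\rvert=\lfloor\delta 2^{\ttl}\rfloor\leq\delta 2^{\ttl}$, which is exactly the claimed inequality. There is no substantial obstacle in this argument; the only point that needs care is the bookkeeping ensuring that distinct elements of $\cA$ still inherit the $7\sigma\sqrt{\tl}$ separation of property~(D), which follows because $\idenc$, $\permchoice$, and the injectivity of $g$ all preserve distinctness.
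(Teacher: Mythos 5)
Your proof is correct and follows essentially the same strategy as the paper's: both use the $7\sigma\sqrt{\tl}$ minimum-distance guarantee from property~(D), together with the triangle inequality, to show that each $\hat{u}^\tl\in\cA_{\msgh}$ can have at most one $\vectu\in\cA$ within $3\sigma\sqrt{\tl}$ of $\hat{u}^\tl-\state^\tl$, then bound $\lvert\cD(\msgh,\state^\tl)\rvert$ by $\lvert\cA_{\msgh}\rvert=\lfloor\delta 2^{\ttl}\rfloor$. The only cosmetic difference is that you compare two candidate points in the same ball (getting $<6\sigma\sqrt{\tl}<7\sigma\sqrt{\tl}$) whereas the paper lower-bounds the distance from any second point to the ball center by $4\sigma\sqrt{\tl}$; these are the same argument.
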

\begin{proof}
Let $\vectu\in\cD(\msgh,\state^\tl)$. Thus, there exists $\hat{u}^\tl\in\cA_{\msgh}$ such that $\norm{\vectu+\state^{\tl}-\hat{u}^\tl}<3\sigma\sqrt{\tl}$. Consider any $\tilde{u}^\tl\in\cA\setminus\set{\vectu}$. By property~(D), $\norm{\tilde{u}^\tl-\vectu}\geq 7\sigma\sqrt{\tl}$. Thus,  $\norm{\tilde{u}^\tl+\state^{\tl}-\hat{u}^\tl}=\norm{\tilde{u}^\tl-\vectu+\vectu+\state^{\tl}-\hat{u}^\tl}\geq \norm{\tilde{u}^\tl-\vectu}-\norm{ \vectu+\state^{\tl}-\hat{u}^\tl}\geq 4\sigma\sqrt{\tl}$. Thus, for each $\hat{u}^\tl\in\cA_{\msg}$, there is at most one $\vectu\in\cA$ such that $\norm{\vectu-\hat{u}^\tl-\state^{\tl}}<3\sigma\sqrt{\tl}$, which implies the claim of this proposition.
\end{proof}
\begin{prop}  \label{prop:gaussian confusion set}Let $\msg,\msgh\in\msgset$ with $\msg\neq\msgh$ and $\state^\tl\in\bbR^\tl$. With probability at least $1- 2^{-\delta^3 2^{R\tl}/4}$ over the random choice of $\perm$, $\left|\cD(\msgh,\state^\tl)\cap \cA_{\msg}\right|\leq(\epsilon/2+2\delta)\delta 2^{\ttl}$.
\end{prop}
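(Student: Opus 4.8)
The plan is to isolate the only source of dependence between $\cD(\msgh,\state^\tl)$ and $\cA_\msg$ --- both are functions of the random permutation $\perm$ --- and then reduce the claim to a standard concentration bound for sampling without replacement. Write $\cC\defineqq g(\set{0,1}^{\ttl})$ for the full codebook of the channel code; by property~(D), $g$ is injective, so $|\cC|=2^{\ttl}$ and $|\cA_\msg|=|\cB_\msg|=\lfloor\delta 2^{\ttl}\rfloor$ for every $\msg$. Introduce the enlarged confusion set $\cD'\defineqq\set{\vectu\in\cC:\norm{\vectu+\state^\tl-\hat{u}^\tl}<3\sigma\sqrt{\tl}\ \text{for some}\ \hat{u}^\tl\in\cA_{\msgh}}$, obtained from the definition of $\cD(\msgh,\state^\tl)$ by replacing $\cA$ with the larger set $\cC$, so that $\cD(\msgh,\state^\tl)\subseteq\cD'$. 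The estimate in the proof of Proposition~\ref{prop:gaussian confusable} uses only property~(D), which holds for all of $\cC$; it shows that for each $\hat{u}^\tl\in\cA_{\msgh}$ at most one $\vectu\in\cC$ satisfies $\norm{\vectu+\state^\tl-\hat{u}^\tl}<3\sigma\sqrt{\tl}$, whence $|\cD'|\leq|\cA_{\msgh}|\leq\lfloor\delta 2^{\ttl}\rfloor$. The crucial point is that $\cD'$ depends on $\perm$ \emph{only} through the restriction $\perm|_{\cB_{\msgh}}$, since $\cC$ is deterministic and $\cA_{\msgh}=g(\perm(\cB_{\msgh}))$.

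Next I would condition on $\perm|_{\cB_{\msgh}}$ --- which freezes $\cA_{\msgh}$ and hence $\cD'$ --- and split $|\cD(\msgh,\state^\tl)\cap\cA_\msg|\leq|\cD'\cap\cA_\msg|\leq|\cA_\msg\cap\cA_{\msgh}|+|(\cD'\setminus\cA_{\msgh})\cap(\cA_\msg\setminus\cA_{\msgh})|$. For the first term, injectivity of $g\circ\perm$ gives $\cA_\msg\cap\cA_{\msgh}=g(\perm(\cB_\msg\cap\cB_{\msgh}))$, and property~(B) bounds $|\cB_\msg\cap\cB_{\msgh}|$ by $(\epsilon/2)|\cB_\msg|\leq(\epsilon/2)\delta 2^{\ttl}$. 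For the second term, $\cA_\msg\setminus\cA_{\msgh}=g(\perm(\cB_\msg\setminus\cB_{\msgh}))$; conditioned on $\perm|_{\cB_{\msgh}}$, the restriction of $\perm$ to $\set{0,1}^{\ttl}\setminus\cB_{\msgh}$ is a uniformly random bijection onto $\set{0,1}^{\ttl}\setminus\perm(\cB_{\msgh})$, so $g(\perm(\cB_\msg\setminus\cB_{\msgh}))$ is a uniformly random $k$-subset of $\cC\setminus\cA_{\msgh}$, where $k\defineqq|\cB_\msg\setminus\cB_{\msgh}|\leq\lfloor\delta 2^{\ttl}\rfloor$ and $|\cC\setminus\cA_{\msgh}|\geq(1-\delta)2^{\ttl}$. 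Hence $X\defineqq|(\cD'\setminus\cA_{\msgh})\cap(\cA_\msg\setminus\cA_{\msgh})|$ is hypergeometric, with $\mathbb{E}[X]=k\,|\cD'\setminus\cA_{\msgh}|/|\cC\setminus\cA_{\msgh}|\leq\delta^2 2^{\ttl}/(1-\delta)$.

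It then remains to establish $X\leq 2\delta^2 2^{\ttl}$ with probability at least $1-2^{-\delta^3 2^{R\tl}/4}$. Since $\delta<\epsilon/10<1/3$, the slack $2\delta^2 2^{\ttl}-\mathbb{E}[X]$ exceeds $\delta^2 2^{\ttl}/2$, so Hoeffding's inequality for sampling without replacement (a sum of negatively associated $\set{0,1}$ random variables) --- applied with a deviation of $(\delta^2 2^{\ttl}/2)/k$ from the sample mean and sample size $k\leq\delta 2^{\ttl}$ --- gives $\Prob(X\geq 2\delta^2 2^{\ttl})\leq\exp(-\delta^3 2^{\ttl}/2)\leq 2^{-\delta^3 2^{R\tl}/4}$, the last inequality using $2^{\ttl}\geq 2^{R\tl-1}$ and $\log_2 e>1$. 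On the complementary event, $|\cD(\msgh,\state^\tl)\cap\cA_\msg|\leq(\epsilon/2)\delta 2^{\ttl}+2\delta^2 2^{\ttl}=(\epsilon/2+2\delta)\delta 2^{\ttl}$, and averaging over $\perm|_{\cB_{\msgh}}$ removes the conditioning. I expect the first paragraph to be the main obstacle: one must recognize that it suffices to freeze the enlarged set $\cD'$ (rather than the fully randomness-entangled $\cD$), and then argue that, because the identification code makes $\cB_\msg$ and $\cB_{\msgh}$ nearly disjoint, the ``new'' codewords $\cA_\msg\setminus\cA_{\msgh}$ (which are almost all of $\cA_\msg$) form a genuinely fresh uniform sample from $\cC\setminus\cA_{\msgh}$, independent of the now-frozen $\cD'$. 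Once this decoupling is in place, the hypergeometric tail estimate is routine.
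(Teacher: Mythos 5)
Your proposal is correct and follows essentially the same approach as the paper: condition on $\cA_{\msgh}$ (equivalently $\perm|_{\cB_{\msgh}}$), bound $|\cA_\msg\cap\cA_{\msgh}|$ via property (B), realize $\cA_\msg\setminus\cA_{\msgh}$ as a uniform sample without replacement, and apply Hoeffding's inequality for the hypergeometric tail. Your one refinement --- enlarging the confusion set to $\cD'\subseteq\cC=g(\{0,1\}^{\ttl})$ so that the ``marked'' set is deterministic given $\cA_{\msgh}$, and sampling from $\cC\setminus\cA_{\msgh}$ rather than $\cA\setminus\cA_{\msgh}$ --- is a genuinely cleaner bookkeeping of what depends on $\perm$; the paper works directly with the $\perm$-dependent sets $\cD(\msgh,\state^\tl)\subseteq\cA$, which is implicitly relying on $\cA=\cC$.
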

\begin{proof}
We first fix the set $\cA_{\msgh}$. Note that this also fixes the set $\cA_{\msg}\cap\cA_{\msgh}$ as the image of the set $\cB_{\msg}\cap\cB_{\msgh}$ under the encoder mapping. Thus, 
\begin{equation}\left|\cD(\msgh,\state^\tl)\cap \cA_{\msg}\cap\cA_{\msgh}\right|\leq \left|\cA_{\msg}\cap\cA_{\msgh}\right|\leq (\epsilon\delta/2) 2^{\ttl}.\label{eq:intersection}\end{equation} In the above, the last inequality follows from properties~(A) and~(B) in the proof of Lemma~\ref{lem:auth gaussian positive}.

Next, we note that, conditioned on fixed choice of  the set $\cA_{\msgh}$ and under the uniformly random choice of $\perm$, the process of drawing the set $\cA_{\msg}\setminus\cA_{\msgh}$ is equivalent to successively sampling $K\defineqq\lvert\cA_{\msg}\setminus\cA_{\msgh}\rvert$ elements $\vectU(1), \vectU(2),\ldots, \vectU(k)$ uniformly without replacement from  the set $\cA\setminus\cA_{\msgh}$. Note that, for large enough $\ttl$, 
\begin{equation}\delta 2^{\ttl}/2\overset{(a)}\leq\delta (1-\epsilon/2)2^{\ttl}-1\overset{(b)}\leq K\overset{(c)}\leq \delta 2^{\ttl}.\label{eq:bound on K}\end{equation} 
In the above, $(a)$ is due to $\ttl$ being large enough, and $(b)$ and $(c)$ follow from properties~(A) and~(B) of the code $(\idenc,\iddec)$ chosen in the proof of Lemma~\ref{lem:auth gaussian positive}. Next, note that, for any $k\in[1:K]$ and large enough $\ttl$, we have

\begin{align} 
&p\defineqq\Prob_{\perm}\left(\vectU(k)\in \cD(\msgh,\state^\tl)\right) \shortonly{\\&\quad} = \frac{\lvert\cD(\msgh,\state^\tl)\rvert}{\lvert\cA\setminus\cA_{\msgh}\rvert}\\
	&\quad \leq \frac{\delta 2^{\ttl}}{2^{\ttl}-\lfloor\delta 2^{\ttl}\rfloor}\shortonly{\\&\quad}\leq \frac{\delta}{1-\delta-2^{-\ttl}}\leq \frac{3}{2}\delta.
\end{align}

Thus, 
\begin{align}
& \Prob_\perm\left(\left|\cD(\msgh,\state^\tl)\cap \left(\cA_{\msg}\setminus\cA_{\msgh}\right)\right|>2\delta^2 2^{\ttl}\right)\shortonly{\\ & \quad} \leq    \Prob_\perm\left(\left|\cD(\msgh,\state^\tl)\cap \left(\cA_{\msg}\setminus\cA_{\msgh}\right)\right|>pK+\frac{1}{2}\delta^2 2^{\ttl}\right)	\\
& \quad =\Prob_\perm\left(\frac{1}{K}\sum_{k=1}^K\mathbf{1}_{\set{\vectU(k)\in \cD(\msgh,\state^\tl)}}-p>\frac{\delta^2 2^{\ttl}}{2K}\right)\\
& \quad \leq \Prob_\perm\left(\frac{1}{K}\sum_{k=1}^K\mathbf{1}_{\set{\vectU(k)\in \cD(\msgh,\state^\tl)}}-p>\frac{\delta}{2}\right).\label{eq:concentration without replacement}
\end{align}
Note that the probability term in~\eqref{eq:concentration without replacement} is the probability that the empirical average of the random variables $\mathbf{1}_{\set{\vectU(k)\in \cD(\msgh,\state^\tl)}}$, drawn without replacement from a multi-set of size $K$ containing $pK$ ones and $(1-p)K$ zeroes, deviates from its mean. By~\cite[Theorem~4]{Hoeffding63}, this probability is bounded from above by the corresponding probability when each $\mathbf{1}_{\set{\vectU(k)\in \cD(\msgh,\state^\tl)}}$ is drawn with replacement from the same multi-set. By~\cite[Theorem~1]{Hoeffding63} and~\eqref{eq:bound on K}, this is bounded from above by $2^{-K\delta^2/2}\leq 2^{-\delta^3 2^{\ttl}/4}$ for large enough values of $\ttl$. Combining the above with the bound obtained in~\eqref{eq:intersection}, we prove the claim of the proposition. \end{proof}

\subsection{Authenticated communication capacity for Gaussian AVCs}
Using Lemma~\ref{lem:auth gaussian positive}, we argue that the authenticated communication capacity of Gaussian AVCs is same as the unauthenticated communication capacity.
\begin{proof}[Proof sketch for Theorem~\ref{thm:auth comm gaussian}]
The converse follows from the channel capacity converse for the nominal channel, which is an additive Gaussian noise channel with input power constraint $\rho$ and noise variance $\sigma^2$. The achievability follows by using a two phase code formed by concatenating a reliable communication code for the nominal channel with an authentication code of positive rate for the additive Gaussian AVC from Lemma~\ref{lem:auth gaussian positive}.
\end{proof}

\subsection{Authentication capacity for Gaussian AVCs}
Next, using Theorem~\ref{thm:auth comm gaussian}, we argue that the authentication capacity of Gaussian AVCs is same as the unauthenticated communication capacity.
\begin{proof}[Proof sketch for Theorem~\ref{thm:auth gaussian}] The proof is similar lines as Theorem~\ref{thm: auth capacity}. The converse follows by observing that the authentication capacity for the given AVC is bounded from above by the identification capacity of the nominal channel, \emph{i.e.}, the additive Gaussian noise channel with input power constraint $\rho$ and noise variance $\sigma^2$. By~\cite{BurnashevIT00}, the latter equals $({1}/{2})\log\left(1+{\rho}/{\sigma^2}\right)$.
Similarly, the achievability follows by composing an identification code for the noiseless channel (from~\cite{AhlswedeDIT89}) with an authenticated communication  code for the Gaussian AVC from Theorem~\ref{thm:auth comm gaussian}.
\end{proof}

\subsection{Authenticated communication capacity region for Gaussian MACs}
Finally, using Theorem~\ref{thm:auth gaussian}, we show that the authenticated capacity region of Gaussian MACs is same as the unauthenticated communication capacity.

\begin{proof}[Proof sketch for Theorem~\ref{thm:auth comm gaussian mac}] The proof is similar lines as Theorem~\ref{thm:authMAC}. The converse follows by observing that the authenticated communication capacity region cannot be a superset of the (unauthenticated communication) capacity region. The achievability follows from the same 3 phase idea as the the proof of Theorem~\ref{thm:authMAC}, \emph{i.e.}, first transmit the messages using a (unauthenticated) code for the Gaussian MAC. Next, in the second and third phases, the users take turns transmitting authentication codes for their respective messages while a non-adversarial second user stays silent (\emph{i.e.}, sends all $0$'s).
\end{proof}

}
\section{Equivalence of capacity regions of adversarial DM-MAC for codes with stochastic encoders under maximum error probability and for deterministic codes under average error probability}\label{App:stocDet}
In this section, we will argue that the capacity region of the \mac using codes with stochastic encoder (stochastic codes) under maximum probability of error is the same as the capacity region using deterministic codes under average probability of error criterion. Similar to Section~\ref{section:MAC_model}, we define the average probability of error for a deterministic code $(f_1,f_2,\phi)$ in terms of error probabilities under \na and attack conditions as follows. Let $x^n\in \mathcal{X}^n$ be the sequence sent by user 1 when it (user 1) is adversarial and $y^n\in \mathcal{Y}^n$ be the sequence sent by user 2 when it (user 2) is adversarial. Let $(\hat{M}_1,\hat{M}_2) = \phi(Z^n)$. For $m_1\in \mathcal{M}_1, m_2\in\mathcal{M}_2$,
\begin{align*}
&P_{e,\na,m_1,m_2}(f_1,f_2,\phi) \defineqq \Prob((\hat{M}_1,\hat{M}_2)\neq\left(m_1,m_2\right)|(M_1,M_2) = (m_1,m_2))\\
&P_{e,\malone, m_2, x^n}(f_2,\phi) \defineqq \Prob\left(\hat{M}_2\notin\left\{m_2,\bot\right\}\left|M_2 = m_2,X^n =x^n\right.\right),\\
&P_{e,\maltwo, m_1, y^n}(f_1,\phi) \defineqq \Prob\left(\hat{M}_1\notin\left\{m_1,\bot\right\}\left|M_1 = m_1,Y^n =y^n\right.\right),
\end{align*}
where the probabilities are over the channel.
We define the average probability of error under attack and no attack conditions as follow.
\begin{align*}
&\bar{P}_{e,\na}(f_1,f_2,\phi) \defineqq \frac{\sum_{m_1\in\mathcal{M}_1, m_2\in\mathcal{M}_2}P_{e,\na,m_1,m_2}(f_1,f_2,\phi)}{N_1 N_2},\\
&\bar{P}_{e,\malone}(f_2,\phi) \defineqq \max_{x^n\in\mathcal{X}^n}\frac{\sum_{m_2\in\mathcal{M}_2}P_{e,\malone, m_2, x^n}(f_2,\phi)}{N_2},\\
&\bar{P}_{e,\maltwo}(f_1,\phi) \defineqq \max_{y^n\in \mathcal{Y}^n}\frac{\sum_{m_1\in\mathcal{M}_1}P_{e,\maltwo, m_1, y^n}(f_1,\phi)}{N_1}. 
\end{align*}
The average probability of error of the deterministic code $(f_1,f_2,\phi)$ is defined as
\begin{align*}
\bar{P}_{e}(f_1,f_2,\phi)&=\max\,\left\{\bar{P}_{e,\na}(f_1,f_2,\phi),\bar{P}_{e,\malone}(f_2,\phi),\bar{P}_{e,\maltwo}(f_1,\phi)\right\}.	
\end{align*}
The average probability of error for a stochastic code $(F_1,F_2,\phi)$ can also be defined in an analogous fashion. We will use $\bar{P}_{e}(F_1,F_2,\phi)$ to refer to this average probability of error.
\begin{lemma}\label{lemma:Achiev}Any rate pair achievable using stochastic codes under maximum probability of error criterion can also be achieved using deterministic codes under average probability of error criterion. 
\end{lemma}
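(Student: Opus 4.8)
The plan is to derandomize the two stochastic encoders by a direct random-coding argument while keeping the decoder unchanged, and to control the worst case over the adversary's input by a concentration argument over the (exponentially large) codebooks.

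Fix a rate pair $(R_1,R_2)$ achievable with stochastic codes under the maximum error criterion, and take a sequence of $(\lfloor 2^{nR_1}\rfloor,\lfloor 2^{nR_2}\rfloor,n)$ codes $(F_1^{(n)},F_2^{(n)},\phi^{(n)})$ with $\epsilon_n\defineqq P_e(F_1^{(n)},F_2^{(n)},\phi^{(n)})\to 0$. Let $Q_i(\cdot\mid m_i)$ denote the distribution of the codeword produced by $F_i^{(n)}$ on message $m_i$. I would build a random deterministic code $(f_1,f_2,\phi^{(n)})$ with the \emph{same} message sets $\mathcal{M}_1,\mathcal{M}_2$ and the \emph{same} (already deterministic) decoder $\phi^{(n)}$, by drawing $f_i(m_i)\sim Q_i(\cdot\mid m_i)$ independently over $i\in\{1,2\}$ and over all messages. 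Since the message sets are untouched, it suffices to show that, with positive probability over this draw, the average error $\bar P_e(f_1,f_2,\phi^{(n)})$ from this appendix tends to $0$ as $n\to\infty$.

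Split $\bar P_e$ into $\bar P_{e,\na}$, $\bar P_{e,\malone}$, $\bar P_{e,\maltwo}$. For the no-attack term, $\bar P_{e,\na}$ averages over $(m_1,m_2)$ a quantity whose mean over the codeword draws equals $\Prob_{F_1,F_2}((\hat M_1,\hat M_2)\neq(m_1,m_2)\mid M_1=m_1,M_2=m_2)\le P_{e,\na}\le\epsilon_n$; hence $\mathbb{E}[\bar P_{e,\na}]\le\epsilon_n$ and, by Markov's inequality, $\bar P_{e,\na}>3\epsilon_n$ with probability at most $1/3$ --- no union bound is needed here since the term already averages over an adversary-free scenario. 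The delicate terms are $\bar P_{e,\malone}$ and $\bar P_{e,\maltwo}$, which are worst-case over the adversarial sequence. Take $\bar P_{e,\malone}(f_2,\phi^{(n)})=\max_{x^n}\tfrac{1}{N_2}\sum_{m_2}P_{e,\malone,m_2,x^n}$; if $R_2=0$ then $N_2=1$ and, since $\phi^{(n)}$ outputs only $(\hat m_1,1)$ or $(\bot,\bot)$, one always has $\hat M_2\in\{1,\bot\}$, so this term vanishes identically, and we may assume $R_2>0$. For a \emph{fixed} $x^n$, the variables $P_{e,\malone,m_2,x^n}(f_2,\phi^{(n)})$, $m_2\in\mathcal{M}_2$, are independent (they use disjoint fresh randomness), lie in $[0,1]$, and have mean $\sum_{y^n}Q_2(y^n\mid m_2)\Prob(\hat M_2\notin\{m_2,\bot\}\mid M_2=m_2,X^n=x^n,Y^n=y^n)\le P_{e,\malone}\le\epsilon_n$. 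Hoeffding's inequality gives $\Prob(\tfrac{1}{N_2}\sum_{m_2}P_{e,\malone,m_2,x^n}>\epsilon_n+t)\le e^{-2N_2 t^2}$, so a union bound over the $|\mathcal{X}|^n$ choices of $x^n$ yields $\Prob(\bar P_{e,\malone}>\epsilon_n+t)\le |\mathcal{X}|^n e^{-2N_2 t^2}$. Because $N_2=\lfloor 2^{nR_2}\rfloor$ is exponential in $n$ while $|\mathcal{X}|^n$ is only singly exponential, this bound tends to $0$ for any fixed $t>0$, and even for a suitable sequence $t_n\downarrow 0$. The term $\bar P_{e,\maltwo}$ is handled identically with the users' roles swapped.

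Choosing $t_n\downarrow 0$ slowly enough that both union bounds still vanish, for all large $n$ the probability that the random code violates any of $\bar P_{e,\na}\le 3\epsilon_n$, $\bar P_{e,\malone}\le\epsilon_n+t_n$, $\bar P_{e,\maltwo}\le\epsilon_n+t_n$ is strictly less than $1$; hence some realization $(f_1,f_2,\phi^{(n)})$ satisfies all three and is a deterministic code of rate $(R_1,R_2)$ with $\bar P_e\le\max\{3\epsilon_n,\epsilon_n+t_n\}\to 0$. I expect the only genuinely non-routine point to be exactly this worst case over the exponentially many adversarial sequences: a Markov-only derandomization fails there because one cannot pull the $\max_{x^n}$ inside the expectation, and the fix is that the relevant average is taken over the $N_i$ codewords --- of which there are exponentially many --- so Hoeffding concentration is doubly exponentially strong and dominates the singly exponential union bound, with the sole edge case $R_i=0$ checked separately. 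Notably no assumption on the rate of decay of $\epsilon_n$ is required, which is why the two capacity regions coincide outright (Remark~\ref{rem:det+avg=stoch+max}) with no zero-or-full dichotomy.
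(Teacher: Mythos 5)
Your proof is correct and follows the same high-level strategy as the paper's: draw the codewords $f_i(m_i)\sim Q_i(\cdot\mid m_i)$ independently, keep the decoder, and beat the singly-exponential union bound over adversarial sequences by concentration over the exponentially many independent codewords. The concentration tool differs, and the choice matters. The paper bounds all three error terms via a Chernoff-type argument using the crude inequality $2^z\le 1+z\le 3^z$ for $z\in[0,1]$, passing to an exponent of the form $N_i\,\epsilon_n(a-\log_2 3)$ and fixing $a=2$; this makes the exponent degrade with $\epsilon_n$, quietly needs $\epsilon_n>0$, and also implicitly assumes both $R_1,R_2>0$ so that $\min\{R_1,R_2\}$ in the final display is positive. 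Your Hoeffding bound $|\mathcal{X}|^n e^{-2N_2 t_n^2}$ has an exponent independent of $\epsilon_n$, so it handles any decay of $\epsilon_n$ (including $\epsilon_n=0$) uniformly; you also notice and dispose of the $R_i=0$ edge case, which the paper elides. A further advantage of your decomposition is that you treat $\bar P_{e,\mathrm{na}}$ by plain Markov rather than a product-of-MGFs step: for the no-attack term the summands $V_{m_1,m_2}$ depend jointly on $\mathbf{X}_{m_1}$ and $\mathbf{Y}_{m_2}$ and hence are not mutually independent across $(m_1,m_2)$, so the paper's factorization $\mathbf{E}\bigl[2^{\sum V_{m_1,m_2}}\bigr]=\prod\mathbf{E}\bigl[2^{V_{m_1,m_2}}\bigr]$ would require justification; Markov avoids that dependence issue entirely, at the cost of only a constant (here $1/3$) in the failure probability, which is fine since the other two failure events have vanishing probability. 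In short: same architecture, but your concentration step is cleaner, more robust to the edge cases, and sidesteps a dependence subtlety that the paper's MGF factorization glosses over.
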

\begin{proof}
The proof is along the lines of~\cite{Ahlswede78}~\cite[Theorem~12.13]{CK11}.\\
For any rate pair $(R_1,R_2)\geq0,$ let $\left(F_1,F_1,\phi\right)$ be a $\left(2^{nR_1},2^{nR_2},n\right)$ stochastic code with $P_{e}\left(F_1,F_1,\phi\right)\leq \epsilon$. This also implies that $\bar{P}_{e}\left(F_1,F_1,\phi\right)\leq \epsilon$. Define random variables $\mathbf{X}_{m_1},  m_1\in \mathcal{M}_1$ over $\mathcal{X}^n$ and $\mathbf{Y}_{m_2}, m_2\in \mathcal{M}_2$ over $\mathcal{Y}^n$, all independent of each other, as $\Prob(\mathbf{X}_{m_1} = x^n) = \Prob(F_1(m_1) = x^n)$ and $\Prob(\mathbf{Y}_{m_2} = y^n) = \Prob(F_2(m_2) = y^n)$.
For $(m_1,m_2)\in\mathcal{M}_1\times\mathcal{M}_2$, define a random variable $V_{m_1,m_2}$ as a function of the random variables $\mathbf{X}_{m_1}$ and $\mathbf{Y}_{m_2}$ as follows. 
\begin{align*}
V_{m_1,m_2} \defineqq \psi_{m_1,m_2}(\mathbf{X}_{m_1},\mathbf{Y}_{m_2}),
\end{align*}
where for $\tilde{x}^n\in\mathcal{X}^n, \tilde{y}^n\in\mathcal{Y}^n,$
\begin{align*}
\psi_{m_1,m_2}(\tilde{x}^n,\tilde{y}^n) \defineqq \mathbb{P}\left(\phi\left(Z^n\right)\neq\left(m_1,m_2\right)\left|X^n = \tilde{x}^n,Y^n =\tilde{y}^n\right.\right), 
\end{align*}
The probability above is over the channel.
For $(m_1,m_2)\in\mathcal{M}_1\times\mathcal{M}_2$ and  $ x^n\in\mathcal{X}^n$ and $y^n \in \mathcal{Y}^n$, we also define the random variables $V_{m_2,x^n}$ and $V_{m_1,y^n}$ as functions of the random variables $\mathbf{X}_{m_1}$ and $\mathbf{Y}_{m_2}$ as follows.
\begin{align*}
V_{m_2,x^n} = \psi_{m_2,x^n}(\mathbf{Y}_{m_2}),\\
V_{m_1,y^n} = \psi_{m_1,y^n}(\mathbf{X}_{m_1}),
\end{align*}
where for $\tilde{x}^n\in\mathcal{X}^n$ and $\tilde{y}^n\in\mathcal{Y}^n$,
\begin{align*}
\psi_{m_2,x^n}(\tilde{y}^n) \defineqq \mathbb{P}\left(\hat{M}_2\notin\left\{m_2,\bot\right\}\left|X^n =x^n,Y^n = \tilde{y}^n\right.\right),\\
\psi_{m_1,y^n}(\tilde{x}^n) \defineqq \mathbb{P}\left(\hat{M}_1\notin\left\{m_1,\bot\right\}\left|X^n = \tilde{x}^n,Y^n =y^n\right.\right).
\end{align*}
The probability above is over the channel.
Let $a>0$, and consider the probability of the event $\left\{\frac{1}{ N_1  N_2  } \sum_{m_1,m_2} V_{m_1,m_2}\geq a \epsilon\right\}$ where the probability is over the randomness of $\mathbf{X}_{m_1},\,m_1\in\mathcal{M}_1$ and $\mathbf{Y}_{m_2},\,m_2\in \mathcal{M}_2$.
\begin{align*}
\mathbb{P}\left\{\frac{1}{ N_1  N_2  } \sum_{m_1,m_2} V_{m_1,m_2}\geq a \epsilon\right\} &= \mathbb{P}\left\{2^{\sum_{m_1,m_2}V_{m_1,m_2}}\geq 2^{a \epsilon N_1  N_2  }\right\}\\
&\leq 2^{- a \epsilon  N_1  N_2  }\prod_{m_1,m_2}E\left[2^{V_{m_1,m_2}}\right]\\
&\stackrel{\text{(a)}}{\leq}2^{-a \epsilon N_1  N_2  }\prod_{m_1,m_2}3^{E\left[V_{m_1,m_2}\right]}\\
&=2^{-a \epsilon N_1  N_2  +\log_{2}3\left(\sum_{m_1,m_2}E\left[V_{m_1,m_2}\right]\right)}\\
\end{align*}
Here, (a) follows by noting that for $0\leq t\leq 1, 2^t\leq 1+t \leq 3^t$. Since $\bar{P}_{e}\left(F_1,F_1,\phi\right)\leq \epsilon$, we have
$\frac{1}{ N_1  N_2  }\sum_{m_1,m_2} E\left[V_{m_1,m_2}\right]\leq\epsilon.$ Thus,
\begin{align*}
\mathbb{P}\left\{\frac{1}{ N_1  N_2  }\sum_{m_1,m_2}V_{m_1,m_2}\geq a \epsilon\right\}&\leq 2^{- N_1  N_2  \epsilon\left(a-\log_{2}3\right)}\\
&=2^{-2^{n(R_1+R_2)}\epsilon(a-\log_{2}3)}
\end{align*}
Following the similar line of arguments, we can upper bound the following probability terms.
\begin{align*}
\mathbb{P}\left\{\frac{1}{N_2  }\sum_{m_2}V_{m_2,x^n}\geq a \epsilon\right\}\leq2^{-2^{nR_2}\epsilon(a-\log_{2}3)}\\
\mathbb{P}\left\{\frac{1}{ N_1 }\sum_{m_1}V_{m_1,y^n}\geq a \epsilon\right\}\leq2^{-2^{nR_1}\epsilon(a-\log_{2}3)}.
\end{align*} 
As before the probability in the above terms is over the randomness of $\mathbf{X}_{m_1},\,m_1\in\mathcal{M}_1$ and $\mathbf{Y}_{m_2},\,m_2\in \mathcal{M}_2$. Thus,
\begin{align*}
\mathbb{P}\left(\left\{\frac{1}{N_2 }\sum_{m_2}V_{m_2,x^n}\geq a \epsilon \text{ for some $x^n \in \mathcal{X}^n$}\right\} \right.&\left. \cup \left\{\frac{1}{ N_1 }\sum_{m_1}V_{m_1,y^n}\geq a \epsilon \text{ for some $y^n \in \mathcal{Y}^n$}\right\}\cup \left\{\frac{1}{ N_1  N_2  }\sum_{m_1,m_2}V_{m_1,m_2}\geq a \epsilon\right\}\right)  \\
&\stackrel{\text{(a)}}{\leq} |\mathcal{X}|^{n}2^{-2^{nR_2}\epsilon(a-\log_{2}3)}+|\mathcal{Y}|^{n} 2^{-2^{nR_1}\epsilon(a-\log_{2}3)} + 2^{-2^{n(R_1+R_2)}\epsilon(a-\log_{2}3)}\\
&\leq \max{\left\{(2|\mathcal{X}|+1)^n,(2|\mathcal{Y}|+1)^n\right\}}2^{-2^{n\min{\{R_1,R_2\}}}\epsilon(a-\log_{2}3)}\\
&< 1, \text{ for $a=2$ and large enough }n.
\end{align*}
where (a) follows from union bound. Thus, there is some realization of the random variables $\mathbf{X}_{m_1},\mathbf{Y}_{m_2},m_1\in \mathcal{M}_1,m_2\in \mathcal{M}_2$ such that the average error probability is bounded by $2\epsilon$.
\end{proof}
Let $R^{\det}_{\authcomm}$ be used to refer to the capacity region of the \mac using deterministic codes under average probability of error criterion.
\begin{lemma}\label{lemma:converse}
If the \mac is not overwritable by both users, then $\mathcal{R}^{\det}_{\authcomm} \subseteq \mathcal{R}_{\nonadv}$. Otherwise, if user 1 can overwrite user 2 but user 2 cannot overwrite user 1, $\mathcal{R}^{\det}_{\authcomm} \subseteq \{(R,0),(R,0)\in \mathcal{R}_{\nonadv}\}$. When user 2 can overwrite user 1 but user 1 cannot overwrite user 2, $\mathcal{R}^{\det}_{\authcomm} \subseteq \{(0,R),(0,R)\in \mathcal{R}_{\nonadv}\}$. When both users can overwrite each other, $\mathcal{R}^{\det}_{\authcomm} = \{(0,0)\}.$
\end{lemma}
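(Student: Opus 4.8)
The plan is to prove the four containments in two steps, mirroring the converse of Theorem~\ref{thm:authMAC} but adapting it to deterministic codes under the average probability of error.

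\textbf{Step 1: $\mathcal{R}^{\det}_{\authcomm}\subseteq\mathcal{R}_{\nonadv}$ unconditionally.} Given any deterministic authenticated communication code $(f_1,f_2,\phi)$ with small $\bar{P}_e(f_1,f_2,\phi)$, in particular $\bar{P}_{e,\na}(f_1,f_2,\phi)$ is small. Replacing the output symbol $(\bot,\bot)$ of $\phi$ by an arbitrary fixed pair, say $(1,1)$, yields a legitimate \mac decoder $\phi'$: every triple $(m_1,m_2,z^n)$ on which $\phi'$ errs was already an error for $\phi$, so $(f_1,f_2,\phi')$ is a deterministic \mac code whose average error under no attack is no larger than $\bar{P}_{e,\na}(f_1,f_2,\phi)$. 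Hence a sequence of authenticated communication codes with vanishing $\bar{P}_e$ produces a sequence of deterministic \mac codes with vanishing average error, so any achievable pair lies in the deterministic average-error non-adversarial capacity region, which by~\cite{CaiENTROPY14} equals $\mathcal{R}_{\nonadv}$.

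\textbf{Step 2: if user~1 can overwrite user~2, then $R_2=0$ for every $(R_1,R_2)\in\mathcal{R}^{\det}_{\authcomm}$.} Fix $m_1\in\mathcal{M}_1$ and $m_2'\in\mathcal{M}_2$ and let user~1 mount the overwriting attack $\tilde{P}_{X^n}$ of~\eqref{eq:attack_MAC}, with $Q_1,Q_2$ taken to be the point masses at $f_1(m_1),f_2(m_2')$ (so the attack is randomized even though the code is deterministic). The computation marked (a) in the converse of Theorem~\ref{thm:authMAC}, which uses~\eqref{eq1} and the memorylessness of the channel, shows that when the honest user~2 transmits \emph{any} message $m_2$, the channel output $Z^n$ has exactly the distribution it would have with honest messages $(m_1,m_2')$ under no attack. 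Consequently, for $m_2\neq m_2'$, the expectation over $\tilde{X}^n\sim\tilde{P}_{X^n}$ of $P_{e,\malone,m_2,\tilde{X}^n}(f_2,\phi)$ is at least the probability, under honest messages $(m_1,m_2')$ and no attack, that $\phi(Z^n)=(m_1,m_2')$, namely $1-P_{e,\na,m_1,m_2'}(f_1,f_2,\phi)$. Since $\bar{P}_{e,\malone}$ maximizes over attack sequences, it dominates this mixture; averaging the resulting inequality over $m_2\in\mathcal{M}_2$ (the $m_2=m_2'$ term contributing $\geq 0$) gives $\bar{P}_{e,\malone}(f_2,\phi)\geq\frac{N_2-1}{N_2}\bigl(1-P_{e,\na,m_1,m_2'}(f_1,f_2,\phi)\bigr)$, and then averaging over $m_1,m_2'$ yields $\bar{P}_{e,\malone}(f_2,\phi)\geq\frac{N_2-1}{N_2}\bigl(1-\bar{P}_{e,\na}(f_1,f_2,\phi)\bigr)$. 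Thus any code with $\bar{P}_e<1/3$ must have $N_2=1$, so a vanishing-error sequence forces $R_2=0$. The symmetric argument shows that if user~2 can overwrite user~1 then $R_1=0$.

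\textbf{Step 3: assembling the cases.} If neither user can overwrite the other, Step~1 already gives $\mathcal{R}^{\det}_{\authcomm}\subseteq\mathcal{R}_{\nonadv}$. If only user~1 overwrites user~2, every achievable pair is of the form $(R_1,0)$ by Step~2 and lies in $\mathcal{R}_{\nonadv}$ by Step~1, giving the claimed face; symmetrically for the other one-sided case. If both can overwrite, Step~2 forces $R_1=R_2=0$, while $(0,0)$ is trivially achievable (send one message each), so $\mathcal{R}^{\det}_{\authcomm}=\{(0,0)\}$. I expect the only delicate point to be the bookkeeping in Step~2: one must check that the randomized overwriting attack is admissible against a deterministic code precisely because $\bar{P}_{e,\malone}$ is a maximum over attack vectors, and that it is the averaging over the \emph{data} messages $m_2$ (and over $m_1,m_2'$) that converts the single-message identity $P_{e,\na,m_1,m_2'}+P_{e,\malone,m_1,m_2\rightarrow m_2'}=1$ from Theorem~\ref{thm:authMAC} into a statement about the average-error quantities.
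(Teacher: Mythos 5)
Your proposal is correct and follows essentially the same approach as the paper's proof: the paper also mounts the overwriting attack with user~1 pretending to be an honest pair $(m_1',m_2')$, relates $\bar{P}_{e,\malone}$ under this attack to $\bar{P}_{e,\na}$ via the memoryless overwriting identity~\eqref{eq1}, and derives a lower bound on $\bar{P}_e$ (namely $\frac{N_2-1}{2N_2-1}$) bounded away from zero for $N_2>1$. The only presentational differences are that you average over $(m_1,m_2')$ after applying the attack bound rather than mixing them into the attack distribution as the paper does, and that you spell out Step~1 (that vanishing-$\bar{P}_e$ codes are in particular valid non-adversarial MAC codes), which the paper leaves implicit.
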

\begin{proof}
We first consider the case when one of the users, say user 1, can overwrite the other user while user 2 is honest. Suppose some rate pair $(R_1,R_2),\, R_1>0, R_2>0$ is achievable. Consider the adversarial strategy $Q_{X^n}(x^n)$ where user 1 overwrites every sequence sent out by user 2 by an encoding of a uniformly chosen message $M_2$ from $\mathcal{M}_2$ while pretending to send an encoding of a uniformly chosen message $M_1$ from $m_1\in\mathcal{M}_1$ i.e., the output distribution of user 2, $Q_{X^n}(x^n) = \frac{1}{N_1 N_2}\sum_{m'_1\in \mathcal{M}_1,m'_2\in\mathcal{M}_2}\prod_{i=1}^{n}\left(\sum_{x\in\mathcal{X}}P_{X'|X,Y}(x|f_1(m'_1)_i,f_2(m'_2)_i)\right)$ where $f_1(m'_1)_i$ and $f_2(m'_2)_i$ are $i^{\text{th}}$ elements of $f_1(m'_1)$ and $f_2(m'_2)$ sequences respectively and $P_{X'|X,Y}$ satisfies \eqref{eq1}. For $m_1\in \mathcal{M}_1$ and $m_2\in \mathcal{M}_2$, let $\phi^{-1}(m_1,m_2) = \{z^n\,:\,\phi(z^n) = (m_1,m_2)\}$ and $\phi_2^{-1}(m_2) = \{z^n\,:\,\phi(z^n) = (\tilde{m}_1,m_2), \tilde{m}_1\in \mathcal{M}_1\}$ and $\phi_2^{-1}(\bot) = \{z^n\,:\,\phi(z^n) = (\bot, \bot)\}$. Then,
\begin{align*}
\bar{P_{e}}&\left(f_1,f_1,\phi\right)\geq {\frac{1}{N_2}\sum_{x^n\in\mathcal{X}^n,m_2\in\mathcal{M}_2}Q_{X^n}(x^n)P_{e,\malone, m_2, x^n}(f_2,\phi)}\\
&= \frac{1}{N_2}\sum_{m_2}\frac{1}{N_1 N_2}\sum_{m'_1\in \mathcal{M}_1,m'_2\in\mathcal{M}_2}\prod_{i=1}^{n}\left(\sum_{x\in\mathcal{X}}P_{X'|X,Y}(x|f_1(m'_1)_i,f_2(m'_2)_i)\right)\Prob\left(\hat{M}_2 \notin \{m_2,\bot\}|X^n = x^n, Y^n = f_2(m_2)\right)\\
&=  \frac{1}{N_2}\sum_{m_2}\frac{1}{N_1 N_2}\sum_{m'_1\in \mathcal{M}_1,m'_2\in\mathcal{M}_2}\sum_{z^n\notin\phi_2^{-1}(m_2)\cup\phi_2^{-1}(\bot)}\prod_{i=1}^{n}\left(\sum_{x\in\mathcal{X}}P_{X'|X,Y}(x|f_1(m'_1)_i,f_2(m'_2)_i)W\left(z_i|x_i,f_2(m_2)_i\right)\right)\\
&\stackrel{\text{(a)}}{=}  \frac{1}{N_2}\sum_{m_2}\frac{1}{N_1 N_2}\sum_{m'_1\in \mathcal{M}_1,m'_2\in\mathcal{M}_2}\sum_{z^n\notin\phi_2^{-1}(m_2)\cup\phi_2^{-1}(\bot)}\prod_{i=1}^{n}W\left(z_i|f_1(m'_1)_i,f_2(m'_2)_i\right)\\
&\geq \frac{1}{N_2 }\sum_{m_2}\frac{1}{N_1 N_2}\sum_{m'_1, m'_2\neq m_2}\sum_{z^n\notin\phi_2^{-1}(m_2)\cup\phi_2^{-1}(\bot)}\prod_{i=1}^{n}W\left(z_i|f_1(m'_1)_i,f_2(m'_2)_i\right)\\
&\stackrel{\text{(b)}}{\geq} \frac{1}{N_2 }\sum_{m_2}\frac{1}{N_1 N_2}\sum_{m'_1, m'_2\neq m_2}\sum_{z^n\in\phi^{-1}(m'_1,m'_2)}\prod_{i=1}^{n}W\left(z_i|f_1(m'_1)_i,f_2(m'_2)_i\right)\\
&= \frac{1}{N_1 N_2}\sum_{m'_1, m'_2}\sum_{z^n\in\phi^{-1}(m'_1,m'_2)}\prod_{i=1}^{n}W\left(z_i|f_1(m'_1)_i,f_2(m'_2)_i\right)\sum_{m_2\neq m'_2}\frac{1}{N_2}\\
&=(1-\bar{P}_{e,\na}(f_1,f_2,\phi))\frac{N_2-1}{N_2},
\end{align*}

where (a) follows from \eqref{eq1} and (b) follows by observing that for $m'_2\neq m_2, \,\phi^{-1}(m'_1,m'_2) \subseteq \phi_2^{-1}(m'_2)\subseteq  \mathcal{Z}^n\setminus(\phi_2^{-1}(m_2)\cup\phi_2^{-1}(\bot))$. The above lower bound implies that $\bar{P_e}\left(f_1,f_1,\phi\right)N_2 + (N_2-1)\bar{P}_{e,\na}(f_1,f_2,\phi) \geq N_2-1$. We further observe that 
\begin{align*}
\bar{P_e}\left(f_1,f_1,\phi\right)(2N_2 - 1) \geq \bar{P_e}\left(f_1,f_1,\phi\right)N_2 + (N_2-1)\bar{P}_{e,\na}(f_1,f_2,\phi) \geq N_2-1.
\end{align*}
and thus, $\bar{P_e}\left(f_1,f_1,\phi\right) \geq \frac{N_2-1}{2N_2-1}$ which is bounded away from zero for all $N_2>1$.
Similarly, we can argue that when sender 2 can overwrite sender 1, the average error is bounded away from 0 for all $N_1>1$. This also implies that when both users can overwrite each other, $\mathcal{R}^{\det}_{\authcomm} = \{(0,0)\}.$ 
\end{proof}
\begin{lemma}The capacity region of the adversarial multiple access channel \mac using deterministic codes under average probability of error criterion is same as the capacity region using codes with stochastic encoder using maximum probability of error criterion.
\end{lemma}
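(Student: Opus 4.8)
The plan is to prove the two inclusions $\mathcal{R}_{\authcomm}\subseteq\mathcal{R}^{\det}_{\authcomm}$ and $\mathcal{R}^{\det}_{\authcomm}\subseteq\mathcal{R}_{\authcomm}$ separately, using the lemmas already established in this appendix together with the characterization of $\mathcal{R}_{\authcomm}$ from Theorem~\ref{thm:authMAC}. Here $\mathcal{R}_{\authcomm}$ denotes the region for codes with stochastic encoders under the maximum-error criterion and $\mathcal{R}^{\det}_{\authcomm}$ the region for deterministic codes under the average-error criterion.

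For the first inclusion I would simply invoke Lemma~\ref{lemma:Achiev}: any rate pair achievable by stochastic codes under the maximum-error criterion is achievable by deterministic codes under the average-error criterion. Concretely, given a sequence of stochastic codes $(F_1^{(n)},F_2^{(n)},\phi^{(n)})$ whose rates converge to $(R_1,R_2)$ and whose maximum error $\epsilon_n$ tends to $0$, the derandomization in Lemma~\ref{lemma:Achiev} fixes, for each large $n$, a realization of the encoders' private randomness so that the resulting deterministic code has average error at most $2\epsilon_n$; the blocklength and rate pair are unchanged, so $(R_1,R_2)\in\mathcal{R}^{\det}_{\authcomm}$, and taking closures gives $\mathcal{R}_{\authcomm}\subseteq\mathcal{R}^{\det}_{\authcomm}$.

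For the second inclusion I would case-split on overwritability exactly as in Lemma~\ref{lemma:converse}, which shows: (i) if neither user can overwrite the other, $\mathcal{R}^{\det}_{\authcomm}\subseteq\mathcal{R}_{\nonadv}$; (ii) if only user~1 can overwrite user~2, $\mathcal{R}^{\det}_{\authcomm}\subseteq\{(R,0):(R,0)\in\mathcal{R}_{\nonadv}\}$; (iii) symmetrically for user~2; and (iv) if both can overwrite, $\mathcal{R}^{\det}_{\authcomm}=\{(0,0)\}$. In each of these four regimes, Theorem~\ref{thm:authMAC} states that $\mathcal{R}_{\authcomm}$ equals precisely the right-hand region, so $\mathcal{R}^{\det}_{\authcomm}\subseteq\mathcal{R}_{\authcomm}$ in every case. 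Combining the two inclusions yields $\mathcal{R}^{\det}_{\authcomm}=\mathcal{R}_{\authcomm}$.

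There is no genuinely new obstacle: the substantive work — the elimination/derandomization argument and the overwriting-based converse, both under the average-error criterion — has already been done in Lemmas~\ref{lemma:Achiev} and~\ref{lemma:converse}, and the only thing to verify is that the outer bounds of Lemma~\ref{lemma:converse} match $\mathcal{R}_{\authcomm}$ term-by-term across the four overwritability cases, which is immediate from Theorem~\ref{thm:authMAC}. The one point that deserves a line of care, if one wants to be fully self-contained, is that the factor-of-two loss in the error probability in Lemma~\ref{lemma:Achiev} is harmless precisely because the argument is run along a sequence with $\epsilon_n\to0$.
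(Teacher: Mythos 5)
Your proposal is correct and matches the paper's argument exactly: the paper proves this lemma in one line, citing Lemma~\ref{lemma:Achiev} for the inclusion $\mathcal{R}_{\authcomm}\subseteq\mathcal{R}^{\det}_{\authcomm}$ and Lemma~\ref{lemma:converse} (combined implicitly with Theorem~\ref{thm:authMAC}) for the reverse inclusion, precisely the two-inclusion, four-case structure you spell out. Your added observations — that the factor-of-two loss in the derandomization is absorbed by letting $\epsilon_n\to0$, and that the outer bounds of Lemma~\ref{lemma:converse} match $\mathcal{R}_{\authcomm}$ case by case — are exactly the bookkeeping the paper leaves implicit.
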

\begin{proof}The achievability follows from Lemma~\ref{lemma:Achiev} and the converse follows from Lemma~\ref{lemma:converse}.
\end{proof}

\section{Authenticated Communication over AVCs}\label{App:AVC_stoc_det}
In this section, we revisit the authenticated communication setup of~\cite{KosutKITW18} and show that any rate achievable using deterministic codes under average probability of error criterion is also achievable using codes with stochastic encoder under maximum probability of error criterion. Consider the system model given in Fig.~\ref{fig:authcomAVC}.

\begin{figure}\centering
\begin{tikzpicture}[scale=0.5]
	\draw (2,0) rectangle ++(3,2) node[pos=.5]{$f^{(n)}$};
	\draw (10,0) rectangle ++(3,2) node[pos=.5]{$W_{V|U,S}$};
	\draw (16,0) rectangle ++(3,2) node[pos=.5]{$\phi^{(n)}$};
	\draw[->] (1,1) node[anchor=east]{$\msg$} -- ++ (1,0) ;
	\draw[->] (5,1) -- node[above] {$U^n$} ++ (5,0);
	\draw[->] (13,1) -- node[above] {$V^n$} ++ (3,0);
	\draw[->] (19,1) -- ++ (1,0) node[anchor=west]{$\msgh/\bot$};
	\draw[->] (11.5,-1)node[anchor=north]{$S^n$} -- ++ (0,1);
\end{tikzpicture}
\caption{Authenticated communication over an AVC $W_{V|U,S}$}\label{fig:authcomAVC}
\end{figure}
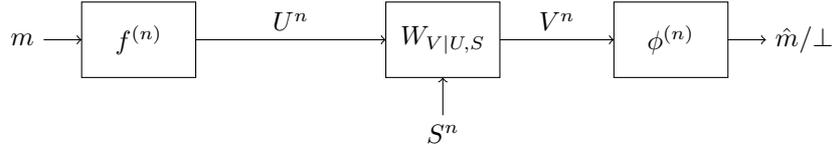

 \begin{defn}
An $(N,n)$ {\em deterministic authenticated communication code} consists of the following:
\begin{enumerate}
\item A message set $\mathcal{M} = \{1,2,\ldots,N\}$,
\item A deterministic encoder $f^{(n)}:\mathcal{M}\rightarrow\mathcal{U}^n$,
\item A deterministic decoder $\phi^{(n)}:\mathcal{V}^n\rightarrow \mathcal{M}\cup\{\bot\}$.
\end{enumerate}
\end{defn}
For $m\in\mathcal{M}$ and $s^n\in\mathcal{S}^n$, define probability of error as 
\begin{displaymath}
e_m^{\det}\left(f^{(n)},\phi^{(n)},s^n\right) = 
\left\{\begin{array}{ll}
        \mathbb{P}\left(\phi^{(n)}\left(V^n\right)\neq m\left|U^n=f^{(n)}(m), S^n = s_0^n\right.\right), &\, s^n= s_0^n,\\
        \mathbb{P}\left(\phi^{(n)}\left(V^n\right)\notin \left\{m,\bot\right\}\left|U^n=f^{(n)}(m), S^n = s^n\right.\right),&s^n\neq s_0^n,
    \end{array}
\right.
\end{displaymath} 
and let
\begin{align*}
e_m^{\det}\left(f^{(n)},\phi^{(n)}\right) = \max_{s^n\in\mathcal{S}^n} e^{\det}_m\left(f^{(n)},\phi^{(n)},s^n\right).   
\end{align*}
The average probability of error $\bar{e}^{\det}\left(f^{(n)},\phi^{(n)}\right)$ is defined as 
\begin{align*}
\bar{e}^{\det}\left(f^{(n)},\phi^{(n)}\right) = \max_{s^n\in\mathcal{S}^n} \frac{1}{ N }\sum_{m\in\mathcal{M}}e^{\det}_m\left(f^{(n)},\phi^{(n)},s^{n}\right).
\end{align*}
The maximum probability of error $e^{\det}\left(f^{(n)},\phi^{(n)}\right)$ is defined as 
\begin{align*}
e^{\det}\left(f^{(n)},\phi^{(n)}\right) = \max_{m\in \mathcal{M}}e^{\det}_m\left(f^{(n)},\phi^{(n)}\right).
\end{align*}
An $(N,n)$ random code $\left(F^{(n)},\Phi^{(n)}\right)$ and an $(N,n)$ stochastic code $\left(F^{(n)},\phi^{(n)}\right)$ can be defined in a similar manner. The average error and the maximum error for a random code will be denoted by $\bar{e}^{\rand}\left(F^{(n)},\Phi^{(n)}\right)$ and $e^{\rand}\left(F^{(n)},\Phi^{(n)}\right)$ respectively. The average error and the maximum error for a code with stochastic encoder will be denoted by $\bar{e}\left(F^{(n)},\phi^{(n)}\right)$ and $e\left(F^{(n)},\phi^{(n)}\right)$ respectively. An authenticated communication rate $R$ is achievable using deterministic codes under average probability of error criterion if there exists a sequence of codes $\left\{f^{(n)},\phi^{(n)}\right\}_{n=1}^{\infty}$ such that $\lim_{n\rightarrow\infty}\bar{e}^{\det}\left(f^{(n)},\phi^{(n)}\right)\rightarrow 0$. We define $C^{\det}_{\authcomm}$ as the supremum of all achievable rates. We can give an analogous definition of authentication communication capacity (denoted by $C_{\authcomm}$) using stochastic codes under maximum probability of error criterion. Let $C(s_0)$ denote the no-adversary capacity of the adversarial channel $W_{V|U,S}$ defined as $C(\state_0)\defineqq\max_{P_\tU}I(U;V|S=\state_0)$.
The main result of~\cite{KosutKITW18} is 
\begin{thm}\label{Thm:KK}~\cite[Theorem~2]{KosutKITW18}
If the channel is non-overwritable (i.e., \eqref{eq:overwritable} does not hold), $C^{\det}_{\authcomm}=C(s_0)$; if it is overwritable, then $C^{\det}_{\authcomm}=0$.
\end{thm}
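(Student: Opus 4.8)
Theorem~\ref{Thm:KK} is a restatement of~\cite[Theorem~2]{KosutKITW18}, reproduced here because it is invoked as a black box, both in the proof of Theorem~\ref{thm: auth capacity} (to obtain the authenticated communication code $(F,\phi)$ used there) and, together with a derandomization step, to transfer the characterization to the stochastic-encoder, maximum-error formulation used elsewhere in the paper. Accordingly, the plan is not to reprove it in full but to record the structure of the argument and its interface with this appendix; the self-contained proof is in~\cite{KosutKITW18}.

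For the converse, I would first note that $C^{\det}_{\authcomm}\le C(\state_0)$ holds unconditionally: under the no-attack state $\state_0^n$ the decoder must recover $\msg$ reliably, so Fano's inequality applied to the nominal DMC $W_{V|U,S}(\cdot|\cdot,\state_0)$ yields this bound. When the channel is overwritable, i.e.\ \eqref{eq:overwritable} holds for some $P_{\State|\tU'}$, I would show $C^{\det}_{\authcomm}=0$ exactly as in the overwritable case of the proof of Theorem~\ref{thm: auth capacity}: fixing a target message $m'$, the adversary draws a codeword from the codebook-induced distribution of the encoding of $m'$ and pushes it through $P_{\State|\tU'}$ to form the state sequence; by \eqref{eq:overwritable}, for \emph{every} actually transmitted $m$ the channel output then has the same law as a nominal transmission of $m'$, so $P_{\reject,m'}+P_{\accept,m\to m'}=1$, forcing the message set to be trivial.

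For the achievability when the channel is not overwritable, I would take a random codebook of rate $R<C(\state_0)$ generated i.i.d.\ from a capacity-achieving input distribution of the nominal channel, and decode with a conservative joint-typicality rule: output $\msg$ when the received $v^n$ is jointly typical with the codeword of $\msg$ under $W_{V|U,S}(\cdot|\cdot,\state_0)$ and no other codeword competes, and output $\bot$ otherwise. Under no attack, standard joint-typicality estimates give vanishing error. Under an attack on the codeword of $\msg$, an undetected error to some $\msg'\neq\msg$ would require the adversarial state sequence to push $v^n$ into the nominal typical set of the codeword of $\msg'$; a method-of-types count over the random codebook, together with non-overwritability, should rule this out except with vanishing probability, because non-overwritability of the single-letter channel produces a positive distinguishability gap --- no distribution over states can make the empirical output statistics of an attack on one input match the nominal output statistics of a different input.

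The hard part will be making this last gap argument precise. One must control \emph{all} state sequences simultaneously, which calls for a union bound over the (finitely many, by finiteness of $\cU,\cV,\cS$) types of state sequences plus a continuity argument lifting the per-letter condition \eqref{eq:overwritable} to the blocklength level; and one must tune the typicality thresholds so that the $\bot$-region is small enough under $\state_0^n$ to preserve reliability yet large enough under every attack to preclude an undetected error. This balance is the technical heart of~\cite{KosutKITW18}, and in the present paper it is simply imported; what then remains for this section is the expurgation/derandomization reduction --- parallel to Lemma~\ref{lemma:Achiev} --- establishing $C_{\authcomm}=C^{\det}_{\authcomm}$.
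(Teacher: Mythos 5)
Your proposal correctly identifies that the paper does not prove this theorem but states it as a citation to~\cite[Theorem~2]{KosutKITW18}, treating it as a black box whose role is to feed the derandomization step (Lemma~\ref{lemma:authCOMavgTOstoc}, the AVC analogue of Lemma~\ref{lemma:Achiev}) that establishes the stochastic-encoder, maximum-error version in Proposition~\ref{prop:maxauthcom}. Your sketch of the overwritability converse matches the argument the paper actually gives for the stochastic version, and your recognition that the achievability is imported rather than reproved is exactly how the paper handles it.
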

\noindent In this section, we will show an analogous result for stochsatic codes under maximum probability of error criterion.

\begin{prop} \label{prop:maxauthcom}
If the channel is non-overwritable (i.e., \eqref{eq:overwritable} does not hold), $C_{\authcomm}=C(s_0)$; if it is overwritable, then $C_{\authcomm}=0$.
\end{prop}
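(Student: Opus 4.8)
The plan is to obtain Proposition~\ref{prop:maxauthcom} from Theorem~\ref{Thm:KK} of Kosut and Kliewer by showing that, for the adversarial AVC $W_{V|U,S}$, the authenticated communication capacity $C_{\authcomm}$ for stochastic-encoder codes under maximum error equals the deterministic-code, average-error capacity $C^{\det}_{\authcomm}$; Theorem~\ref{Thm:KK} then supplies the value. One direction, $C_{\authcomm}\le C^{\det}_{\authcomm}$, is the easy one and parallels the converse of~\cite{KosutKITW18}. When the channel is not overwritable: under no attack any stochastic-max code must reproduce the message reliably, so it is in particular a stochastic-encoder, maximum-error channel code for the nominal DMC $W_{V|U,S}(\cdot|\cdot,s_0)$; since stochastic encoders do not enlarge the capacity of a DMC and maximum and average error coincide there, every achievable rate is at most $C(s_0)$. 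When the channel is overwritable: given any stochastic code with $N\ge 2$ messages, fix distinct $m,m'$, set $Q_{U^n|M}(u^n|m)\defineqq\Prob(F(m)=u^n)$, and let the adversary use $P_{S^n}(s^n)=\sum_{\tilde u^n}Q_{U^n|M}(\tilde u^n|m')\prod_{i=1}^{n}P_{S|U'}(s_i|\tilde u_i)$ with $P_{S|U'}$ as in~\eqref{eq:overwritable}; exactly as in the proof of Theorem~\ref{thm: auth capacity}, the channel-output law when $m$ is transmitted under this attack equals the law when $m'$ is transmitted with no attack, so the decoder outputs $m'\notin\{m,\bot\}$ with probability at least $1-P_{e,\na}$, forcing $P_e\ge 1/2$; hence $N=1$ eventually and $C_{\authcomm}=0$.

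The crux is the reverse inclusion $C_{\authcomm}\ge C^{\det}_{\authcomm}$, i.e.\ converting the deterministic codes with vanishing average error guaranteed by Theorem~\ref{Thm:KK} into stochastic codes with vanishing maximum error and asymptotically the same rate. Let $(f^{(n)},\phi^{(n)})$ be an $(N,n)$ deterministic code with $\bar e^{\det}(f^{(n)},\phi^{(n)})\le\bar e_n$, where we may assume $\bar e_n\ge 1/\log n$ (otherwise replace $\bar e_n$ by $\max(\bar e_n,1/\log n)$, still vanishing). Partition the message set $[N]$ into $N'=\lfloor N/L\rfloor$ blocks $B_1,\dots,B_{N'}$ of size $L=L_n$. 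The new stochastic code on message set $[N']$ works as follows: the encoder, on input $j$, draws an index uniformly among the $L$ elements of $B_j$ (its private randomness) and sends the corresponding old codeword; the decoder applies $\phi^{(n)}$ and then outputs the index of the block containing the decoded old message, or $\bot$ if $\phi^{(n)}$ outputs $\bot$. This decoder is deterministic and oblivious to the encoder's randomness. The key observation is an event containment: ``new message $j$ decoded to a wrong block'' is contained in ``the underlying old message is not decoded and the output is not $\bot$'', and similarly under no attack; hence both the no-attack error and, for each state sequence $s^n$, the attack error of the new code for message $j$ are at most the block average $\tfrac1L\sum_{k\in B_j}e_k^{\det}(f^{(n)},\phi^{(n)},s^n)$. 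Since the overall averages $\tfrac1N\sum_k e_k^{\det}(f^{(n)},\phi^{(n)},s^n)$ are all $\le\bar e_n$, Hoeffding's inequality for sampling without replacement~\cite[Theorem~4]{Hoeffding63} gives that, for a uniformly random partition, a fixed block average exceeds $2\bar e_n$ with probability at most $e^{-2L\bar e_n^2}$; a union bound over the $N'$ blocks and all $|\mathcal S|^n$ state sequences then fixes one good partition as soon as $2L\bar e_n^2>\ln(N'|\mathcal S|^n)$, e.g.\ $L_n=\lceil n(\ln 2+\ln|\mathcal S|)/\bar e_n^2\rceil$. With this choice $\tfrac1n\log L_n=O\big(\tfrac{\log n+\log(1/\bar e_n)}{n}\big)\to 0$ (using $\bar e_n\ge 1/\log n$), so the rate loss vanishes, while the maximum error of the new stochastic code is at most $2\bar e_n\to 0$. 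Hence all rates below $C^{\det}_{\authcomm}$ are achievable for stochastic-max codes.

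The main obstacle is precisely this conversion, and its heart is that a stochastic decoder cannot observe the encoder's private randomness: the textbook reduction of average error to maximum error --- relabel $m\mapsto f(\Pi(m))$ with a random permutation $\Pi$ --- only works when the decoder can invert $\Pi$, which it cannot here, and transmitting $\Pi$ would wreck the rate. This forces the two-layer design above (randomize only within a block of size $L$ at the encoder; hard-wire the message-to-block map into the code), together with a probabilistic argument strong enough that a single deterministic partition works against \emph{every} one of the $|\mathcal S|^n$ state sequences at once --- this is where the exponential union bound and the mild slackening $\bar e_n\ge 1/\log n$ (keeping $L_n$ subpolynomial in the rate) enter. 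A secondary point requiring care is that the new decoder never causes an \emph{undetected} error in any state in which $\phi^{(n)}$ would not, which is exactly the event-containment used above. Combining the two inclusions with Theorem~\ref{Thm:KK} yields $C_{\authcomm}=C^{\det}_{\authcomm}$, equal to $C(s_0)$ when $W_{V|U,S}$ is not overwritable and to $0$ otherwise, proving Proposition~\ref{prop:maxauthcom}.
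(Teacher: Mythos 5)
Your proof is correct, and the converse and overwritable-direction arguments match the paper's. The interesting divergence is in the key reduction $C_{\authcomm}\ge C^{\det}_{\authcomm}$ (the paper's Lemma~\ref{lemma:authCOMavgTOstoc}). The paper follows the classical Ahlswede elimination route: it first lifts the deterministic average-error code to a \emph{random} code (shared permutation $\sigma$) with small maximum error, then applies a random-code reduction to collapse the shared randomness to $n^2$ deterministic codes, and finally prepends a short, positive-rate deterministic code of blocklength $l_n=o(n)$ that sends the $2\log_2 n$ seed bits to the decoder; the resulting two-phase code has a stochastic encoder and a deterministic decoder. Your construction replaces all of this with a single-blocklength bucketing argument: partition $[N]$ into $N'=\lfloor N/L\rfloor$ blocks, have the encoder sample uniformly inside the block of its message, and have the decoder run $\phi^{(n)}$ and emit the block index (or $\bot$). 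The event containment you identify --- wrong-block output implies $\phi^{(n)}(V^n)\notin\{k,\bot\}$ for the realized inner message $k$, and under no attack wrong output implies $\phi^{(n)}(V^n)\neq k$ --- cleanly reduces the per-message, per-state maximum error to a block average, and Hoeffding for sampling without replacement plus a union bound over blocks and state sequences shows a good partition exists with $L_n$ only polynomial in $n$ (after slackening $\bar e_n$ to at least $1/\log n$). What each buys: the paper's route is modular and reuses the random-code-reduction lemma that is a staple of AVC arguments, at the cost of a two-phase construction and an auxiliary positive-rate seed code; your route is a one-shot construction with no auxiliary code and no reduction lemma, essentially the same flavour as the binning-plus-concentration arguments used elsewhere in the paper (e.g.\ Proposition~\ref{prop:gaussian confusion set}). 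Two small points to tidy: (i) your choice $L_n=\lceil n(\ln 2+\ln|\cS|)/\bar e_n^2\rceil$ is off by a factor that depends on $R$, since $\ln N'\le nR\ln 2$ rather than $n\ln 2$; replacing $\ln 2$ by $(R{+}1)\ln 2$ (or any constant exceeding $R\ln 2$) fixes this without affecting the asymptotics; (ii) after discarding $N-N'L<L$ messages the population average over the retained $[N'L]$ may exceed $\bar e_n$ by the factor $N/(N'L)\le (1-L/N)^{-1}$, so the threshold in the concentration step should be, say, $\tfrac{3}{2}\bar e_n$ rather than $2\bar e_n$ as the target block average, again a constant adjustment with no effect on the conclusion.
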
 
\begin{proof}
{\em Converse:}Any good stochastic code (under maximum probability of error criterion) for the adversarial channel $W_{V|U,S}$ is also a good code for the nominal channel $W_{V|U,S}(.|.,s_0)$. Thus, $C_{\authcomm}\leq C(s_0)$. When the channel is overwritable, suppose some $R>0$ is achievable. For $u^t\in \mathcal{U}^t$ and $m\in \mathcal{M}$, define the distribution $Q_{U^t|M}(u^t|m)\defineqq\Prob(F^{(n)}(m)=u^t)$.  Let $m,m'\in \mathcal{M}, m\neq m'$ be two distinct messages. Consider the adversarial strategy  $P_{S^t}(s^t) = \sum_{\tilde{u}^t\in\mathcal{U}^t}Q_{U^t|M}(\tilde{u}^t|m')\prod_{i=1}^{t}P_{S|U'}(s_i|\tilde{u}_i)$. Here, $s_i$ and $\tilde{u}_i$ are the $i^{\text{th}}$ elements of the $s^n$ and $\tilde{u}^n$ sequences respectively and $P_{S|U'}$ satisfies \eqref{eq:overwritable}.

Let $e_{m\rightarrow m'}\left(F^{(n)},\phi^{(n)}\right)$ be the probability that the decoder outputs $m'$ under the adversarial attack described above when $m$ was actually sent  and $e_{m,s_0}\left(F^{(n)},\phi^{(n)}\right)$ be the probability with which the decoder outputs some message $\tilde{m}\in \mathcal{M}\setminus\{m\}$ or $\bot$, when $m$ was actually sent and there was no attack. For $m\in \mathcal{M}$, let ${\phi^{(n)}}^{-1}(m) \defineqq \{v^t\,:\,\phi^{(n)}(v^t)=m\}$. Then, 
\begin{align*}
e_{m\rightarrow m'}\left(F^{(n)},\phi^{(n)}\right) &= \sum_{u^t\in\mathcal{U}^t}Q_{U^t|M}(u^t|m)\sum_{s^t\in \mathcal{S}^t}P_{S^t}(s^t)\sum_{v^t\in{\phi^{(n)}}^{-1}(m')}\prod_{i=1}^t W(v_i|u_i,s_i)\\
&=\sum_{u^t\in\mathcal{U}^t,v^t\in{\phi^{(n)}}^{-1}(m')}Q_{U_t|M}(u^t|m)\sum_{\tilde{u}^t\in\mathcal{U}^t}Q_{U^t|M}(\tilde{u}^t|m')\sum_{s^t\in \mathcal{S}^t}\prod_{i=1}^{t}P_{S|U'}(s_i|\tilde{u}_i) W(v_i|u_i,s_i)\\
&\stackrel{\text{(a)}}{=}\sum_{u^t\in\mathcal{U}^t,v^t\in{\phi^{(n)}}^{-1}(m')}Q_{U^t|M}(u^t|m)\sum_{\tilde{u}^t\in\mathcal{U}^t}Q(\tilde{u}^t|m')\prod_{i=1}^{t}W(v_i|\tilde{u}_i,s_0)\\
&=\sum_{\tilde{u}^t\in\mathcal{U}^t,v^t\in{\phi^{(n)}}^{-1}(m')}Q_{U^t|M}(\tilde{u}^t|m')\prod_{i=1}^{t}W(v_i|\tilde{u}_i,s_0)\\
&=1-e_{m',s_0}\left(F^{(n)},\phi^{(n)}\right),
\end{align*}
where (a) follows from \eqref{eq:overwritable}.
This implies that $e_{m\rightarrow m'}\left(F^{(n)},\phi^{(n)}\right) + e_{m',s_0}\left(F^{(n)},\phi^{(n)}\right)=1$. Hence, $2e\left(F^{(n)},\phi^{(n)}\right)\geq1.$ This  implies that $\nummsg\leq 1$ for any $(N,n)$-stochastic authenticated communication code with $e\left(F^{(n)},\phi^{(n)}\right)<\frac{1}{2}$. Therefore, $C_{\authcomm}$ is zero.

The other part of this proposition follows from Theorem~\ref{Thm:KK} and Lemma~\ref{lemma:authCOMavgTOstoc} (stated below).
\end{proof} 
\begin{lemma}\label{lemma:authCOMavgTOstoc}
Any rate $R>0$ of authenticated communication achievable using deterministic codes under average probability of error criterion can also be achieved using codes with stochastic encoder under maximum probability of error criterion. 
\end{lemma}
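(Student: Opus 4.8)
The plan is to upgrade a deterministic average-error code to a stochastic maximal-error code of asymptotically the same rate by the standard embedding/expurgation device already used elsewhere in this paper (the encoders $F_i=L_i\circ G_i$ in the proof of Theorem~\ref{thm:random}, and the permutation $\permchoice$ in the proof of Lemma~\ref{lem:auth gaussian positive}). Fix $R>0$ achievable with deterministic codes under average error, and let $\{(f^{(n)},\phi^{(n)})\}_n$ be a corresponding sequence of $(N,n)$ codes with $N=\lfloor 2^{nR}\rfloor$ and $\epsilon_n:=\bar{e}^{\det}(f^{(n)},\phi^{(n)})\to 0$. I would pick auxiliary sequences $\delta_n\to 0$ and $L_n\to\infty$ with $\tfrac1n\log L_n\to 0$ and $L_n\delta_n^2/n\to\infty$ (for instance $\delta_n=1/\log n$, $L_n=\lceil n(\log n)^3\rceil$), and set $N'=\lfloor N/L_n\rfloor$, so that $\tfrac1n\log N'\to R$.

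For an injection $g\colon[N']\times[L_n]\to[N]$ (to be chosen) define a stochastic encoder $F^{(n)}$ that, on message $m\in[N']$, draws $J\sim\mathrm{Unif}[L_n]$ and transmits $f^{(n)}(g(m,J))$, and a deterministic decoder $\phi'$ that, on $v^n$, computes $\hat n=\phi^{(n)}(v^n)$ and outputs $\bot$ if $\hat n=\bot$ or $\hat n\notin\mathrm{image}(g)$, and otherwise outputs the first coordinate of $g^{-1}(\hat n)$. Writing $c=g(m,j)$ and $G_m=\{g(m,j'):j'\in[L_n]\}$, I would first check the simple event-containments that make the decoder work: under no attack $\{\phi'(v^n)\ne m\}\subseteq\{\phi^{(n)}(v^n)\ne c\}$, while under any attack $s^n\ne s_0^n$ we have $\{\phi'(v^n)\notin\{m,\bot\}\}\subseteq\{\phi^{(n)}(v^n)\notin\{c,\bot\}\}$, both because $c\in G_m$. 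Consequently, for every $m$ and every $s^n$, the conditional error of $(F^{(n)},\phi')$ is at most $\tfrac1{L_n}\sum_{j=1}^{L_n} e^{\det}_{g(m,j)}(f^{(n)},\phi^{(n)},s^n)$.

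Now choose $g$ uniformly at random (equivalently, a uniformly random partition of a size-$N'L_n$ subset of $[N]$ into $N'$ blocks of size $L_n$). For fixed $m$ and $s^n$, the quantity above is an average of $L_n$ values sampled without replacement from the population $\{e^{\det}_c(f^{(n)},\phi^{(n)},s^n)\}_{c=1}^N\subseteq[0,1]$, whose mean is at most $\epsilon_n$ since $\bar{e}^{\det}(f^{(n)},\phi^{(n)})\le\epsilon_n$; by the Hoeffding bounds for sampling without replacement (\cite[Theorems~1 and~4]{Hoeffding63}, exactly as in the proof of Proposition~\ref{prop:gaussian confusion set}) it exceeds $\epsilon_n+\delta_n$ with probability at most $e^{-2L_n\delta_n^2}$ over $g$. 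Taking a union bound over the $N'$ messages and all $|\mathcal S|^n$ state sequences, the probability that some pair $(m,s^n)$ is bad is at most $2^{n(R+\log|\mathcal S|)}e^{-2L_n\delta_n^2}\to 0$ because $L_n\delta_n^2/n\to\infty$. Hence a good $g$ exists, and the resulting $(N',n)$ stochastic code has maximal error $e(F^{(n)},\phi')\le\epsilon_n+\delta_n\to 0$ and rate $\tfrac1n\log N'=\tfrac1n\log N-\tfrac1n\log L_n\to R$. This proves Lemma~\ref{lemma:authCOMavgTOstoc}; combined with Theorem~\ref{Thm:KK} it also finishes the proof of Proposition~\ref{prop:maxauthcom}.

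The only step that needs care is the union bound over the adversary's state sequences: since $|\mathcal S|^n$ is exponential, a pool $L_n$ of fixed size would not work, so one must let $L_n$ grow faster than linearly in $n$ — which is free, as it costs only the vanishing rate loss $\tfrac1n\log L_n$. (An alternative is to first restrict the adversary to state sequences of each type, which are polynomially many, and then use a fixed-size pool; inflating $L_n$ is cleaner.) Everything else — the event-containment bookkeeping that keeps the no-attack ``decode correctly'' requirement separate from the under-attack ``do not accept a wrong message'' requirement, and the elementary fact that sampling without replacement does not increase the population mean — is routine.
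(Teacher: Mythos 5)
Your proof is correct, but it takes a genuinely different route from the paper's. The paper follows Ahlswede's elimination technique in three steps: a shared random permutation of $\cM$ turns the average-error deterministic code into a \emph{random} code (encoder--decoder shared randomness, unknown to the adversary) with small maximal error; the random-code-reduction Lemma~\ref{lemma:random_code_reduction} shows that $n^2$ values of shared randomness suffice; and a vanishing-rate prefix block carrying a second, shorter authenticated-communication code conveys this $O(\log n)$-bit seed to the decoder, yielding a stochastic encoder. You instead build a stochastic encoder with truly private randomness in one shot: thin the message set to $N'\approx N/L_n$, give each new message a pool of $L_n$ original codewords via a fixed injection $g$, let the encoder draw uniformly from the pool, and let the decoder map a decoded original index back to its pool (else output $\bot$). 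The observation that makes this possible --- and that the paper's route does not exploit --- is that the decoder only ever needs to recover $m$ and never the pool index $J$, so no randomness needs to be communicated and the elimination machinery is unnecessary. Your event containments are right (a codeword $g(m,j')$ from the \emph{same} pool still decodes correctly, so the no-attack error lies inside $\set{\phi^{(n)}(V^n)\neq c}$ and the under-attack error inside $\set{\phi^{(n)}(V^n)\notin\set{c,\bot}}$); the Hoeffding-without-replacement plus union-bound existence argument is exactly the one the paper itself deploys in the Gaussian Proposition~\ref{prop:gaussian confusion set}; and letting $L_n$ grow superlinearly in $n$ beats the $\lvert\cS\rvert^n$ factor in the union bound at a negligible rate cost $\tfrac1n\log L_n\to 0$. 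The paper's proof has the virtue of reusing the elimination infrastructure already built for Part~I; yours is shorter, self-contained, and closer in spirit to the paper's own Gaussian argument. One small caveat on your final aside: restricting the union bound to one $s^n$ per type would not by itself allow a fixed-size pool, since $e^{\det}_c(f^{(n)},\phi^{(n)},s^n)$ depends on $s^n$ itself (through its joint type with the codewords, say) and not merely on the marginal type of $s^n$; but as you say, inflating $L_n$ is clean and you did not rely on that alternative.
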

\begin{proof}
This proof borrows ideas from~\cite{Ahlswede78}~\cite[Theorem~12.13]{CK11}. We will prove the lemma in three steps. Firstly, given a positive rate deterministic code under average error probability criterion, we will construct a random code (i.e., a code with shared randomness at encoder and decoder unknown to the adversary) that achieves the same rate under maximum error probability criterion. Then, we will use a random code reduction argument along the lines of~\cite{AW69}~\cite[Lemma 12.8]{CK11} to reduce the given random code to another random code with support only over polynomially many deterministic codes. Lastly, we will use a positive rate deterministic code under average error probability criterion to share this small amount of randomness required with the decoder and use this to construct a stochastic code. 

Suppose $R>0$ is achievable using deterministic codes under average probability of error criterion. Then, let $(f^{(n)},\phi^{(n)})$ be a $(2^{nR},n)$ deterministic code with
\begin{align*} 
\bar{e}^{\det}\left(f^{(n)},\phi^{(n)}\right) \leq \epsilon.
\end{align*}
Define a $(2^{nR},n)$ random code $(F^{(n)}, \Phi^{(n)})$ as follows. Let $\pi_n$ be the set of all permutations of $\mathcal{M}$. Let $\sigma \sim \text{Unif}(\pi_n)$. The encoder is given by
\begin{align*}
F^{(n)}(m) &= f^{(n)}\left(\sigma\left(m\right)\right),\, m\in \mathcal{M}.
\end{align*}
For $ v^n\in \mathcal{V}^n$, the decoder is defined as
\begin{displaymath}
\Phi^{(n)}\left(v^n\right) = 
\left\{\begin{array}{ll}
        \sigma^{-1}\left(\phi^{(n)}\left(v^n\right)\right),&\phi^{(n)}\left(v^n\right)\in \mathcal{M},\\
        \bot, &\phi^{(n)}\left(v^n\right)= \bot.  \end{array}
\right.
\end{displaymath} 
For $m \in \mathcal{M},$ probability of error for this random code is
\begin{align*}
e^{\rand}_m\left(F^{(n)},\Phi^{(n)}\right) = \frac{1}{ N }\sum_{m'\in\mathcal{M}}e^{\det}_{m'}\left(f^{(n)},\phi^{(n)}\right) \leq \epsilon.
\end{align*}
Thus, 
\begin{align*}
e^{\rand}\left(F^{(n)},\Phi^{(n)}\right) &= \max_{m}{e^{\rand}_m\left(F^{(n)},\Phi^{(n)}\right)}\\
&\leq \epsilon.
\end{align*}
We will use the following {\em random code reduction} lemma which will be proved later. 
\begin{lemma}\label{lemma:random_code_reduction}
    Let $(F^{(n)},\Phi^{(n)}) \sim Q$ be a random code. Assume $|\mathcal{S}| < \infty$. Then, for every
    \begin{align*}
        \epsilon &> 2 \log{\left(1+ e^{\rand}\left(F^{(n)},\Phi^{(n)}\right)\right)}\\
        \text{and } k&> \frac{2}{\epsilon}(\mathsf{log} N +n\log{|\mathcal{S}|}),
    \end{align*}
    there exists a family of (deterministic) codes $\{(f^{(n)}_i,\phi^{(n)}_i)\}_{i=1}^{k}$ such that
    \begin{equation} \label{eq:e}
        \frac{1}{k}\sum_{i=1}^{k} e^{\det}_m(f^{(n)}_i,\phi^{(n)}_i,s^n)  < \epsilon \quad \text{ for every }  m \in \mathcal{M} \text{ and } s^n\in\mathcal{S}^n.
    \end{equation}
\end{lemma}

Invoking the above lemma on the random code $\left(F^{(n)},\Phi^{(n)}\right)$ with $k = n^2$, for large enough $n$, we get a sequence of codes $\left\{\left(f^{(n)}_i,\phi^{(n)}_i\right)\right\}_{i=1}^{k}$ satisfying
    \begin{equation}
        \frac{1}{n^2}\sum_{i=1}^{n^2} e^{\det}_m\left(f^{(n)}_i,\phi^{(n)}_i,s^n\right)  < \epsilon \quad \text{ for every }  m \in \mathcal{M} \text{ and } s^n \in\mathcal{S}^n.
    \end{equation}
Since by assumption, positive authenticated communication rates are achievable, we see that for large $n$, there exists a sequence of $\left(n^2,l_n\right)$ codes $\left(\tilde{f}^{(l_n)},\tilde{\phi}^{(l_n)}\right)$ with $\lim_{n\rightarrow \infty}\frac{l_n}{n}\rightarrow 0$ and satisfying 
\begin{align*}
\bar{e}^{\det}\left(\tilde{f}^{(l_n)},\phi^{(l_n)}\right) \leq \epsilon.
\end{align*} 
We define a $\left(2^{nR},n+l_n\right)$ stochastic code $\left(F^{(n+l_n)},\phi^{(n+l_n)}\right)$ as follows: For $J\sim \text{Unif}[1:n^2]$ and $m\in \{1,2,\ldots,2^{nR}\}$, let the stochastic encoder map be
\begin{align*}
F^{(n+l_n)}(m) &= \left(\tilde{f}^{(l_n)}(J), f^{(n)}_{J}(m)\right).
\end{align*}
For $v^{n+l_n}\in \mathcal{V}^{n+l_n}$, the decoder map is defined as
\begin{displaymath}
\phi^{(n+l_n)}(v^{n+l_n}) = 
\left\{\begin{array}{ll}
        m &\text{ if } \phi^{(l_n)}\left(v^{l_n}\right) = i \text{ and }\phi_i^{(n)}\left(v_{l_n+1}^{n+l_n}\right)= m \text{ for some }i \in [1:n^2] \text{ and }m \in \left[1:2^{nR}\right],\\
        \bot &\text{ otherwise.}   
\end{array}
\right.
\end{displaymath}
For $m\in \mathcal{M}$, under no attack, the probability of error is
\begin{align*}
&e_m\left(F^{(n+l_n)},\phi^{(n+l_n)}, s_0^{n+l_n}\right)  = \mathbb{P}\left(\phi^{(n+l_n)}\left(V^{n+l_n}\right)\neq m\left|U^{n+l_n}=F^{(n+l_n)}(m), S^{n+l_n} = s_0^{n+l_n}\right.\right)\\
&\qquad=\frac{1}{n^2}\sum_{i=1}^{n^2}\mathbb{P}\left(\left\{\phi^{(n+l_n)}\left(V^{n+l_n}\right)\neq m\right\}\left|J = i,U^{n+l_n}=\left(\tilde{f}^{(l_n)}(i), f^{(n)}_i(m)\right), S^{n+l_n} = s_0^{n+l_n}\right.\right)\\
&\qquad=\frac{1}{n^2}\sum_{i=1}^{n^2}\left(\mathbb{P}\left(\left\{\phi^{(n+l_n)}\left(V^{n+l_n}\right)\neq m\right\}\cap\left\{\phi^{(l_n)}\left(V^{l_n}\right)=i\right\}\left|J = i,U^{n+l_n}=\left(\tilde{f}^{(l_n)}(i), f^{(n)}_i(m)\right), S^{n+l_n} = s_0^{n+l_n}\right.\right)\right.\\
&\qquad\qquad +\left.\mathbb{P}\left(\left\{\phi^{(n+l_n)}\left(V^{n+l_n}\right)\neq m\right\}\cap\left\{\phi^{(l_n)}\left(V^{l_n}\right)\neq i\right\}\left|J = i,U^{n+l_n}=\left(\tilde{f}^{(l_n)}(i), f^{(n)}_i(m)\right), S^{n+l_n} = s_0^{n+l_n}\right.\right)\right)\\
&\qquad \leq\frac{1}{n^2}\sum_{i=1}^{n^2}\left(\mathbb{P}\left(\left\{\phi_i^{(n)}\left(V_{l_n+1}^{n+l_n}\right)\neq m\right\}\left|J = i,U^{n+l_n}=\left(\tilde{f}^{(l_n)}(i), f^{(n)}_i(m)\right), S^{n+l_n} = s_0^{n+l_n},\phi^{(l_n)}\left(V^{l_n}\right)=i\right.\right)\right.\\
&\qquad\qquad +\left.\mathbb{P}\left(\left\{\phi^{(l_n)}\left(V^{l_n}\right)\neq i\right\}\left|J = i,U^{n+l_n}=\left(\tilde{f}^{(l_n)}(i), f^{(n)}_i(m)\right), S^{n+l_n} = s_0^{n+l_n}\right.\right)\right)\\
&\qquad=\frac{1}{n^2}\sum_{i=1}^{n^2}e^{\det}_m\left(f_i^{(n)},\phi_i^{(n)},s_0^n\right) + \frac{1}{n^2}\sum_{i=1}^{n^2}e^{\det}_i\left(\tilde{f}^{(l_n)},\phi^{(l_n)},s_0^{l_n}\right)\\
&\qquad  \leq \epsilon + \epsilon\\
&\qquad = 2\epsilon. 
\end{align*}
Suppose the adversary uses the state sequence $s^{n+l_n}\in \mathcal{S}^{n+l_n},\, s^{n+l_n} \neq s_0^{n+l_n}$. The probability of error (under attack) is
\begin{align*}
&e_m\left(F^{(n+l_n)},\phi^{(n+l_n)}, s^{n+l_n}\right)  = \mathbb{P}\left(\phi^{(n+l_n)}\left(V^{n+l_n}\right)\notin \{m,\bot\}\left|X^{n+l_n}=F^{(n+l_n)}(m), S^{n+l_n} = s^{n+l_n}\right.\right)\\
&\qquad=\frac{1}{n^2}\sum_{i=1}^{n^2}\mathbb{P}\left(\left\{\phi^{(n+l_n)}\left(V^{n+l_n}\right)\notin \{m,\bot\}\right\}\left|J = i,X^{n+l_n}=\left(\tilde{f}^{(l_n)}(i), f^{(n)}_i(m)\right), S^{n+l_n} = s^{n+l_n}\right.\right)\\
&\qquad=\frac{1}{n^2}\sum_{i=1}^{n^2}\left(\mathbb{P}\left(\left\{\phi^{(n+l_n)}\left(V^{n+l_n}\right)\notin \{m,\bot\}\right\}\cap\left\{\phi^{(l_n)}\left(V^{l_n}\right)=i\right\}\left|J = i,U^{n+l_n}=\left(\tilde{f}^{(l_n)}(i), f^{(n)}_i(m)\right), S^{n+l_n} = s^{n+l_n}\right.\right)\right.\\
&\qquad\qquad +\mathbb{P}\left(\left\{\phi^{(n+l_n)}\left(V^{n+l_n}\right)\notin \{m,\bot\}\right\}\cap\left\{\phi^{(l_n)}\left(V^{l_n}\right)=\bot\right\}\left|J = i,U^{n+l_n}=\left(\tilde{f}^{(l_n)}(i), f^{(n)}_i(m)\right), S^{n+l_n} = s^{n+l_n}\right.\right)\\
&\qquad\qquad +\left.\mathbb{P}\left(\left\{\phi^{(n+l_n)}\left(V^{n+l_n}\right)\notin \{m,\bot\}\right\}\cap\left\{\phi^{(l_n)}\left(V^{l_n}\right)\neq \{i,\bot\}\right\}\left|J = i,U^{n+l_n}=\left(\tilde{f}^{(l_n)}(i), f^{(n)}_i(m)\right), S^{n+l_n} = s^{n+l_n}\right.\right)\right)\\
&\qquad \leq\frac{1}{n^2}\sum_{i=1}^{n^2}\left(\mathbb{P}\left(\left\{\phi_i^{(n)}\left(V_{l_n+1}^{n+l_n}\right)\notin \{m,\bot\}\right\}\left|J = i,U^{n+l_n}=\left(\tilde{f}^{(l_n)}(i), f^{(n)}_i(m)\right), S^{n+l_n} = s^{n+l_n},\phi^{(l_n)}\left(V^{l_n}\right)=i\right.\right)\right.\\
&\qquad\qquad + 0 +\left.\mathbb{P}\left(\left\{\phi^{(l_n)}\left(V^{l_n}\right)\neq \{i,\bot\}\right\}\left|J = i,U^{n+l_n}=\left(\tilde{f}^{(l_n)}(i), f^{(n)}_i(m)\right), S^{n+l_n} = s^{n+l_n}\right.\right)\right)\\
&\qquad\leq\frac{1}{n^2}\sum_{i=1}^{n^2}e^{\det}_m\left(f_i^{(n)},\phi_i^{(n)},s_{l_n+1}^{n+l_n}\right) + \frac{1}{n^2}\sum_{i=1}^{n^2}e^{\det}_i\left(\tilde{f}^{(l_n)},\phi^{(l_n)},s^{l_n}\right)\\
&\qquad \leq \epsilon + \epsilon\\
&\qquad  = 2\epsilon. 
\end{align*}
Thus, 
\begin{align*}
e\left(F^{(n+l_n)},\phi^{(n+l_n)}\right) &= \max_{m}{\max_{s^{n+l_n}\in\mathcal{S}^{n+l_n}}{e_m}\left(F^{(n+l_n)},\phi^{(n+l_n)},s^{n+l_n}\right)}\\
&\leq 2\epsilon.
\end{align*}
Let $\{\epsilon_k \}_{k=1}^{\infty}$ be a decreasing sequence of positive numbers with $\lim_{k\rightarrow \infty}\epsilon_k\rightarrow 0$. For each $\epsilon_k,\,k \in\{1,2,\ldots\}$, we can find an $n$ such that there exists a $(2^{nR},n+l_n)$ stochastic codes for the adversarial channel $W_{V|U,S}$ such that $e\left(F^{(n+l_n)},\phi^{(n+l_n)}\right) \leq 2\epsilon_k$. In fact, it is possible to find a sequence of $(2^{nR},n+l_n)$ codes with $\lim_{n\rightarrow \infty}e\left(F^{(n+l_n)},\phi^{(n+l_n)}\right)\rightarrow 0$ and $\lim_{n\rightarrow \infty}\frac{\log{ N }}{n+l_n}~\rightarrow~R$.
\end{proof}
\noindent It remains to prove Lemma~\ref{lemma:random_code_reduction}.
\begin{proof}[Proof of Lemma~\ref{lemma:random_code_reduction}]
    The proof is along the lines of~\cite{AW69}~\cite[Lemma 12.8]{CK11}. We include it here for completeness.\\ 
    Draw $k$ independent samples $\{(F^{(n)}_i,\Phi^{(n)}_i)\}_{i = 1}^{k}$ with distribution $Q$.
    For $m \in \mathcal{M}$ and $s^n \in \mathcal{S}^n$,
    \begin{equation*}
        \mathbb{P}\left(\frac{1}{k}\sum_{i=1}^{k} e^{\det}_m\left(F^{(n)}_i,\Phi^{(n)}_i,s^n\right) \geq \epsilon \right) = \mathbb{P}\left( 2^{  \sum_{i = 1}^{k}e^{\det}_m\left(F^{(n)}_i,\Phi^{(n)}_i,s^n\right)} \geq 2^{k\epsilon  }\right)
    \end{equation*}
    \begin{align*}
        &\overset{\text{(a)}}{\leq} 2^{-k \epsilon}\,\mathbf{E}\left[ 2^{ \sum_{i = 1}^{k}e^{\det}_m\left(F^{(n)}_i,\Phi^{(n)}_i,s^n\right)}\right]\\    
        &\overset{\text{(b)}}{=}{\leq} 2^{-k \epsilon}\,\left(\mathbf{E}\left[ 2^{e^{\det}_m\left(F^{(n)}_1,\Phi^{(n)}_1,s^n\right)}\right]\right)^k\\
        &\overset{\text{(c)}}{\leq}{\leq} 2^{-k \epsilon}\,\left(\mathbf{E}\left[ 1+e^{\det}_m\left(F^{(n)}_1,\Phi^{(n)}_1,s^n\right)\right]\right)^k\\
        &\overset{\text{(d)}}{=}{\leq} 2^{-k \epsilon}\,\left( 1+\mathbf{E}\left[e^{\det}_m\left(F^{(n)}_1,\Phi^{(n)}_1,s^n\right)\right]\right)^k\\
        &={\leq} 2^{-k \epsilon}\,\left( 1+e^{\rand}_m\left(F^{(n)},\Phi^{(n)},s^n\right)\right)^k\\
        &=2^{-k \left(\epsilon-\log_2\left(1+e^{\rand}_m\left(F^{(n)},\Phi^{(n)},s^n\right)\right)\right)}.
    \end{align*}
    Here, $(a)$ follows by Markov's inequality, $(b)$ follows because $\{(F^{(n)}_i,\Phi^{(n)}_i)\}_{i = 1}^{k}$ is an independent and identically distributed (i.i.d.) sequence and $(c)$ follows because $2^z \leq 1 + z$ for $0\leq z \leq1$. 
    \begin{equation*}
        \mathbb{P}   \left(\frac{1}{k}\sum_{i=1}^{k} e^{\det}_m\left(F^{(n)}_i,\Phi^{(n)}_i,s^n\right) < \epsilon \text{ for every } m \in \mathcal{M} \text{ and }s^n \in \mathcal{S}^n \right) 
    \end{equation*}
    \begin{align*}
    &= 1-\mathbb{P}\left( \mathop{\mathop{\cup}_{m \in \mathcal{M}}}_{s^n \in \mathcal{S}^n} \left\{\frac{1}{k}\sum_{i=1}^{k} e^{\det}_m(F^{(n)}_i,\Phi^{(n)}_i,s^n) \geq \epsilon\right\}\right)\\
    &\geq 1-\sum_{m \in \mathcal{M},\,s^n \in \mathcal{S}^n}\exp\left(-k \left(\epsilon-\log\left(1+e^{\rand}_m\left(F^{(n)},\Phi^{(n)},s^n\right)\right)\right)\right)\\
    &\geq 1-\sum_{m \in \mathcal{M},\,s^n \in \mathcal{S}^n}\exp\left(-k \left(\epsilon-\log\left(1+ e^{\rand}\left(F^{(n)},\Phi^{(n)}\right)\right)\right)\right)\\
    & = 1 -  N  |\mathcal{S}|^{n} \exp\left(-k \left(\epsilon-\log\left(1+e^{\rand}\left(F^{(n)},\Phi^{(n)}\right)\right)\right)\right)\\
    &>0
    \end{align*}
    This guarantees the existence of a sequence of codes $\{(f^{(n)}_i,\phi^{(n)}_i)\}_{i=1}^{k}$ satisfying the error criterion \eqref{eq:e}.

\end{proof}

\end{appendices}

\end{document}